\documentclass[final]{siamart0216}

\usepackage{amsmath}
\usepackage{amssymb}
\usepackage{mathrsfs}
\usepackage{mathtools} 

\usepackage{enumitem} 
\setlist{noitemsep,topsep=0pt,parsep=0pt,partopsep=0pt,listparindent=\parindent} 

\usepackage[margin=1cm]{caption} 
\usepackage{subfig}

\usepackage{rotating}
\usepackage{multirow} 

\usepackage{tikz}
\usetikzlibrary{arrows,backgrounds,calc,fit,decorations.pathreplacing,decorations.markings,shapes.geometric}

\tikzset{every fit/.append style=text badly centered}

{\par\medskip}

\def\borderColor{blue!60}
\def\scale{0.6}
\def\nodeDist{1.4cm}
\tikzstyle{internal} = [draw, fill, shape=circle]
\tikzstyle{external} = [shape=circle]
\tikzstyle{square}   = [draw, fill, rectangle]
\tikzstyle{triangle} = [draw, fill, regular polygon, regular polygon sides=3, inner sep=2.5pt] 

\newcommand{\TheTitle}{A Complete Dichotomy Rises from the Capture of Vanishing Signatures} 
\newcommand{\TheAuthors}{J.-Y. Cai, H. Guo, and T. Williams}

\headers{Complex Holant Dichotomy}{\TheAuthors}

\title{{\TheTitle}\thanks{An extended abstract of this paper appeared in STOC 2013 \cite{CGW13}. 
All authors were supported by NSF CCF-0914969 and NSF CCF-1217549.
The work was done while the second and the third authors were graduate students in UW-Madison.}}

\author{
  Jin-Yi Cai\thanks{University of Wisconsin-Madison.
    \email{jyc@cs.wisc.edu}}
  \and
  Heng Guo\thanks{Queen Mary, Univerisity of London, UK. 
    \email{h.guo@qmul.ac.uk}}
  \and
  Tyson Williams\thanks{University of Wisconsin-Madison.
    \email{tdw@cs.wisc.edu}}
}

\ifpdf
\hypersetup{
  pdftitle={\TheTitle},
  pdfauthor={\TheAuthors}
}
\fi

\newcommand{\Holant}{\operatorname{Holant}}
\newcommand{\holant}[2]{\Holant\left(#1\mid #2\right)}
\newcommand{\CSP}{\operatorname{\#CSP}}
\newcommand{\vd}{\operatorname{vd}}
\newcommand{\rd}{\operatorname{rd}}
\newcommand{\arity}{\operatorname{arity}}
\newcommand{\Sym}{\operatorname{Sym}}
\newcommand{\RM}{\operatorname{RM}}
\newcommand{\StabA}{\operatorname{Stab}(\mathscr{A})}
\newcommand{\LStabA}{\operatorname{LStab}(\mathscr{A})}
\newcommand{\RStabA}{\operatorname{RStab}(\mathscr{A})}
\newcommand{\StabP}{\operatorname{Stab}(\mathscr{P})}

\newcommand{\binaryA}{\mathscr{A}^{2 \times 2}}

\newcommand{\numP}{{\rm \#P}}
\newcommand{\trans}[4]{\ensuremath{\left[\begin{smallmatrix} #1 & #2 \\ #3 & #4 \end{smallmatrix}\right]}}

\newcommand{\SHARPP}{{\rm \#P}}

\newcommand{\ceil}[1]{\left\lceil #1 \right\rceil}
\newcommand{\floor}[1]{\left\lfloor #1 \right\rfloor}

\newcommand{\union}{\cup}

\newcommand{\intersect}{\cap}

\newcommand{\st}{\mid}

\newcommand{\tbmatrix}[4]{\left[\begin{smallmatrix} #1 & #2 \\ #3 & #4 \end{smallmatrix}\right]}

\newcommand{\tbcolvec}[2]{\left[\begin{smallmatrix} #1 \\ #2 \end{smallmatrix}\right]}

\begin{document}

\maketitle

\begin{abstract}
 We prove a complexity dichotomy theorem for Holant problems over an arbitrary set of complex-valued symmetric constraint functions $\mathcal{F}$ on Boolean variables.
 This extends and unifies all previous dichotomies for Holant problems on symmetric constraint functions
 (taking values without a finite modulus).
 We define and characterize all symmetric \emph{vanishing} signatures.
 They turned out to be essential to the complete classification of Holant problems.
 The dichotomy theorem has an explicit tractability criterion expressible in terms of holographic transformations.
 A Holant problem defined by a set of constraint functions $\mathcal{F}$ is solvable in polynomial time if it satisfies this tractability criterion,
 and is \#P-hard otherwise.
 The tractability criterion can be intuitively stated as follows:
 A set $\mathcal{F}$ is tractable if
 (1) every function in $\mathcal{F}$ has arity at most two, or
 (2) $\mathcal{F}$ is transformable to an affine type, or
 (3) $\mathcal{F}$ is transformable to a product type, or
 (4) $\mathcal{F}$ is vanishing, combined with the right type of binary functions, or
 (5) $\mathcal{F}$ belongs to a special category of vanishing type Fibonacci gates.
 The proof of this theorem utilizes many previous dichotomy theorems on Holant problems and Boolean \#CSP.
 Holographic transformations play an indispensable role as both a proof technique and in the statement of the tractability criterion.
\end{abstract}

\begin{keywords}
  Computational complexity, \#P, Counting problems, Dichotomy theorem, Holographic algorithm
\end{keywords}

\begin{AMS}
  68Q25, 68Q17
\end{AMS}

\section{Introduction} \label{sec:intro}

In the study of counting problems,
several interesting frameworks of increasing generality have been proposed.
One is called $H$-coloring or Graph Homomorphism~\cite{Lov67, HN90, DG00, BG05, DGP07, CC10, GGJT10, CCL13}.
Another is called Constraint Satisfaction Problems (\#CSP)~\cite{BD07, Bul13, BG05, DGJ09, BDGJR09, CLX09a, CCL11, CHL12, DR10, GHLX11, CK12, CC12}.
Recently,
inspired by Valiant's holographic algorithms~\cite{Val08, Val06},
a further refined framework called Holant problems~\cite{CLX13a, CLX12, CLX09a, CLX11d} was proposed.
They all describe classes of counting problems that can be expressed as a sum-of-product computation,
specified by a set of local constraint functions $\mathcal{F}$,
also called signatures.
They differ mainly in what $\mathcal{F}$ can be and what is assumed to be present in $\mathcal{F}$ by default.
Such frameworks are interesting because the language is \emph{expressive} enough so that they contain many natural counting problems,
while \emph{specific} enough so that it is possible to prove \emph{dichotomy theorems}.
Such theorems completely classify every problem in a class to be either in P or $\numP$-hard~\cite{Sch78, CH96, FV98, CKS01}.

The goal is to understand which counting problems are computable in polynomial time (called tractable) and which are not (called intractable).
We aim for a characterization in terms of $\mathcal{F}$.
An ideal outcome is to classify,
within a broad class of functions,
\emph{every} function set $\mathcal{F}$ according to whether it defines a tractable counting problem or a $\numP$-hard one.
We note that,
by an analogue of Ladner's theorem~\cite{Lad75},
such a dichotomy is \emph{false} for the whole of $\numP$,
unless P = \#P.

We give a brief description of the Holant framework here~\cite{CLX13a, CLX12, CLX09a, CLX11d}.
A \emph{signature grid} $\Omega = (G, \mathcal{F}, \pi)$ is a tuple,
where $G = (V,E)$ is a graph,
$\pi$ labels each $v \in V$ with a function $f_v \in \mathcal{F}$,
and $f_v$ maps $\{0,1\}^{\deg(v)}$ to $\mathbb{C}$.
We consider all 0-1 edge assignments.
An assignment $\sigma$ for every $e \in E$ gives an evaluation $\prod_{v \in V} f_v(\sigma \mid_{E(v)})$,
where $E(v)$ denotes the incident edges of $v$ and $\sigma \mid_{E(v)}$ denotes the restriction of $\sigma$ to $E(v)$.
The counting problem on the instance $\Omega$ is to compute
\begin{equation} \label{eqn:holant-sum}
 \Holant_\Omega = \sum_{\sigma : E \to \{0,1\}} \prod_{v \in V} f_v\left(\sigma \mid_{E(v)}\right).
\end{equation}
For example,
consider the problem of counting \textsc{Perfect Matching} on $G$.
This problem corresponds to attaching the \textsc{Exact-One} function at every vertex of $G$.

The Holant framework can be defined for general domain $[q]$;
in this paper we restrict to the Boolean case $q = 2$.
The \#CSP problems are the special case of Holant problems where all \textsc{Equality} functions
(with any number of inputs) are assumed to be included in $\mathcal{F}$.
Graph Homomorphism is the further special case of \#CSP where $\mathcal{F}$ consists of a single binary function (in addition to all \textsc{Equality} functions).
Similar or essentially the same notions as Holant have been studied as tensor networks~\cite{Jos95, MS08} in physics as well
as Forney graphs and sum-product algorithms of factor graphs~\cite{For01, Loe04}
in artificial intelligence, coding theory, and signal processing.

Consider the following constraint function $f : \{0, 1\}^4 \to \mathbb{C}$.
Let the input $(x_1, x_2, x_3, x_4)$ have Hamming weight $w$,
then $f(x_1, x_2, x_3, x_4) = 3, 0, 1, 0, 3$,
if $w = 0, 1, 2, 3, 4$,
respectively.
We denote this function by $f = [3,0,1,0,3]$.
What is the counting problem defined by the Holant sum in~(\ref{eqn:holant-sum}) on 4-regular graphs $G$ when $\mathcal{F} = \{f\}$?
By definition,
this is a sum over all 0-1 edge assignments of products of local evaluations.
We only sum over assignments which assign an even number of 1's to the incident edges of each vertex,
since $f=0$ for $w=1$ and $3$.
Then each vertex contributes a factor~3 if the~4 incident edges are assigned all~0 or all~1,
and contributes a factor~1 if exactly two incident edges are assigned~1.
\emph{Before anyone thinks that this problem is artificial},
let us consider a holographic transformation.
Consider the edge-vertex incidence graph $H = (E(G), V(G), \{(e,v) \mid \text{$v$ is incident to $e$ in $G$} \})$ of $G$.
This Holant problem can be expressed in the bipartite form $\holant{{=}_2}{f}$ on $H$,
where $=_2$ is the binary \textsc{Equality} function.
Thus,
every $e \in E(G)$ is assigned $=_2$,
and every $v \in V(G)$ is assigned $f$.
We can write $=_2$ by its truth table $(1,0,0,1)$ indexed by $\{0,1\}^2$.
If we apply the holographic transformation $Z = \tfrac{1}{\sqrt{2}} \left[\begin{smallmatrix} 1 & 1 \\ i & -i \end{smallmatrix}\right]$,
then Valiant's Holant Theorem~\cite{Val08} tells us that $\holant{{=}_2}{f}$ is exactly the same as $\holant{(=_2) Z^{\otimes 2}}{(Z^{-1})^{\otimes 4} f}$.
Here $(=_2) Z^{\otimes 2}$ is a row vector indexed by $\{0,1\}^2$ denoting the transformed function under $Z$ from $(=_2) = (1,0,0,1)$,
and $(Z^{-1})^{\otimes 4} f$ is the column vector indexed by $\{0,1\}^4$ denoting the transformed function under $Z^{-1}$ from $f$.
Let $\hat{f}$ be the \textsc{Exact-Two} function on $\{0,1\}^4$.
We can write its truth table as a column vector indexed by $\{0,1\}^4$,
which has a value~$1$ at entries of Hamming weight~$2$ and~$0$ elsewhere.
In symmetric signature notation,
$\hat{f} = [0,0,1,0,0]$.
Then we have
\begin{align*}
 &Z^{\otimes 4} \hat{f}
 =
 Z^{\otimes 4} \{
          \left[\begin{smallmatrix} 1 \\ 0 \end{smallmatrix}\right]
 \otimes  \left[\begin{smallmatrix} 1 \\ 0 \end{smallmatrix}\right]
 \otimes  \left[\begin{smallmatrix} 0 \\ 1 \end{smallmatrix}\right]
 \otimes  \left[\begin{smallmatrix} 0 \\ 1 \end{smallmatrix}\right]
 +
          \left[\begin{smallmatrix} 1 \\ 0 \end{smallmatrix}\right]
 \otimes  \left[\begin{smallmatrix} 0 \\ 1 \end{smallmatrix}\right]
 \otimes  \left[\begin{smallmatrix} 1 \\ 0 \end{smallmatrix}\right]
 \otimes  \left[\begin{smallmatrix} 0 \\ 1 \end{smallmatrix}\right]
 +
          \left[\begin{smallmatrix} 1 \\ 0 \end{smallmatrix}\right]
 \otimes  \left[\begin{smallmatrix} 0 \\ 1 \end{smallmatrix}\right]
 \otimes  \left[\begin{smallmatrix} 0 \\ 1 \end{smallmatrix}\right]
 \otimes  \left[\begin{smallmatrix} 1 \\ 0 \end{smallmatrix}\right]\\
 &\hphantom{{}={} Z^{\otimes 4} \{}\mathllap{{}+{}}
         {\left[\begin{smallmatrix} 0 \\ 1 \end{smallmatrix}\right]
 \otimes  \left[\begin{smallmatrix} 1 \\ 0 \end{smallmatrix}\right]
 \otimes  \left[\begin{smallmatrix} 1 \\ 0 \end{smallmatrix}\right]
 \otimes  \left[\begin{smallmatrix} 0 \\ 1 \end{smallmatrix}\right]
 +
          \left[\begin{smallmatrix} 0 \\ 1 \end{smallmatrix}\right]
 \otimes  \left[\begin{smallmatrix} 1 \\ 0 \end{smallmatrix}\right]
 \otimes  \left[\begin{smallmatrix} 0 \\ 1 \end{smallmatrix}\right]
 \otimes  \left[\begin{smallmatrix} 1 \\ 0 \end{smallmatrix}\right]
 +
          \left[\begin{smallmatrix} 0 \\ 1 \end{smallmatrix}\right]
 \otimes  \left[\begin{smallmatrix} 0 \\ 1 \end{smallmatrix}\right]
 \otimes  \left[\begin{smallmatrix} 1 \\ 0 \end{smallmatrix}\right]
 \otimes  \left[\begin{smallmatrix} 1 \\ 0 \end{smallmatrix}\right]} \}\\
 &=
 \tfrac{1}{4} \{
          \left[\begin{smallmatrix} 1 \\  i \end{smallmatrix}\right]
 \otimes  \left[\begin{smallmatrix} 1 \\  i \end{smallmatrix}\right]
 \otimes  \left[\begin{smallmatrix} 1 \\ -i \end{smallmatrix}\right]
 \otimes  \left[\begin{smallmatrix} 1 \\ -i \end{smallmatrix}\right]
 +
          \left[\begin{smallmatrix} 1 \\  i \end{smallmatrix}\right]
 \otimes  \left[\begin{smallmatrix} 1 \\ -i \end{smallmatrix}\right]
 \otimes  \left[\begin{smallmatrix} 1 \\  i \end{smallmatrix}\right]
 \otimes  \left[\begin{smallmatrix} 1 \\ -i \end{smallmatrix}\right]
 +
          \left[\begin{smallmatrix} 1 \\  i \end{smallmatrix}\right]
 \otimes  \left[\begin{smallmatrix} 1 \\ -i \end{smallmatrix}\right]
 \otimes  \left[\begin{smallmatrix} 1 \\ -i \end{smallmatrix}\right]
 \otimes  \left[\begin{smallmatrix} 1 \\  i \end{smallmatrix}\right]\\
 &\hphantom{{}={} \tfrac{1}{4} \{}\mathllap{{}+{}}
         {\left[\begin{smallmatrix} 1 \\ -i \end{smallmatrix}\right]
 \otimes  \left[\begin{smallmatrix} 1 \\  i \end{smallmatrix}\right]
 \otimes  \left[\begin{smallmatrix} 1 \\  i \end{smallmatrix}\right]
 \otimes  \left[\begin{smallmatrix} 1 \\ -i \end{smallmatrix}\right]
 +
          \left[\begin{smallmatrix} 1 \\ -i \end{smallmatrix}\right]
 \otimes  \left[\begin{smallmatrix} 1 \\  i \end{smallmatrix}\right]
 \otimes  \left[\begin{smallmatrix} 1 \\ -i \end{smallmatrix}\right]
 \otimes  \left[\begin{smallmatrix} 1 \\  i \end{smallmatrix}\right]
 +
          \left[\begin{smallmatrix} 1 \\ -i \end{smallmatrix}\right]
 \otimes  \left[\begin{smallmatrix} 1 \\ -i \end{smallmatrix}\right]
 \otimes  \left[\begin{smallmatrix} 1 \\  i \end{smallmatrix}\right]
 \otimes  \left[\begin{smallmatrix} 1 \\  i \end{smallmatrix}\right]} \}\\
 &= \tfrac{1}{2} [3,0,1,0,3] = \tfrac{1}{2} f;
\end{align*}
hence $(Z^{-1})^{\otimes 4} f = 2 \hat{f}$.
(Here we use the elementary fact that $(A \otimes B)(u \otimes v) = A u \otimes B v$ for tensor products of matrices and vectors.)
Meanwhile,
$Z$ transforms $=_2$ to the binary \textsc{Disequality} function $\neq_2$:
\begin{align*}
 (=_2) Z^{\otimes 2}
 & = \left(\begin{smallmatrix} 1 & 0 & 0 & 1 \end{smallmatrix}\right) Z^{\otimes 2}
 = \left\{ \left(\begin{smallmatrix} 1 & 0 \end{smallmatrix}\right)^{\otimes 2} + \left(\begin{smallmatrix} 0 &  1 \end{smallmatrix}\right)^{\otimes 2} \right\} Z^{\otimes 2}
 = \tfrac{1}{2} \left\{ \left(\begin{smallmatrix} 1 & 1 \end{smallmatrix}\right)^{\otimes 2} + \left(\begin{smallmatrix} i & -i \end{smallmatrix}\right)^{\otimes 2} \right\}\\
 & = [0,1,0]
 = (\neq_2).
\end{align*}
Hence,
up to a global constant factor of $2^n$ on a graph with $n$ vertices,
the Holant problem with $[3,0,1,0,3]$ is exactly the same as $\holant{{\neq}_2}{[0,0,1,0,0]}$.
A moment's reflection shows that this latter problem is counting Eulerian orientations over 4-regular graphs,
an eminently natural problem!
Thus holographic transformations can reveal the fact that completely different-looking problems are really the same problem,
and there is no objective criterion on one problem being more ``natural'' than another.
Hence we would like to classify all Holant problems given by such signatures.

An interesting observation is that $\holant{{\neq}_2}{[0,0,1,0,0]}$ has exactly the same value as $\holant{{\neq}_2}{[a,b,1,0,0]}$ on any signature grid,
for any $a,b \in \mathbb{C}$.
This is because on a bipartite graph,
$\neq_2$ demands that exactly half of the edges are~0 and the other half are~1,
while on the other side,
any use of the value $a$ or $b$ results in strictly less than half of the edges being~1.
This is related to a phenomenon we call \emph{vanishing}.
Vanishing signatures are constraint functions,
that when applied to any signature grid,
produce a zero Holant value.
A simple example is a tensor product of $\begin{pmatrix} 1 & i \end{pmatrix}$,
i.e.,
a constraint function of the form $\begin{pmatrix} 1 & i \end{pmatrix}^{\otimes k}$ on $k$ variables.
This function on a vertex (of degree $k$) can be replaced by $k$ copies of the unary function $\begin{pmatrix} 1 & i \end{pmatrix}$ on $k$ new vertices,
each connected to an incident edge.
Whenever two copies of $\begin{pmatrix} 1 & i \end{pmatrix}$ meet in the evaluation of $\Holant_\Omega$ in~(\ref{eqn:holant-sum}),
they annihilate each other since they give the value $\begin{pmatrix} 1 & i \end{pmatrix} \cdot \begin{pmatrix} 1 & i \end{pmatrix} = 0$.
These ghostly constraint functions are like the elusive dark matter.
They do not actually contribute any value to the Holant sum.
However in order to give a complete dichotomy for Holant problems,
it turns out to be essential that we capture these vanishing signatures.
There is another similarity with dark matter.
Their contribution to the Holant sum is not directly observed.
Yet in terms of the dimension of the algebraic variety they constitute,
they make up the vast majority of the tractable symmetric signatures.
Furthermore,
when combined with others,
they provide a large substrate to produce non-vanishing and tractable signatures.
In \#CSP,
they are invisible due to the presumed inclusion of all the \textsc{Equality} functions;
and they lurk beneath the surface when one only considers real-valued Holant problems.

The existence of vanishing signatures have influenced previous dichotomy results,
although this influence was not fully recognized at the time.
In the dichotomy theorems in~\cite{CLX09a} and~\cite{CHL12},
almost all tractable signatures can be transformed into a tractable \#CSP problem,
except for one special category.
The tractability proof for this category used the fact that they are a special case of generalized Fibonacci signatures~\cite{CLX13a}.
However,
what went completely unnoticed is that for every input instance using such signatures alone,
the Holant value is always zero!

The most significant previous encounter with vanishing signatures was in the parity setting~\cite{GLV13}.
The authors noticed that a large fraction of signatures always induce an even Holant value,
which is vanishing in $\mathbb{Z}_2$.
However,
the parity dichotomy was achieved using an existential argument without obtaining a complete characterization of the vanishing signatures.
Consequently,
the dichotomy criterion is non-constructive and is currently not known to be decidable.
Nevertheless,
this work is important because it was the first to discover nontrivial vanishing signatures in the parity setting
and to obtain a dichotomy that was \emph{completed} by vanishing signatures.

To complement our characterization of vanishing signatures,
we also obtain a characterization of signatures \emph{transformable} to the \#CSP tractable \emph{Affine} type $\mathscr{A}$ or \emph{Product} type $\mathscr{P}$,
after an orthogonal holographic transformation.
An orthogonal transformation is natural since the binary \textsc{Equality} $=_2$ is unchanged under such holographic transformations.
With explicit characterizations of these tractable signatures,
a complete dichotomy theorem becomes possible.

We first prove a dichotomy for a single signature,
and then we extend it to an arbitrary set of signatures.
The most difficult part is to prove a dichotomy for a single signature of arity~4.
The proof involves a demanding interpolation step and an approximation argument,
both of which use asymmetric signatures.
We found that in order to prove a dichotomy for symmetric signatures,
we must go through asymmetric signatures.

With this dichotomy,
we come to a conclusion on a long series of dichotomies on Holant problems~\cite{CLX12, CLX09a, CLX11b, Kow09, KC10, CK13, CK12, CHL12, HL12},
including the dichotomy theorems for the $\Holant^c$ and $\Holant^*$ frameworks with symmetric signatures.
They all become special cases of this dichotomy.
However,
the proof of this theorem is logically dependent on some of these previous dichotomies.
In particular,
this dichotomy extends the dichotomy in~\cite{HL12} that covers all real-valued symmetric signatures.
While we do not rely on their real-valued dichotomy itself,
we do make important use of two results in~\cite{HL12}.
One is the \#P-hardness of counting Eulerian orientations over $4$-regular graphs;
the other is a dichotomy for $\CSP^d$,
where every variable appears a multiple of $d$ times.

\paragraph{Acknowledgements}

We benefited greatly from the comments and suggestions of the anonymous referees, to whom we are grateful.
We thank Avi Wigderson for the invitation to present this work at the IAS,
and to Peter B\"{u}rgisser, Leslie Ann Goldberg, Mark Jerrum, and Pascal Koiran
for the invitation to present this work at the Dagstuhl seminar on computational counting.
We also thank all those at the Dagstuhl seminar for their interest.
We especially thank Mingji Xia and Les Valiant for their insightful comments.

\section{Preliminaries}

\subsection{Problems and Definitions}

The framework of Holant problems is defined for functions mapping any $[q]^k \to \mathbb{F}$ for a finite $q$ and some field $\mathbb{F}$.
In this paper,
we investigate complex-weighted Boolean $\Holant$ problems,
that is,
all functions are $[2]^k \to \mathbb{C}$.
Strictly speaking,
for consideration of models of computation,
functions take complex algebraic numbers.

A \emph{signature grid} $\Omega = (G, \mathcal{F}, \pi)$ consists of a graph $G = (V,E)$,
where each vertex is labeled by a function $f_v \in \mathcal{F}$, and $\pi : V \to \mathcal{F}$ is the labeling.
The Holant problem on instance $\Omega$ is to evaluate $\Holant_\Omega = \sum_{\sigma} \prod_{v \in V} f_v(\sigma \mid_{E(v)})$,
a sum over all edge assignments $\sigma: E \to \{0,1\}$.

A function $f_v$ can be represented by listing its values in lexicographical order as in a truth table,
which is a vector in $\mathbb{C}^{2^{\deg(v)}}$,
or as a tensor in $(\mathbb{C}^{2})^{\otimes \deg(v)}$.
We also use $f^\alpha$ to denote the value $f(\alpha)$,
where $\alpha$ is a binary string.
A function $f \in \mathcal{F}$ is also called a \emph{signature}.
A symmetric signature $f$ on $k$ Boolean variables can be expressed as $[f_0,f_1,\dotsc,f_k]$,
where $f_w$ is the value of $f$ on inputs of Hamming weight $w$.
In this paper, we consider symmetric signatures.
Sometimes we represent a signature of arity $k$ by a labeled vertex with $k$ ordered dangling edges corresponding to its input.

A Holant problem is parametrized by a set of signatures.

\begin{definition}
 Given a set of signatures $\mathcal{F}$,
 we define the counting problem $\Holant$ $(\mathcal{F})$ as:

 Input: A \emph{signature grid} $\Omega = (G, \mathcal{F}, \pi)$;

 Output: $\Holant_\Omega$.
\end{definition}

The following family Holant$^*$ of Holant problems were investigated previously~\cite{CLX09a, CLX10}.
This is the class of Holant problems in which all unary signatures are freely available.

\begin{definition}
 Given a set of signatures $\mathcal{F}$,
 $\Holant^*(\mathcal{F})$ denotes $\Holant(\mathcal{F} \cup \mathcal{U})$,
 where $\mathcal{U}$ is the set of all unary signatures.
\end{definition}

The family $\Holant^c$ of Holant problems (on Boolean variables) are defined analogously.
The $c$ stands for \emph{constants} and refers to the signatures that can fix a variable to a constant of the domain.

\begin{definition}
 Given a set of signatures $\mathcal{F}$,
 $\Holant^c(\mathcal{F})$ denotes $\Holant(\mathcal{F} \cup \{[0,1],[1,0]\})$.
\end{definition}

A signature $f$ of arity $n$ is \emph{degenerate} if there exist unary signatures $u_j \in \mathbb{C}^2$ ($1 \le j \le n$)
such that $f = u_1 \otimes \cdots \otimes u_n$.
A symmetric degenerate signature has the form $u^{\otimes n}$.
For such signatures,
it is equivalent to replace it by $n$ copies of the corresponding unary signature.
Replacing a signature $f \in \mathcal{F}$ by a constant multiple $c f$,
where $c \ne 0$,
does not change the complexity of $\Holant(\mathcal{F})$.
It introduces a global nonzero factor to $\Holant_\Omega$.
Hence, for two signatures $f,g$ of the same arity,
we use $f \neq g$ to mean that these signatures are not equal in the projective space sense,
i.e.~not equal up to any nonzero constant multiple.

We say a signature set $\mathcal{F}$ is tractable (resp.~$\SHARPP$-hard)
if the corresponding counting problem $\Holant(\mathcal{F})$ is tractable (resp.~$\SHARPP$-hard).
Similarly for a signature $f$, we say $f$ is tractable (resp.~$\SHARPP$-hard) if $\{f\}$ is.
We follow the usual conventions about polynomial time Turing reduction $\le_T$ and polynomial time Turing equivalence $\equiv_T$.

\subsection{Holographic Reduction}

To introduce the idea of holographic reductions,
it is convenient to consider bipartite graphs.
For a general graph,
we can always transform it into a bipartite graph while preserving the Holant value,
as follows.
For each edge in the graph,
we replace it by a path of length two.
(This operation is called the \emph{2-stretch} of the graph and yields the edge-vertex incidence graph.)
Each new vertex is assigned the binary \textsc{Equality} signature $(=_2) = [1,0,1]$.

We use $\holant{\mathcal{R}}{\mathcal{G}}$ to denote the Holant problem over bipartite graphs $H = (U,V,E)$,
where each vertex in $U$ or $V$ is assigned a signature in $\mathcal{R}$ or $\mathcal{G}$,
respectively.
An input instance for this bipartite Holant problem is a bipartite signature grid and is denoted by $\Omega = (H;\ \mathcal{R} \mid \mathcal{G};\ \pi)$.
Signatures in $\mathcal{R}$ are considered as row vectors (or covariant tensors);
signatures in $\mathcal{G}$ are considered as column vectors (or contravariant tensors)~\cite{DP91}.

For a 2-by-2 matrix $T$ and a signature set $\mathcal{F}$,
define $T \mathcal{F} = \{g \mid \exists f \in \mathcal{F}$ of arity $n,~g = T^{\otimes n} f\}$, similarly for $\mathcal{F} T$.
Whenever we write $T^{\otimes n} f$ or $T \mathcal{F}$,
we view the signatures as column vectors;
similarly for $f T^{\otimes n} $ or $\mathcal{F} T$ as row vectors.

Let $T$ be an invertible 2-by-2 matrix.
The holographic transformation defined by $T$ is the following operation:
given a signature grid $\Omega = (H;\ \mathcal{R} \mid \mathcal{G};\ \pi)$,
for the same bipartite graph $H$,
we get a new grid $\Omega' = (H;\ \mathcal{R} T \mid T^{-1} \mathcal{G};\ \pi')$ by replacing each signature in
$\mathcal{R}$ or $\mathcal{G}$ with the corresponding signature in $\mathcal{R} T$ or $T^{-1} \mathcal{G}$.

\begin{theorem}[Valiant's Holant Theorem~\cite{Val08}]
 If there is a holographic transformation mapping signature grid $\Omega$ to $\Omega'$,
 then $\Holant_\Omega = \Holant_{\Omega'}$.
\end{theorem}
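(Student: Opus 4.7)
The plan is to view the Holant sum~(\ref{eqn:holant-sum}) as a complete tensor contraction on $H$: every $u \in U$ contributes a covariant tensor $f_u \in (\mathbb{C}^2)^{\otimes \deg(u)}$, every $v \in V$ contributes a contravariant tensor $g_v \in (\mathbb{C}^2)^{\otimes \deg(v)}$, and each edge $e = (u,v)$ contracts the index of $f_u$ attached to $e$ with the matching index of $g_v$. The scalar produced by this contraction is $\Holant_\Omega$. The general (non-bipartite) case reduces to this bipartite setting via the 2-stretch described earlier, since $=_2$ is preserved under the identity tensor contraction.

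The first step is the trivial but crucial observation that $T T^{-1} = I$. On every edge $e = (u,v)$, I would rewrite the contraction $\sum_{x \in \{0,1\}} f_u^{\cdots x \cdots}\, g_v^{\cdots x \cdots}$ as $\sum_{x,y,z} f_u^{\cdots x \cdots}\, T_{xy}\, (T^{-1})_{yz}\, g_v^{\cdots z \cdots}$, which has the same value. Performing this substitution independently on every edge of $H$ leaves $\Holant_\Omega$ unchanged and leaves, attached to each incident edge of each vertex, one copy of $T$ on the $U$-side and one copy of $T^{-1}$ on the $V$-side.

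Next I would gather, at each vertex $u \in U$, the $\deg(u)$ copies of $T$ that were inserted on its incident edges, and similarly at each $v \in V$ the $\deg(v)$ copies of $T^{-1}$. Using the elementary identity $(A \otimes B)(x \otimes y) = (Ax) \otimes (By)$ iteratively (already cited in the excerpt), these combine into a single $T^{\otimes \deg(u)}$ acting on $f_u$ from the right and a single $(T^{-1})^{\otimes \deg(v)}$ acting on $g_v$ from the left. The modified instance is exactly $\Omega' = (H;\ \mathcal{R} T \mid T^{-1} \mathcal{G};\ \pi')$, so the value we have been tracking is $\Holant_{\Omega'}$; the two Holant values must therefore coincide.

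The principal obstacle is bookkeeping rather than conceptual: each tensor slot of $f_u$ must be paired with precisely the copy of $T$ inserted on the corresponding incident edge, and likewise for $g_v$ with $T^{-1}$. I would handle this by fixing, once and for all, an ordering of the incident edges at every vertex so that each signature becomes a concrete element of $(\mathbb{C}^2)^{\otimes \deg(\cdot)}$, and then apply the distributive tensor identity one edge at a time. A short induction on $|E(H)|$, peeling off a single $T T^{-1}$ insertion at each step, makes the matching between the inserted factors and the tensor slots rigorous and completes the proof.
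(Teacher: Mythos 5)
Your argument is correct and is essentially the standard proof of Valiant's Holant Theorem, which this paper does not reprove but simply cites from~\cite{Val08}: write the Holant as a full tensor contraction over the edges of the bipartite grid, insert $T T^{-1} = I$ on each edge (using $\sum_y T_{xy}(T^{-1})_{yz} = \delta_{xz}$), and regroup the inserted factors vertex by vertex via $(A \otimes B)(u \otimes v) = Au \otimes Bv$ to obtain the signatures $f_u T^{\otimes \deg(u)}$ and $(T^{-1})^{\otimes \deg(v)} g_v$ of $\Omega'$. The only cosmetic remark is that the theorem as stated here already concerns bipartite signature grids (holographic transformations are defined in that setting), so your opening reduction via the 2-stretch is unnecessary, though harmless.
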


Therefore,
an invertible holographic transformation does not change the complexity of the Holant problem in the bipartite setting.
Furthermore,
there is a special kind of holographic transformation,
the orthogonal transformation,
that preserves the binary equality and thus can be used freely in the standard setting.

\begin{theorem}[Theorem~2.2 in~\cite{CLX09a}] \label{thm:orthogonal}
 Suppose $T$ is a 2-by-2 orthogonal matrix $(T T^\intercal = I_2)$ and let $\Omega = (H, \mathcal{F}, \pi)$ be a signature grid.
 Under a holographic transformation by $T$, we get a new grid $\Omega' = (H, T \mathcal F, \pi')$ and $\Holant_\Omega = \Holant_{\Omega'}$.
\end{theorem}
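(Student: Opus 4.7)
The plan is to reduce the statement to Valiant's Holant Theorem, which is already available in the bipartite setting. First I convert $\Omega$ into a bipartite signature grid $\tilde\Omega$ via the 2-stretch: on each edge of $H$ insert a new degree-$2$ vertex labeled by $(=_2) = [1,0,1]$. Since each inserted equality forces its two incident edges to agree, summing over assignments of the new edge set coincides with summing over assignments of $E(H)$, so $\Holant_{\tilde\Omega} = \Holant_\Omega$. Applying the same 2-stretch to $\Omega'$ yields a grid $\tilde\Omega'$ on the same bipartite graph with $\Holant_{\tilde\Omega'} = \Holant_{\Omega'}$. It now suffices to show $\Holant_{\tilde\Omega} = \Holant_{\tilde\Omega'}$.

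Next I apply Valiant's Holant Theorem to $\tilde\Omega$ with the invertible holographic transformation $T^{-1}$, treating the inserted $(=_2)$-vertices as the row-vector (covariant) side and the $\mathcal{F}$-vertices as the column-vector (contravariant) side. This replaces each $(=_2)$ by $(=_2)(T^{-1})^{\otimes 2}$ and each $f \in \mathcal{F}$ of arity $n$ by $T^{\otimes n} f$, without changing the Holant value. The resulting grid is thus $\tilde\Omega'$, provided I can establish the single identity $(=_2)(T^{-1})^{\otimes 2} = (=_2)$.

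This identity is where orthogonality enters, and it is the only substantive step. Viewing $(=_2)$ as the $2{\times}2$ identity matrix indexed by $\{0,1\}^2$, the tensor identity $\bigl((T^{-1})^{\otimes 2}\bigr)_{(i_1 i_2),(j_1 j_2)} = (T^{-1})_{i_1 j_1}(T^{-1})_{i_2 j_2}$ identifies $(=_2)(T^{-1})^{\otimes 2}$ with the matrix $\transpose{(T^{-1})}\,T^{-1}$. The hypothesis $T\transpose{T} = I_2$ gives $T^{-1} = \transpose{T}$, and therefore $\transpose{(T^{-1})}\,T^{-1} = T\transpose{T} = I_2 = (=_2)$. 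No step here is expected to present a real obstacle; the whole argument is a bookkeeping reduction whose only content is this last computation, which also explains why the Holant Theorem in the non-bipartite standard setting must be restricted to orthogonal $T$: orthogonality is precisely the condition under which the binary equality used by the 2-stretch is invariant, so that the transformation can be pushed through without any compensating matrix on the ``other side.''
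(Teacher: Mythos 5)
Your proof is correct and follows exactly the standard argument behind this cited result (Theorem~2.2 of~\cite{CLX09a}), which the paper itself only sketches in the surrounding text: 2-stretch to the bipartite form with $=_2$ on edges, invoke Valiant's Holant Theorem, and observe that orthogonality is precisely the condition $(=_2)T^{\otimes 2} = (=_2)$ (equivalently $\transpose{T}T = I_2$), so the binary equality is preserved. Nothing is missing; the bookkeeping with the transformation direction and the identification of $(=_2)(T^{-1})^{\otimes 2}$ with $\transpose{(T^{-1})}T^{-1}$ is handled correctly.
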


Since the complexity of a signature is equivalent up to a nonzero constant factor,
we also call a transformation $T$ such that $T T^\intercal = \lambda I$ for some $\lambda \neq 0$ an orthogonal transformation.
Such transformations do not change the complexity of a problem.

\subsection{Realization}

One basic notion used throughout the paper is realization.
We say a signature $f$ is \emph{realizable} or \emph{constructible} from a signature set $\mathcal{F}$
if there is a gadget with some dangling edges such that each vertex is assigned a signature from $\mathcal{F}$,
and the resulting graph, when viewed as a black-box signature with inputs on the dangling edges, is exactly $f$.
We will only construct polynomial-sized gadget in this paper.
Hence if $f$ is realizable from a set $\mathcal{F}$,
then we can freely add $f$ into $\mathcal{F}$ while preserving the complexity.

Formally,
such a notion is defined by an $\mathcal{F}$-gate~\cite{CLX09a, CLX10}.
An $\mathcal{F}$-gate is similar to a signature grid $(H, \mathcal{F}, \pi)$ except that $H = (V,E,D)$ is a graph with some dangling edges $D$.
The dangling edges define external variables for the $\mathcal{F}$-gate.
(See Figure~\ref{fig:Fgate} for an example.)
We denote the regular edges in $E$ by $1, 2, \dotsc, m$ and the dangling edges in $D$ by $m+1, \dotsc, m+n$.
Then we can define a function $\Gamma$ for this $\mathcal{F}$-gate as
\[
 \Gamma(y_1, \dotsc, y_n) = \sum_{x_1, \dotsc, x_m \in \{0, 1\}} H(x_1, \dotsc, x_m, y_1, \dotsc, y_n),
\]
where $(y_1, \dotsc, y_n) \in \{0, 1\}^n$ is an assignment on the dangling edges
and $H(x_1, \dotsc, x_m,$ $y_1, \dotsc, y_n)$ is the value of the signature grid on an assignment of all edges,
which is the product of evaluations at all internal vertices.
We also call this function $\Gamma$ the signature of the $\mathcal{F}$-gate.
An $\mathcal{F}$-gate can be used in a signature grid as if it is just a single vertex with the particular signature.

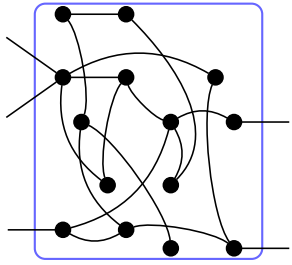
\begin{figure}[t]
 \centering
 \begin{tikzpicture}[scale=\scale,transform shape,node distance=\nodeDist,semithick]
  \node[external]  (0)                     {};
  \node[internal]  (1) [below right of=0]  {};
  \node[external]  (2) [below left  of=1]  {};
  \node[internal]  (3) [above       of=1]  {};
  \node[internal]  (4) [right       of=3]  {};
  \node[internal]  (5) [below       of=4]  {};
  \node[internal]  (6) [below right of=5]  {};
  \node[internal]  (7) [right       of=6]  {};
  \node[internal]  (8) [below       of=6]  {};
  \node[internal]  (9) [below       of=8]  {};
  \node[internal] (10) [right       of=9]  {};
  \node[internal] (11) [above right of=6]  {};
  \node[internal] (12) [below left  of=8]  {};
  \node[internal] (13) [left        of=8]  {};
  \node[internal] (14) [below left  of=13] {};
  \node[external] (15) [left        of=14] {};
  \node[internal] (16) [below left  of=5]  {};
  \path let
         \p1 = (15),
         \p2 = (0)
        in
         node[external] (17) at (\x1, \y2) {};
  \path let
         \p1 = (15),
         \p2 = (2)
        in
         node[external] (18) at (\x1, \y2) {};
  \node[external] (19) [right of=7]  {};
  \node[external] (20) [right of=10] {};
  \path (1) edge                             (5)
            edge[bend left]                 (11)
            edge[bend right]                (13)
            edge node[near start] (e1) {}   (17)
            edge node[near start] (e2) {}   (18)
        (3) edge                             (4)
        (4) edge[out=-45,in=45]              (8)
        (5) edge[bend right, looseness=0.5] (13)
            edge[bend right, looseness=0.5]  (6)
        (6) edge[bend left]                  (8)
            edge[bend left]                  (7)
            edge[bend left]                 (14)
        (7) edge node[near start] (e3) {}   (19)
       (10) edge[bend right, looseness=0.5] (12)
            edge[bend left,  looseness=0.5] (11)
            edge node[near start] (e4) {}   (20)
       (12) edge[bend left]                 (16)
       (14) edge node[near start] (e5) {}   (15)
            edge[bend right]                (12)
       (16) edge[bend left,  looseness=0.5]  (9)
            edge[bend right, looseness=0.5]  (3);
  \begin{pgfonlayer}{background}
   \node[inner sep=1pt,transform shape=false,draw=\borderColor,thick,rounded corners,fit = (3) (4) (9) (e1) (e2) (e3) (e4) (e5)] {};
  \end{pgfonlayer}
 \end{tikzpicture}
 \caption{An $\mathcal{F}$-gate with 5 dangling edges.}
 \label{fig:Fgate}
\end{figure}

Using the idea of $\mathcal{F}$-gates,
we can reduce one Holant problem to another.
Suppose $g$ is the signature of some $\mathcal{F}$-gate.
Then $\Holant(\mathcal{F} \cup \{g\}) \leq_T \Holant(\mathcal{F})$.
The reduction is simple.
Given an instance of $\Holant(\mathcal{F} \cup \{g\})$,
by replacing every appearance of $g$ by the $\mathcal{F}$-gate,
we get an instance of $\Holant(\mathcal{F})$.
Since the signature of the $\mathcal{F}$-gate is $g$,
the Holant values for these two signature grids are identical.

Although our main result is about symmetric signatures,
some of our proofs utilize asymmetric signatures.
When a gadget has an asymmetric signature,
we place a diamond on the edge corresponding to the most significant index bit.
The remaining index bits are in order of decreasing significance as one travels counterclockwise around the vertex.
(See Figure~\ref{fig:rotate_asymmetric_signature} for an example.)
Some of our gadget constructions are bipartite graphs.
To highlight this structure, we use vertices of different shapes.
Any time a gadget has a square vertex, it is assigned $[0,1,0]$.
(See Figure~\ref{fig:gadget:special_vanishing_case:glue} for an example.)

We note that even for a very simple signature set $\mathcal{F}$,
the signatures for all $\mathcal{F}$-gates can be quite complicated and expressive.

\subsection{\texorpdfstring{\#}{Count-}CSP and Its Tractable Signatures} \label{subsec:CSP-Tractable}

An instance of $\CSP(\mathcal{F})$ has the following bipartite view.
Create a node for each variable and each constraint.
Connect a variable node to a constraint node if the variable appears in the constraint function.
This bipartite graph is also known as the \emph{constraint graph}.
Under this view,
we can see that
\[
 \CSP(\mathcal{F}) \equiv_T \holant{\mathcal{F}}{\mathcal{EQ}} \equiv_T \Holant(\mathcal{F} \cup \mathcal{EQ}),
\]
where $\mathcal{EQ} = \{=_1, =_2, =_3, \dotsc\}$ is the set of equality signatures of all arities.

For a positive integer $d$,
the problem $\CSP^d(\mathcal{F})$ is similar to $\CSP(\mathcal{F})$ except that every variable has to appear a multiple of $d$ times.
Therefore, we have
\[
 \CSP^d(\mathcal{F}) \equiv_T \holant{\mathcal{F}}{\mathcal{EQ}_d},
\]
where $\mathcal{EQ}_d = \{=_d, =_{2 d}, =_{3 d}, \dotsc\}$ is the set of equality signatures of arities that are a multiple of $d$.

For the \#CSP framework,
the following two signature sets are tractable~\cite{CLX09a}.

\begin{definition}
 A $k$-ary function $f(x_1, \dotsc, x_k)$ is \emph{affine} if it has the form
 \[\lambda \chi_{Ax = 0} \cdot \sqrt{-1}^{\sum_{j=1}^n \langle \alpha_j, x \rangle},\]
 where $\lambda \in \mathbb{C}$,
 $x = (x_1, x_2, \dotsc, x_k, 1)^\intercal$,
 $A$ is a matrix over $\mathbb{F}_2$, $\alpha_j$ is a vector over $\mathbb{F}_2$,
 and $\chi$ is a 0-1 indicator function such that $\chi_{Ax = 0}$ is 1 iff $A x = 0$.
 Note that the dot product $\langle \alpha_j, x \rangle$ is calculated over $\mathbb{F}_2$,
 while the summation $\sum_{j=1}^n$ on the exponent of $i = \sqrt{-1}$ is evaluated as a sum mod 4 of 0-1 terms.
 We use $\mathscr{A}$ to denote the set of all affine functions.
\end{definition}

Notice that there is no restriction on the number of rows in the matrix $A$.
The trivial case is when $A$ is the zero matrix so that $\chi_{A x = 0} = 1$ holds for all $x$.

\begin{definition}
 A function is of \emph{product type} if it can be expressed as a product of unary functions,
 binary equality functions $([1,0,1])$, and binary disequality functions $([0,1,0])$.
 We use $\mathscr{P}$ to denote the set of product-type functions.
\end{definition}

An alternate definition for $\mathscr{P}$,
implicit in~\cite{CLX11a},
is the tensor closure of signatures with support on two entries of complement indices.

It is easy to see
(cf.~Lemma~2.2 in~\cite{HL13},
the full version of~\cite{HL12})
that if $f$ is a symmetric signature in $\mathscr{P}$,
then $f$ is either degenerate, binary disequality, or generalized equality (i.e.~$[a,0,\dotsc,0,b]$ for $a, b \in \mathbb{C}$).
It is known that the set of non-degenerate symmetric signatures in $\mathscr{A}$ is precisely the nonzero signatures
($\lambda \neq 0$) in $\mathscr{F}_1 \union \mathscr{F}_2 \union \mathscr{F}_3$ with arity at least two,
where $\mathscr{F}_1$, $\mathscr{F}_2$, and $\mathscr{F}_3$ are three families of signatures defined as
\begin{align*}
 \mathscr{F}_1 &= \left\{\lambda \left([1,0]^{\otimes k} + i^r [0, 1]^{\otimes k}\right) \st \lambda \in \mathbb{C}, k = 1, 2, \dotsc, r = 0, 1, 2, 3\right\},\\
 \mathscr{F}_2 &= \left\{\lambda \left([1,1]^{\otimes k} + i^r [1,-1]^{\otimes k}\right) \st \lambda \in \mathbb{C}, k = 1, 2, \dotsc, r = 0, 1, 2, 3\right\}, \text{ and}\\
 \mathscr{F}_3 &= \left\{\lambda \left([1,i]^{\otimes k} + i^r [1,-i]^{\otimes k}\right) \st \lambda \in \mathbb{C}, k = 1, 2, \dotsc, r = 0, 1, 2, 3\right\}.
\end{align*}
Let $\mathscr{F}_{123} = \mathscr{F}_1 \union \mathscr{F}_2 \union \mathscr{F}_3$ be the union of these three sets of signatures.
We explicitly list all the signatures in $\mathscr{F}_{123}$ up to an arbitrary constant multiple from $\mathbb{C}$:\\
\parbox{0.61\textwidth}{
 \begin{enumerate}
  \item $[1, 0, \dotsc, 0, \pm 1]$; \hfill $(\mathscr{F}_1, r=0,2)$
  \item $[1, 0, \dotsc, 0, \pm i]$; \hfill $(\mathscr{F}_1, r=1,3)$
  \item $[1,  0, 1,  0, \dotsc,   0  \text{ or } 1]$; \hfill $(\mathscr{F}_2, r=0)$
  \item $[1, -i, 1, -i, \dotsc, (-i) \text{ or } 1]$; \hfill $(\mathscr{F}_2, r=1)$
  \item $[0,  1, 0,  1, \dotsc,   0  \text{ or } 1]$; \hfill $(\mathscr{F}_2, r=2)$
  \item $[1,  i, 1,  i, \dotsc,   i  \text{ or } 1]$; \hfill $(\mathscr{F}_2, r=3)$
  \item $[1,  0, -1,  0, 1,  0, -1,  0, \dotsc, 0 \text{ or } 1 \text{ or } (-1)]$; \hfill $(\mathscr{F}_3, r=0)$
  \item $[1,  1, -1, -1, 1,  1, -1, -1, \dotsc,               1 \text{ or } (-1)]$; \hfill $(\mathscr{F}_3, r=1)$
  \item $[0,  1,  0, -1, 0,  1,  0, -1, \dotsc, 0 \text{ or } 1 \text{ or } (-1)]$; \hfill $(\mathscr{F}_3, r=2)$
  \item $[1, -1, -1,  1, 1, -1, -1,  1, \dotsc,               1 \text{ or } (-1)]$. \hfill $(\mathscr{F}_3, r=3)$
 \end{enumerate}}

In the Holant framework, there are two corresponding signature sets that are tractable.
A signature $f$ (resp.~a signature set $\mathcal{F}$) is $\mathscr{A}$-transformable if there exists a holographic transformation $T$
such that $f \in T \mathscr{A}$ (resp.~$\mathcal{F} \subseteq T \mathscr{A}$) and $[1,0,1] T^{\otimes 2} \in \mathscr{A}$.
Similarly, a signature $f$ (resp.~a signature set $\mathcal{F}$) is $\mathscr{P}$-transformable if there exists a holographic transformation $T$
such that $f \in T \mathscr{P}$ (resp.~$\mathcal{F}\subseteq T \mathscr{P}$) and $[1,0,1] T^{\otimes 2} \in \mathscr{P}$.
These two families are tractable because after a transformation by $T$, it is a tractable \#CSP instance.

\subsection{Some Known Dichotomies}

Here we list several known dichotomies.
Our main dichotomy theorem generalizes all of them.
In order to clearly see this, we state the previous dichotomies using the language of this paper.
In particular, some previous classifications are now presented differently using our new understanding.

The dichotomy for a single symmetric ternary signature is an important base case of our proof.

\begin{theorem}[Theorem~3 in~\cite{CHL12}] \label{thm:arity3:singleton}
 If $f = [f_0,f_1,f_2,f_3]$ is a non-degenerate, complex-valued signature,
 then $\Holant(f)$ is $\SHARPP$-hard unless $f$ satisfies one of the following conditions,
 in which case the problem is computable in polynomial time:
 \begin{enumerate}
  \item $f$ is $\mathscr{A}$- or $\mathscr{P}$-transformable;
  \item For $\alpha \in \{2 i, -2 i\}$, $f_2= \alpha f_1 + f_0$ and $f_3 = \alpha f_2 + f_1$. \label{case:arity3:exceptional}
 \end{enumerate}
\end{theorem}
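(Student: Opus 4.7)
The plan is to split the proof into a tractability argument for cases~(1) and~(2), and a \#P-hardness argument for all remaining non-degenerate ternary signatures. Tractability of case~(1) is immediate from the definitions of $\mathscr{A}$- and $\mathscr{P}$-transformability: applying the holographic transformation $T$ reduces $\Holant(f)$ to a \#CSP instance over the affine or product-type family, both tractable by the \#CSP dichotomies referenced in Section~\ref{subsec:CSP-Tractable}. For case~(2), I would show that the Holant value is identically zero on every signature grid. When $\alpha = 2i$, the two relations $f_2 = 2if_1 + f_0$ and $f_3 = 2if_2 + f_1$ express the linear recurrence $f_{k+1} = 2if_k + f_{k-1}$, whose characteristic polynomial $\lambda^2 - 2i\lambda - 1 = (\lambda - i)^2$ has the repeated root $\lambda = i$. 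The general solution $f_k = (A + Bk)\,i^k$ lets me decompose $f$ as a linear combination of $[1,i]^{\otimes 3}$ and the symmetrization of $[0,1] \otimes [1,i] \otimes [1,i]$, each summand containing at least two $[1,i]$-factors. A bookkeeping argument then shows that in every surviving term of the Holant sum, two $[1,i]$-factors must meet along some edge (via the $=_2$ imposed by the incidence structure) and contract to $1 + i^2 = 0$, so the Holant value is zero and the problem is trivially in~P. The case $\alpha = -2i$ is symmetric with $[1,-i]$.

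For the hardness direction, I would first apply orthogonal holographic transformations (Theorem~\ref{thm:orthogonal}) to put $f$ into a convenient normal form, and then realize auxiliary signatures via $\mathcal{F}$-gates built from $f$. Natural constructions include the self-loop $\partial f = [f_0 + f_2, f_1 + f_3]$ obtained by contracting two edges of $f$ with $=_2$ (a unary signature), and binary signatures obtained by joining two copies of $f$ along a pair of their three edges; iterating such binary gadgets produces powers of a fixed $2 \times 2$ transfer matrix. When this matrix has two distinct nonzero eigenvalues whose ratio is not a root of unity, a standard polynomial interpolation argument realizes essentially arbitrary binary signatures, after which the problem reduces to an instance covered by an earlier $\Holant^*$ or \#CSP dichotomy, yielding \#P-hardness.

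The main obstacle will be the boundary cases where interpolation breaks down. Depending on the eigenvalues of the implicit recurrence determined by $(f_0, f_1, f_2, f_3)$, $f$ falls into one of several regimes: a repeated root at $i$ or at $-i$ gives the vanishing case~(2); certain special eigenvalue pairs place $f$ in $\mathscr{A}$- or $\mathscr{P}$-transformable form; and generic pairs admit interpolation as above. The delicate situations are those whose eigenvalue ratio is a nontrivial root of unity, or whose recurrence is degenerate, while $f$ still lies outside both cases~(1) and~(2). For these I expect to need bespoke gadget constructions that produce useful unary signatures, after which hardness follows either from existing $\Holant^*$ or $\Holant^c$ dichotomies applied to the ternary signature, or from direct reductions from counting Eulerian orientations as hinted at in the paper's introduction.
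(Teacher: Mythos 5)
You are proving a statement that this paper does not prove at all: Theorem~\ref{thm:arity3:singleton} is imported verbatim from~\cite{CHL12} and used as a base case for Theorem~\ref{thm:dic:single}, so the comparison is against that external proof rather than anything in this paper. Your tractability half is sound. Case~(1) is indeed immediate from the definition of $\mathscr{A}$-/$\mathscr{P}$-transformability plus the \#CSP dichotomy, and your case~(2) argument --- the recurrence has the repeated root $i$ (resp.~$-i$), so $f$ decomposes into tensor terms each carrying at least two $[1,i]$ factors, which must annihilate pairwise along some edge --- is exactly the vanishing-signature viewpoint this paper develops in Section~\ref{sec:vanishing} (Lemma~\ref{lem:sym:van}); it is a legitimate and arguably cleaner alternative to the original tractability argument in~\cite{CHL12}, which went through generalized Fibonacci gates.

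The genuine gap is in the hardness half, which is where essentially all the content of the theorem lies. First, the claim that iterating a binary gadget built from $f$ and interpolating ``realizes essentially arbitrary binary signatures'' is an overreach: powers of a fixed $2\times 2$ transfer matrix only allow you to interpolate binaries of the form $P \left[\begin{smallmatrix} x & 0 \\ 0 & y \end{smallmatrix}\right] P^{-1}$ in the fixed eigenbasis $P$ of that matrix (and only when the eigenvalues are nonzero with ratio not a root of unity), so the subsequent ``reduce to an earlier $\Holant^*$ or \#CSP dichotomy'' step needs a case-by-case argument that this restricted commuting family, together with whatever unaries you can actually realize, suffices --- you have not supplied it. Second, and more seriously, the regimes you explicitly defer --- eigenvalue ratio a nontrivial root of unity, degenerate transfer matrices, and signatures lying just outside cases~(1) and~(2) --- are precisely where the dichotomy boundary is decided; saying you ``expect to need bespoke gadget constructions'' is a placeholder for the bulk of the work in~\cite{CHL12}, which consists of concrete gadget and interpolation constructions and reductions from specific counting problems known to be $\SHARPP$-hard on $3$-regular graphs. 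As written, the proposal establishes the two tractable cases but not the ``$\SHARPP$-hard otherwise'' direction, so it does not prove the theorem.
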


We also use the following theorem about edge-weighted signatures on $k$-regular graphs.

\begin{theorem}[Theorem~3 in~\cite{CK12}] \label{thm:k-reg_homomorphism}
 Let $k \ge 3$ be an integer and suppose $f$ is a non-degenerate, symmetric, complex-valued binary signature.
 Then $\holant{f}{{=}_k}$ is $\SHARPP$-hard unless there exists a holographic transformation $T$
 such that $f T^{\otimes 2} = [1,0,1]$ and $\left(({T^{-1}})^{\otimes k} (=_k)\right)$ is $\mathscr{A}$- or $\mathscr{P}$-transformable,
 in which case the problem is computable in polynomial time.
\end{theorem}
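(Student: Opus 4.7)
My approach splits into the tractable direction and the hard direction, both built on the observation that since $f$ is non-degenerate (i.e.\ a nonsingular symmetric $2\times 2$ bilinear form over $\mathbb{C}$), some holographic transformation $T_0$ with $fT_0^{\otimes 2} = [1,0,1]$ always exists, and any other such $T$ differs from $T_0$ by right-multiplication by an orthogonal matrix, since $O^\top O = I$ characterizes preservation of $[1,0,1]$.

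For tractability, given the hypothesized $T$, Valiant's Holant theorem rewrites $\holant{f}{=_k}$ as $\holant{(=_2)}{g}$ where $g = (T^{-1})^{\otimes k}(=_k)$. The degree-$2$ equalities on the left can be contracted into their neighbors, leaving $\Holant(g)$; since $g$ is $\mathscr{A}$- or $\mathscr{P}$-transformable by assumption, a further holographic transformation produces a tractable $\CSP$ instance.

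For hardness, transform by $T_0$ to reduce $\holant{f}{=_k}$ to $\Holant(g)$ with $g = (T_0^{-1})^{\otimes k}(=_k) = u^{\otimes k} + v^{\otimes k}$, where $u = T_0^{-1}e_0$ and $v = T_0^{-1}e_1$ are linearly independent. Because orthogonal transformations on $g$ preserve $\mathscr{A}$- and $\mathscr{P}$-transformability, the theorem's hypothesis becomes simply: $g$ is neither $\mathscr{A}$- nor $\mathscr{P}$-transformable. For $k=3$, I would apply Theorem~\ref{thm:arity3:singleton} directly and rule out the exceptional Case~\ref{case:arity3:exceptional}: that recurrence has characteristic polynomial $x^2 - (\pm 2i)x - 1$ with double root $\pm i$, whose solution space is $\{(A+Bj)(\pm i)^j\}_j$. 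Forcing $g_j = u_0^{3-j}u_1^j + v_0^{3-j}v_1^j$ into this two-dimensional space makes both ratios $u_1/u_0$ and $v_1/v_0$ equal $\pm i$ (the degenerate subcases $u_0=0$ or $v_0=0$ are handled by direct substitution), contradicting linear independence of $u$ and $v$. Hence $\Holant(g)$ is $\SHARPP$-hard by Theorem~\ref{thm:arity3:singleton}.

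For $k\ge 4$, the plan is to iteratively lower the arity while preserving the sum-of-two-rank-one-tensors structure: contracting two inputs of $g$ through $[1,0,1]$ yields an $\mathcal{F}$-gate with signature $(u_0^2+u_1^2)\,u^{\otimes(k-2)} + (v_0^2+v_1^2)\,v^{\otimes(k-2)}$. Iterating brings the arity down to $3$ (odd $k$) or to $4$ (even $k$), at which point a final gadget produces an arity-$3$ residue, and hardness propagates upward. The main obstacle I expect is the \emph{vanishing-adjacent} subcase where $u$ or $v$ is proportional to $(1, \pm i)$, since then $u_0^2+u_1^2$ or $v_0^2+v_1^2$ vanishes and the natural contraction collapses; in such cases one must use alternative constructions, for example loops built with $[0,1,0]$ in place of $[1,0,1]$, or pinning a single edge with a unary signature produced by polynomial interpolation on $g$. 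A related secondary difficulty is that iterated contraction might accidentally produce an $\mathscr{A}/\mathscr{P}$-transformable arity-$3$ residue from a non-$\mathscr{A}/\mathscr{P}$-transformable starting $g$; resolving this requires explicitly characterizing the (low-dimensional) algebraic locus of $(u,v)$ on which such a collapse occurs, and treating those exceptional pairs with a separate interpolation-based argument.
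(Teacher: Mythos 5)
First, note that the paper does not prove this statement: it is quoted verbatim from~\cite{CK12} (together with the explicit form, Theorem~\ref{thm:k-reg_homomorphism}$'$), so there is no in-paper proof to match your route against; the known proof in~\cite{CK12} is interpolation-heavy. Your setup is sound as far as it goes: over $\mathbb{C}$ a non-degenerate $f$ can always be normalized so that $fT_0^{\otimes 2}=[1,0,1]$, the residual freedom is exactly right-multiplication by $\mathbf{O}_2(\mathbb{C})$, orthogonal transformations preserve $\mathscr{A}$-/$\mathscr{P}$-transformability, the tractability direction is correct, and your exclusion of the exceptional case of Theorem~\ref{thm:arity3:singleton} for $g=u^{\otimes 3}+v^{\otimes 3}$ (the ``$X+Y=0$, $rX+sY=0$'' argument forcing both ratios to equal $\pm i$) is a valid argument for the base case $k=3$.

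The gap is in the hardness argument for $k\ge 4$, and it is not a peripheral technicality but the heart of the theorem. (a) For even $k$ your plan to finish with ``a final gadget that produces an arity-3 residue'' cannot work: every $\mathcal{F}$-gate built from signatures of even arity has even arity (a parity count of dangling edges, noted explicitly in this paper), so from $g$ of even arity you can never realize an arity-3 signature. You would instead need an arity-4 hardness criterion, and establishing one is a substantial result in its own right — in this paper it occupies Section~\ref{sec:arity4} and rests on the \#P-hardness of Eulerian orientations plus a delicate interpolation (Lemma~\ref{lem:arity4:get_avg_sig}); in~\cite{CK12} the analogous work is done by interpolating equalities. (b) The ``vanishing-adjacent'' cases you flag, where $u$ or $v$ is isotropic ($u_0^2+u_1^2=0$), are exactly where the theorem's content lives: in the original statement this is the regime $f_0f_2=-f_1^2$, where tractability versus \#P-hardness is decided by the $k$-dependent root-of-unity condition $f_0^{2k}=f_2^{2k}$. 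Your proposed fixes — loops through $[0,1,0]$, or pinning by a unary obtained ``by polynomial interpolation on $g$'' — are not available for free in $\Holant(g)$ (only $g$ and $=_2$ are present), and constructing them is essentially the interpolation machinery that the actual proof consists of; asserting that the exceptional pairs form a low-dimensional locus to be ``treated by a separate interpolation-based argument'' does not discharge this. So the proposal is a plausible skeleton whose easy parts (normalization, tractability, $k=3$) are correct, but the cases it defers are precisely the ones that make Theorem~\ref{thm:k-reg_homomorphism} nontrivial, and no workable argument is given for them.
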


While Theorem~\ref{thm:k-reg_homomorphism} is conceptual,
the original statement in Theorem~\ref{thm:k-reg_homomorphism}$'$ is directly applicable.

\newtheorem*{specialtheorem}{Theorem {\thetheorem}$'$}

\begin{specialtheorem}[Theorem~3 in~\cite{CK12}]
 Let $k \ge 3$ be an integer.
 Then $\holant{[f_0, f_1, f_2]}{({=}_k)}$ is $\SHARPP$-hard unless one of the following conditions hold,
 in which case the problem is computable in polynomial time:
 \begin{enumerate}
  \item $f_0 f_2 = f_1^2$;
  \item $f_0 = f_2 = 0$;
  \item $f_1 = 0$;
  \item $f_0 f_2 = -f_1^2$ and $f_0^{2 k} = f_2^{2 k}$.
 \end{enumerate}
\end{specialtheorem}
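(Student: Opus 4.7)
The plan is to derive Theorem~\ref{thm:k-reg_homomorphism}$'$ from Theorem~\ref{thm:k-reg_homomorphism} by unpacking the conceptual tractability criterion into explicit algebraic conditions on the entries $f_0, f_1, f_2$. First I would dispose of the degenerate case $f_0 f_2 = f_1^2$ directly, before invoking Theorem~\ref{thm:k-reg_homomorphism}, which assumes non-degeneracy. Here $[f_0, f_1, f_2] = u^{\otimes 2}$ for some unary $u$, so each binary edge signature splits into two endpoint unaries, and the Holant sum factors over vertices as a product of the scalars $(=_k) \cdot u^{\otimes k} = u_0^k + u_1^k$; this is trivially polynomial-time computable and yields condition~1.

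For non-degenerate $f$, I would exhibit an explicit $T$ with $f T^{\otimes 2} = [1, 0, 1]$. Over $\mathbb{C}$, every non-singular $2 \times 2$ symmetric matrix is congruent to $I$, so such a $T$ exists and is determined up to right multiplication by $O(2, \mathbb{C})$. Under this transformation the problem becomes $\holant{[1,0,1]}{g}$ where $g = (T^{-1})^{\otimes k}(=_k) = u^{\otimes k} + v^{\otimes k}$ and $u, v$ are the columns of $T^{-1}$. By Theorem~\ref{thm:k-reg_homomorphism} the problem is tractable iff $g$ is $\mathscr{A}$- or $\mathscr{P}$-transformable and $\SHARPP$-hard otherwise, so the task reduces to algebraically characterizing when $g$ lies in $T' \mathscr{A}$ or $T' \mathscr{P}$ for some invertible $T'$ with $[1,0,1](T')^{\otimes 2}$ in the corresponding class.

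I would then match each of the remaining algebraic conditions to a specific tractable family. For condition~3 ($f_1 = 0$) I would choose $T = \operatorname{diag}(f_0^{-1/2}, f_2^{-1/2})$, obtaining $g = [f_0^{k/2}, 0, \ldots, 0, f_2^{k/2}]$, a generalized equality and hence in $\mathscr{P}$. For condition~2 ($f_0 = f_2 = 0$) I would use a Hadamard-type $T$ that diagonalizes $\neq_2$; the resulting $g$ is again in $\mathscr{P}$ after a routine calculation. Condition~4 ($f_0 f_2 = -f_1^2$ together with $f_0^{2k} = f_2^{2k}$) corresponds to the $\mathscr{A}$-transformable stratum: the first equation forces the two rank-one summands $u^{\otimes k}, v^{\otimes k}$ to be related by a quadratic algebraic constraint compatible with $\mathscr{F}_1 \cup \mathscr{F}_2 \cup \mathscr{F}_3$, and the second is exactly the compatibility condition needed for the fourth root of unity $i^r$ appearing in the definition of $\mathscr{F}_{123}$ to match up between the two summands.

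The main obstacle will be the converse direction: showing that no other algebraic relation places $g$ in a tractable class. I plan to use the $O(2, \mathbb{C})$ ambiguity in the choice of $T$ to absorb the admissible $T'$ that preserve $[1,0,1]$ up to scalar, thereby reducing the question to when $g$ itself, in the normalized form $u^{\otimes k} + v^{\otimes k}$, appears on the explicit list of symmetric signatures in $\mathscr{P}$ and $\mathscr{F}_{123}$ given at the end of Section~\ref{subsec:CSP-Tractable}. Because those lists only permit axis-aligned supports or entries that are roots of unity, the enumeration is finite and produces precisely the four conditions in the statement.
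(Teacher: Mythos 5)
There is no proof of this statement in the paper for you to be measured against: Theorem~\ref{thm:k-reg_homomorphism}$'$ is quoted verbatim from~\cite{CK12} as the \emph{original} form of that result, and Theorem~\ref{thm:k-reg_homomorphism} is merely its conceptual restatement in the language of $\mathscr{A}$-/$\mathscr{P}$-transformability. This exposes the main structural problem with your plan: you propose to derive the explicit statement \emph{from} the conceptual one, i.e.\ from the very same theorem. As a proof of the cited result this is circular --- the actual content of Theorem~3 in~\cite{CK12} is the $\SHARPP$-hardness outside the four conditions, which requires the gadget and interpolation work of that paper and which your proposal never touches. What you have outlined is at best a verification that the two formulations are equivalent, which is a legitimate but different task.

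Even read that way, the converse direction has a genuine gap. You claim the $\mathbf{O}_2(\mathbb{C})$ ambiguity in the choice of $T$ with $fT^{\otimes 2}=[1,0,1]$ absorbs ``the admissible $T'$ that preserve $[1,0,1]$ up to scalar,'' reducing everything to membership of the normalized $g=u^{\otimes k}+v^{\otimes k}$ in the finite canonical lists. But $\mathscr{A}$-/$\mathscr{P}$-transformability does \emph{not} require $T'$ to preserve $[1,0,1]$: by Lemmas~\ref{lem:affine:trans} and~\ref{lem:product:trans} the admissible $T'$ are $H$, $H\left[\begin{smallmatrix}1&0\\0&\alpha\end{smallmatrix}\right]$, or $H\left[\begin{smallmatrix}1&1\\i&-i\end{smallmatrix}\right]$ with $H$ orthogonal, and the non-orthogonal factors cannot be absorbed by rechoosing $T$. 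They are exactly why the classes $\mathscr{A}_3$ and $\mathscr{P}_2$ exist, and they are where condition~4 comes from; your sketch never actually derives $f_0^{2k}=f_2^{2k}$. The clean way to finish is: use Lemmas~\ref{lem:cha:affine} and~\ref{lem:cha:product} (equivalently Corollary~\ref{cor:single:AP-trans_by_sets}) to say $g$ must lie in $\mathscr{P}_1\cup\mathscr{P}_2\cup\mathscr{A}_3$; note that for $k\ge 3$ the decomposition of $g$ into two independent rank-one symmetric tensors is unique up to order and $k$-th roots of unity, so $u,v$ must be proportional to the transformed canonical pair; and then use that the orthogonal part preserves the bilinear forms $u\cdot u=f_0$, $u\cdot v=f_1$, $v\cdot v=f_2$. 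This gives $f_1=0$ for $\mathscr{P}_1$, $f_0=f_2=0$ for $\mathscr{P}_2$, and $f_0f_2=-f_1^2$ together with $(f_0/f_2)^k=i^{-2r}$, hence $f_0^{2k}=f_2^{2k}$, for $\mathscr{A}_3$. A minor slip along the way: in your condition~2 case the transformed signature is proportional to $[1,i]^{\otimes k}+[1,-i]^{\otimes k}$, which for $k\ge 3$ is not in $\mathscr{P}$ (symmetric signatures in $\mathscr{P}$ are degenerate, $[0,1,0]$, or generalized equalities); it is $\mathscr{P}$-transformable, being in $\mathscr{P}_2$.
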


The next theorem is a generalization of the Boolean \#CSP dichotomy (where $d=1$).
\begin{theorem}[Theorem~IV.1 in~\cite{HL12}] \label{thm:CSPd}
 Let $\mathcal{T}_d = \left\{\left[\begin{smallmatrix} 1 & 0 \\ 0 & \omega \end{smallmatrix}\right] \in \mathbb{C}^{2 \times 2} \st \omega^d = 1\right\}$,
 $d \ge 1$ be an integer, and $\mathcal{F}$ be any set of symmetric, complex-valued signatures in Boolean variables.
 Then $\CSP^d(\mathcal{F})$ is $\SHARPP$-hard unless there exists a $T \in \mathcal{T}_{4 d}$
 such that $T \mathcal{F} \subseteq \mathscr{P}$ or $T \mathcal{F} \subseteq \mathscr{A}$,
 in which case the problem is computable in polynomial time.
\end{theorem}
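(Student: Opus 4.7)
The plan is to prove the tractability and $\SHARPP$-hardness directions separately, reducing the hard direction to the Boolean $\CSP$ dichotomy (the case $d=1$).

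For tractability, fix $T = \left[\begin{smallmatrix} 1 & 0 \\ 0 & \omega \end{smallmatrix}\right] \in \mathcal{T}_{4d}$ with $T\mathcal{F} \subseteq \mathscr{A}$; the $\mathscr{P}$ case is symmetric. Write $\CSP^d(\mathcal{F}) \equiv_T \holant{\mathcal{F}}{\mathcal{EQ}_d}$ in bipartite form and invoke Valiant's Holant Theorem with $T$ on the left and $T^{-1}$ on the right. The left side becomes $T\mathcal{F} \subseteq \mathscr{A}$ by hypothesis, while each $=_{kd}$ on the right transforms to $[1, 0, \ldots, 0, \omega^{-kd}]$. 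Since $\omega^{4d}=1$, the entry $\omega^{-kd}$ is a fourth root of unity, so every transformed equality lies in $\mathscr{A}$. The resulting instance is a Holant problem whose signatures all lie in $\mathscr{A}$, and this is solvable in polynomial time by the standard $\mathscr{A}$-tractability algorithm (Gaussian elimination over $\mathbb{F}_2$ together with bookkeeping of a $\mathbb{Z}_4$-phase).

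For hardness, assume no $T \in \mathcal{T}_{4d}$ places $\mathcal{F}$ inside $\mathscr{A}$ or $\mathscr{P}$. I first construct a rich family of equalities from $\mathcal{EQ}_d$ alone: gluing two copies of $=_d$ by identifying $d-1$ of their dangling edges forces both external endpoints to agree with the common internal value, yielding $=_2$; iterating this with different identification patterns realizes every even-arity equality, so all of $\mathcal{EQ}_2$ is available as a gadget and $\CSP^2(\mathcal{F}) \le_T \CSP^d(\mathcal{F})$. Next I try to promote $\mathcal{EQ}_2$ to the full $\mathcal{EQ}$ using some $f \in \mathcal{F}$: if pinning or dangling constructions from $f$ combined with $\mathcal{EQ}_2$ yield a nontrivial unary or an odd-arity equality, then $\mathcal{EQ}$ is entirely available, the Boolean $\CSP$ dichotomy applies, and the tractable subcase forces $\mathcal{F} \subseteq \mathscr{A}$ or $\mathcal{F} \subseteq \mathscr{P}$ (the case $T = I \in \mathcal{T}_{4d}$), giving $\SHARPP$-hardness otherwise. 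The technically hardest case is when every $f \in \mathcal{F}$ carries an even-parity constraint preventing this reduction; here the extra transformations in $\mathcal{T}_{4d} \setminus \mathcal{T}_4$ genuinely enlarge the tractable frontier, since they permute $\mathcal{EQ}_d$ among themselves while rotating the even part of $\mathcal{F}$ into $\mathscr{A}$ or $\mathscr{P}$. I would perform a finite case analysis on the algebraic structure of each $f \in \mathcal{F}$, checking all $4d$ rotations in $\mathcal{T}_{4d}$; whenever none of them places $\mathcal{F}$ inside $\mathscr{A}$ or $\mathscr{P}$, polynomial interpolation gadgets built from $f$ and $\mathcal{EQ}_2$ should produce a signature already certified $\SHARPP$-hard via Theorem~\ref{thm:arity3:singleton} or Theorem~\ref{thm:k-reg_homomorphism}.

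The principal obstacle is this last interpolation step. Using only signatures whose variable degrees respect the $d$-divisibility built into $\mathcal{EQ}_d$, one must construct a long sequence of linearly independent signatures whose span reaches a certified $\SHARPP$-hard target. Ensuring Vandermonde-type nonsingularity while every variable's degree is forced to be a multiple of $d$ is exactly what pins down $4d$ (rather than a smaller modulus) as the correct parameter of the tractable family $\mathcal{T}_{4d}$. Secondary obstacles are identifying the right unary and binary reductions that feed into the base theorems cited above, and ensuring that the finite case analysis over $\mathcal{T}_{4d}$ does not miss a sporadic tractable transformation hiding behind an algebraic coincidence between $\mathscr{A}$ and $\mathscr{P}$.
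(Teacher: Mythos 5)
You should first note a mismatch of scope: the paper does not prove this statement at all. Theorem~\ref{thm:CSPd} is quoted verbatim as Theorem~IV.1 of~\cite{HL12} and is used as a black box (in Lemmas~\ref{lem:dic:p1}, \ref{lem:dic:p2}, \ref{lem:dic:a3}), so there is no proof in the paper to compare against; your proposal has to stand on its own. Its tractability half is essentially fine: a holographic transformation by $T=\left[\begin{smallmatrix}1&0\\0&\omega\end{smallmatrix}\right]$ with $\omega^{4d}=1$ sends each $=_{kd}$ to a generalized equality whose last entry is a fourth root of unity, hence into $\mathscr{A}$ (and into $\mathscr{P}$ in that case), and the transformed instance is tractable by the standard $\mathscr{A}$- or $\mathscr{P}$-algorithms.

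The hardness half, however, collapses at its first step. Your gadget glues two copies of $=_d$ by identifying $d-1$ dangling edges, i.e.\ it places edges directly between two equality vertices. In $\CSP^d(\mathcal{F}) \equiv_T \holant{\mathcal{F}}{\mathcal{EQ}_d}$ the underlying graph is bipartite and every edge must join an $\mathcal{F}$-vertex to an equality vertex; in CSP language, the equality vertices are the \emph{variables}, and you cannot wire two variables together (force them equal and change their occurrence counts) without an equality-like constraint in $\mathcal{F}$, which you are not given. So $\mathcal{EQ}_2$ is not realizable from $\mathcal{EQ}_d$ alone, and the asserted reduction $\CSP^2(\mathcal{F}) \le_T \CSP^d(\mathcal{F})$ is not established. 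Worse, it cannot hold in general: the whole point of the parameter $\mathcal{T}_{4d}$ (which you yourself emphasize) is that the tractable frontier grows with $d$. Take $d=4$ and an $\mathcal{F}$ whose only witness is $T=\operatorname{diag}(1,\omega)$ with $\omega$ a primitive $16$th root of unity; then $\CSP^4(\mathcal{F})$ is in $\P$ by the theorem while $\CSP^2(\mathcal{F})$ is $\SHARPP$-hard by its $d=2$ instance, so your reduction would force $\P=\SHARPP$. There is also a logical gap even granting the reduction: for odd $d$ one has $\mathcal{T}_8 \not\subseteq \mathcal{T}_{4d}$, so ``no $T\in\mathcal{T}_{4d}$ works'' does not give you the hardness hypothesis for $\CSP^2(\mathcal{F})$. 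Finally, the remaining steps (promoting $\mathcal{EQ}_2$ to $\mathcal{EQ}$ via unspecified pinning gadgets, the ``finite case analysis'' over $\mathcal{T}_{4d}$, and the interpolation reaching a certified hard signature) are stated as intentions rather than arguments, and you acknowledge the key Vandermonde nonsingularity issue is unresolved; as it stands, the $\SHARPP$-hardness direction is not proved.
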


The following three dichotomies are not directly used in this paper.
We list them for comparison.
First is the real-valued Holant dichotomy.
Our results have no dependence on this dichotomy.

\begin{theorem}[Theorem~III.2 in~\cite{HL12}] \label{thm:real_holant}
 Let $\mathcal{F}$ be any set of symmetric, real-valued signatures in Boolean variables.
 Then $\Holant(\mathcal{F})$ is $\SHARPP$-hard unless $\mathcal{F}$ satisfies one of the following conditions,
 in which case the problem is computable in polynomial time:
 \begin{enumerate}
  \item Any non-degenerate signature in $\mathcal{F}$ is of arity at most~$2$;
  \item $\mathcal{F}$ is $\mathscr{A}$- or $\mathscr{P}$-transformable.
 \end{enumerate}
\end{theorem}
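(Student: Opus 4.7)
The plan divides into tractability of the two listed cases and $\SHARPP$-hardness outside of them. For tractability, when every non-degenerate signature in $\mathcal{F}$ has arity at most~$2$, each signature grid reduces---after replacing each degenerate $u^{\otimes n}$ by $n$ copies of the unary $u$---to a graph of maximum degree $2$, that is, a disjoint union of paths and cycles; the Holant value factorizes over components and is computable in polynomial time by multiplication of $2 \times 2$ matrices. For Case~$2$, apply Valiant's Holant Theorem to the bipartite form $\holant{{=}_2}{\mathcal{F}}$: after transformation by $T$, both sides land in $\mathscr{A}$ (respectively $\mathscr{P}$), and the resulting bipartite problem reduces to a tractable Boolean $\CSP$ instance by the $\CSP$ dichotomy recalled in Section~\ref{subsec:CSP-Tractable}.

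For hardness, assume neither tractability case holds; then $\mathcal{F}$ contains a non-degenerate signature of arity at least~$3$, and no single invertible $T$ witnesses $\mathscr{A}$- or $\mathscr{P}$-transformability. I would induct on the maximum arity appearing in $\mathcal{F}$. The base case of arity~$3$ invokes Theorem~\ref{thm:arity3:singleton}: for a real non-degenerate ternary $f = [f_0,f_1,f_2,f_3]$, the exceptional tractable case there requires $\alpha = \pm 2i$, and so reality of $f$ forces $f_1 = f_3 = 0$ together with $f_2 = f_0$; one then checks directly that $\lambda[1,0,1,0]$ is $\mathscr{P}$-transformable, so the exceptional case collapses into Case~$2$ in the real setting. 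For the inductive step on arity $n \ge 4$, construct signatures of strictly smaller arity from a bad $f \in \mathcal{F}$ via standard Holant gadgetry: self-loops drop arity by~$2$, while merging two copies of $f$ along a single edge produces an arity $2n - 2$ signature that further loops reduce to a variety of intermediate arities. Show that at least one derived signature either already falls outside both tractability classes, allowing one to invoke the inductive hypothesis, or together with $f$ realizes a known $\SHARPP$-hard problem such as counting Eulerian orientations on $4$-regular graphs via $[3,0,1,0,3]$ (the very problem displayed in the introduction).

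The main obstacle is the inductive trap: real high-arity signatures whose gadget descendants all happen to land in one of the tractable classes but which are themselves neither $\mathscr{A}$- nor $\mathscr{P}$-transformable. Such cases form a small explicit list which one would pin down by direct computation, then establish hardness by polynomial interpolation with a parametrized family of gadgets, possibly passing through asymmetric auxiliary signatures in the spirit of the arity-$4$ argument outlined in the introduction. An important simplification specific to the real-valued setting is that vanishing signatures are inaccessible, as they require complex unaries such as $[1,i]$, so there is no Fibonacci-type tractable island to complicate the statement; this is precisely why the real dichotomy collapses to the clean two-case form above rather than requiring the full five-case criterion of the complex dichotomy advertised in the abstract.
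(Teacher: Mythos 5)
First, a point of scope: the paper does not prove this statement at all. It is Theorem~III.2 of~\cite{HL12}, quoted verbatim in the list of known dichotomies ``for comparison,'' and the paper explicitly says its results have no dependence on it. So there is no proof in the paper to compare yours against. Within this paper the natural derivation would be as a corollary of Theorem~\ref{thm:main}: a nonzero real symmetric signature of arity $n$ cannot lie in $\mathscr{V}^{+} \cup \mathscr{V}^{-}$, since $f \in \mathscr{V}^{\sigma}$ means $f_k = (\sigma i)^k p(k)$ with $\deg p < n/2$, and reality forces $\overline{p}(k) = (-1)^k p(k)$ at the $n+1$ points $k = 0, \dotsc, n$, which (comparing even and odd $k$ separately, each more numerous than $\deg p$) gives $p = 0$; likewise the only real non-degenerate binary in $\mathscr{R}^{\sigma}_2$ is a multiple of $[1,0,1]$. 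Hence for real $\mathcal{F}$ cases~\ref{case:main_tractable:vanishing_and_binary} and~\ref{case:main_tractable:vanishing_and_unary} of Theorem~\ref{thm:main} collapse into case~\ref{case:main_tractable:trivial}, and the two-case real statement follows. This is non-circular, since the paper's proof of Theorem~\ref{thm:main} does not use the real dichotomy.

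Judged on its own terms, your tractability direction is fine and standard, but the hardness direction has a genuine gap. A small slip first: in the arity-3 base case, reality applied to the exceptional case of Theorem~\ref{thm:arity3:singleton} does not yield $\lambda[1,0,1,0]$; from $f_2 = \alpha f_1 + f_0$ with $\alpha = \pm 2i$ you get $f_1 = 0$ and $f_2 = f_0$, but the second relation $f_3 = \alpha f_2 + f_1$ then forces $f_2 = f_3 = 0$ and hence $f = 0$ (note $[1,0,1,0]$ violates that second recurrence). This is harmless --- the exceptional case is vacuous for nonzero real signatures --- but the real problem is that your inductive step is a plan, not a proof. ``Show that at least one derived signature falls outside both tractable classes, or together with $f$ realizes Eulerian orientations,'' and ``such cases form a small explicit list \dots established by polynomial interpolation with a parametrized family of gadgets,'' is precisely where the entire technical content lies, both in the Huang--Lu proof and in the corresponding complex-case machinery of this paper: the arity-4 dichotomy via redundant/compressed signature matrices (Section~\ref{sec:arity4}), the interpolation of the identity-like arity-4 signature (Lemma~\ref{lem:arity4:get_avg_sig}), the approximation reduction from Theorem~\ref{thm:4reg_EO_hard}, and the $\CSP^d$-based lemmas handling sets containing an $\mathscr{A}$- or $\mathscr{P}$-transformable signature. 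None of this is supplied, nor is the alleged ``small explicit list'' identified, so the hardness half of your argument is not established.
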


The other two dichotomy theorems are for complex-valued $\Holant^*$ and $\Holant^c$.
We do not directly apply these two theorems,
but our results depend on some intermediate results such as Theorems~\ref{thm:arity3:singleton}, \ref{thm:k-reg_homomorphism}, and~\ref{thm:CSPd}.

\begin{theorem}[Theorem~3.1 in~\cite{CLX09a}] \label{thm:holant_star}
 Let $\mathcal{F}$ be any set of non-degenerate, symmetric, complex-valued signatures in Boolean variables.
 Then $\Holant^*(\mathcal{F})$ is $\SHARPP$-hard unless $\mathcal{F}$ satisfies one of the following conditions,
 in which case the problem is computable in polynomial time:
 \begin{enumerate}
  \item Any signature in $\mathcal{F}$ is of arity at most~$2$;
  \item $\mathcal{F}$ is $\mathscr{P}$-transformable;
  \item There exists $\alpha \in \{2 i, -2 i\}$, such that for any signature $f\in\mathcal{F}$ of arity $n$, for $0 \le k \le n-2$,
  we have $f_{k+2} = \alpha f_{k+1} + f_k$. \label{case:holant_star:exceptional}
 \end{enumerate}
\end{theorem}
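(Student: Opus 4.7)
The plan is to verify tractability of the three listed cases directly, then prove hardness by induction on the maximum arity of a signature in $\mathcal{F}$, using the free availability of unary signatures in $\Holant^*$ to pin edges and reduce arity.

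\textbf{Tractability.} Case (1) is immediate: every vertex has degree at most 2, so the signature grid decomposes into paths and cycles on which the Holant sum is computed in linear time by transfer-matrix products. For case (2), if $T$ witnesses $\mathscr{P}$-transformability, a holographic transformation yields a problem on $T^{-1}\mathcal{F} \cup T^{-1}\mathcal{U}$ with all signatures in $\mathscr{P}$ together with a transformed unary family; since $(=_2)\, T^{\otimes 2} \in \mathscr{P}$ by hypothesis, this places the problem in a tractable subclass of $\CSP(\mathscr{P})$. For case (3), the characteristic polynomial $t^2 - \alpha t - 1$ with $\alpha = \pm 2i$ has the double root $\pm i$, giving the closed form $f_k = (A + B k)(\pm i)^k$; these vanishing-type Fibonacci signatures can be evaluated in polynomial time by maintaining a constant-dimensional transfer state per partial edge assignment.

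\textbf{Hardness by induction on arity.} For $n \le 2$, condition (1) holds. The base case $n = 3$ is Theorem~\ref{thm:arity3:singleton}: its tractable list consists of $\mathscr{A}$- or $\mathscr{P}$-transformable, together with the $\alpha \in \{2i, -2i\}$ recurrence case; an inspection of the explicit families $\mathscr{F}_1, \mathscr{F}_2, \mathscr{F}_3$ confirms that every $\mathscr{A}$-transformable arity-3 signature is either $\mathscr{P}$-transformable or satisfies the Fibonacci recurrence, so conditions (2)--(3) here capture all non-hard arity-3 cases. For $n \ge 4$, attach $[1,0]$ and $[0,1]$ (both in $\mathcal{U}$) to one edge of $f \in \mathcal{F}$ of arity $n$ to obtain the pinned signatures $f' = [f_0, \ldots, f_{n-1}]$ and $f'' = [f_1, \ldots, f_n]$. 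If either pinned signature, together with $\mathcal{U}$, fails to be tractable, then $\Holant^*(\mathcal{F})$ is $\SHARPP$-hard by the inductive hypothesis. Otherwise both pinned signatures satisfy (2) or (3), and a lifting argument shows $f$ itself does: the recurrence is preserved trivially by pinning, while for $\mathscr{P}$-transformability the candidate transformation $T$ is pinned by the overlap of $f'$ and $f''$ on $n-2 \ge 2$ coordinates, up to the stabilizer of $\mathscr{P}$, which forces $f$ into the same tractable class.

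\textbf{Multi-signature compatibility (main obstacle).} The remaining task, and the technical crux, is to rule out configurations where every $f \in \mathcal{F}$ individually fits one of (1)--(3) but no uniform condition covers $\mathcal{F}$ as a whole. This splits into incompatibility cases: two $\mathscr{P}$-transformable signatures with incompatible transformations, two Fibonacci signatures with opposite values of $\alpha$, or a Fibonacci signature together with a $\mathscr{P}$-transformable non-Fibonacci signature. In each case, I would build a small $\mathcal{F}$-gate using the two incompatible signatures together with unary signatures and reduce to a $\SHARPP$-hard arity-3 instance excluded by Theorem~\ref{thm:arity3:singleton}. The hard part is not any single construction but the organization of this case analysis, since the $\mathscr{P}$-transformable category is parametrized by a family of matrices modulo a nontrivial stabilizer, and the interactions with both Fibonacci types ($\alpha = 2i$ and $\alpha = -2i$) each require a separate gadget or interpolation argument leveraging the free unaries.
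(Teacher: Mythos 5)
This statement is not proved in the paper at all: it is Theorem~3.1 of~\cite{CLX09a}, quoted in the ``Some Known Dichotomies'' section for comparison, so there is no in-paper proof to match your argument against. Judged on its own merits, your proposal has a genuine gap at the arity-3 base case. You claim that ``every $\mathscr{A}$-transformable arity-3 signature is either $\mathscr{P}$-transformable or satisfies the Fibonacci recurrence,'' and conclude that Theorem~\ref{thm:arity3:singleton} already yields $\SHARPP$-hardness of $\Holant^*(f)$ for all ternary $f$ outside conditions (2)--(3). This is false: signatures of type $\mathscr{A}_3$, e.g.\ $f=[1,0,i,0]=\tfrac12\left(\left[\begin{smallmatrix}1\\ \alpha\end{smallmatrix}\right]^{\otimes 3}+\left[\begin{smallmatrix}1\\ -\alpha\end{smallmatrix}\right]^{\otimes 3}\right)$ with $\alpha=e^{\pi i/4}$, are $\mathscr{A}$-transformable, so $\Holant(f)$ is \emph{tractable} by Theorem~\ref{thm:arity3:singleton}; yet $f$ is not $\mathscr{P}$-transformable (the two tensor components $[1,\pm\alpha]$ are not orthogonal and not $[1,\pm i]$, and the decomposition of a non-degenerate ternary signature into two tensor powers is unique) and does not satisfy $f_{k+2}=\pm 2i f_{k+1}+f_k$. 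So the Holant$^*$ dichotomy declares such $f$ hard, but this hardness cannot be read off from the single-signature Holant dichotomy: it must be \emph{proved} by exploiting the freely available unary signatures (e.g.\ attaching a generic unary to destroy the affine structure and reducing to a hard ternary or \#CSP-type problem). Your base case supplies no such construction, and since the whole induction rests on it, the hardness half of the proof does not go through.

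There are secondary weaknesses as well. The tractability of case (3) is not a matter of ``a constant-dimensional transfer state per partial edge assignment''---that only works for graphs of bounded treewidth; one needs the (generalized) Fibonacci-gate algorithm of~\cite{CLX11c} or an equivalent argument. In the inductive step, the pinned signatures $f'$, $f''$ may be degenerate (where the inductive hypothesis for non-degenerate sets says nothing), the mixed situation where $f'$ falls under (2) and $f''$ under (3) is not handled by your lifting argument, and the ``multi-signature compatibility'' analysis---which you correctly identify as the crux, since incompatible $\mathscr{P}$-transformations and the two Fibonacci types must be shown to clash---is only announced, not carried out. As it stands the proposal is an outline whose one concretely asserted shortcut (the base case) is incorrect.
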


\begin{theorem}[Theorem~6 in~\cite{CHL12}] \label{thm:holant_c}
 Let $\mathcal{F}$ be any set of symmetric, complex-valued signatures in Boolean variables.
 Then $\Holant^c(\mathcal{F})$ is $\SHARPP$-hard unless $\mathcal{F}$ satisfies one of the following conditions,
 in which case the problem is computable in polynomial time:
 \begin{enumerate}
  \item Any non-degenerate signature in $\mathcal{F}$ is of arity at most~$2$;
  \item $\mathcal{F}$ is $\mathscr{P}$-transformable;
  \item $\mathcal{F} \cup \{[1,0],[0,1]\}$ is $\mathscr{A}$-transformable;
  \item There exists $\alpha \in \{2 i, -2 i\}$, such that for any non-degenerate signature $f\in\mathcal{F}$ of arity $n$,
  for $0 \le k \le n-2$, we have $f_{k+2} = \alpha f_{k+1} + f_k$. \label{case:holant_c:exceptional}
 \end{enumerate}
\end{theorem}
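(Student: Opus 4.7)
The plan is to establish the dichotomy by verifying tractability for each of the four listed cases, then proving \#P-hardness whenever none of them holds. For the tractability side, case (1) follows from the Holant$^*$ dichotomy (Theorem~\ref{thm:holant_star}) since $[1,0]$ and $[0,1]$ are unary. Case (2) holds because $[1,0],[0,1]\in\mathscr{P}$, so $\mathscr{P}$-transformability is preserved when these constants are adjoined. Case (3) is the genuinely new tractable class: by hypothesis a single holographic transformation $T$ places all of $\mathcal{F}\cup\{[1,0],[0,1]\}$ inside $\mathscr{A}$, so after transforming, the problem becomes a Boolean \#CSP whose every constraint is affine, solvable in polynomial time. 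Case (4) is tractable by the same algorithm underlying the Fibonacci case of Theorem~\ref{thm:holant_star}: the recurrence $f_{k+2}=\alpha f_{k+1}+f_k$ is trivially satisfied by the unary signatures $[1,0]$ and $[0,1]$, so the algorithm applies to $\mathcal{F}\cup\{[1,0],[0,1]\}$.

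For \#P-hardness, assume that none of the four conditions holds. Since case (1) fails, some $f\in\mathcal{F}$ is non-degenerate with arity $n\geq 3$. Using $[1,0]$ and $[0,1]$ to pin variables of $f$, one realizes, for each $0\leq k\leq n-3$, the symmetric ternary sub-signature $g_k=[f_k,f_{k+1},f_{k+2},f_{k+3}]$, along with all binary restrictions. Apply Theorem~\ref{thm:arity3:singleton} to every non-degenerate $g_k$. If some $g_k$ is \#P-hard, then so is $\Holant^c(\mathcal{F})$; otherwise each $g_k$ is either $\mathscr{A}$-transformable, $\mathscr{P}$-transformable, or satisfies the Fibonacci recurrence with some $\alpha\in\{2i,-2i\}$. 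The next step is to lift these local constraints to $f$ and then to $\mathcal{F}$. Overlapping ternary windows enforce consistency: if every $g_k$ satisfies a common Fibonacci recurrence with a fixed $\alpha$, so does $f$, yielding case (4); if all $g_k$ are simultaneously $\mathscr{P}$-transformable via one common matrix $T$, then $f\in T\mathscr{P}$, giving case (2); if all are $\mathscr{A}$-transformable via a single $T$ that additionally sends $[1,0],[0,1]$ into $\mathscr{A}$, then case (3) applies. When two windows $g_k$, $g_{k'}$ demand incompatible tractable types---or when different signatures in $\mathcal{F}$ do---I combine them via pinning and small gadgets to realize a new ternary signature outside every tractable class of Theorem~\ref{thm:arity3:singleton}, yielding hardness.

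The main obstacle is aligning local tractability (on ternary sub-signatures) with global tractability (over arbitrary arity and multiple signatures). Case~(3) has no analogue in Theorem~\ref{thm:holant_star}, so its emergence must be extracted carefully: one must show that whenever pinning does not directly reduce to a hard Holant$^*$ instance, the only surviving obstruction to hardness is exactly a transformation~$T$ simultaneously placing $\mathcal{F}$ and $\{[1,0],[0,1]\}$ inside~$\mathscr{A}$. Multi-signature sets introduce an analogous consistency problem: if each $f\in\mathcal{F}$ fits some tractable case but no case covers $\mathcal{F}$ uniformly, two signatures from incompatible classes are linked via pinning and interpolation to produce an explicit \#P-hard ternary signature. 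This last interpolation-and-gadget argument, ruling out all exotic configurations that might arise from combining affine, product, and Fibonacci types under different transformations, is the technical heart of the proof.
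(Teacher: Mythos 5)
This statement is not proved in the paper at all: it is quoted verbatim as a known result (Theorem~6 of~\cite{CHL12}) and used as background, so there is no proof here to compare yours against; the honest benchmark is the original $\Holant^c$ paper, whose proof is a full-length argument rather than a corollary of the ternary dichotomy. Your tractability half is essentially fine: cases (1), (2) and (3) reduce to known tractable classes once one notes that all unaries lie in $\mathscr{P}$ and that $\Holant^c(\mathcal{F})=\Holant(\mathcal{F}\cup\{[1,0],[0,1]\})$, and case (4) is the generalized Fibonacci class with parameter $\pm 2i$ (equivalently, all non-degenerate signatures in $\mathscr{R}^\sigma_2$), which tolerates arbitrary unaries.

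The hardness half, however, has genuine gaps rather than omitted routine details. First, your window argument applies Theorem~\ref{thm:arity3:singleton} only to \emph{non-degenerate} ternary restrictions $g_k=[f_k,f_{k+1},f_{k+2},f_{k+3}]$, but there are non-degenerate signatures all of whose windows are degenerate (e.g.\ generalized equalities $[a,0,\dotsc,0,b]$), so the method extracts no information precisely in configurations that must then be combined with the rest of $\mathcal{F}$; a separate analysis is needed there. Second, the central step --- ``overlapping windows enforce consistency, and incompatible windows or incompatible signatures can be combined via pinning and small gadgets to realize a hard ternary signature'' --- is asserted, not argued. A window being $\mathscr{A}$- or $\mathscr{P}$-transformable is a statement about the existence of \emph{some} transformation for that window; showing that these local transformations can be made common across windows and across distinct signatures of $\mathcal{F}$, or else that hardness follows, is exactly the technical content of the theorem (in this paper's later machinery this is what Lemmas~\ref{lem:dic:p1}--\ref{lem:dic:a3} and the vanishing-signature lemmas accomplish, and none of that is available to you here). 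Third, you give no mechanism for the boundary that distinguishes case (3): when $\mathcal{F}$ itself is $\mathscr{A}$-transformable but $\mathcal{F}\cup\{[1,0],[0,1]\}$ is not, one must prove \SHARPP-hardness of the problem \emph{with the pinning signatures present}; nothing in your outline produces that reduction, and it does not follow from the single-signature ternary dichotomy, which has no constants in it. As it stands the proposal is a plausible plan whose hardest steps are placeholders, not a proof.
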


\def\problemSpace{4pt}

\section{A Sampling of Problems}

We illustrate the scope of our dichotomy theorem by several concrete problems.
Some problems are naturally expressed with real weights,
but they are linked inextricably to other problems that use complex weights.
Sometimes the inherent link between two real-weighted problems is provided by a transformation through $\mathbb{C}$.

\vspace*{\problemSpace}
\textbf{Problem:} \#\textsc{VertexCover}

\textbf{Input:} An undirected graph $G$.

\textbf{Output:} The number of vertex covers in $G$.
\vspace*{\problemSpace}

This classic problem is most naturally expressed as the real-weighted bipartite Holant problem $\holant{[0,1,1]}{\mathcal{EQ}}$.
A vertex assigned an equality signature forces all its incident edges to be assigned the same value;
this is equivalent to these vertices being assigned a value themselves.
The degree two vertices assigned the binary $\textsc{Or} = [0,1,1]$ should be thought of as an edge between its neighboring vertices.
These edge-like vertices force at least one of its neighbors to be selected.
The number of assignments satisfying these requirements is exactly the number of vertex covers.

To apply our dichotomy theorem,
we perform a holographic transformation by $T = \left[\begin{smallmatrix} 0 & -i \\ 1 & i \end{smallmatrix}\right]$.
To understand why we choose this particular $T$,
let us express $[0,1,1]$ as
\begin{align*}
 [0,1,1]
 &= (0\ 1\ 1\ 1)
  = \left\{[1,1]^{\otimes 2} + [i, 0]^{\otimes 2}\right\}
  = \left\{[1,0]^{\otimes 2} + [0, 1]^{\otimes 2}\right\} \begin{bmatrix} 1 & 1 \\ i & 0 \end{bmatrix}^{\otimes 2}\\
 &= (1\ 0\ 0\ 1) (T^{-1})^{\otimes 2}
  = (=_2) (T^{-1})^{\otimes 2}.
\end{align*}
Thus, a holographic transformation by $T$ yields
\begin{align*}
 \holant{[0,1,1]}{\mathcal{EQ}}
 &\equiv_T \holant{[0,1,1] T^{\otimes 2}}{T^{-1} \mathcal{EQ}}\\
 &\equiv_T \holant{{=}_2}{T^{-1} \mathcal{EQ}}\\
 &\equiv_T \Holant(T^{-1} \mathcal{EQ}).
\end{align*}
The equality signature of arity $k$ in $\mathcal{EQ}$,
a column vector denoted by $=_k$, is transformed by $T^{-1}$ to
\begin{align*}
 f_{(k)}
 &= (T^{-1})^{\otimes k} (=_k)
  = \begin{bmatrix} 1 & 1 \\ i & 0 \end{bmatrix}^{\otimes k}
    \left\{\begin{bmatrix}  1 \\ 0 \end{bmatrix}^{\otimes k} + \begin{bmatrix} 0 \\ 1 \end{bmatrix}^{\otimes k}\right\}\\
 &= \begin{bmatrix}  1 \\ i \end{bmatrix}^{\otimes k} +  \begin{bmatrix} 1 \\ 0 \end{bmatrix}^{\otimes k}
  = [2,i,-1,-i,1,i,-1,-i,1,i,\dotsc]
\end{align*}
of length $k+1$.
By our main dichotomy, Theorem~\ref{thm:main}, $\Holant(T^{-1} \mathcal{EQ})$ is $\SHARPP$-hard.
Indeed, even $\Holant(f_{(k)})$, the restriction of this problem to $k$-regular graphs
is $\SHARPP$-hard for $k \ge 3$ by our single signature dichotomy, Theorem~\ref{thm:dic:single}.

\vspace*{\problemSpace}
\textbf{Problem:} \#\textsc{$\lambda$-VertexCover}

\textbf{Input:} An undirected graph $G$.

\textbf{Output:} $\displaystyle \sum_{C \in \mathcal{C}(G)} \lambda^{e(C)},$\\
where $\mathcal{C}(G)$ denotes the set of all vertex covers of $G$,
and $e(C)$ is the number of edges with both endpoints in the vertex cover $C$.
\vspace*{\problemSpace}

Our dichotomy also easily handles this edge-weighted vertex cover problem that is denoted by $\holant{[0,1,\lambda]}{\mathcal{EQ}}$.
Suppose $\lambda \not = 0$.
On regular graphs,
this problem is equivalent to the so-called \emph{hardcore gas model},
which is the vertex-weighted problem denoted by $\holant{[1,1,0]}{\mathcal{F}}$,
where $\mathcal{F}$ consists of signatures of the form $[1,0,\dotsc,0,\mu]$.
By flipping~0 and~1, this is the same as $\holant{[0,1,1]}{\mathcal{F}'}$
with $\mathcal{F}'$ containing $[\mu,0,\dotsc,0,1]$.
For $k$-regular graphs, we consider the diagonal transformation
$T = \left[\begin{smallmatrix} 1 & 0 \\ 0 & \frac{1}{\lambda} \end{smallmatrix}\right]$,
where $\lambda = 1 / \mu^{1/k}$;
\begin{align*}
 \holant{[0,1,\lambda]}{{=}_k}
 &\equiv_T \holant{[0,1,\lambda] T^{\otimes 2}}{(T^{-1})^{\otimes k} (=_k)}\\
 &\equiv_T \holant{\tfrac{1}{\lambda}[0,1,1]}{[1,0,\cdots,0,\lambda^k]}\\
 &\equiv_T \holant{[0,1,1]}{[\mu,0,\cdots,0,1]}.
\end{align*}
This problem, denoted by \#\textsc{$k$-$\lambda$-VertexCover}, is also $\SHARPP$-hard for $k \ge 3$.
To see this,
apply the holographic transformation $T = \left[\begin{smallmatrix} 0 & -i \lambda \\ 1 & i \end{smallmatrix}\right]$ to the edge-weighted form of the problem.
Then $[0,1,\lambda]$ is transformed to $\lambda (=_2)$ and $=_k$ is transformed to $g_{(\lambda,k)} = \frac{1}{\lambda^k} [\lambda^k + 1,i,-1,-i,1,\dotsc]$.
Since $\Holant(g_{(\lambda,k)})$ is $\SHARPP$-hard by Theorem~\ref{thm:dic:single},
we conclude that \#\textsc{$k$-$\lambda$-VertexCover} is also $\SHARPP$-hard.

If $\lambda = 0$, then the above problem is $\holant{[0,1,0]}{\mathcal{EQ}}$, which is tractable.
However, the transformation $T$ above is singular in this case.
We can in fact apply another transformation
$T' = \left[\begin{smallmatrix} 1 -\tfrac{\lambda}{2} & -\left(1+\tfrac{\lambda}{2}\right)i \\ 1 & i \end{smallmatrix}\right]$
such that it transforms the problem $\holant{[0,1,\lambda]}{=_k}$ into $\Holant(h_{(\lambda, k)})$
for some $h_{(\lambda, k)}$ regardless of whether $\lambda = 0$ or not.
Then by applying Theorem~\ref{thm:dic:single},
we reach the same conclusion that \#\textsc{$\lambda$-VertexCover} is $\SHARPP$-hard on $k$-regular graphs when $\lambda \neq 0$.
We note that when $\lambda = 0$,  $T'= \left[\begin{smallmatrix} 1 & -i \\ 1 & i \end{smallmatrix} \right] = \sqrt{2} Z^{-1}$,
where $Z = \frac{1}{\sqrt{2}} \left[ \begin{smallmatrix} 1 & 1 \\  i & -i\end{smallmatrix}\right]$ was used in Section~\ref{sec:intro}.

We now consider some orientation problems.

\vspace*{\problemSpace}
\textbf{Problem:} \#\textsc{NoSinkOrientation}

\textbf{Input:} An undirected graph $G$.

\textbf{Output:} The number of orientations of $G$ such that each vertex has at least one outgoing edge.
\vspace*{\problemSpace}

This problem is denoted by $\holant{[0,1,0]}{\mathcal{F}}$,
where $\mathcal{F}$ consists of $f_{(k)} = [0,1,\dotsc,1,1]$ for any arity $k$.
Each degree two vertex on the left side of the bipartite graph must have its incident edges assigned different values.
We associate an oriented edge between the neighbors of such vertices with the head on the side assigned~$0$ and the tail on the side assigned~$1$.
This problem is $\SHARPP$-hard even over $k$-regular graphs provided $k \ge 3$.
Just as with the bipartite form of the vertex cover problem,
we do a holographic transformation to apply our dichotomy theorem.
This time, we pick $T = \frac{1}{2} \left[\begin{smallmatrix} 1 & -i \\ 1 & i \end{smallmatrix}\right] = \frac{1}{\sqrt{2}} Z^{-1}$,
with $T^{-1} = \sqrt{2} Z = \left[\begin{smallmatrix} 1 & 1 \\ i & -i \end{smallmatrix}\right]$ and get
\begin{align*}
 \holant{[0,1,0]}{f_{(k)}}
 &\equiv_T \holant{[0,1,0] {T}^{\otimes 2}}{({T}^{-1})^{\otimes k} f_{(k)}}\\
 &\equiv_T \holant{\tfrac{1}{2}[1,0,1]}{\hat{f}_{(k)}} \\
 &\equiv_T \Holant(\hat{f}_{(k)}),
\end{align*}
where $\hat{f}_{(k)} = [2^k-1, -i, 1, i, -1, \dotsc]$.
This is actually a special case (consider $-\hat{f}_{(k)}$) of the \#\textsc{$k$-$\lambda$-VertexCover} problem with $\lambda = 2 e^{\pi i / k}$.
Therefore, this problem is $\SHARPP$-hard.
However, if we consider this problem modulo $2^k$,
$\hat{f}_{(k)}$ becomes $[-1, -i, 1, i, -1, \dotsc]$,
and belongs to one of the tractable cases in our dichotomy.
Thus, \#\textsc{NoSinkOrientation} is tractable modulo $2^t$,
where $t$ is the minimal degree of the input graph.

\vspace*{\problemSpace}
\textbf{Problem:} \#\textsc{NoSinkNoSourceOrientation}

\textbf{Input:} An undirected graph $G$.

\textbf{Output:} The number of orientations of $G$ such that each vertex has at least one incoming and one outgoing edge.
\vspace*{\problemSpace}

This problem is denoted by $\holant{[0,1,0]}{\mathcal{F}}$,
where $\mathcal{F}$ consists of $f_{(k)} = [0,1,\dotsc,1,0]$ for any arity $k$.
This problem is also $\SHARPP$-hard on $k$-regular graphs for $k \ge 3$.
We pick the same $T$ as in the previous problem and get
\begin{align*}
 \holant{[0,1,0]}{f_{(k)}}
 &\equiv_T \holant{[0,1,0] {T}^{\otimes 2}}{({T}^{-1})^{\otimes k} f_{(k)}}\\
 &\equiv_T \holant{\tfrac{1}{2}[1,0,1]}{\hat{f}_{(k)}} \\
 &\equiv_T \Holant(\hat{f}_{(k)}),
\end{align*}
where $\hat{f}_{(k)} = [2^{k} - 2, 0 , 2 , 0, -2, \dotsc]$.
Here we transform from one real-weighted Holant problem to another real-weighted Holant problem via a complex-weighted transformation.
The hardness follows from Theorem~\ref{thm:dic:single}.
Like the previous problem,
\#\textsc{NoSinkNoSourceOrientation} is tractable modulo $2^t$,
where $t$ is the minimal degree of the input graph.

Our dichotomy theorem also applies to a set of signatures,
that is, different vertices may have different constraints.

\vspace*{\problemSpace}
\textbf{Problem:} \#\textsc{1In-Or-1Out-Orientation}

\textbf{Input:} An undirected graph $G$ with each vertex labeled ``1In'' or ``1Out''.

\textbf{Output:} The number of orientations of $G$ such that each vertex has exactly~1 incoming or exactly~1 outgoing edge as specified by its label.
\vspace*{\problemSpace}

This problem is denoted by $\holant{[0,1,0]}{\mathcal{F}}$,
where the set $\mathcal{F}$ consists of signatures of the form $f = [0,1,0,\dotsc,0]$ and $g = [0,\dotsc,0,1,0]$.
Once again, it is $\SHARPP$-hard on $k$-regular graphs for $k \ge 3$.
We apply the same transformation as in the above two orientation problems.
The result is $\Holant(\{\hat{f},\hat{g}\})$,
where $\hat{f} = [k, (k-2)i, -(k-4), \dotsc]$ and $\hat{g} = [k, -(k-2)i, -(k-4), \dotsc]$ of arity $k$.
In fact, the entries of $\hat{f}$ satisfy a second order recurrence relation with characteristic polynomial $(x-i)^2$
while the entries of $\hat{g}$ satisfy one with characteristic polynomial $(x+i)^2$.
The hardness follows from Theorem~\ref{thm:main}.
However, the restriction of this problem to planar graphs is tractable by matchgates~\cite{CL11a}.
Alternatively, if we only consider one signature, then either $\Holant(\hat{f})$ or $\Holant(\hat{g})$ is tractable.
The problem $\Holant(\hat{f})$ is equivalent to the problem $\holant{[0,1,0]}{[0,1,0,\dotsc,0]}$,
which is always~$0$ provided $k \ge 3$ by a simple counting argument.
Similarly for $\Holant(\hat{g})$.
Therefore, despite the complicated-looking $\hat{f}$ and $\hat{g}$,
the Holant value for any input graph using only $\hat{f}$ or $\hat{g}$ is always~$0$.
These are what we call vanishing signatures.
This is also an example where combining two vanishing signatures induces $\SHARPP$-hardness.

One sufficient condition for a signature to be vanishing is that its entries satisfy a second order recurrence relation with characteristic polynomial $(x \pm i)^2$.
If the entries of a signature $f$ satisfy a second order recurrence relation with characteristic polynomial $(x-a)^2$ for $a \neq \pm i$,
then there exists an orthogonal holographic transformation such that $f$ is transformed into a weighted matching signature.

\vspace*{\problemSpace}
\textbf{Problem:} \#\textsc{$\lambda$-WeightedMatching}

\textbf{Input:} An undirected graph $G$.

\textbf{Output:} ${\displaystyle \sum_{M \in \mathcal{M}(G)} \lambda^{v(M)}}$,\\
where $\mathcal{M}(G)$ is the set of all matchings in $G$ and $v(M)$ is the number of unmatched vertices in the matching $M$.
\vspace*{\problemSpace}

The Holant expression of this problem is $\Holant(\mathcal{F})$,
where $\mathcal{F}$ consists of signatures of the form $[\lambda, 1, 0, \dotsc, 0]$.
When $\lambda = 0$, this problem counts perfect matchings,
which is $\SHARPP$-hard even for bipartite graphs~\cite{Val79b} but tractable over planar graphs by Kasteleyn's algorithms~\cite{Kas67}.
When $\lambda = 1$, this problem counts general matchings.
Vadhan~\cite{Vad01} proved that counting general matchings is $\SHARPP$-hard over $k$-regular graphs for $k \ge 5$,
but left open the question for $k = 4$.
Theorem~\ref{thm:dic:single} shows that \#\textsc{$\lambda$-WeightedMatching} is $\SHARPP$-hard,
for any weight $\lambda$ and on any $k$-regular graphs for $k \ge 3$.
The power of our dichotomy theorem is such that it gives a sweeping classification for \emph{all} such problems;
the open case for $k = 4$ from~\cite{Vad01} is a single \emph{point} in the problem space.

\section{Vanishing Signatures} \label{sec:vanishing}

Vanishing signatures were first introduced in~\cite{GLV13} in the parity setting to denote signatures for which the Holant value is always~$0$ modulo~$2$.

\begin{definition}
 A set of signatures $\mathcal{F}$ is called \emph{vanishing} if the value $\Holant_\Omega(\mathcal{F})$ is~$0$ for every signature grid $\Omega$.
 A signature $f$ is called \emph{vanishing} if the singleton set $\{f\}$ is vanishing.
\end{definition}

In this section, we characterize all sets of symmetric vanishing signatures.
First we observe that a simple lemma (Lemma~6.2 in~\cite{GLV13}) from the parity setting works over any field $\mathbb{F}$, with the same proof.
It also works for general, not necessarily symmetric, signatures.
Let $f + g$ denote the entry-wise addition of two signatures $f$ and $g$ with the same arity,
i.e.~$(f + g)_\ell = f_\ell + g_\ell$ for any index $\ell$.

\begin{lemma} \label{lem:van:linear}
 Let $\mathcal{F}$ be a vanishing signature set.
 If a signature $f$ can be realized by a gadget using signatures in $\mathcal{F}$,
 then $\mathcal{F} \cup \{f\}$ is also vanishing.
 If $f$ and $g$ are two signatures in $\mathcal{F}$ of the same arity,
 then $\mathcal{F} \cup \{f + g\}$ is vanishing as well.
\end{lemma}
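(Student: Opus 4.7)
The plan is to handle the two claims separately, each by a direct reduction to the defining property of a vanishing set.

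For the first claim, I would take an arbitrary signature grid $\Omega$ over $\mathcal{F} \cup \{f\}$ and show $\Holant_\Omega = 0$. Since $f$ is realizable from $\mathcal{F}$, there is an $\mathcal{F}$-gate whose signature is exactly $f$. I would replace every vertex of $\Omega$ labeled $f$ by a copy of this gadget to form a new signature grid $\Omega'$ that uses only signatures from $\mathcal{F}$. By the fundamental property of $\mathcal{F}$-gates recorded in the preceding subsection (the gadget substitution preserves the Holant value, since summing out the internal edges of each gadget exactly reproduces the signature $f$ on its dangling edges), we get $\Holant_\Omega = \Holant_{\Omega'}$. Because $\mathcal{F}$ is vanishing, $\Holant_{\Omega'} = 0$, hence $\Holant_\Omega = 0$.

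For the second claim, let $k$ denote the common arity of $f$ and $g$, and consider any signature grid $\Omega$ over $\mathcal{F} \cup \{f+g\}$. Let $v_1, \dots, v_m$ be the vertices of $\Omega$ labeled $f+g$, and let $V'$ denote the remaining vertices. For any edge assignment $\sigma$, the local factor at each $v_j$ is $(f+g)(\sigma|_{E(v_j)}) = f(\sigma|_{E(v_j)}) + g(\sigma|_{E(v_j)})$. Expanding the product and then swapping sum and product gives
\[
\Holant_\Omega = \sum_\sigma \prod_{v \in V'} f_v(\sigma|_{E(v)}) \prod_{j=1}^m \bigl(f(\sigma|_{E(v_j)}) + g(\sigma|_{E(v_j)})\bigr) = \sum_{\tau \in \{f,g\}^m} \Holant_{\Omega_\tau},
\]
where $\Omega_\tau$ is the signature grid obtained from $\Omega$ by relabeling each $v_j$ with $\tau_j \in \{f,g\}$. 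Each $\Omega_\tau$ uses only signatures from $\mathcal{F}$, so $\Holant_{\Omega_\tau} = 0$ by hypothesis, and therefore $\Holant_\Omega = 0$.

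Neither step has a genuine obstacle: the first is just the standard gadget substitution, and the second is multilinearity of the Holant sum in each vertex signature. The only thing to be slightly careful about is keeping the gadget-substitution argument and the sum–product expansion on distinct instances, so that the two claims can be applied repeatedly (and in combination) to grow the vanishing set. In particular, by induction the lemma implies that any signature in the linear span of $\mathcal{F}$-realizable signatures (at each fixed arity) can be added to $\mathcal{F}$ without losing vanishing, which is the form in which the lemma will be used later.
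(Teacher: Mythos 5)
Your proof is correct, and it is the standard argument (gadget substitution for realizability, multilinearity of the Holant sum for entrywise addition) that the paper itself invokes only by reference, citing Lemma~6.2 of the parity-setting paper with the remark that ``the same proof'' works over any field. Nothing further is needed.
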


Obviously, the identically zero signature, in which all entries are~0, is vanishing.
This is trivial.
However, we show that the concept of vanishing signatures is not trivial.
Notice that the unary signature $[1,i]$ when connected to another $[1,i]$ has a Holant value~0.
Consider a signature set $\mathcal{F}$ where every signature of arity $n$ is degenerate.
That is, every signature of arity $n$ is a tensor product of unary signatures.
Moreover, for each signature, suppose that more than half of the unary signatures in the tensor product are $[1,i]$.
For any signature grid $\Omega$ with signatures from $\mathcal{F}$,
it can be decomposed into many pairs of unary signatures.
The total Holant value is the product of the Holant on each pair.
Since more than half of the unaries in each signature are $[1,i]$,
more than half of the unaries in $\Omega$ are $[1,i]$.
Then two $[1,i]$'s must be paired up and hence $\Holant_{\Omega} = 0$.
Thus, all such signatures form a vanishing set.
We also observe that this argument holds when $[1,i]$ is replaced by $[1,-i]$.

These signatures described above are generally not symmetric and our present aim is to characterize symmetric vanishing signatures.
To this end, we define the following symmetrization operation.

\begin{definition} \label{def:sym}
 Let $S_n$ be the symmetric group of degree $n$.
 Then for positive integers $t$ and $n$ with $t \le n$ and unary signatures $v, v_1, \dotsc, v_{n-t}$,
 we define
 \[
  \Sym_n^t(v; v_1, \dotsc, v_{n-t}) = \sum_{\pi \in S_n} \bigotimes_{k=1}^n u_{\pi(k)},
 \]
 where the ordered sequence
 $(u_1, u_2, \dotsc, u_n) = ( \underbrace{v, \dotsc, v}_{t \text{ copies}}, v_1, \dotsc, v_{n-t} )$.
\end{definition}

Note that we include redundant permutations of $v$ in the definition.
Equivalent $v_i$'s also induce redundant permutations.
These redundant permutations simply introduce a nonzero constant factor, which does not change the complexity.
However, the allowance of redundant permutations simplifies our calculations.
An illustrative example of Definition~\ref{def:sym} is
\begin{align*}
 \Sym_3^2([1,i]; [a,b])
 &= 2 [a,b] \otimes [1,i] \otimes [1,i] + 2 [1,i] \otimes [a,b] \otimes [1,i] + 2 [1,i] \otimes [1,i] \otimes [a,b]\\
 &= 2 [3 a, 2 i a + b, -a + 2 i b, -3 b].
\end{align*}

\begin{definition}
 A nonzero symmetric signature $f$ of arity $n$ has \emph{positive vanishing degree} $k \ge 1$, which is denoted by $\vd^+(f) = k$,
 if $k \le n$ is the largest positive integer such that there exists $n-k$ unary signatures $v_1, \dotsc, v_{n-k}$ satisfying
 \begin{align*}
  f = \Sym_n^{k}([1,i]; v_1, \dots, v_{n-k}).
 \end{align*}
 If $f$ cannot be expressed as such a symmetrization form, we define $\vd^+(f) = 0$.
 If $f$ is the all zero signature, define $\vd^+(f) = n + 1$.
 
 We define \emph{negative vanishing degree} $\vd^-$ similarly, using $-i$ instead of $i$.
\end{definition}

Notice that it is possible for a signature $f$ to have both $\vd^+(f)$ and $\vd^-(f)$ nonzero.
For example, $f = [1,0,1]$ has $\vd^+(f) = \vd^-(f) = 1$.

By the discussion above and Lemma~\ref{lem:van:linear},
we know that for a signature $f$ of arity $n$,
if $\vd^\sigma(f) > \frac{n}{2}$ for some $\sigma \in \{+,-\}$,
then $f$ is a vanishing signature.
This argument is easily generalized to a set of signatures.

\begin{definition}
 For $\sigma \in \{+, -\}$,
 we define $\mathscr{V}^\sigma = \{f \st 2 \vd^\sigma(f) > \arity(f)\}$.
\end{definition}

\begin{lemma} \label{lem:sym:van}
 Let $\mathcal{F}$ be a set of symmetric signatures.
 If $\mathcal{F} \subseteq \mathscr{V}^{+}$ or $\mathcal{F} \subseteq \mathscr{V}^{-}$,
 then $\mathcal{F}$ is vanishing.
\end{lemma}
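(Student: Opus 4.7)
The plan is to reduce to a combinatorial pigeonhole count on a degenerate signature grid. I focus on the positive case $\mathcal{F} \subseteq \mathscr{V}^{+}$; the negative case is identical after replacing $[1,i]$ by $[1,-i]$, since $\langle [1,-i], [1,-i] \rangle = 1 + (-i)^2 = 0$ as well. The starting observation is that, by the definition of $\vd^{+}$, every $f \in \mathcal{F}$ of arity $n$ with $k = \vd^{+}(f) > n/2$ can be written as $\Sym_n^{k}([1,i]; v_1, \dotsc, v_{n-k})$, hence as a $\mathbb{C}$-linear combination of tensor-product signatures $u_1 \otimes \cdots \otimes u_n$ in each of which strictly more than $n/2$ of the unary factors equal $[1,i]$. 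Let $\mathcal{G}$ denote the collection of all such degenerate signatures (over all arities). By the multilinearity of the Holant sum in its vertex labels, for any signature grid $\Omega = (G, \mathcal{F}, \pi)$ one can expand $\Holant_\Omega$ into a finite sum of Holant values on signature grids over the same graph $G$ whose vertex labels all lie in $\mathcal{G}$. It therefore suffices to prove that $\mathcal{G}$ is vanishing.

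For the key step, fix any signature grid $\Omega' = (G, \mathcal{G}, \pi')$ with $G = (V, E)$. Each vertex $v$ of degree $n_v$ is labeled by a tensor product of unaries $u_{v,1} \otimes \cdots \otimes u_{v,n_v}$ in which strictly more than $n_v/2$ of the factors equal $[1,i]$. Since every vertex label is degenerate, the Holant sum factorizes across edges as
\[
\Holant_{\Omega'} = \prod_{e = \{v,w\} \in E} \langle u_{v, \ell(e)}, u_{w, m(e)} \rangle,
\]
where $\ell(e)$ and $m(e)$ are the port indices of $e$ at $v$ and $w$, respectively, and $\langle a, b \rangle = a_0 b_0 + a_1 b_1$ is the sum over $\{0,1\}$ of the componentwise product of the two unaries meeting at $e$. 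Now I double-count the number of ports labeled $[1,i]$: summing the per-vertex strict inequalities yields a total that is strictly greater than $\sum_v n_v / 2 = |E|$, among only $2|E|$ ports. Since the ports partition into $|E|$ disjoint pairs, one per edge, pigeonhole forces some edge $e^*$ to carry $[1,i]$ on both of its ports. That edge contributes $\langle [1,i], [1,i] \rangle = 1 + i^2 = 0$ to the product, so $\Holant_{\Omega'} = 0$, as required.

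The step I expect to require the most care is the pigeonhole bookkeeping above: the strict per-vertex inequality $2 \vd^{+}(f_v) > \arity(f_v)$ must survive summation to produce the strict excess $\sum_v \vd^{+}(f_v) > |E|$, and this excess must then be converted into an edge both of whose endpoints carry $[1,i]$. This works precisely because $\sum_v \arity(f_v) = 2|E|$ and because the port set carries a canonical perfect matching supplied by the edges. Everything else in the plan — the multilinear expansion from $\mathcal{F}$ to $\mathcal{G}$, the edge-by-edge factorization of $\Holant_{\Omega'}$, and the parallel treatment of the negative case — is routine.
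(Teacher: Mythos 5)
Your proof is correct and takes essentially the same route as the paper: the paper first argues (by exactly your port-pairing pigeonhole) that degenerate signatures in which more than half of the unary tensor factors are $[1,\sigma i]$ form a vanishing set, and then passes to general signatures in $\mathscr{V}^{\sigma}$ via the symmetrization form together with Lemma~\ref{lem:van:linear}, whose closure-under-sums part is the same multilinear expansion you carry out explicitly. The only point you leave implicit is the identically zero signature (for which $\vd^{\sigma}(f)=\arity(f)+1$ and the $\Sym$ form does not literally apply), but that case is trivially vanishing, as the paper notes.
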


In Theorem~\ref{thm:van}, we show that these two sets capture all symmetric vanishing signature sets.

\subsection{Characterizing Vanishing Signatures using Recurrence Relations}

Now we give an equivalent characterization of vanishing signatures.

\begin{definition}
 A symmetric signature $f = [f_0, f_1, \dotsc, f_n]$ of arity $n$ is in $\mathscr{R}^{+}_t$ for a nonnegative integer $t \ge 0$
 if $t > n$ or for any $0 \le k \le n - t$, $f_k, \dotsc, f_{k+t}$ satisfy the recurrence relation
 \begin{align}
  \binom{t}{t} i^t f_{k+t} + \binom{t}{t-1} i^{t-1} f_{k+t-1} + \dotsb + \binom{t}{0} i^0 f_k = 0. \label{eqn:recurrence}
 \end{align}
 We define $\mathscr{R}^{-}_t$ similarly but with $-i$ in place of $i$ in~(\ref{eqn:recurrence}).
\end{definition}

It is easy to see that $\mathscr{R}^+_0 = \mathscr{R}^-_0$ is the set of all zero signatures.
Also, for $\sigma \in \{+,-\}$, we have $\mathscr{R}^\sigma_t \subseteq \mathscr{R}^\sigma_{t'}$ when $t \le t'$.
By definition, if $\arity(f) = n$ then $f \in \mathscr{R}^\sigma_{n+1}$.

Let $f = [f_0, f_1, \dotsc, f_n] \in \mathscr{R}^+_t$ with $0 < t \le n$.
Then the characteristic polynomial of its recurrence relation is $(1 + x i)^t$.
Thus there exists a polynomial $p(x)$ of degree at most $t-1$ such that $f_k = i^k p(k)$, for $0 \le k \le n$.
This statement extends to $\mathscr{R}^+_{n+1}$ since a polynomial of degree $n$ can interpolate any set of $n+1$ values.
Furthermore, such an expression is unique.
If there are two polynomials $p(x)$ and $q(x)$, both of degree at most $n$, such that $f_k = i^k p(k) = i^k q(k)$ for $0 \le k \le n$,
then $p(x)$ and $q(x)$ must be the same polynomial.
Now suppose $f_k = i^k p(k)$ ($0 \le k \le n$) for some polynomial $p$ of degree at most $t-1$, where $0 < t \le n$.
Then $f$ satisfies the recurrence~(\ref{eqn:recurrence}) of order $t$.
Hence $f \in \mathscr{R}^{+}_t$.

Thus $f \in \mathscr{R}^{+}_{t+1}$
iff there exists a polynomials $p(x)$ of degree at most $t$ such that $f_k = i^k p(k)$ ($0 \le k \le n$), for all $0  \le t \le n$.
For $\mathscr{R}^-_{t+1}$, just replace $i$ by $-i$.

\begin{definition}
 For a nonzero symmetric signature $f$ of arity $n$,
 it is of \emph{positive} (resp.~\emph{negative}) \emph{recurrence degree} $t \le n$, denoted by $\rd^+(f) = t$ (resp.~$\rd^-(f) = t$),
 if and only if $f \in \mathscr{R}^+_{t+1} - \mathscr{R}^+_{t}$ (resp.~$f \in \mathscr{R}^-_{t+1} - \mathscr{R}^-_{t}$).
 If $f$ is the all zero signature, we define $\rd^+(f) = \rd^-(f)= -1$.
\end{definition}

Note that although we call it the recurrence degree, it refers to a special kind of recurrence relation.
For any nonzero symmetric signature $f$, by the uniqueness of the representing polynomial $p(x)$,
it follows that $\rd^\sigma(f) = t$ iff $\deg(p) = t$, where $0 \le t \le n$.
We remark that $\rd^\sigma(f)$ is the maximum integer $t$ such that $f$ does \emph{not} belong to $\mathscr{R}^\sigma_t$.
Also, for an arity $n$ signature $f$, $\rd^\sigma(f) = n$
if and only if $f$ does not satisfy any such recurrence relation~(\ref{eqn:recurrence}) of order $t \le n$ for $\sigma \in \{+,-\}$.

\begin{lemma} \label{lem:cha:sym}
 Let $f = [f_0, \dotsc, f_n]$ be a symmetric signature of arity $n$, not identically 0.
 Then for any nonnegative integer $0 \le t < n$ and $\sigma \in \{+, -\}$, the following are equivalent:
 \begin{enumerate}
  \item[(i)] There exist $t$ unary signatures $v_1, \dotsc, v_t$, such that
   \begin{equation}
    f = \Sym_n^{n-t}([1, \sigma i]; v_1, \dots, v_t). \label{eqn:define-f-by-sym}
   \end{equation}
  \item[(ii)] $f \in \mathscr{R}^\sigma_{t+1}$.
 \end{enumerate}
\end{lemma}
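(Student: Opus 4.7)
The plan is to encode $f$ by its generating polynomial $F(X) := \sum_{k=0}^{n} \binom{n}{k} f_k X^k$ and translate each of (i) and (ii) into a divisibility / polynomial statement about $F$. We handle $\sigma = +$; the $\sigma = -$ case follows verbatim after replacing $i$ with $-i$ throughout.

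First, if $v_j = [a_j, b_j]$ and $g = \Sym_n^{n-t}([1,i]; v_1, \dotsc, v_t)$, then by multilinearity in each $v_j$ together with the elementary identity that $[a,b]^{\otimes n}$ has generating polynomial $(a+bX)^n$, we obtain
\[
F_g(X) \;=\; c_{n,t} \cdot (1+iX)^{n-t} \prod_{j=1}^{t}(a_j + b_j X)
\]
for some nonzero constant $c_{n,t}$. Every nonzero polynomial of degree at most $t$ over $\mathbb{C}$ factors as such a product $\prod_{j=1}^t(a_j + b_j X)$ (padding with constant factors $b_j = 0$ if its degree is strictly less than $t$, and absorbing any leading scalar into one of the $v_j$). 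Thus (i) is equivalent to the statement ``$(1+iX)^{n-t}$ divides $F(X)$.'' Substituting $Y = iX$ and setting $g_k := i^{-k} f_k$, this becomes ``$(1+Y)^{n-t}$ divides $H(Y)$,'' where $H(Y) := \sum_k \binom{n}{k} g_k Y^k$. On the other hand, as recorded in the discussion preceding the lemma, condition (ii) is equivalent to $f_k = i^k p(k)$ for some polynomial $p$ of degree at most $t$, which says precisely that $g_k = p(k)$.

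It therefore suffices to prove: $(1+Y)^{n-t}$ divides $H(Y)$ iff $g_k$ equals some polynomial $p(k)$ with $\deg p \le t$. A short induction on $r$ gives the operator identity $(Y\partial_Y)^r (1+Y)^n = (1+Y)^{n-r} Q_r(Y)$, with the recurrence $Q_{r+1}(Y) = Y\bigl[(n-r)Q_r(Y) + (1+Y)Q_r'(Y)\bigr]$ yielding $Q_r(-1) = (-1)^r n!/(n-r)! \ne 0$ for $0 \le r \le n$. Expanding this in series form, $\sum_k \binom{n}{k} k^r Y^k = (1+Y)^{n-r} Q_r(Y)$ is divisible by $(1+Y)^{n-t}$ whenever $r \le t$; by linearity in $p$, the ``if'' direction follows. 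For the converse, consider the linear map $\Phi: p \mapsto \sum_k \binom{n}{k} p(k) Y^k$ from the space of polynomials of degree $\le t$ into the space of polynomials of degree $\le n$ divisible by $(1+Y)^{n-t}$. Both spaces have dimension $t+1$, and $\Phi$ is injective because any polynomial of degree $\le t \le n$ with zeros at $k = 0, 1, \dotsc, n$ must vanish identically; hence $\Phi$ is surjective by a dimension count, completing the equivalence. The only real technical content is establishing the operator identity and confirming $Q_r(-1) \ne 0$; everything else is formal manipulation of generating polynomials and linear algebra.
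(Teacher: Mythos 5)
Your proof is correct, but it follows a genuinely different route from the paper. You encode a symmetric signature by its generating polynomial $\sum_k \binom{n}{k} f_k X^k$, observe that symmetrization corresponds to multiplication of linear forms, so that condition (i) becomes ``$(1+iX)^{n-t}$ divides $F(X)$,'' and then, after the substitution $Y = iX$, reduce the lemma to the classical fact that $\sum_k \binom{n}{k} p(k) Y^k$ is divisible by $(1+Y)^{n-t}$ whenever $\deg p \le t$ (via the operator identity for $(Y\partial_Y)^r(1+Y)^n$), with the converse supplied by an injectivity-plus-dimension-count argument; here you also rely on the paper's preceding characterization of $\mathscr{R}^\sigma_{t+1}$ as $f_k = (\sigma i)^k p(k)$ with $\deg p \le t$, which is legitimate since that equivalence is established before the lemma. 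The paper instead proves $(i)\Rightarrow(ii)$ by a double induction on $t$ and $n$, using a recurrence for $\Sym_n^{n-t}$ and a hand computation of the cancelling binomial coefficients in the base case, and proves $(ii)\Rightarrow(i)$ constructively: it matches the first $t+1$ entries by solving a triangular linear system in the elementary symmetric polynomials of the unknown $b_j$'s, which in addition shows the $v_j$ are uniquely determined up to permutation. Your approach is shorter and more conceptual (it is essentially the binary-forms/apolarity dictionary) and avoids induction entirely; the paper's approach stays inside the symmetrization formalism it uses elsewhere and yields the uniqueness of the decomposition as a by-product. Two cosmetic points: the evaluation $Q_r(-1) = (-1)^r n!/(n-r)! \ne 0$ is not actually needed once you invoke the dimension count (it would only be needed to pin down the exact order of vanishing at $Y=-1$), and the formula $F_g(X) = c_{n,t}(1+iX)^{n-t}\prod_j(a_j+b_jX)$ deserves one explicit line (either direct expansion, which produces the factor $k!\,(n-k)!$ per weight-$k$ entry and hence $c_{n,t}=n!$, or a polarization argument), but neither is a gap.
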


\begin{proof}
 We consider $\sigma = +$ since the other case is similar, so let $v = [1,i]$.

 We start with $(i) \implies (ii)$ and proceed via induction on both $t$ and $n$.
 For the first base case of $t = 0$, $\Sym_n^n(v) = [1,i]^{\otimes n} = [1, i, -1, -i, \dotsc, i^n]$, so $f_{k+1} = i f_k$ for all $0 \le k \le n - 1$ and $f \in \mathscr{R}^+_1$.

 The other base case is that $t = n-1$.
 Let $\Sym_n^1(v; v_1, \dotsc, v_t) = [f_0, \dotsc, f_n]$ where $v_i = [a_i, b_i]$ for $1 \le i \le t$, and $S = i^n f_{n}  + \dots + \binom{n}{1} i f_1 + \binom{n}{0} i^0 f_{0}$.
 We need to show that $S = 0$.
 First notice that any entry in $f$ is a linear combination of terms of the form $a_{i_1} a_{i_2} \dotsm a_{i_{n-1-k}} b_{j_1} \dotsm b_{j_k}$,
 where $0 \le k \le n-1$, and $\{i_1, \dotsc, i_{n-1-k}, j_1, \dotsc, j_k\} = \{1, 2, \dotsc, n-1\}$.
 Thus $S$ is a linear combination of such terms as well.
 Now we compute the coefficient of each of these terms in $S$.

 Each term $a_{i_1} a_{i_2} \dotsm a_{i_{n-1-k}} b_{j_1} \dotsm b_{j_k}$ appears twice in $S$, once in $f_k$ and the other time in $f_{k+1}$.
 In $f_k$, the coefficient is $k! (n-k)!$, and in $f_{k+1}$, it is $i (k+1)! (n-k-1)!$.
 Thus, its coefficient in $S$ is
 \[
  \binom{n}{k+1} i^{k+1} i (k+1)! (n-k-1)!
  +
  \binom{n}{k} i^k k! (n-k)! =0.
 \]
 The above computation works for any such term due to the symmetry of $f$, so all coefficients in $S$ are~0, which means that $S = 0$.

 Now assume for any $t' < t$ or for the same $t$ and any $n' < n$, the statement holds.
 For $(n,t)$, where $n > t + 1$, assume that $f = [f_0, \dotsc, f_n] = \Sym_n^{n-t}(v; v_1, \dotsc, v_t)$, $g = \Sym_{n-1}^{n-t-1}(v; v_1, \dotsc, v_t) = [g_0, \dotsc, g_{n-1}]$,
 and for any $1 \le j \le t$, 
 \begin{align*}
   h^{(j)} = \Sym_{n-1}^{n-t}(v; v_1, \dotsc, v_{j-1}, v_{j+1}, \dotsc, v_t) = [h_0^{(j)}, \dotsc, h_{n-1}^{(j)}].
 \end{align*}
 By the induction hypothesis, $g$ satisfies the recurrence relation of order $t+1$, namely $g \in \mathscr{R}^+_{t+1}$.
 Also for any $j$, $h^{(j)}$ satisfies the recurrence relation of order $t$, namely $h^{(j)} \in \mathscr{R}^+_t \subseteq \mathscr{R}^+_{t+1}$.

 We have the recurrence relation
 \begin{align}
  \Sym_n^{n-t}(v; v_1, \dotsc, v_t) \label{eqn:sym:rec}
  =            (n-t) v \otimes &\Sym_{n-1}^{n-t-1}(v; v_1, \dotsc, v_t) \\
    + \sum_{j=1}^t v_j \otimes &\Sym_{n-1}^{n-t}(v; v_1, \dotsc, v_{j-1}, v_{j+1}, \dotsc, v_t). \notag
 \end{align}

 By~(\ref{eqn:sym:rec}), the entry of weight $k$ in $f$ for any $k > 0$ is
 \[f_k = (n-t) i g_{k-1} + \sum_{j=1}^t b_j h_{k-1}^{(j)}.\]
 We know that $\{g_i\}$ and $\{h_i^{(j)}\}$ satisfy the recurrence relation~(\ref{eqn:recurrence}) of order $t+1$.
 Thus, their linear combination $\{f_i\}$ also satisfies the recurrence relation~(\ref{eqn:recurrence}) starting from $i = k > 0$.

 We also observe that by~(\ref{eqn:sym:rec}), the entry of weight $k$ in $f$ for any $k < n$ is
 \[f_k = (n-t) g_{k} + \sum_{j=1}^t a_j h_k^{(j)}.\]
 Since $t < n-1$, by the same argument, the recurrence relation~(\ref{eqn:recurrence}) holds for $f$ when $k = 0$ as well.

 Now we show $(ii) \implies (i)$.
 Notice that we only need to find unary signatures $\{v_i\}$ for $1 \le i \le t$ such that $\Sym_n^{n-t}(v; v_1, \dotsc, v_t)$ matches the first $t+1$ entries of $f$.
 The theorem follows from this since we have shown that $\Sym_n^{n-t}(v; v_1, \dotsc, v_t)$ satisfies the recurrence relation of order $t+1$ and any such signature is determined by the first $t+1$ entries.

 We show that there exist $v_i = [a_i, b_i]$ ($1 \le i \le t$) satisfying the above requirement.
 Since $f$ is not identically 0, by~(\ref{eqn:recurrence}), some nonzero term occurs among $\{f_0, \dotsc, f_{t} \}$.
 Let $f_{s} \neq 0$, for $0 \le s \le t$, be the first nonzero term.
 By a nonzero constant multiplier, we may normalize $f_s = s! (n-s)!$,
 and set  $v_j = [0, 1]$, for $1 \le j \le s$ (which is vacuous if $s=0$),
 and set $v_{s+j} = [1, b_{s+j}]$, for $1 \le j \le t-s$ (which is vacuous if $s=t$).
 We will set up a system of polynomial equations with $b_{s+j}$'s as variables.
 Solving it will give us desired $v_{s+j}$'s.

 Let $F$ be the function defined in~(\ref{eqn:define-f-by-sym}).
 Then $F_k = f_k = 0$ for $0 \le k < s$ (which is vacuous if $s=0$).
 By expanding the symmetrization function, for $s \le k \le t$, we get
 \[F_k = k! (n-k)! \sum_{j=0}^{k-s} \binom{n-t}{k-s-j} \Delta_j i^{k-s-j},\]
 where $\Delta_j$ is the elementary symmetric polynomial in $\{b_{s+1}, \dots, b_{t}\}$ of degree $j$ for $0 \le j \le t-s$.
 By definition, $\Delta_0 = 1$ and $F_s = f_s$.
 Setting $F_k = f_k$ for $s+1  \le k \le t$, this is a linear equation system on $\Delta_j$ ($1  \le j \le t-s$), with a triangular matrix and nonzero diagonals.
 From this, we know that all $\Delta_j$'s are uniquely determined by $\{f_{s+1}, \dots, f_t\}$.
 Moreover, $\{b_{s+1}, \dots, b_{t}\}$ are the roots of the equation $\sum_{j=0}^{t-s} (-1)^j \Delta_j x^{t-s-j} = 0$.
 Thus $\{b_{s+1}, \dots, b_{t}\}$ are also uniquely determined by $\{f_{s+1}, \dots, f_t\}$ up to a permutation.
\end{proof}

\begin{corollary} \label{cor:van:degree}
 If $f$ is a symmetric signature and $\sigma \in \{+,-\}$,
 then $\vd^\sigma(f) + \rd^\sigma(f) = \arity(f)$.
\end{corollary}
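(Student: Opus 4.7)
The plan is to derive the corollary directly from Lemma~\ref{lem:cha:sym}, which already provides the crucial bridge between the symmetrization form that defines $\vd^\sigma$ and the recurrence-relation form that defines $\rd^\sigma$. The key observation is that the lemma's parameter $t$ (the number of ``non-$[1,\sigma i]$'' factors) and the $\vd^\sigma$ value $k$ (the number of $[1,\sigma i]$ factors) satisfy $t + k = n$, so membership in $\mathscr{R}^\sigma_{t+1}$ translates directly into a statement about $\vd^\sigma$, and vice versa.

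First I would dispose of the degenerate case $f \equiv 0$: by definition $\vd^\sigma(f) = n+1$ and $\rd^\sigma(f) = -1$, so the identity holds trivially. Assume henceforth that $f$ is a nonzero symmetric signature of arity $n$, and let $k = \vd^\sigma(f) \in \{0, 1, \dots, n\}$. When $1 \le k \le n$ (so that $t := n-k$ satisfies $0 \le t < n$), by definition $f = \Sym_n^{k}([1,\sigma i]; v_1, \dots, v_{n-k})$, and Lemma~\ref{lem:cha:sym} then gives $f \in \mathscr{R}^\sigma_{n-k+1}$. By the maximality of $k$, if $k < n$ then $f$ admits no such expression using $k+1$ copies of $[1,\sigma i]$, so the same lemma (applied with $t = n-k-1$) forces $f \notin \mathscr{R}^\sigma_{n-k}$; hence $\rd^\sigma(f) = n-k$, giving $\vd^\sigma(f) + \rd^\sigma(f) = n$. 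When $k = n$, one checks directly that $f$ is a nonzero scalar multiple of $[1,\sigma i]^{\otimes n}$, which satisfies the order-one recurrence $f_{j+1} = \sigma i\, f_j$, so $f \in \mathscr{R}^\sigma_{1}$; since $f$ is nonzero, it lies outside $\mathscr{R}^\sigma_0$ (the all-zero set), giving $\rd^\sigma(f) = 0$ as required.

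Finally, the case $k = 0$: by definition $f$ cannot be expressed in symmetrization form with any $k' \ge 1$. Via Lemma~\ref{lem:cha:sym} (applied with $t = n - k'$ for each $1 \le k' \le n$), this rules out $f \in \mathscr{R}^\sigma_t$ for every $1 \le t \le n$. On the other hand $f \in \mathscr{R}^\sigma_{n+1}$ is automatic from the definition of $\mathscr{R}^\sigma_{n+1}$. Hence $\rd^\sigma(f) = n$ and again the sum is $n$.

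The real work is already contained in Lemma~\ref{lem:cha:sym}; the only subtlety here is bookkeeping. The main care points are (a) tracking the off-by-one shift between ``$t$'' in the lemma and the subscript ``$t+1$'' in $\mathscr{R}^\sigma_{t+1}$ used to define $\rd^\sigma$, and (b) handling the boundary values $k=0$, $k=n$, and $f \equiv 0$, which sit just outside the range $0 \le t < n$ in which the lemma is stated. Neither presents a genuine obstacle; both reduce to invoking the definitions of $\mathscr{R}^\sigma_0$ and $\mathscr{R}^\sigma_{n+1}$ and, for $k=n$, a one-line direct verification that $[1,\sigma i]^{\otimes n} \in \mathscr{R}^\sigma_1$.
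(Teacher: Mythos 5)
Your proof is correct and follows the paper's intended route: the paper states this corollary with no separate proof precisely because it is the definition-chasing consequence of Lemma~\ref{lem:cha:sym} that you carry out, including the boundary cases $f \equiv 0$, $\vd^\sigma(f)=0$, and $\vd^\sigma(f)=n$ (the last of which is in fact already covered by the lemma at $t=0$, so your direct check of $[1,\sigma i]^{\otimes n} \in \mathscr{R}^\sigma_1$ is harmless but not needed).
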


Thus we have an equivalent form of $\mathscr{V}^\sigma$ for $\sigma \in \{+,-\}$.
Namely,
\[\mathscr{V}^\sigma = \{f \st 2 \rd^\sigma(f) < \arity(f)\}.\]

\begin{figure}[t]
 \centering
 \begin{tikzpicture}[scale=\scale,transform shape,node distance=\nodeDist,semithick]
  \node[internal] (0)                   {};
  \node[external] (1) [above left of=0] {};
  \node[external] (2) [below left of=0] {};
  \node[external] (3) [left       of=1] {};
  \node[external] (4) [left       of=2] {};
   \path let
          \p1 = (3),
          \p2 = (4)
         in
          node[external] (5) at (\x1, \y1 / 2 + \y2 / 2) {};
  \node[external]  (6) [right       of=0] {};
  \node[internal]  (7) [right       of=6] {};
  \node[external]  (8) [above right of=7] {};
  \node[external]  (9) [below right of=7] {};
  \node[external] (10) [right       of=8] {};
  \node[external] (11) [right       of=9] {};
   \path let
          \p1 = (10),
          \p2 = (11)
         in
          node[external] (12) at (\x1, \y1 / 2 + \y2 / 2) {};
  \path (0) edge[out= 135, in=0]    (3)
            edge[out=-135, in=0]    (4)
            edge                    (5)
            edge[bend left]         (7)
            edge[bend right]        (7)
            edge[out= 70, in= 110]  (7)
            edge[out=-70, in=-110]  (7)
        (7) edge[out= 45, in= 180] (10)
            edge[out=-45, in= 180] (11)
            edge                   (12);
  \begin{pgfonlayer}{background}
   \node[inner sep=0pt,transform shape=false,draw=\borderColor,thick,rounded corners,fit = (1) (2) (8) (9)] {};
  \end{pgfonlayer}
 \end{tikzpicture}
 \caption{Example of a gadget used to create a degenerate vanishing signature from some general vanishing signature.
  This example is for a signature of arity 7 and recurrence degree 2, which is assigned to both vertices.}
 \label{fig:gadget:vanishing_to_degenerate}
\end{figure}
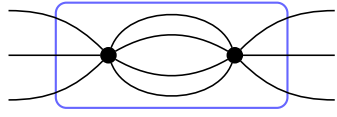

\subsection{Characterizing Vanishing Signature Sets}

Now we show that $\mathscr{V}^{+}$ and $\mathscr{V}^{-}$ capture all symmetric vanishing signature sets.
To begin, we show that a vanishing signature set cannot contain both types of nontrivial vanishing signatures.

\begin{lemma} \label{lem:van:mix}
 Let $f_+ \in \mathscr{V}^{+}$ and $f_- \in \mathscr{V}^{-}$.
 If neither $f_+$ nor $f_-$ is the all zero signature,
 then the signature set $\{f_+, f_-\}$ is not vanishing.
\end{lemma}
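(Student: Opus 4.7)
The plan is to realize from $f_+$ alone a degenerate signature of the form $c_+ [1,i]^{\otimes N}$ with $c_+ \neq 0$ and $N \geq 1$, and symmetrically to realize $c_- [1,-i]^{\otimes M}$ from $f_-$ alone. Once both degenerate gadgets are in hand, I will combine them into a single signature grid whose Holant factors edge by edge, and the crucial observation that $[1,i] \cdot [1,-i] = 1 + i(-i) = 2 \neq 0$ will force the value to be nonzero.

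To reduce a general vanishing signature to a degenerate one, I use self-loops. Let $n_+ = \arity(f_+)$ and $r_+ = \rd^+(f_+)$; by Corollary~\ref{cor:van:degree} and the hypothesis $f_+ \in \mathscr{V}^+$ we have $r_+ < n_+/2$, and by the discussion preceding the corollary we can write $(f_+)_k = i^k p(k)$ for a polynomial $p$ of degree exactly $r_+$ with nonzero leading coefficient. A single self-loop on one copy of $f_+$ realizes an $\mathcal{F}$-gate of arity $n_+ - 2$ whose symmetric signature $f^*$ satisfies $(f^*)_k = (f_+)_k + (f_+)_{k+2} = i^k\bigl[p(k) - p(k+2)\bigr]$. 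A short calculation shows that $p^*(k) = p(k) - p(k+2)$ has degree exactly $r_+ - 1$ whenever $r_+ \geq 1$, with leading coefficient $-2r_+$ times that of $p$, so in particular $f^*$ is nonzero and its recurrence degree has strictly decreased.

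Iterating this construction $r_+$ times on a single copy of $f_+$---using $2r_+ < n_+$ of its edges as internal self-loops and leaving $N := n_+ - 2r_+ \geq 1$ dangling---produces a gadget whose signature is $c_+ [1,i]^{\otimes N}$ with $c_+ \neq 0$, since the leading coefficient is multiplied by a nonzero factor at each step. The analogous construction from $f_-$ (using $-i$ in place of $i$) yields a gadget realizing $c_- [1,-i]^{\otimes M}$ with $c_- \neq 0$ and $M := n_- - 2r_- \geq 1$.

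Finally, I form a signature grid $\Omega$ by taking $M$ copies of the $f_+$-gadget and $N$ copies of the $f_-$-gadget, and pairing their $MN$ dangling edges on each side via any perfect matching between the two sides. Because each degenerate gadget factors as a tensor product of unary signatures, $\Holant_\Omega$ factors over edges, giving
\[
 \Holant_\Omega = c_+^M c_-^N \prod_{\text{edges } e} \sum_{a \in \{0,1\}} [1,i](a)\,[1,-i](a) = c_+^M c_-^N \cdot 2^{MN} \neq 0,
\]
so $\{f_+, f_-\}$ is not vanishing. The main technical point to verify carefully is that the self-loop construction strictly decreases the recurrence degree without annihilating the signature---this is the leading-coefficient tracking argument above---and once it is in place the rest of the proof is just the edge-wise factorization displayed.
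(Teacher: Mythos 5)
Your proof is correct, and it finishes exactly as the paper does (pair up $M$ copies of one degenerate gadget with $N$ copies of the other so the Holant factors into $c_+^M c_-^N 2^{MN} \neq 0$), but it differs in how the pure tensor powers $[1,\pm i]^{\otimes \cdot}$ are extracted. The paper connects \emph{two} copies of $f_+$ by $2t$ edges (Figure~\ref{fig:gadget:vanishing_to_degenerate}) and argues combinatorially from the symmetrized form that only the terms with all dangling edges carrying $[1,i]$ survive, yielding $[1,i]^{\otimes 2(n-2t)}$; this ``combinatorial view'' is then reused in the proof of Lemma~\ref{lem:cha:van}. You instead take iterated self-loops on a \emph{single} copy of $f_+$ and track the representing polynomial: each self-loop sends $p(k)$ to $p(k)-p(k+2)$, which has degree exactly one less with leading coefficient $-2r$ times the old one, so after $\rd^+(f_+)$ loops you reach a nonzero multiple of $[1,i]^{\otimes(n_+-2r_+)}$. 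This is in effect an independent proof of the paper's later Lemma~\ref{lem:van:self}, and your reliance on Corollary~\ref{cor:van:degree} and the polynomial characterization is legitimate since those precede Lemma~\ref{lem:van:mix}, so there is no circularity. Your route is slightly more economical (one copy per gadget, smaller arity of the resulting degenerate signature) and purely algebraic, while the paper's two-copy construction buys the combinatorial superposition argument it wants available for the induction in Lemma~\ref{lem:cha:van}; either way the lemma is established.
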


\begin{proof}
 Let $\arity(f_+) = n$ and $\rd^+(f_+) = t$, so $2 t < n$.
 Consider the gadget with two vertices and $2 t$ edges between two copies of $f_+$.
 (See Figure~\ref{fig:gadget:vanishing_to_degenerate} for an example of this gadget.)
 View $f_+$ in the symmetrized form.
 Since $\vd^+(f_+) = n-t$, in each term, there are $n-t$ many $[1,i]$'s and $t$ many unary signatures not equal to (a multiple of) $[1,i]$.
 This is a superposition of many degenerate signatures.
 Then the only non-vanishing contributions come from the cases where the $n-2t$ dangling edges on both sides are all assigned $[1,i]$,
 while inside, the $t$ copies of $[1,i]$ pair up with $t$ unary signatures not equal to $[1,i]$ from the other side perfectly.
 Notice that for any such contribution,
 the Holant value of the inside part is always the same constant and this constant is not~$0$ because $[1,i]$ paired up with any unary signature other than (a multiple of) $[1,i]$ is not~$0$.
 Then the superposition of all of the permutations is a degenerate signature $[1,i]^{\otimes 2 (n - 2 t)}$ up to a nonzero constant factor.

 Similarly, we can do this for $f_-$ of arity $n'$ and $\rd^-(f_-) = t'$, where $2 t' < n'$, and get a degenerate signature $[1,-i]^{\otimes 2 (n' - 2 t')}$, up to a nonzero constant factor.
 Then form a bipartite signature grid with $(n' - 2 t')$ vertices on one side,
 each assigned $[1,i]^{\otimes 2 (n - 2 t)}$, and $(n - 2 t)$ vertices on the other side,
 each assigned $[1,-i]^{\otimes 2 (n' - 2 t')}$.
 Connect edges between the two sides arbitrarily as long as it is a 1-1 correspondence.
 The resulting Holant is a power of $2$, which is not vanishing.
\end{proof}

\begin{lemma} \label{lem:cha:van}
 Every symmetric vanishing signature is in $\mathscr{V}^{+} \cup \mathscr{V}^{-}$.
\end{lemma}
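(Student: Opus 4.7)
The plan is to prove the contrapositive: I assume $f$ is a symmetric arity-$n$ signature with $f \notin \mathscr{V}^+ \cup \mathscr{V}^-$ and show that $f$ is not vanishing. By Corollary~\ref{cor:van:degree} and Lemma~\ref{lem:cha:sym}, the collection $\{b_j\}_{j=0}^{n}$ with $b_j = \Sym_n^{n-j}([1,i];[1,-i]^{\otimes j})$ spans all symmetric arity-$n$ signatures, and writing $f = \sum_{j \in S} c_j b_j$ with $c_j \neq 0$ iff $j \in S$, the hypothesis becomes: $\min S \leq n/2 \leq \max S$.

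The strategy is to reduce to Lemma~\ref{lem:van:mix}. I would construct, as $\{f\}$-gates, nonzero signatures $g^+ \in \mathscr{V}^+$ and $g^- \in \mathscr{V}^-$. Then by Lemma~\ref{lem:van:linear}, if $\{f\}$ were vanishing so would be $\{f, g^+, g^-\}$, and hence $\{g^+, g^-\}$, directly contradicting Lemma~\ref{lem:van:mix}. To construct $g^\pm$, I work in the $Z$-transformed frame via $Z = \tfrac{1}{\sqrt{2}} \left[\begin{smallmatrix} 1 & 1 \\ i & -i \end{smallmatrix}\right]$. Under $(Z^{-1})^{\otimes n}$, each $b_j$ becomes a scalar multiple of the symmetric Hamming-weight-$j$ indicator, so $\hat{f} := (Z^{-1})^{\otimes n} f$ is supported exactly on weights in $S$. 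Since $Z$ sends $=_2$ to $\neq_2$, Valiant's Holant Theorem identifies $\Holant(\{f\})$ with $\holant{\neq_2}{\hat{f}}$, and the targets $g^+, g^-$ in the original frame correspond to the degenerate signatures $[1,0]^{\otimes k}$ and $[0,1]^{\otimes k}$ in the $\hat f$ frame (since $Z [1,0]$ and $Z [0,1]$ are scalar multiples of $[1,i]$ and $[1,-i]$).

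The core combinatorial task is then to build a $\{\neq_2, \hat f\}$-gadget whose aggregate signature is a nonzero scalar multiple of $[0,1]^{\otimes k}$ (and symmetrically for $[1,0]^{\otimes k}$). Since $S$ straddles $n/2$, one can combine copies of $\hat f$ connected through $\neq_2$-edges so that the per-vertex weight constraint imposed by $S$ forces all but one extreme pattern of the dangling edges to contribute zero. I expect this to be the main obstacle of the proof, particularly for supports that are neither singletons nor extremal pairs summing to $n$. The case $|S|=2$ illustrates the mechanism: a two-vertex gadget with $n-1$ parallel $\neq_2$-edges between two copies of $\hat f$ and one dangling edge on each side typically yields such a degenerate signature up to a nonzero scalar, since the support constraints on both copies cannot be jointly satisfied unless the two dangling values coincide at the extremum. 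More general support structures can be handled by iterating: one first realizes auxiliary $\{f\}$-gates of lower arity, simplifying the support, and then applies the two-element construction in the reduced setting.
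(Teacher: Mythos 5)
Your overall logical frame (contrapositive, then Lemma~\ref{lem:van:linear} plus Lemma~\ref{lem:van:mix}) is sound, but the step you defer as ``the core combinatorial task'' is not just the main obstacle --- it is provably impossible for broad classes of signatures covered by the lemma, so the strategy cannot be repaired. Concretely, take $f = Z^{\otimes 4}[0,0,1,0,0] = [3,0,1,0,3]$ (the Eulerian orientation signature), which satisfies your hypothesis since $\rd^+(f)=\rd^-(f)=2=n/2$. In the $Z$-frame every $\{f\}$-gate is a gadget built from $\hat f = [0,0,1,0,0]$ at vertices and $\neq_2$ on edges. Counting vertex--edge incidences assigned $1$ in a gadget with $V$ vertices, $E$ internal edges and $D$ dangling edges gives $2V = E + w$ for every support point, where $w$ is the Hamming weight of the dangling assignment, while $4V = 2E + D$; hence $w = D/2$ always. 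So \emph{every} signature realizable from this $f$ has its $Z$-frame support confined to weight exactly $D/2$, and no nonzero multiple of $[1,0]^{\otimes k}$ or $[0,1]^{\otimes k}$ (i.e.\ of $[1,\pm i]^{\otimes k}$ in the original frame) can ever be produced; in particular no nonzero element of $\mathscr{V}^+$ or $\mathscr{V}^-$ is an $\{f\}$-gate. The same failure occurs for $\hat f$ supported on $\{0,n\}$ (i.e.\ $f = a[1,i]^{\otimes n}+b[1,-i]^{\otimes n}$ with $ab\neq 0$: every connected gadget stays supported on a pair of antipodal points) and for $n\le 2$, e.g.\ $f=[1,0,1]$, where only paths and cycles exist. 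For all these $f$ your argument establishes nothing, yet they are exactly instances of the statement to be proved. Note also that your illustrative $|S|=2$ computation is wrong: for $S=\{0,n\}$ the two-vertex gadget with $n-1$ parallel $\neq_2$-edges has value $\hat f_0\hat f_n$ on the dangling patterns $01$ and $10$ and value $0$ on $00$ and $11$, i.e.\ it is a multiple of $[0,1,0]$, not a degenerate signature.

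The deeper issue is that proving non-vanishing for an arbitrary $f$ straddling $n/2$ genuinely requires exhibiting (directly or indirectly) a signature grid with nonzero Holant, and complex cancellations make this nontrivial; reducing to Lemma~\ref{lem:van:mix} presupposes gadget realizability that the support constraints of $f$ rule out. The paper instead argues in the other direction: it assumes $f$ is vanishing and inducts on arity via self-loops, using the two-copy gadget only on signatures already known to lie in $\mathscr{V}^\pm$ (where it provably yields $[1,\pm i]^{\otimes d}$), invoking Lemma~\ref{lem:van:mix} only to rule out mixing of the two types, and finishing with a recurrence/polynomial argument ($f_k = i^k p(k) + c(-i)^k$, then forcing $c=0$). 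You would need some analogue of that structure --- or an independent direct proof that every such $f$ admits a nonzero-Holant instance --- to close the gap.
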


\begin{proof}
 Let $f$ be a symmetric vanishing signature.
 We prove this by induction on $n$, the arity of $f$.
 For $n=1$, by connecting $f = [f_0, f_1]$ to itself, we have $f_0^2 + f_1^2 = 0$.
 Then up to a constant factor, we have either $f = [1,i]$ or $f = [1,-i]$.
 The lemma holds.

 For $n = 2$, first we do a self loop.
 The Holant is $f_0 + f_2$.
 Also, we can connect two copies of $f$, in which case the Holant is $f_0^2 + 2 f_1^2 + f_2^2$.
 Since $f$ is vanishing, both $f_0 + f_2 = 0$ and $f_0^2 + 2 f_1^2 + f_2^2 = 0$.
 Solving them, we get $f = [1,i,-1] = [1,i]^{\otimes 2}$ or $[1,-i,-1] = [1,-i]^{\otimes 2}$ up to a constant factor.

 Now assume $n > 2$ and the lemma holds for any signature of arity $k < n$.
 Let $f = [f_0, f_1, \dotsc, f_n]$ be a vanishing signature.
 A self loop on $f$ gives $f' = [f'_0, f'_1, \dotsc, f'_{n-2}]$, where $f'_j = f_j + f_{j+2}$ for $0 \leq j \leq n-2$.
 Since $f$ is vanishing, $f'$ is vanishing as well.
 By the induction hypothesis, $f' \in \mathscr{V}^{+} \union \mathscr{V}^{-}$.

 If $f'$ is an all zero signature,
 then we have $f_j + f_{j+2} = 0$ for $0 \leq j \leq n-2$.
 This means that the $f_j$'s satisfy a recurrence relation with characteristic polynomial $x^2 + 1$, so we have $f_j = a i^j + b(-i)^j$ for some $a$ and $b$.
 Then we perform a holographic transformation with $Z = \frac{1}{\sqrt{2}} \left[\begin{smallmatrix} 1 & 1 \\ i & -i \end{smallmatrix}\right]$,
 \begin{align*}
  \holant{{=}_2}{f}
  &\equiv_T \holant{[1,0,1] Z^{\otimes 2}}{(Z^{-1})^{\otimes n} f}\\
  &\equiv_T \holant{[0,1,0]}{\hat{f}},
 \end{align*}
 where $\hat{f}=[a, 0, \dotsc, 0, b]$.
 The problem $\holant{[0,1,0]}{\hat{f}}$ is a weighted version of testing if a graph is bipartite.
 Now consider a graph with only two vertices,
 both assigned $f$,
 and $n$ edges between them.
 The Holant of this graph is $2 a b$.
 However,
 we know that it must be vanishing,
 so $a b = 0$.
 If $a = 0$,
 then $f \in \mathscr{V}^-$.
 Otherwise,
 $b = 0$ and $f \in \mathscr{V}^+$.

 Now suppose that $f'$ is in $\mathscr{V}^{+} \union \mathscr{V}^{-}$ but is not an all zero signature.
 We consider $f' \in \mathscr{V}^{+}$ since the other case is similar.
 Then $\rd^+(f') = t$, so $2 t < n - 2$.
 Consider the gadget which has only two vertices, both assigned $f'$, and has $2 t$ edges between them.
 (See Figure~\ref{fig:gadget:vanishing_to_degenerate} for an example of this gadget.)
 It forms a signature of degree $d = 2 (n - 2 - 2 t)$.
 This gadget is valid because $n - 2 > 2 t$.
 By the combinatorial view as in the proof of Lemma~\ref{lem:van:mix}, this signature is $[1,i]^{\otimes d}$.

 Moreover, $\rd^+(f') = t$ implies that the entries of $f'$ satisfy a recurrence of order $t+1$.
 Replacing $f'_j$ by $f_j + f_{j+2}$, we get a recurrence relation for the entries of $f$ with characteristic polynomial $(x^2 + 1) (x - i)^{t+1} = (x + i) (x - i)^{t+2}$.
 Thus, $f_j = i^j p(j) + c (-i)^j$ for some polynomial $p(x)$ of degree at most $t + 1$ and some constant $c$.
 It suffices to show that $c = 0$ since $2 (t + 1) < n$ as $2 t < n - 2$.

 Consider the signature $h = [h_0, \dotsc, h_{n-1}]$ created by connecting $f$ with a single unary signature $[1,i]$.
 For any ($n-1$)-regular graph $G = (V,E)$ with $h$ assigned to every vertex, we can define a duplicate graph of $(d+1)|V|$ vertices as follows.
 First for each $v \in V$, define vertices $v'$, $v_1, \dots, v_{d}$.
 For each $i$, $1 \le i \le d$, we make a copy of $G$ on $\{v_i \mid v \in V\}$, i.e., for each edge $(u,v) \in E$, include the edge $(u_i, v_i)$ in the new graph.
 Next for each $v \in V$, we introduce edges between $v'$ and $v_i$ for all $1 \leq i \leq d$.
 For each $v \in V$, assign the degenerate signature $[1,i]^{\otimes d}$ that we just constructed to the vertices $v'$; assign $f$ to all the vertices $v_1, \dots, v_{d}$.
 Assume the Holant of the original graph $G$ with $h$ assigned to every vertex is $H$.
 Then for the new graph with the given signature assignments, the Holant is $H^d$.
 By our assumption, $f$ is vanishing, so $H^d = 0$.
 Thus, $H = 0$.
 This holds for any graph $G$, so $h$ is vanishing.

 Notice that $h_k = f_k + if_{k+1}$ for any $0 \leq k \leq n-1$.
 If $h$ is identically zero,
 then $f_k + i f_{k+1} = 0$ for any $0 \leq k \leq n-1$,
 which means $f = [1,i]^{\otimes n}$ up to a constant factor and we are done.
 Otherwise,
 suppose that $h$ is not identically zero.
 By the inductive hypothesis, $h \in \mathscr{V}^{+} \cup \mathscr{V}^{-}$.
 We claim $h$ cannot be from $\mathscr{V}^{-}$.
 This is because, although we do not directly construct $h$ from $f$, we can always realize it by the method depicted in the previous paragraph.
 Therefore the set $\{f',h\}$ is vanishing.
 As both $f'$ and $h$ are nonzero, and $f' \in \mathscr{V}^{+}$, we have $h \not \in \mathscr{V}^{-}$, by Lemma~\ref{lem:van:mix}.

 Hence  $h$ is in $\mathscr{V}^{+}$.
 Then there exists a polynomial $q(x)$ of degree at most $t' = \floor{\frac{n-1}{2}}$ such that $h_k = i^k q(k)$, for any $0 \leq k \leq n - 1$.
 Since $2 t < n - 2$, we have $t \leq t'$.
 On the other hand, $h_k = f_k + i f_{k+1}$ for any $0 \leq k \leq n - 1$, so we have
 \begin{align*}
  h_k
  &= f_k + i f_{k+1}\\
  &= i^k p(k) + c(-i)^k + i \left(i^{k+1} p(k+1)+ c(-i)^{k+1}\right)\\
  &= i^k \left(p(k) - p(k+1)\right) + 2 c (-i)^k\\
  &= i^k r(k) + 2 c (-i)^k\\
  &= i^k q(k),
 \end{align*}
 where $r(x) = p(x) - p(x+1)$ is another polynomial of degree at most $t$.
 Then we have
 \begin{equation*}
  q(k) - r(k) = 2 c (-1)^k,
 \end{equation*}
 which holds for all $0 \leq k \leq n-1$.
 Notice that the left hand side is a polynomial of degree at most $t'$, call it $s(x)$.
 However, for all even $k \in \{0, \dots, n-1\}$, $s(k) = 2 c$.
 There are exactly $\ceil{\frac{n}{2}} > \floor{\frac{n-1}{2}} = t'$ many even $k$ within the range $\{0, \dots, n-1\}$.
 Thus $s(x) = 2 c$ for any $x$.
 Now we pick $k=1$, so $s(1) = -2 c = 2 c$, which implies $c = 0$.
 This completes the proof.
\end{proof}

Combining Lemma~\ref{lem:sym:van}, Lemma~\ref{lem:van:mix}, and Lemma~\ref{lem:cha:van},
we obtain the following theorem that characterizes all symmetric vanishing signature sets.

\begin{theorem} \label{thm:van}
 Let $\mathcal{F}$ be a set of symmetric signatures.
 Then $\mathcal{F}$ is vanishing
 if and only if
 $\mathcal{F} \subseteq \mathscr{V}^{+}$ or $\mathcal{F} \subseteq \mathscr{V}^{-}$.
\end{theorem}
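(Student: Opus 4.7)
The plan is to prove the theorem by directly stitching together the three lemmas already established in this subsection, since Lemma~\ref{lem:sym:van}, Lemma~\ref{lem:van:mix}, and Lemma~\ref{lem:cha:van} together cover all three ingredients: sufficiency, exclusion of mixing, and classification of individual signatures.

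The forward direction (sufficiency) is immediate: if $\mathcal{F} \subseteq \mathscr{V}^{+}$ or $\mathcal{F} \subseteq \mathscr{V}^{-}$, then $\mathcal{F}$ is vanishing by Lemma~\ref{lem:sym:van}. So the real content is the reverse direction. Suppose $\mathcal{F}$ is a vanishing set of symmetric signatures. The first step is to observe that every subset of a vanishing set is vanishing, so in particular each singleton $\{f\}$ with $f \in \mathcal{F}$ is vanishing. By Lemma~\ref{lem:cha:van}, this gives $f \in \mathscr{V}^{+} \cup \mathscr{V}^{-}$ for every $f \in \mathcal{F}$.

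The second step is to show that $\mathcal{F}$ cannot straddle both classes. I would argue by contradiction: suppose $\mathcal{F} \not\subseteq \mathscr{V}^{+}$ and $\mathcal{F} \not\subseteq \mathscr{V}^{-}$. Pick witnesses $f_+ \in \mathcal{F} \setminus \mathscr{V}^{-}$ and $f_- \in \mathcal{F} \setminus \mathscr{V}^{+}$. Since the identically zero signature has $\vd^{+} = \vd^{-} = n+1$ and hence lies in both $\mathscr{V}^{+}$ and $\mathscr{V}^{-}$, neither $f_+$ nor $f_-$ can be the zero signature. Combined with the first step, we have $f_+ \in \mathscr{V}^{+}$ nonzero and $f_- \in \mathscr{V}^{-}$ nonzero, both in $\mathcal{F}$. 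Lemma~\ref{lem:van:mix} then says $\{f_+, f_-\}$ is not vanishing; but $\{f_+, f_-\} \subseteq \mathcal{F}$, contradicting that $\mathcal{F}$ is vanishing.

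There is no real obstacle here since the heavy lifting is done elsewhere — the inductive arity argument of Lemma~\ref{lem:cha:van} and the bipartite-pairing construction of Lemma~\ref{lem:van:mix} are the substantive parts. The only mild subtlety to handle cleanly is the zero signature's dual membership in both $\mathscr{V}^{+}$ and $\mathscr{V}^{-}$, but this works in our favor: whichever of $\mathscr{V}^{\pm}$ contains all the nonzero signatures of $\mathcal{F}$ will also contain the zero signatures, giving the desired containment.
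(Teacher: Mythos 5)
Your proposal is correct and follows the paper's own route: the paper proves Theorem~\ref{thm:van} precisely by combining Lemma~\ref{lem:sym:van} (sufficiency), Lemma~\ref{lem:cha:van} (each signature individually lies in $\mathscr{V}^{+} \cup \mathscr{V}^{-}$), and Lemma~\ref{lem:van:mix} (no mixing of nonzero signatures of opposite type). Your explicit handling of the zero signature's membership in both $\mathscr{V}^{+}$ and $\mathscr{V}^{-}$ is a fine way to make the combination airtight.
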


\noindent
We note that some particular categories of tractable cases in previous dichotomies
(case~\ref{case:arity3:exceptional} of Theorem~\ref{thm:arity3:singleton},
case~\ref{case:holant_star:exceptional} of Theorem~\ref{thm:holant_star},
and case~\ref{case:holant_c:exceptional} of Theorem~\ref{thm:holant_c})
are in $\mathscr{R}^\pm_2$.

To finish this subsection,
we prove some useful properties regarding vanishing and recurrence degrees in the construction of signatures.
For two symmetric signatures $f$ and $g$ such that $\arity(f) \ge \arity(g)$,
let $\langle f, g \rangle = \langle g, f \rangle$ denote the signature that results after connecting all edges of $g$ to $f$.
(If $\arity(f) = \arity(g)$,
then $\langle f, g \rangle$ is a constant,
which can be viewed as a signature of arity 0.)

\begin{lemma} \label{lem:van:con}
 For $\sigma \in \{+,-\}$, suppose symmetric signatures $f$ and $g$ satisfy $\vd^\sigma(g)$ $ = 0$ and $\arity(f) - \arity(g) \ge \rd^\sigma(f)$.
 Then $\rd^\sigma(\langle f, g\rangle) = \rd^\sigma(f)$.
\end{lemma}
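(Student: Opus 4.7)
The plan is to reduce everything to the polynomial characterization of $\mathscr{R}^+_t$ developed after Lemma~\ref{lem:cha:sym}, exploiting the fact that the hypothesis $\vd^+(g) = 0$ translates into a single non-vanishing recurrence condition on $g$'s entries. Take $\sigma = +$; the case $\sigma = -$ is entirely symmetric. Let $n = \arity(f)$, $m = \arity(g)$, and $t = \rd^+(f)$, so by hypothesis $n - m \ge t$. I would write $f_k = i^k p(k)$ for $0 \le k \le n$ with $p$ of degree exactly $t$ and nonzero leading coefficient $c_t$. By Corollary~\ref{cor:van:degree}, $\vd^+(g) = 0$ gives $\rd^+(g) = m$, i.e.\ $g \notin \mathscr{R}^+_m$. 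The only instance of the defining recurrence of $\mathscr{R}^+_m$ on an arity-$m$ signature sits at $k=0$, so this non-membership contributes exactly the single scalar
\[
 S \;:=\; \sum_{j=0}^m \binom{m}{j} i^j g_j \;\ne\; 0.
\]

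Next I would compute the entries of $\langle f, g\rangle$ directly. By symmetry of both signatures, for $0 \le k \le n - m$,
\[
 \langle f, g\rangle_k \;=\; \sum_{j=0}^m \binom{m}{j} g_j\, f_{k+j} \;=\; i^k \sum_{j=0}^m \binom{m}{j} i^j g_j\, p(k+j) \;=\; i^k q(k),
\]
where $q(k) := \sum_{j=0}^m \binom{m}{j} i^j g_j\, p(k+j)$ is a polynomial in $k$ of degree at most $t$. For each $j$ the coefficient of $k^t$ in the shifted polynomial $p(k+j)$ is just $c_t$, independent of $j$, since shifting the argument does not alter the leading coefficient. Hence the coefficient of $k^t$ in $q$ is $c_t \cdot S \ne 0$, so $\deg q = t$ exactly.

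The display above exhibits $\langle f, g\rangle \in \mathscr{R}^+_{t+1}$, giving $\rd^+(\langle f, g\rangle) \le t$. Conversely, the arity hypothesis $n - m \ge t$ supplies at least $t+1$ evaluation points, so if $\langle f, g\rangle$ also lay in $\mathscr{R}^+_t$, a polynomial of degree $\le t-1$ would agree with $q$ on all of them, forcing $\deg q \le t-1$ and contradicting what we just computed. Hence $\rd^+(\langle f, g\rangle) = t$. The main conceptual step is recognizing that the coefficient of the top power $k^t$ in $q$ factors cleanly as $c_t \cdot S$, where $S$ is precisely the single recurrence sum that the $\vd^+(g) = 0$ hypothesis blocks; once this is isolated the remainder is bookkeeping, with the arity bound $n - m \ge t$ playing the final role of ensuring that a formally degree-$t$ polynomial cannot be mistaken for one of lower degree on the few visible entries of $\langle f, g\rangle$.
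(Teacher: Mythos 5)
Your proof is correct and follows essentially the same route as the paper's: write $f_k = i^k p(k)$, compute $\langle f, g\rangle_k = i^k q(k)$ with $q(k) = \sum_{j=0}^m \binom{m}{j} i^j g_j\, p(k+j)$, and use $\vd^\sigma(g) = 0$ (via $\rd^\sigma(g) = m$) to get $S = \sum_{j=0}^m \binom{m}{j} i^j g_j \ne 0$, so that $q$ has degree exactly $t$, with the arity bound $\arity(f) - \arity(g) \ge t$ ensuring this degree is realized on the contracted signature. The only detail the paper handles that you omit is the trivial case $\rd^\sigma(f) = -1$ (i.e.\ $f$ identically zero), which your assumption that $p$ has degree exactly $t$ with nonzero leading coefficient silently excludes; it is dispatched in one line since the contraction of the zero signature is again zero.
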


\begin{proof}
 We consider $\sigma = +$ since the case $\sigma = -$ is similar.
 Let $\arity(f) = n$, $\arity(g) = m$, and $\rd^+(f) = t$.
 Denote the signature $\langle f, g\rangle$ by $f'$.

 If $t = -1$, then $f$ is identically 0 and so is $f'$.
 Hence $\rd^+(f') = -1$.

 Suppose $t \ge 0$.
 Then we have $f_k = i^k p(k)$ where $p(x)$ is a polynomial of degree exactly $t$.
 Also $\arity(f') = n- m \ge t$.
 We have
 \begin{align*}
  f_k'
  &=     \sum_{j=0}^{m} \binom{m}{j} f_{k+j}    g_j\\
  &= i^k \sum_{j=0}^{m} \binom{m}{j} p(k+j) i^j g_j\\
  &= i^k q(k),
 \end{align*}
 where $q(k) = \sum_{j=0}^{m} \binom{m}{j} p(k+j) i^j g_j$ is a polynomial in $k$.
 Notice that $\vd^+(g) = 0$.
 Then $\rd^+(g) = m$ and $g \not \in \mathscr{R}^+_{m}$.
 Thus $\sum_{j=0}^m \binom{m}{j} i^j g_j \neq 0$.
 Then the leading coefficient of degree $t$ in the polynomial $q(k)$ is nonzero.
 However, $\arity(f') \ge t$.
 Thus $\rd^+(f') = t$ as well.
\end{proof}

\begin{lemma} \label{lem:van:self}
 For $\sigma \in \{+,-\}$,
 let $f$ be a nonzero symmetric signature and suppose that $f'$ is obtained from $f$ by a self loop.
 If $\vd^\sigma(f) > 0$,
 then $\vd^\sigma(f) - \vd^\sigma(f') = \rd^\sigma(f) - \rd^\sigma(f') = 1$.
\end{lemma}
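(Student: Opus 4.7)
The plan is to reduce to the $\sigma = +$ case (the $-$ case is symmetric under $i \mapsto -i$) and exploit the polynomial parametrization established just before the definition of recurrence degree: a nonzero symmetric signature $g$ of arity $m$ has $\rd^+(g) = t$ iff $g_k = i^k p(k)$ for $0 \le k \le m$, where $p$ is a polynomial of degree exactly $t$. Set $n = \arity(f)$ and $t = \rd^+(f)$. Since $f$ is nonzero, $t \ge 0$; since $\vd^+(f) > 0$, Corollary~\ref{cor:van:degree} gives $t = n - \vd^+(f) \le n - 1$.

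Next I would compute the self-loop directly. Writing $f_k = i^k p(k)$ with $\deg p = t$, a self-loop through $=_2$ produces $f'$ of arity $n-2$ satisfying
\[
 f'_k = f_k + f_{k+2} = i^k p(k) + i^{k+2} p(k+2) = i^k \bigl(p(k) - p(k+2)\bigr),
\]
so $f'_k = i^k q(k)$ for $q(x) := p(x) - p(x+2)$. The heart of the argument is the elementary polynomial computation that $\deg q = t - 1$ when $t \ge 1$: if $p(x) = a_t x^t + \text{l.o.t.}$ with $a_t \ne 0$, then the expansion $(x+2)^t = x^t + 2 t x^{t-1} + \text{l.o.t.}$ gives $q(x) = -2 t a_t x^{t-1} + \text{l.o.t.}$, which has degree exactly $t - 1$.

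Since $t - 1 \le n - 2 = \arity(f')$, this representation lies within the range where the representing polynomial is uniquely determined by interpolation, forcing $\rd^+(f') = t - 1$ and hence $\rd^+(f) - \rd^+(f') = 1$. The statement about $\vd^+$ then follows immediately from Corollary~\ref{cor:van:degree} applied to both $f$ and $f'$. The only edge case is $t = 0$: then $p$ is constant, $q \equiv 0$, and $f'$ is identically zero, for which the conventions $\rd^+(f') = -1$ and $\vd^+(f') = (n-2) + 1$ still give differences of $1$. I do not anticipate a substantive obstacle; the hypothesis $\vd^+(f) > 0$ is precisely what prevents $t$ from reaching $n$, ensuring that the resulting recurrence degree of $f'$ is not capped by the arity of $f'$ and the simple polynomial bookkeeping goes through cleanly.
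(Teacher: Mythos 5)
Your proposal is correct and follows essentially the same route as the paper's proof: write $f_k = i^k p(k)$ with $\deg p = \rd^+(f)$, observe that the self loop gives $f'_k = i^k\bigl(p(k)-p(k+2)\bigr)$ whose representing polynomial drops in degree by exactly one (or vanishes when $t=0$), and transfer to $\vd$ via Corollary~\ref{cor:van:degree}. The only cosmetic difference is that you compute the leading coefficient $-2ta_t$ explicitly where the paper merely notes the degree-$t$ term cancels and the degree-$(t-1)$ term survives.
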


\begin{proof}
 We may assume $\sigma = +$, $\arity(f) = n$, and $\rd^+(f) = t$.
 Since $f$ is not the all zero signature,
 $t \ge 0$.
 Also since $\vd^+(f) > 0$,
 $t = n - \vd^+(f) < n$.
 By assumption,
 we have $f_k=i^kp(k)$, where $p(x)$ is a polynomial of degree exactly $t$.
 Then we have
 \begin{align*}
  f_k'
  &= f_k+f_{k+2}\\
  &= i^k(p(k)-p(k+2))\\
  &= i^kq(k),
 \end{align*}
 where $q(k)=p(k)-p(k+2)$ is a polynomial in $k$.
 If $t=0$, then $p(x)$ is a constant polynomial and $q(x)$ is identically zero.
 Then $\rd^+(f') = -1$ by definition and $\rd^+(f) - \rd^+(f') = 1$ holds.
 Suppose $t>0$, then in $q(k)$, the term of degree $t$ has a zero coefficient, but the term of degree $t-1$ is nonzero.
 So $q(x)$ has degree exactly $t-1 \le n-2 = \arity(f')$.
 Thus $\rd^+(f') = t-1$.
 Notice that $\arity(f) - \arity(f') = 2$, then $\vd^+(f) - \vd^+(f') = 1$ as well.
\end{proof}

Moreover,
the set of vanishing signatures is closed under orthogonal transformations.
This is because under any orthogonal transformation,
the unary signatures $[1,i]$ and $[1,-i]$ are either invariant or transformed into each other.
Then considering the symmetrized form of any signature,
we have the following lemma.

\begin{lemma} \label{lem:van:orth}
 For a symmetric signature $f$ of arity $n$, $\sigma \in \{+,-\}$, and an orthogonal matrix $T \in \mathbb{C}^{2 \times 2}$,
 either $\vd^\sigma(f) = \vd^\sigma(T^{\otimes n} f)$ or $\vd^\sigma(f) = \vd^{-\sigma}(T^{\otimes n} f)$.
\end{lemma}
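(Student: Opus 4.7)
The plan is to leverage the fact that $[1, i]^\top$ and $[1, -i]^\top$ are precisely the two isotropic directions of the quadratic form $x^2 + y^2$, so any $2 \times 2$ matrix $T$ with $T T^\top = \lambda I$ ($\lambda \neq 0$) must permute the two projective lines they span. A direct computation classifies such $T$ into two types: type (A), $T = \bigl(\begin{smallmatrix} a & b \\ -b & a \end{smallmatrix}\bigr)$ with $a^2 + b^2 = \lambda$, for which $T[1, i]^\top = (a + bi)[1, i]^\top$ and $T[1, -i]^\top = (a - bi)[1, -i]^\top$; and type (B), $T = \bigl(\begin{smallmatrix} a & b \\ b & -a \end{smallmatrix}\bigr)$ with $a^2 + b^2 = \lambda$, for which $T[1, i]^\top = (a + bi)[1, -i]^\top$ and $T[1, -i]^\top = (a - bi)[1, i]^\top$. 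Since $a^2 + b^2 = \lambda \neq 0$, the scalars $a \pm bi$ are nonzero in either case.

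Next, I would combine this classification with the equivariance identity
\[
T^{\otimes n}\,\Sym_n^k(u;\, w_1, \dots, w_{n-k}) \;=\; \Sym_n^k\bigl(Tu;\, T w_1, \dots, T w_{n-k}\bigr),
\]
which is immediate from Definition~\ref{def:sym} and the tensor identity $(A \otimes B)(u \otimes v) = A u \otimes B v$ applied summand by summand. If $\vd^\sigma(f) = k$ with $f = \Sym_n^k([1, \sigma i];\, v_1, \dots, v_{n-k})$, then in type (A) applying $T^{\otimes n}$ produces, up to a nonzero overall scalar, a symmetrization of $T^{\otimes n} f$ with $k$ copies of $[1, \sigma i]$, while in type (B) it produces one with $k$ copies of $[1, -\sigma i]$. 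Since $\vd$ is defined on signatures up to a nonzero scalar multiple (the scalar may be absorbed into any single $T v_j$ when $k < n$, and is immaterial under the paper's up-to-scalar convention when $k = n$), this yields $\vd^\sigma(T^{\otimes n} f) \ge \vd^\sigma(f)$ in type (A) and $\vd^{-\sigma}(T^{\otimes n} f) \ge \vd^\sigma(f)$ in type (B).

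Finally, I would obtain the reverse inequality by running the same argument on $T^{-1}$ applied to the signature $T^{\otimes n} f$: the matrix $T^{-1}$ is itself orthogonal (since $T^{-1}(T^{-1})^\top = \lambda^{-1} I$), and an easy check on the two explicit forms shows that $T^{-1}$ lies in the same type as $T$. Using $(T^{-1})^{\otimes n}(T^{\otimes n} f) = f$ then gives the matching upper bound on $\vd$, yielding equality in all cases. The degenerate cases are automatic: if $f$ is identically zero so is $T^{\otimes n} f$, and the case $\vd^\sigma(f) = 0$ is handled by the contrapositive of the $T^{-1}$ step. The only mild obstacle is the bookkeeping of the nonzero scalars that appear when $T$ acts on $[1, \pm i]$, resolved entirely by the convention that signatures are considered up to nonzero scalar multiples.
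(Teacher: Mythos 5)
Your proposal is correct and follows essentially the same route as the paper, which proves this lemma by the one-line observation that an orthogonal transformation either fixes the pair $[1,i]$, $[1,-i]$ up to nonzero scalars or swaps them, and then appeals to the symmetrized form of $f$. Your write-up simply makes that sketch explicit (the two-type classification, the $\Sym$-equivariance, and the $T^{-1}$ step that upgrades the inequality to equality), all consistent with the paper's up-to-scalar conventions.
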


\subsection{Characterizing Vanishing Signatures via a Holographic Transformation} \label{subsec:vanishing-by-holographic-Z}

There is another explanation for the vanishing signatures.
Given an $f \in \mathscr{V}^+$ with $\arity(f) = n$ and $\rd^+(f) = d$,
we perform a holographic transformation with $Z = \frac{1}{\sqrt{2}}
\left[\begin{smallmatrix} 1 & 1 \\ i & -i \end{smallmatrix}\right]$,
\begin{align*}
 \holant{{=}_2}{f}
 &\equiv_T \holant{[1,0,1] Z^{\otimes 2}}{(Z^{-1})^{\otimes n} f}\\
 &\equiv_T \holant{[0,1,0]}{\hat{f}},
\end{align*}
where $\hat{f}$ is of the form $[\hat{f_0}, \hat{f}_1, \dots, \hat{f}_{d}, 0, \dots,0]$, and $\hat{f}_d \neq 0$.
To see this,
note that $Z^{-1} = \tfrac{1}{\sqrt{2}} \left[\begin{smallmatrix} 1 & -i \\ 1 & i \end{smallmatrix}\right]$
and $Z^{-1} \left[\begin{smallmatrix} 1 \\ i \end{smallmatrix}\right] = \sqrt{2} \left[\begin{smallmatrix} 1 \\ 0 \end{smallmatrix}\right]$.
We know that $f$ has a symmetrized form,
such as $\Sym_n^{n-d}(\left[\begin{smallmatrix} 1 \\ i \end{smallmatrix}\right]; v_1, \dots, v_{d})$.
Then up to a factor of ${2}^{n/2}$,
we have $\hat{f} = (Z^{-1})^{\otimes n} f = \Sym_n^{n-d}(\left[\begin{smallmatrix} 1 \\ 0 \end{smallmatrix}\right]; u_1, \dots, u_{d})$,
where $u_i = Z^{-1} v_i$ for $1 \le i \le d$ and $u_i$ and $v_i$ are column vectors in $\mathbb{C}^2$.
From this expression for $\hat{f}$, it is clear that all entries of Hamming weight greater than $d$ in $\hat{f}$ are~$0$.
Moreover, if $\hat{f}_d=0$, then one of the $u_i$ has to be a multiple of $[1,0]$.
This contradicts the degree assumption of $f$, namely $\vd^+(f) = n - \rd^+(f) = n -d$ but not any higher.
Formally we have the following.

\begin{lemma} \label{lem:vanishing_form_in_Z_basis}
 Suppose $f$ is a symmetric signature of arity $n$.
 Let $\hat{f} = (Z^{-1})^{\otimes n} f$.
 If $\rd^+(f) = d$, then $\hat{f} = [\hat{f}_0, \hat{f}_1, \dotsc, \hat{f}_d, 0, \dotsc, 0]$ and $\hat{f}_d \ne 0$.
 Also $f \in \mathscr{R}_d^+$ if and only if all nonzero entries of $\hat{f}$ are among the first $d$ entries in its symmetric signature notation.
 
 Similarly, if $\rd^-(f) = d$, then $\hat{f} = [0, \dotsc, 0, \hat{f}_{n-d}, \dotsc, \hat{f}_n]$ and $\hat{f}_{n-d} \ne 0$.
 Also $f\in\mathscr{R}_d^-$ if and only if all nonzero entries of $\hat{f}$ are among the last $d$ entries in its symmetric signature notation.
\end{lemma}

By linearity, Lemma \ref{lem:vanishing_form_in_Z_basis} implies the following fact.
If $f = g + h$ is of arity $n$, where $\rd^+(g) = d$, $\rd^-(h) = d'$, and $d + d' < n$,
then after a holographic transformation by $Z$,
$\hat{f} = (Z^{-1})^{\otimes n} f$ takes the form $[\hat{g}_0, \dotsc,\hat{g}_{d}, 0, \dotsc, 0, \hat{h}_{d'}, \dotsc, \hat{h}_0]$,
with $n - d - d' - 1 \ge 0$ zeros in the middle of the signature.

In any bipartite graph for $\holant{[0,1,0]}{\hat{f}}$,
the binary \textsc{Disequality} $(\neq_2) = [0,1,0]$ on the left imposes the condition that half of the edges must take the value~0
and the other half must take the value~1.
On the right side, by $f \in \mathscr{V}^+$, we have $d < n/2$,
thus $\hat{f}$ requires that less than half of the edges are assigned the value~1.
Therefore the Holant is always~0.
A similar conclusion was reached in~\cite{CLX12} for certain 2-3 bipartite Holant problems with Boolean signatures.
However, the importance was not realized at that time.

Under this transformation, one can observe another interesting phenomenon.
For any $a,b \in \mathbb{C}$,
\[
 \holant{[0,1,0]}{[a,b,1,0,0]}
 \qquad \text{and} \qquad
 \holant{[0,1,0]}{[0,0,1,0,0]}
\]
take exactly the same value on every signature grid.
This is because, to contribute a nonzero term in the Holant, exactly half of the edges must be assigned~1.
Then for the first problem, the signature on the right can never contribute a nonzero value involving $a$ or $b$.
Thus the Holant values of these two problems on any signature grid are always the same.
Nevertheless, there exist $a,b \in \mathbb{C}$ such that there is no holographic transformation between these two problems.
We note that this is the first counterexample involving non-unary signatures in the Boolean domain to the converse of the Holant theorem,
which provides a negative answer to a conjecture made by Xia in~\cite[Conjecture~4.1]{Xia11}.

Moreover, $\holant{[0,1,0]}{[0,0,1,0,0]}$ counts Eulerian orientations in a $4$-regular graph.
This problem was proven $\SHARPP$-hard by Huang and Lu in Theorem~V.10 of~\cite{HL12} and plays an important role in our proof of hardness.
Translating back to the standard setting, the problem of counting Eulerian orientations in a 4-regular graph is $\Holant([3,0,1,0,3])$.
The problem $\holant{[0,1,0]}{[a,b,1,0,0]}$ corresponds to a certain signature $f = Z^{\otimes 4} [a,b,1,0,0]$ of arity~4 with recurrence degree~$2$.
It has a different appearance but induces exactly the same Holant value as the signature for counting Eulerian orientations.
Therefore, all such signatures are $\SHARPP$-hard as well.
We use this fact later.

\section{Main Result, Tractability Proof, and Outline of Hardness Proof}

Using the definitions from the previous section,
we can now formally state our main result.

\begin{theorem} \label{thm:main}
 Let $\mathcal{F}$ be any set of symmetric, complex-valued signatures in Boolean variables.
 Then $\Holant(\mathcal{F})$ is $\SHARPP$-hard unless $\mathcal{F}$ satisfies one of the following conditions,
 in which case the problem is computable in polynomial time:
 \begin{enumerate}
  \item All non-degenerate signatures in $\mathcal{F}$ are of arity at most~2; \label{case:main_tractable:trivial}
  \item $\mathcal{F}$ is $\mathscr{A}$-transformable; \label{case:main_tractable:CSP:A}
  \item $\mathcal{F}$ is $\mathscr{P}$-transformable; \label{case:main_tractable:CSP:P}
  \item $\mathcal{F} \subseteq \mathscr{V}^\sigma \union \{f \in \mathscr{R}_2^\sigma \st \arity(f) = 2\}$ for $\sigma \in \{+,-\}$; \label{case:main_tractable:vanishing_and_binary}
  \item All non-degenerate signatures in $\mathcal{F}$ are in $\mathscr{R}_2^\sigma$ for $\sigma \in \{+,-\}$. \label{case:main_tractable:vanishing_and_unary}
 \end{enumerate}
\end{theorem}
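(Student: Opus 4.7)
The plan is to split Theorem~\ref{thm:main} into its tractability and hardness halves, appealing as much as possible to the structural results already developed (especially Theorem~\ref{thm:van}) and to the previously known dichotomies (Theorems~\ref{thm:arity3:singleton}, \ref{thm:k-reg_homomorphism}, and~\ref{thm:CSPd}).

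For tractability: Case~\ref{case:main_tractable:trivial} reduces the Holant computation to polynomial-size sums of products of $2 \times 2$ matrices. Cases~\ref{case:main_tractable:CSP:A} and~\ref{case:main_tractable:CSP:P} follow by Valiant's Holant Theorem, since the witnessing transformation $T$ converts $\Holant(\mathcal{F})$ into a $\#\CSP$ instance over $\mathscr{A}$ or $\mathscr{P}$, tractable by Theorem~\ref{thm:CSPd} with $d = 1$. For Cases~\ref{case:main_tractable:vanishing_and_binary} and~\ref{case:main_tractable:vanishing_and_unary}, I would expand each signature via its symmetrization form (Lemma~\ref{lem:cha:sym}) into a sum of degenerate terms, each made up of some number of $[1,\sigma i]$ factors together with other unary factors. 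A counting argument then shows that, unless every signature actually used in the grid has arity at most~2 (in which case the Holant reduces to a polynomial-time matrix computation), the total number of $[1,\sigma i]$ factors strictly exceeds the number of other unaries, forcing two $[1,\sigma i]$'s to be paired across some edge and annihilating the contribution as in the proof of Theorem~\ref{thm:van}.

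For hardness, the plan is to first prove a dichotomy for a single non-degenerate signature $f$ of arity $n$ and then lift it to an arbitrary set $\mathcal{F}$. When $n = 3$, Theorem~\ref{thm:arity3:singleton} applies directly. When $n \ge 5$, I would perform a self-loop on $f$, possibly preceded by an orthogonal transformation (Lemma~\ref{lem:van:orth}), and use Lemmas~\ref{lem:van:con} and~\ref{lem:van:self} to track how the vanishing and recurrence degrees evolve under this operation; the lower-arity signature $f'$ we obtain is either already $\SHARPP$-hard by induction or is forced into one of the tractable classes, which in turn pins $f$ down. To lift a single-signature dichotomy to an arbitrary set: if some $f \in \mathcal{F}$ is individually $\SHARPP$-hard we are done; otherwise each element lies in a tractable class, and the joint failure of the five conditions must come either from mixing incompatible classes---most notably the $\mathscr{V}^+$-versus-$\mathscr{V}^-$ obstruction ruled out by Lemma~\ref{lem:van:mix}---or from a coherent but non-tractable combination, where a gadget realizes a new signature to which the single-signature result can then be applied.

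The main obstacle will be the arity-$4$ single-signature case. A generic non-degenerate $f$ avoiding every tractable class has $\rd^\sigma(f) \ge 2$ for both $\sigma \in \{+,-\}$ and is not $\mathscr{A}$- or $\mathscr{P}$-transformable. My plan is to first dispose of the sub-case $\rd^\sigma(f) = 2$: the $Z$-transformation of Section~\ref{sec:vanishing-by-holographic-Z} sends $f$ to some $\hat{f} = [a, b, c, 0, 0]$ with $c \neq 0$, and as noted in that section the constraint $(\neq_2)$ on the opposite side forces every right-hand vertex to have Hamming weight exactly~$2$, so the $a,b$ entries are invisible and we inherit $\SHARPP$-hardness directly from counting Eulerian orientations on $4$-regular graphs (Theorem~V.10 of~\cite{HL12}). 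The remaining cases ($\rd^\sigma(f) \in \{3,4\}$ for both signs, or mixed) are subtler: here my expectation is that an interpolation---possibly via asymmetric intermediate signatures obtained as gadget compositions of $f$ with itself or with realizable binary signatures---reduces back to the $\rd^\sigma = 2$ situation or directly to a known hard problem. Ensuring that the interpolation matrix is nonsingular and handling the finitely many exceptional eigenvalue configurations that would collapse into a tractable class is where the most careful analysis, and the asymmetric-signature detour foreshadowed in the introduction, will be needed.
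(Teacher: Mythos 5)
Your tractability argument for case~\ref{case:main_tractable:vanishing_and_unary} does not work as stated. The counting argument (total number of $[1,\sigma i]$ factors strictly exceeds the number of other unaries, forcing an annihilating pairing) is correct for case~\ref{case:main_tractable:vanishing_and_binary}, but case~\ref{case:main_tractable:vanishing_and_unary} places \emph{no} restriction on degenerate signatures, and in that case the Holant need not vanish at all: for instance a vertex assigned $\Sym_3^2([1,i];[1,0]) = 2[3,2i,-1,0] \in \mathscr{R}_2^+$ joined by three edges to a vertex assigned the degenerate $[1,0]^{\otimes 3}$ gives Holant value $6$. A grid can contain arbitrarily many such degenerate vertices, so the count of non-$[1,\sigma i]$ unaries can dominate and no annihilation is forced. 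The missing idea is the absorption step: $\mathscr{R}_2^\sigma$ is closed under connecting a unary signature to a signature of arity at least $2$, so one first decomposes degenerate signatures into unaries and recursively absorbs them into their neighbors; only after this reduction do all remaining signatures of arity at least $3$ lie in $\mathscr{V}^\sigma$ and case~\ref{case:main_tractable:vanishing_and_binary}'s argument applies.

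On the hardness side there are two genuine gaps. First, the arity-$4$ single-signature dichotomy, which you correctly identify as the main obstacle, is left as an expectation, and your proposed case split by $\rd^\sigma(f)$ is not the right dividing line: a signature can satisfy a second-order recurrence with characteristic roots other than $\pm i$ (so $\rd^\pm(f)=4$) and yet be $\mathscr{P}$-transformable, $\mathscr{A}$-transformable, or reducible to a $\holant{f}{{=}_4}$ problem via Theorem~\ref{thm:k-reg_homomorphism}$'$, while the actual hardness criterion is nonsingularity of the compressed signature matrix $\widetilde{M_f}$. Establishing that criterion requires the semigroup isomorphism with $3\times 3$ matrices, the interpolation of the identity element of the redundant-matrix semigroup with a three-way Jordan-form analysis (Lemma~\ref{lem:arity4:get_avg_sig}, supported by Lemma~\ref{lem:two_vandermondes_nonsingular}), and then an \emph{approximation} argument, not an interpolation, reducing Eulerian orientations to $\Holant(g)$ (Lemma~\ref{lem:arity4:avg_sig_hard}); none of this is supplied by your sketch. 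Second, in lifting to sets you conflate ``not vanishing'' with ``$\SHARPP$-hard'': Lemma~\ref{lem:van:mix} only shows $\{f_+,f_-\}$ is not vanishing, whereas the dichotomy needs the hardness of mixing $\mathscr{V}^+$ with $\mathscr{V}^-$ (Lemma~\ref{lem:dic:pmvan}, itself resting on Lemma~\ref{lem:arity4:special_vanishing_case}), of a vanishing signature with a binary outside $\mathscr{R}_2^\sigma$ or with a wrong unary (Lemmas~\ref{lem:van:bin} and~\ref{lem:van:deg}), and of sets containing an $\mathscr{A}$- or $\mathscr{P}$-transformable signature that are not \emph{simultaneously} transformable, which the paper handles through the characterizations of $\mathscr{A}_1,\mathscr{A}_2,\mathscr{A}_3,\mathscr{P}_1,\mathscr{P}_2$ and reductions to the $\CSP^d$ dichotomy (Lemmas~\ref{lem:dic:p1}--\ref{lem:dic:a3}). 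Your heuristic that ``a gadget realizes a new signature to which the single-signature result can be applied'' does not obviously produce any of these: for example, two signatures each $\mathscr{P}$-transformable under different orthogonal transformations admit no evident gadget yielding a single hard signature, and the paper instead realizes equality structures and invokes Theorem~\ref{thm:CSPd} with $d=2$ or $d=n$.
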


\noindent
Note that any signature in $\mathscr{R}_2^\sigma$ having arity at least~3 is a vanishing signature.
Thus all signatures of arity at least~3 in case~\ref{case:main_tractable:vanishing_and_unary} are vanishing.
While both cases~\ref{case:main_tractable:vanishing_and_binary} and~\ref{case:main_tractable:vanishing_and_unary}
are largely concerned with vanishing signatures, these two cases differ.
In case~\ref{case:main_tractable:vanishing_and_binary}, all signatures in $\mathcal{F}$,
including unary signatures but excluding binary signatures, must be vanishing of a single type $\sigma$;
the binary signatures are only required to be in $\mathscr{R}_2^\sigma$.
In contrast,
case~\ref{case:main_tractable:vanishing_and_unary} has no requirement placed on degenerate signatures which include all unary signatures.
Then all non-degenerate binary signatures are required to be in $\mathscr{R}_2^\sigma$.
Finally all non-degenerate signatures of arity at least~3 are also required to be in $\mathscr{R}_2^\sigma$,
which is a strong form of vanishing; they must have a large vanishing degree of type $\sigma$.

Case~\ref{case:main_tractable:vanishing_and_unary} is actually a known tractable case~\cite{CLX13a, CLX11a}.
Every signature (after replacing all degenerate signatures with corresponding ones) is a generalized Fibonacci signature with $m = \sigma 2 i$,
which means that every signature $[f_0, f_1, \dotsc, f_n] \in \mathcal{F}$ satisfies $f_{k+2} = m f_{k+1} + f_k$ for $0 \le k \le n - 2$.
However, we present a unified proof of tractability based on vanishing signatures.

\subsection{Tractability Proof for Theorem~\ref{thm:main}}

For any signature grid $\Omega$,
$\Holant_\Omega$ is the product of the Holant on each connected component,
so we only need to compute over connected components.

For case~\ref{case:main_tractable:trivial},
after decomposing all degenerate signatures into unary ones,
a connected component of the graph is either a path or a cycle and the Holant can be computed using matrix product and trace.
Cases~\ref{case:main_tractable:CSP:A} and~\ref{case:main_tractable:CSP:P} are tractable because,
after a particular holographic transformation, their instances are tractable instances of $\CSP(\mathcal{F})$ (cf.~\cite{CLX09a}).
For case~\ref{case:main_tractable:vanishing_and_binary},
any binary signature $g \in \mathscr{R}_2^\sigma$ has $\rd^\sigma(g) \le 1$,
and thus $\vd^\sigma(g) \ge 1 = \arity(g) / 2$.
Any signature $f \in \mathscr{V}^\sigma$ has $\vd^\sigma(f) > \arity(f) / 2$.
If $\mathcal{F}$ contains a signature $f$ of arity at least~3, then it must belong to $\mathscr{V}^\sigma$.
Then by the combinatorial view, more than half of the unary signatures are $[1, \sigma i]$, so $\Holant_\Omega$ vanishes.
On the other hand, if the arity of every signature in $\mathcal{F}$ is at most~2,
then we have reduced to case~\ref{case:main_tractable:trivial}.

Now consider case~\ref{case:main_tractable:vanishing_and_unary}.
After decomposing all degenerate signatures into unary ones,
recursively absorb any unary signature into its neighboring signature.
If it is connected to another unary signature, then this produces a global constant factor.
If it is connected to a binary signature, then this creates another unary signature.
We observe that if $f \in \mathscr{R}^\sigma_2$ has $\arity(f) \ge 2$,
then for any unary signature $u$, after connecting $f$ to $u$,
the signature $\langle f, u \rangle$ still belongs to $\mathscr{R}^\sigma_2$.
Hence after recursively absorbing all unary signatures in the above process,
we still have a signature grid where all signatures belong to $\mathscr{R}^\sigma_2$.
Any remaining signature $f$ that has arity at least~3 belongs to $\mathscr{V}^\sigma$
since $\rd^\sigma(f) \le 1$ and thus $\vd^\sigma(f) \ge \arity(f) - 1 > \arity(f) / 2$.
Thus we have reduced to case~\ref{case:main_tractable:vanishing_and_binary}.

\subsection{Outline of Hardness Proof for Theorem~\ref{thm:main}}

\begin{figure}[t]
 \tikzstyle{block} = [rectangle, draw, fill=blue!20, text centered, rounded corners, minimum height=2em]
 \centering
 \begin{tikzpicture}[scale=1, transform shape, node distance=2.5cm, semithick]
  \node [block, text width=5em]                     (3) {Arity 3};
  \node [block, text width=5em, right of=3]         (4) {Arity 4};
  \node [block, text width=4.5em, below right of=4] (v) {Vanishing};
  \node [block, text width=6em, below left of=v]    (s) {Theorem~\ref{thm:dic:single}};
  \node [block, text width=6em, below right of=s]   (m) {Theorem~\ref{thm:main}};
  \node [external, below right of=3] (hidden1) {};
  \node [external, above right of=hidden1] (hidden2) {};
  \node [external, below right of=hidden2] (hidden3) {};
  \node [block, text width=8em, right of=hidden3] (ap) {$\mathscr{A}$-transformable\\ and\\ $\mathscr{P}$-transformable};
  \path  (3) edge[->] (s)
         (4) edge[->] (s)
             edge[->, dashed] (v)
         (v) edge[->] (s)
        (ap) edge[->] (s)
         (s) edge[->] (m)
         (v) edge[->] (m)
        (ap) edge[->] (m);
 \end{tikzpicture}
 \caption{Dependency graph of key hardness results for our main dichotomy, Theorem~\ref{thm:main}.
 The dashed edge indicates a dependency in terms of techniques rather than the result itself.
 ``Arity 3(4)'' stands for the arity 3(4) single signature dichotomy.
 ``Vanishing'' (``$\mathscr{A}$-transformable and $\mathscr{P}$-transformable'') stands for
 the lemmas regarding vanishing ($\mathscr{A}$-transformable and $\mathscr{P}$-transformable) signatures.
 Dependencies on previous dichotomy theorems are not shown.}
 \label{fig:outline}
\end{figure}
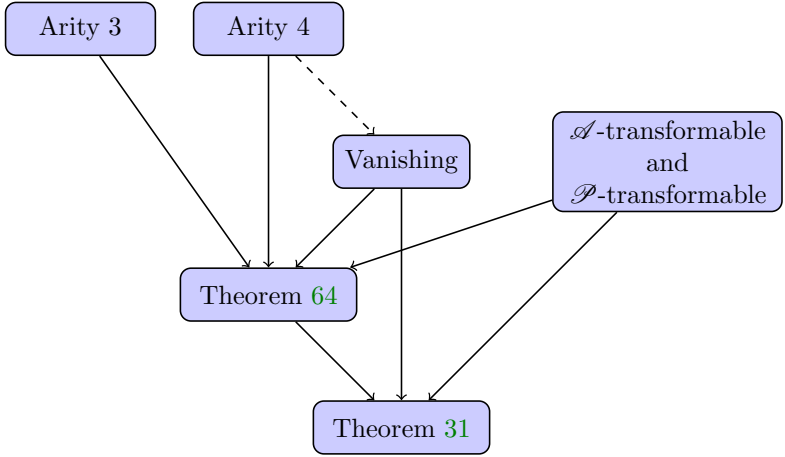

The hardness proof of our main dichotomy is more complicated.
Our first goal is to prove a dichotomy for a single signature, Theorem~\ref{thm:dic:single}.
The proof is by induction on the arity of the signature.
The induction is done by taking a self loop, which causes the arity to go down by~$2$.
Thus, we need two base cases, a dichotomy for an arity~$3$ signature and a dichotomy for an arity~$4$ signature.
The dichotomy for an arity~$3$ signature is known~\cite{CHL12},
while the dichotomy for an arity~$4$ signature is a crucial ingredient in our proof of the full dichotomy.
It is not only a base case of the single signature dichotomy but also utilized several times in the inductive step.

After obtaining the dichotomy for an arity~$4$ signature,
the proof continues by revisiting the vanishing signatures to determine what signatures combine with them to give $\SHARPP$-hardness.
When adding unary or binary signatures,
the only possible combinations that maintain the tractability of the vanishing signatures
are as described in cases~\ref{case:main_tractable:vanishing_and_binary} and~\ref{case:main_tractable:vanishing_and_unary} in Theorem~\ref{thm:main}.
Moreover, combining two vanishing signatures of 
the opposite type of arity at least~$3$ implies $\SHARPP$-hardness.
The proof of this last statement uses techniques that are similar to those in the proof of the arity~$4$ dichotomy.

Another important piece of the proof is to understand the signatures that are $\mathscr{A}$-transformable or $\mathscr{P}$-transformable.
We obtain new explicit characterizations of these signatures.
We use these characterizations to prove dichotomy theorems for any signature set containing an $\mathscr{A}$- or $\mathscr{P}$-transformable signature.
Unless every signature in the set is $\mathscr{A}$- or $\mathscr{P}$-transformable, the problem is $\SHARPP$-hard.
The proofs of these dichotomy theorems utilize the $\CSP^d$ dichotomy in~\cite{HL12}.

The main dichotomy, Theorem~\ref{thm:main},
depends on Theorem~\ref{thm:dic:single} and the results regarding vanishing signatures as well as $\mathscr{A}$- and $\mathscr{P}$-transformable signatures.
Figure~\ref{fig:outline} summarizes the dependencies among these results.

\section{Dichotomy Theorem for an Arity 4 Signature} \label{sec:arity4}
\begin{definition}
 A 4-by-4 matrix is \emph{redundant} if its middle two rows and middle two columns are the same.
 Denote the set of all redundant 4-by-4 matrices over a field $\mathbb{F}$ by $\RM_4(\mathbb{F})$.
\end{definition}

Consider the function $\varphi : \mathbb{C}^{4 \times 4} \to \mathbb{C}^{3 \times 3}$ defined by
\[\varphi(M) = A M B,\]
where
\[
 A =
 \begin{bmatrix}
  1 & 0 & 0 & 0\\
  0 & \frac{1}{2} & \frac{1}{2} & 0\\
  0 & 0 & 0 & 1
 \end{bmatrix}
 \qquad
 \text{and}
 \qquad
 B =
 \begin{bmatrix}
  1 & 0 & 0\\
  0 & 1 & 0\\
  0 & 1 & 0\\
  0 & 0 & 1
 \end{bmatrix}.
\]
Intuitively, the operation $\varphi$ replaces the middle two columns of $M$ with their sum and then the middle two rows of $M$ with their average.
(These two steps commute.)
Conversely, we have the following function $\psi : \mathbb{C}^{3 \times 3} \to \RM_4(\mathbb{C})$ defined by
\[\psi(N) = B N A.\]
Intuitively, the operation $\psi$ duplicates the middle row of $N$ and then splits the middle column evenly into two columns.
Notice that $\varphi(\psi(N)) = N$.
When restricted to $\RM_4(\mathbb{C})$, $\varphi$ is an isomorphism between the semi-group of 4-by-4 redundant matrices and the semi-group of 3-by-3 matrices,
under matrix multiplication, and $\psi$ is its inverse.
To see this, just notice that
\[
 A B =
 \begin{bmatrix}
  1 & 0 & 0\\
  0 & 1 & 0\\
  0 & 0 & 1
 \end{bmatrix}
 \qquad
 \text{and}
 \qquad
  B A =
 \begin{bmatrix}
  1 & 0 & 0 & 0\\
  0 & \frac{1}{2} & \frac{1}{2} & 0\\
  0 & \frac{1}{2} & \frac{1}{2} & 0\\
  0 & 0 & 0 & 1
 \end{bmatrix}
\]
are the identity elements of their respective semi-groups.

An example of a redundant matrix is the signature matrix of a symmetric arity~4 signature.

\begin{definition} \label{def:signature_matrix}
 The \emph{signature matrix} of a symmetric arity~4 signature $f = [f_0, f_1, f_2, f_3, f_4]$ is
 \begin{align*}
  M_f =
  \begin{bmatrix}
   f_0 & f_1 & f_1 & f_2\\
   f_1 & f_2 & f_2 & f_3\\
   f_1 & f_2 & f_2 & f_3\\
   f_2 & f_3 & f_3 & f_4
  \end{bmatrix}.
 \end{align*}
 This definition extends to an asymmetric signature $g$ as
 \begin{align*}
  M_g =
  \begin{bmatrix}
   g^{0000} & g^{0010} & g^{0001} & g^{0011}\\
   g^{0100} & g^{0110} & g^{0101} & g^{0111}\\
   g^{1000} & g^{1010} & g^{1001} & g^{1011}\\
   g^{1100} & g^{1110} & g^{1101} & g^{1111}
  \end{bmatrix}.
 \end{align*}
 When we present $g$ as an $\mathcal{F}$-gate, we order the four external edges ABCD counterclockwise.
 In $M_g$,
 the row index bits are ordered AB and the column index bits are ordered DC,
 in reverse order.
 This is for convenience so that the signature matrix of the linking of two arity~4 $\mathcal{F}$-gates is the matrix product of the signature matrices of the two $\mathcal{F}$-gates.

 If $M_g$ is redundant,
 we also define the \emph{compressed signature matrix} of $g$ as $\widetilde{M_g} = \varphi(M_g)$.
\end{definition}

If all signatures in an $\mathcal{F}$-gate have even arity,
then the $\mathcal{F}$-gate also has even arity.
Knowing that binary signatures alone do not produce $\SHARPP$-hardness,
and with the above constraint in mind,
we would like to interpolate other arity~4 signatures using the given arity~4 signature.
We are particularly interested in the signature $g$ with signature matrix
\begin{equation} \label{eqn:compressed_identity_matrix}
 M_g =
 \begin{bmatrix}
  1 & 0 & 0 & 0\\
  0 & \frac{1}{2} & \frac{1}{2} & 0\\
  0 & \frac{1}{2} & \frac{1}{2} & 0\\
  0 & 0 & 0 & 1
 \end{bmatrix},
\end{equation}
the identity element in the semi-group of redundant matrices.
Thus, $\widetilde{M_g} = I_3$.
Lemma~\ref{lem:arity4:avg_sig_hard} shows that the Holant problem with this signature is $\SHARPP$-hard.
In Lemma~\ref{lem:arity4:get_avg_sig}, we consider when we can interpolate it.

There are three cases in Lemma~\ref{lem:arity4:get_avg_sig} and one of them requires the following technical lemma.

\begin{lemma} \label{lem:two_vandermondes_nonsingular}
 Let $M = [B_0\ B_1\ \cdots\ B_t]$ be an $n$-by-$n$ block matrix such that there exists a $\lambda \in \mathbb{C}$, for all integers $0 \le k \le t$,
 block $B_k$ is an $n$-by-$c_k$ matrix for some integer $c_k \ge 0$, and the entry of $B_k$ at row $r$ and column $c$ is $(B_k)_{r c} = r^{c-1} \lambda^{k r}$, where $r, c \ge 1$.
 If $\lambda$ is nonzero and is not a root of unity, then $M$ is nonsingular.
\end{lemma}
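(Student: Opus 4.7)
The plan is to reinterpret the kernel equation $Mv = 0$ as saying that a certain exponential polynomial in an integer variable $r$ vanishes at $n$ consecutive integers, and then invoke the theory of linear recurrences with repeated roots to conclude that it must vanish identically, forcing $v = 0$.

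First I would index the entries of $v \in \mathbb{C}^n$ by pairs $(k,c)$ with $0 \le k \le t$ and $1 \le c \le c_k$, so that $Mv = 0$ reads
\[
 F(r) := \sum_{k=0}^{t} p_k(r)\, \mu_k^r = 0 \quad \text{for } r = 1, 2, \dotsc, n,
\]
where $\mu_k := \lambda^k$ and $p_k(x) := \sum_{c=1}^{c_k} v_{k,c}\, x^{c-1}$ has degree at most $c_k - 1$. Since $\lambda$ is nonzero and not a root of unity, the scalars $\mu_0, \dotsc, \mu_t$ are pairwise distinct and nonzero, so $P(x) := \prod_{k=0}^{t}(x - \mu_k)^{c_k}$ has degree exactly $n = \sum_k c_k$ with nonzero constant term. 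The sequence $\{F(r)\}_{r \in \mathbb{Z}}$ satisfies the linear recurrence with characteristic polynomial $P$, and since this recurrence is invertible in both directions, the $n$ consecutive zeros $F(1) = \dotsb = F(n) = 0$ propagate to give $F \equiv 0$ on all of $\mathbb{Z}$.

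To conclude that every $v_{k,c}$ vanishes, I would show $p_0 = \dotsb = p_t = 0$ by induction on $t$. After dividing the identity $F \equiv 0$ by $\mu_t^r$, applying the forward difference operator $\Delta$ a total of $c_t$ times annihilates the $k = t$ summand (since $\deg p_t \le c_t - 1$) while transforming each remaining summand $p_k(r)(\mu_k/\mu_t)^r$ into $\widetilde{p}_k(r)(\mu_k/\mu_t)^r$ with $\deg \widetilde{p}_k = \deg p_k$. This degree preservation relies crucially on $\mu_k/\mu_t \ne 1$, which is where the root-of-unity hypothesis enters. The inductive hypothesis then yields $\widetilde{p}_k = 0$ for $k < t$, and injectivity of the operator $p \mapsto (\mu_k/\mu_t)\, p(\cdot + 1) - p$ on polynomials of bounded degree forces $p_k = 0$; then $p_t = 0$ follows by plugging back into $F \equiv 0$. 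I expect the main obstacle to be the bookkeeping around degree preservation in the induction; alternatively, this entire linear-independence step is precisely the classical nonsingularity of the confluent Vandermonde matrix at distinct nonzero nodes with prescribed multiplicities, which could simply be cited.
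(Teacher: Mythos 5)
Your argument is correct, but it takes a genuinely different route from the paper's. The paper proves the lemma by induction on $n$ via explicit elimination: subtract consecutive rows, then repair the blocks by column operations, dividing by the quantities $\lambda^k-1$ (this is where ``not a root of unity'' enters there), and recurse on the resulting $(n-1)\times(n-1)$ matrix of the same shape. You instead read a kernel vector $v$ of $M$ as the coefficient list of an exponential polynomial $F(r)=\sum_{k=0}^{t}p_k(r)\,\mu_k^{\,r}$ with $\mu_k=\lambda^k$ and $\deg p_k\le c_k-1$; since $\sum_k c_k=n$, the sequence $F$ satisfies an order-$n$ linear recurrence with characteristic polynomial $\prod_k(x-\mu_k)^{c_k}$, which is monic with nonzero constant term (as $\lambda\ne 0$), so the $n$ prescribed zeros $F(1)=\dots=F(n)=0$ propagate to $F\equiv 0$ on $\mathbb{Z}$; then the standard linear independence of polynomial-times-exponential functions with pairwise distinct nonzero bases (your finite-difference induction on $t$, with degree preservation because $\mu_k/\mu_t=\lambda^{k-t}\ne 1$) forces all $p_k=0$, i.e.\ $v=0$. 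Two small points to make explicit if you write this up: the induction should be phrased for arbitrary pairwise distinct nonzero bases, since after dividing by $\mu_t^{\,r}$ the bases become $\lambda^{k-t}$, which are no longer of the form $\lambda^{k}$ with $k\ge 0$; and the non-root-of-unity hypothesis is used exactly twice, to make the $\mu_k$ distinct and the ratios $\ne 1$. Comparing the two: the paper's proof is elementary and self-contained but bookkeeping-heavy, while yours isolates the role of the hypothesis, generalizes immediately to any distinct nonzero nodes, and can even be replaced by a citation to the nonsingularity of confluent Vandermonde matrices, at the cost of a routine change-of-basis remark (within each block, the columns $r^{c-1}\lambda^{kr}$ span the same space as the usual derivative columns, up to an invertible triangular transformation and nonzero scalings).
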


\begin{proof}
 We prove by induction on $n$.
 If $n = 1$, then the sole entry is $\lambda^k$ for some nonnegative integer $k$.
 This is nonzero since $\lambda \ne 0$.
 Assume $n > 1$ and let the left-most nonempty block be $B_j$.
 We divide row $r$ by $\lambda^{j r}$, which is allowed since $\lambda \ne 0$.
 This effectively changes block $B_\ell$ into a block of the form $B_{\ell - j}$.
 Thus, we have another matrix of the same form as $M$ but with a nonempty block $B_0$.
 To simplify notation, we also denote this matrix again by $M$.
 The first column of $B_0$ is all 1's.
 We subtract row $r-1$ from row $r$, for $r$ from $n$ down to 2.
 This gives  us a new matrix $M'$, and $\det M = \det M'$.
 Then $\det M'$ is the determinant of the $(n-1)$-by-$(n-1)$ submatrix $M''$ obtained from $M'$ by removing the first row and column.
 Now we do column operations (on $M''$) to return the blocks to the proper form so that we can invoke the induction hypothesis.

 For any block $B_k'$ different from $B_0'$, we prove by induction on the number of columns in $B_k'$ that $B_k'$ can be repaired.
 In the base case, the $r$th element of the first column is $(B_k')_{r 1} = \lambda^{k r} - \lambda^{k (r-1)} = \lambda^{k (r-1)} (\lambda^{k} - 1)$ for $r \ge 2$.
 We divide this column by $\lambda^k - 1$ to obtain $\lambda^{k (r-1)}$, which is allowed since $\lambda$ is not a root of unity and $k \ne 0$.
 This is now the correct form for the $r$th element of the first column of a block in $M''$.

 Now for the inductive step, assume that the first $d-1$ columns of block $B_k'$ are in the correct form to be a block in $M''$.
 That is, for row index $r \ge 2$, which denotes the $(r-1)$-th row of $M''$, the $r$th element in the first $d-1$ columns of $B_k'$ have the form $(B_k')_{r c} = (r-1)^{c-1} \lambda^{k (r-1)}$.
 The $r$th element in column $d$ of $B_k'$ currently has the form $(B_k')_{r d} = r^{d-1} \lambda^{k r} - (r-1)^{d-1} \lambda^{k (r-1)}$.
 Then we do column operations
 \begin{align*}
  (B_k')_{r d} - \sum_{c=1}^{d-1} \binom{d-1}{c-1} (B_k')_{r c}
  =& r^{d-1} \lambda^{k r} - (r-1)^{d-1} \lambda^{k (r-1)} \\
  &- \sum_{c=1}^{d-1} \binom{d-1}{c-1} (r-1)^{c-1} \lambda^{k (r-1)}\\
  =& r^{d-1} \lambda^{k r} - r^{d-1} \lambda^{k (r-1)}\\
  =& r^{d-1} \lambda^{k (r-1)} (\lambda^k - 1)
 \end{align*}
 and divide by $(\lambda^k - 1)$ to get $r^{d-1} \lambda^{k (r-1)}$.
 Once again, this is allowed since $\lambda$ is not a root of unity and $k \ne 0$.
 Then more (of the same) column operations yield
 \begin{align*}
   & r^{d-1} \lambda^{k (r-1)} - \sum_{c=1}^{d-1} \binom{d-1}{c-1} (r-1)^{c-1} \lambda^{k (r-1)}\\
  =\, & \lambda^{k (r-1)} \left(r^{d-1} + (r-1)^{d-1} - \sum_{c=1}^{d} \binom{d-1}{c-1} (r-1)^{c-1}\right)
 \end{align*}
 and the term in parentheses is precisely $(r-1)^{d-1}$.
 This gives the correct form for the $r$th element in column $d$ of $B_k'$ in $M''$.

 Now we repair the columns in $B_0'$, also by induction on the number of columns.
 In the base case, if $B_0'$ only has one column, then there is nothing to prove, since this block has disappeared in $M''$.
 Otherwise, $(B_0')_{r 2} = r - (r - 1) = 1$, so the second column is already in the correct form to be the first column in $M''$, and there is still nothing to prove.
 For the inductive step, assume that columns 2 to $d-1$ are in the correct form to be the first block in $M''$ for $d \ge 3$.
 That is, the entry at row $r \ge 2$ and column $c$ from 2 through $d-1$ has the form $(B_0')_{r c} = (r-1)^{c-2}$.
 The $r$th element in column $d$ currently has the form $(B_0')_{r d} = r^{d-1} - (r-1)^{d-1}$.
 Then we do the column operations
 \begin{align*}
  (B_0')_{r d} - \sum_{c=2}^{d-1} \binom{d-1}{c-2} (B_0')_{r c}
  &= r^{d-1} - (r-1)^{d-1} - \sum_{c=2}^{d-1} \binom{d-1}{c-2} (r-1)^{c-2}\\
  &= (d-1) (r-1)^{d-2}
 \end{align*}
 and divide by $d-1$, which is nonzero, to get $(r-1)^{d-2}$.
 This is the correct form for the $r$th element in column $d$ of $B_0'$ in $M''$.
 Therefore, we invoke our original induction hypothesis that the $(n-1)$-by-$(n-1)$ matrix $M''$ has a nonzero determinant, which completes the proof.
\end{proof}

\begin{figure}[t]
 \centering
 \captionsetup[subfigure]{labelformat=empty}
 \subfloat[$N_1$]{
  \begin{tikzpicture}[scale=\scale,transform shape,node distance=\nodeDist,semithick]
   \node[external] (0)                    {};
   \node[external] (1) [right       of=0] {};
   \node[internal] (2) [below right of=1] {};
   \node[external] (3) [below left  of=2] {};
   \node[external] (4) [left        of=3] {};
   \node[external] (5) [above right of=2] {};
   \node[external] (6) [right       of=5] {};
   \node[external] (7) [below right of=2] {};
   \node[external] (8) [right       of=7] {};
   \path (0) edge[out=   0, in=135, postaction={decorate, decoration={
                                                           markings,
                                                           mark=at position 0.4   with {\arrow[>=diamond, white] {>}; },
                                                           mark=at position 0.4   with {\arrow[>=open diamond]   {>}; },
                                                           mark=at position 0.999 with {\arrow[>=diamond, white] {>}; },
                                                           mark=at position 1.0   with {\arrow[>=open diamond]   {>}; } } }] (2)
         (2) edge[out=-135, in=  0] (4)
             edge[out=  45, in=180] (6)
             edge[out= -45, in=180] (8);
   \begin{pgfonlayer}{background}
    \node[inner sep=0pt,transform shape=false,draw=\borderColor,thick,rounded corners,fit = (1) (3) (5) (7)] {};
   \end{pgfonlayer}
  \end{tikzpicture}}
 \qquad
 \subfloat[$N_2$]{
  \begin{tikzpicture}[scale=\scale,transform shape,node distance=\nodeDist,semithick]
   \node[external]  (0)                    {};
   \node[external]  (1) [right       of=0] {};
   \node[internal]  (2) [below right of=1] {};
   \node[external]  (3) [below left  of=2] {};
   \node[external]  (4) [left        of=3] {};
   \node[external]  (5) [right       of=2] {};
   \node[internal]  (6) [right       of=5] {};
   \node[external]  (7) [above right of=6] {};
   \node[external]  (8) [right       of=7] {};
   \node[external]  (9) [below right of=6] {};
   \node[external] (10) [right       of=9] {};
   \path (0) edge[out=   0, in=135, postaction={decorate, decoration={
                                                           markings,
                                                           mark=at position 0.4   with {\arrow[>=diamond, white] {>}; },
                                                           mark=at position 0.4   with {\arrow[>=open diamond]   {>}; },
                                                           mark=at position 0.999 with {\arrow[>=diamond, white] {>}; },
                                                           mark=at position 1.0   with {\arrow[>=open diamond]   {>}; } } }] (2)
         (2) edge[out=-135, in=  0]  (4)
             edge[bend left,        postaction={decorate, decoration={
                                                           markings,
                                                           mark=at position 0.999 with {\arrow[>=diamond, white] {>}; },
                                                           mark=at position 1.0   with {\arrow[>=open diamond]   {>}; } } }] (6)
             edge[bend right]        (6)
         (6) edge[out=  45, in=180]  (8)
             edge[out= -45, in=180] (10);
   \begin{pgfonlayer}{background}
    \node[inner sep=0pt,transform shape=false,draw=\borderColor,thick,rounded corners,fit = (1) (3) (7) (9)] {};
   \end{pgfonlayer}
  \end{tikzpicture}}
 \qquad
 \subfloat[$N_{s+1}$]{
  \begin{tikzpicture}[scale=\scale,transform shape,node distance=\nodeDist,semithick]
   \node[external]  (0)                     {};
   \node[external]  (1) [above left  of=0]  {};
   \node[external]  (2) [below left  of=0]  {};
   \node[external]  (3) [below left  of=1]  {};
   \node[external]  (4) [below left  of=3]  {};
   \node[external]  (5) [above left  of=3]  {};
   \node[external]  (6) [left        of=4]  {};
   \node[external]  (7) [left        of=5]  {};
   \node[external]  (8) [right       of=0]  {};
   \node[internal]  (9) [right       of=8]  {};
   \node[external] (10) [above right of=9]  {};
   \node[external] (11) [below right of=9]  {};
   \node[external] (12) [right       of=10] {};
   \node[external] (13) [right       of=11] {};
   \path let
          \p1 = (1),
          \p2 = (2)
         in
          node[external] at (\x1, \y1 / 2 + \y2 / 2) {\Huge $N_s$};
   \path let
          \p1 = (0)
         in
          node[external] (14) at (\x1 + 2, \y1 + 10) {};
   \path let
          \p1 = (0)
         in
          node[external] (15) at (\x1 + 2, \y1 - 10) {};
   \path let
          \p1 = (3)
         in
          node[external] (16) at (\x1 - 2, \y1 + 10) {};
   \path let
          \p1 = (3)
         in
          node[external] (17) at (\x1 - 2, \y1 - 10) {};
   \path (7) edge[out=   0, in=135, postaction={decorate, decoration={
                                                           markings,
                                                           mark=at position 0.43  with {\arrow[>=diamond, white] {>}; },
                                                           mark=at position 0.43  with {\arrow[>=open diamond]   {>}; },
                                                           mark=at position 0.999 with {\arrow[>=diamond, white] {>}; },
                                                           mark=at position 1.0   with {\arrow[>=open diamond]   {>}; } } }] (16)
        (17) edge[out=-135, in=  0]  (6)
        (14) edge[out=  35, in=135, postaction={decorate, decoration={
                                                           markings,
                                                           mark=at position 0.99 with {\arrow[>=diamond, white] {>}; },
                                                           mark=at position 1.0   with {\arrow[>=open diamond]   {>}; } } }] (9)
         (9) edge[out=-135, in=-35] (15)
             edge[out=  45, in=180] (12)
             edge[out= -45, in=180] (13);
   \begin{pgfonlayer}{background}
    \node[inner sep=0pt,transform shape=false,draw=\borderColor,thick,densely dashed,rounded corners,fit = (0) (1.south) (2.north) (3)] {};
    \node[inner sep=0pt,transform shape=false,draw=\borderColor,thick,rounded corners,fit = (4) (5) (10) (11)] {};
   \end{pgfonlayer}
  \end{tikzpicture}}
 \caption{Recursive construction to interpolate $g$. Vertices are assigned $f$.}
 \label{fig:gadget:arity4:interpolate_I3}
\end{figure}
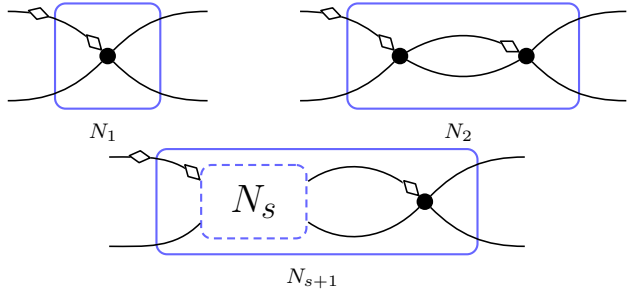

\begin{lemma} \label{lem:arity4:get_avg_sig}
 Let $g$ be the arity~4 signature with $M_g$ given in~(\ref{eqn:compressed_identity_matrix}) and let $f$ be an arity~4 signature with complex weights.
 If $M_f$ is redundant and $\widetilde{M_f}$ is nonsingular,
 then for any set $\mathcal{F}$ containing $f$,
 we have \[\Holant(\mathcal{F} \union \{g\}) \le_T \Holant(\mathcal{F}).\]
\end{lemma}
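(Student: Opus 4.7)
The plan is a polynomial interpolation argument via the recursive family of gadgets $N_s$ depicted in Figure~\ref{fig:gadget:arity4:interpolate_I3}. Each $N_s$ is an $\{f\}$-gate whose signature matrix is redundant, and by the semigroup isomorphism $\varphi$ between $\RM_4(\mathbb{C})$ and $\mathbb{C}^{3\times 3}$ we have $\widetilde{M_{N_s}} = (\widetilde{M_f})^s$. Since $\widetilde{M_f}$ is nonsingular all of its eigenvalues are nonzero, and the target $g$ corresponds in the compressed picture to $I_3 = (\widetilde{M_f})^0$.

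Given an instance $\Omega$ of $\Holant(\mathcal{F} \cup \{g\})$ with $n$ occurrences of $g$, for each $s \ge 1$ let $\Omega_s$ be the $\Holant(\mathcal{F})$-instance obtained by replacing every copy of $g$ in $\Omega$ with a copy of $N_s$. Writing $\widetilde{M_f} = P J P^{-1}$ in Jordan normal form and expanding each $N_s$-vertex in the corresponding basis, $\Holant_{\Omega_s}$ becomes a finite linear combination
\[
\Holant_{\Omega_s} \;=\; \sum_{\alpha} c_\alpha \, \rho_\alpha(s),
\]
where $\alpha$ indexes the routings of the $n$ compressed sockets through the Jordan blocks of $J$, the coefficients $c_\alpha$ depend only on $\Omega$ and $\mathcal{F}$ and not on $s$, and each $\rho_\alpha(s)$ is a product of entries of $J^s$. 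The target $\Holant_\Omega$ is the $s=0$ specialization $\sum_\alpha c_\alpha \rho_\alpha(0)$, so it suffices to solve for all the $c_\alpha$'s from the evaluations $\{\Holant_{\Omega_s}\}$ for polynomially many choices of $s$; each such evaluation is computable in polynomial time by an oracle call to $\Holant(\mathcal{F})$.

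The recovery step splits on the spectrum of $\widetilde{M_f}$. If some eigenvalue $\lambda$ of $\widetilde{M_f}$ is not a root of unity, then each $\rho_\alpha(s)$ has the shape $s^{e_\alpha}\lambda^{k_\alpha s}$ once contributions are grouped by total $\lambda$-power within a Jordan block, and the coefficient matrix of the linear system taken over consecutive $s$-values has precisely the block-Vandermonde form of Lemma~\ref{lem:two_vandermondes_nonsingular}; that lemma gives invertibility and hence recovery of the $c_\alpha$'s. If every eigenvalue of $\widetilde{M_f}$ is a root of unity, then an auxiliary argument forces $\widetilde{M_f}$ to be diagonalizable (a nontrivial Jordan block with a root-of-unity eigenvalue produces polynomial growth in $s$ whose contribution can be isolated by a secondary interpolation and must vanish), after which some positive integer $m$ satisfies $(\widetilde{M_f})^m = I_3$ and $N_m$ realizes $g$ directly up to a scalar.

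The main obstacle I expect is setting up the block-Vandermonde interpolation in the presence of Jordan blocks and multiple eigenvalues with differing multiplicative orders. The polynomial factors $s^{e_\alpha}$ prevent a naive Vandermonde argument from succeeding, and the multi-indices $\alpha$ arising from distributing $n$ sockets across Jordan blocks of $J$ and choosing within-block heights must be grouped by their common $\lambda^{k_\alpha}$ so that each group matches the single-eigenvalue hypothesis of Lemma~\ref{lem:two_vandermondes_nonsingular}. Executing this bookkeeping, together with the separate treatment of the root-of-unity spectrum, is the technical heart of the proof.
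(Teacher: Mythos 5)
Your overall framework is the same as the paper's: replace each occurrence of $g$ by the gadget $N_s$, use the isomorphism $\varphi$ to get $\widetilde{M_{N_s}}=(\widetilde{M_f})^s$, pass to the Jordan form, stratify the assignments at the compressed sockets, and interpolate. The gap is in your case analysis for the recovery step. First, the dichotomy ``some eigenvalue of $\widetilde{M_f}$ is not a root of unity'' versus ``all eigenvalues are roots of unity'' is not the right split, and Lemma~\ref{lem:two_vandermondes_nonsingular} does not cover the configurations you assign to it: that lemma requires all blocks to share a \emph{single} base $\lambda$, and in the paper it is invoked only in the case of one $2\times 2$ Jordan block where the relevant base is the \emph{ratio} $\lambda_3/\lambda$ (assumed not a root of unity), not an eigenvalue of $\widetilde{M_f}$. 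In the diagonalizable case the coefficient functions are $(\lambda_1^i\lambda_2^j\lambda_3^k)^s$ with three unrelated bases, which is a plain (merged) Vandermonde and needs no root-of-unity hypothesis at all; conversely, a size-2 block with eigenvalue $\lambda$ not a root of unity but $\lambda_3/\lambda$ a root of unity falls into your first branch yet does not satisfy the lemma's hypothesis.

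Second, and more seriously, the claim that an all-roots-of-unity spectrum ``forces $\widetilde{M_f}$ to be diagonalizable'' is false. Every matrix in $\RM_4(\mathbb{C})$ is the signature matrix of some arity~4 signature, so $\widetilde{M_f}$ can be an arbitrary nonsingular $3\times 3$ matrix, e.g.\ a single Jordan block with eigenvalue $1$; nothing in $\Holant_{\Omega_s}$ ``must vanish,'' and no secondary interpolation yields a contradiction. This unipotent-type situation is exactly where the paper's proof is most delicate: it does not realize $g$ exactly via $(\widetilde{M_f})^m=I_3$, but instead restricts $s$ to a suitable arithmetic progression (when an eigenvalue ratio is a root of unity) and, for the full Jordan block, performs a careful merging of unknowns so that the surviving coefficients $1,\ s,\ s(s-1),\ s^2(s-1),\dots$ form a matrix equal to a Vandermonde times a unitriangular matrix, while verifying that the target $\Holant_\Omega$ is still expressible in the merged unknowns (the $(0,0)$ column is never touched by the merging). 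Relatedly, your plan to ``solve for all the $c_\alpha$'s'' cannot work as stated: the systems are genuinely rank-deficient, only sums of $c_\alpha$'s sharing a coefficient function are determined, and one must check case by case that $\Holant_\Omega$ is a function of those sums. Without an argument covering the non-diagonalizable root-of-unity case (and the correct grouping in the mixed-eigenvalue cases), the proposed proof is incomplete.
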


\begin{proof}
 Consider an instance $\Omega$ of $\Holant(\mathcal{F} \union \{g\})$.
 Suppose that $g$ appears $n$ times in $\Omega$.
 We construct from $\Omega$ a sequence of instances $\Omega_s$ of $\Holant(\mathcal{F})$ indexed by $s \ge 1$.
 We obtain $\Omega_s$ from $\Omega$ by replacing each occurrence of $g$
 with the gadget $N_s$ in Figure~\ref{fig:gadget:arity4:interpolate_I3} with $f$ assigned to all vertices.
 In $\Omega_s$,
 the edge corresponding to the $i$th significant index bit of $N_s$ connects to
 the same location as the edge corresponding to the $i$th significant index bit of $g$ in $\Omega$.

 Now to determine the relationship between $\Holant_\Omega$ and $\Holant_{\Omega_s}$, we use the isomorphism between redundant 4-by-4 matrices and 3-by-3 matrices.
 To obtain $\Omega_s$ from $\Omega$, we effectively replace $M_g$ with $M_{N_s} = (M_f)^s$, the $s$th power of the signature matrix $M_f$.
 By the Jordan normal form of $\widetilde{M_f}$, there exist $T, \Lambda \in \mathbb{C}^{3 \times 3}$ such that
 \begin{align*}
  \widetilde{M_f} = T \Lambda T^{-1} = T \begin{bmatrix} \lambda_1 & b_1 & 0 \\ 0 & \lambda_2 & b_2 \\ 0 & 0 & \lambda_3 \end{bmatrix} T^{-1},
 \end{align*}
 where $b_1, b_2 \in \{0,1\}$.
 Note that $\lambda_1 \lambda_2 \lambda_3 = \det(\widetilde{M_f}) \ne 0$.
 Also since $\widetilde{M_g} = \varphi(M_g) = I_3$, and $T I_3 T^{-1} = I_3$, we have $\psi(T) M_g \psi(T^{-1}) = M_g$.
 We can view our construction of $\Omega_s$ as first replacing each $M_g$ by $\psi(T) M_g \psi(T^{-1})$,
 which does not change the Holant value, and then replacing each new $M_g$ with $\psi(\Lambda^s) = \psi(\Lambda)^s$ to obtain $\Omega_s$.
 Observe that
 \[
  \varphi( \psi(T) \psi(\Lambda^s) \psi(T^{-1}) )
  = T \Lambda^s T^{-1}
  = (\widetilde{M_f})^s
  = (\varphi( M_f))^s
  = \varphi ((M_f)^s)
  = \varphi (M_{N_s}),
 \]
 hence, $\psi(T) \psi(\Lambda^s) \psi(T^{-1}) = M_{N_s}$.
 (Since $M_g = \psi(T) M_g \psi(T^{-1})$ and $M_{N_s} = \psi(T)$ $\psi(\Lambda^s) \psi(T^{-1})$,
 replacing each $M_g$, sandwiched between $\psi(T)$ and $\psi(T^{-1})$, by $\psi(\Lambda^s)$ indeed transforms $\Omega$ to $\Omega_s$.
 We also note that, by the isomorphism, $\psi(T^{-1})$ \emph{is} the multiplicative inverse of $\psi(T)$ within the semi-group of redundant 4-by-4 matrices;
 but we prefer not to write it as $\psi(T)^{-1}$ since it is not the usual matrix inverse as a 4-by-4 matrix.
 Indeed, $\psi(T)$ is not invertible as a 4-by-4 matrix.)

 In the case analysis below, we stratify the assignments in $\Omega_s$ based on the assignment to $\psi(\Lambda^s)$.
 The inputs to $\psi(\Lambda^s)$ are from $\{0, 1\}^2 \times \{0, 1\}^2$.
 However, we can combine the inputs $01$ and $10$, since $\psi(\Lambda^s)$ is redundant.
 Thus we actually stratify the assignments in $\Omega_s$ based on the assignment to $\Lambda^s$, which takes inputs from $\{0, 1, 2\} \times \{0, 1, 2\}$.
 In this compressed form,
 the row and column assignments to $\Lambda^s$ are the Hamming weight of the two actual binary valued inputs to the uncompressed form $\psi(\Lambda^s)$.

 Now we begin the case analysis on the values of $b_1$ and $b_2$.

 \begin{enumerate}
  \item Assume $b_1 = b_2 = 0$.
   In this case,
   \begin{align*}
     \psi(\Lambda^s) =
     \psi\left(      
     \begin{bmatrix}
       \lambda_1^s & 0 & 0 \\
       0 & \lambda_2^s & 0 \\
       0 & 0 & \lambda_3^s
     \end{bmatrix} \right)
     =
     \begin{bmatrix}
       \lambda_1^s & 0 & 0 & 0 \\
       0 & \lambda_2^s/2 & \lambda_2^s/2 & 0 \\
       0 & \lambda_2^s/2 & \lambda_2^s/2 & 0 \\
       0 & 0 & 0 & \lambda_3^s
     \end{bmatrix}.
   \end{align*}
   We only need to consider the assignments to $\Lambda^s$ that assign
   \begin{itemize}
    \item $(0,0)$ $i$ many times,
    \item $(1,1)$ $j$ many times, and
    \item $(2,2)$ $k$ many times
   \end{itemize}
  since any other assignment contributes a factor of 0.
  In particular, the $(1,1)$ case actually corresponds to the middle four entries in $\psi(\Lambda^s)$.
  We collect them together as they contribute the same factor.
  Let $c_{ijk}$ be the sum over all such assignments of the products of evaluations of
  all signatures in $\Omega_s$ except for $\Lambda^s$ (including the contributions from $T$ and $T^{-1}$).
  Note that this quantity is the same in $\Omega$ as in $\Omega_s$.
  In particular it does not depend on $s$.
  Then
  \begin{align*}
   \Holant_{\Omega}
   = \sum_{i + j + k = n} \frac{c_{i j k}}{2^j}.
  \end{align*}
  Note that the factor of $\frac{1}{2^j}$ comes from \eqref{eqn:compressed_identity_matrix}, the definition of $g$.
  The value of the Holant on $\Omega_s$, for $s \ge 1$, is
  \begin{align*}
   \Holant_{\Omega_s}
   = \sum_{i + j + k = n} \left(\lambda_1^i \lambda_2^j \lambda_3^k\right)^s \left(\frac{c_{i j k}}{2^j}\right).
  \end{align*}
  The coefficient matrix is Vandermonde,
  but it may not have full rank because it might be that $\lambda_1^i \lambda_2^j \lambda_3^k = \lambda_1^{i'} \lambda_2^{j'} \lambda_3^{k'}$ for some $(i,j,k) \ne (i',j',k')$,
  where $i + j + k = i' + j' + k' = n$.
  However, this is not a problem since we are only interested in the sum $\sum \frac{c_{i j k}}{2^j}$.
  If two coefficients are the same, we replace their corresponding unknowns $c_{i j k} / 2^j$ and $c_{i' j' k'} / 2^{j'}$ with their sum as a new variable.
  After all such combinations, we have a Vandermonde system of full rank.
  In particular,
  none of the entries are~$0$ since $\lambda_1 \lambda_2 \lambda_3 = \det(\widetilde{M_f}) \ne 0$.
  Therefore, we can solve the linear system and obtain the value of $\Holant_\Omega$.

 \item Assume $b_1 \ne b_2$.
  We can permute the Jordan blocks in $\Lambda$ so that $b_1 = 1$ and $b_2 = 0$, then $\lambda_1 = \lambda_2$, denoted by $\lambda$.
   In this case,
   \begin{align*}
     \psi(\Lambda^s) =
     \psi\left(      
     \begin{bmatrix}
       \lambda^s & s\lambda^{s-1} & 0 \\
       0 & \lambda^s & 0 \\
       0 & 0 & \lambda_3^s
     \end{bmatrix} \right)
     =
     \begin{bmatrix}
       \lambda^s & s\lambda^{s-1}/2 & s\lambda^{s-1}/2 & 0 \\
       0 & \lambda^s/2 & \lambda^s/2 & 0 \\
       0 & \lambda^s/2 & \lambda^s/2 & 0 \\
       0 & 0 & 0 & \lambda_3^s
     \end{bmatrix}.
   \end{align*}  
  We only need to consider the assignments to $\Lambda^s$ that assign
  \begin{itemize}
   \item $(0,0)$ $i$ many times,
   \item $(1,1)$ $j$ many times,
   \item $(2,2)$ $k$ many times, and
   \item $(0,1)$ $\ell$ many times
  \end{itemize}
  since any other assignment contributes a factor of 0.
  Let $c_{i j k \ell}$ be the sum over all such assignments of the products of evaluations of
  all signatures in $\Omega_s$ except for $\Lambda^s$ (including the contributions from $T$ and $T^{-1}$).
  Then
  \begin{align*}
   \Holant_{\Omega}
   = \sum_{i + j + k = n} \frac{c_{i j k 0}}{2^j}
  \end{align*}
  and the value of the Holant on $\Omega_s$, for $s \ge 1$, is
  \begin{align*}
   \Holant_{\Omega_s}
   &= \sum_{i + j + k + \ell = n} \lambda^{(i+j) s} \lambda_3^{k s} \left(s \lambda^{s - 1}\right)^\ell \left(\frac{c_{i j k \ell}}{2^{j+\ell}}\right)\\
   &= \lambda^{n s} \sum_{i + j + k + \ell = n} \left(\frac{\lambda_3}{\lambda}\right)^{k s} s^\ell \left(\frac{c_{i j k \ell}}{\lambda^\ell 2^{j+\ell}}\right).
  \end{align*}

  If $\lambda_3 / \lambda$ is a root of unity, then take a $t$ such that $(\lambda_3 / \lambda)^t = 1$.
  Then
  \begin{align*}
   \Holant_{\Omega_{s t}}
   &= \lambda^{n s t} \sum_{i + j + k + \ell = n} s^\ell \left(\frac{t^\ell c_{i j k \ell}}{\lambda^\ell 2^{j+\ell}}\right).
  \end{align*}
  For $s \ge 1$, this gives a coefficient matrix that is Vandermonde.
  Although this system is not full rank,
  we can replace all the unknowns $c_{i j k \ell} / 2^j$ having $i + j + k = n - \ell$
  by their sum to form new unknowns $c_\ell' = \sum_{i + j + k = n- \ell} \frac{c_{i j k \ell}}{2^j}$,
  where $0 \le \ell \le n$.
  The new unknown $c_0'$ is the Holant of $\Omega$ that we seek.
  The resulting Vandermonde system
  \begin{align*}
   \Holant_{\Omega_{s t}}
   &= \lambda^{n s t} \sum_{\ell = 0}^n s^\ell \left(\frac{t^\ell c_\ell'}{\lambda^\ell 2^\ell}\right)
  \end{align*}
  has full rank, so we can solve for the new unknowns and obtain the value of $\Holant_\Omega = c_0'$.

  If $\lambda_3 / \lambda$ is not a root of unity,
  then we replace all the unknowns $c_{i j k \ell} / (\lambda^\ell 2^{j+\ell})$ having $i + j = m$ with their sum to form new unknowns $c_{m k \ell}'$,
  for any $0 \le m, k, \ell$ and $m + k + \ell = n$.
  The Holant of $\Omega$ is now
  \begin{align*}
   \Holant_{\Omega}
   = \sum_{m + k = n} c_{m k 0}'
  \end{align*}
  and the value of the Holant on $\Omega_s$ is
  \begin{align*}
   \Holant_{\Omega_s}
   &= \lambda^{n s} \sum_{i + j + k + \ell = n} \left(\frac{\lambda_3}{\lambda}\right)^{k s} s^\ell \left(\frac{c_{i j k \ell}}{\lambda^\ell 2^{j+\ell}}\right)\\
   &= \lambda^{n s} \sum_{m + k + \ell = n} \left(\frac{\lambda_3}{\lambda}\right)^{k s} s^\ell c_{m k \ell}'.
  \end{align*}
  After a suitable reordering of the columns, the matrix of coefficients satisfies the hypothesis of Lemma~\ref{lem:two_vandermondes_nonsingular}.
  Therefore, the linear system has full rank.
  We can solve for the unknowns and obtain the value of $\Holant_\Omega$.

 \item Assume $b_1 = b_2 = 1$, and therefore $\lambda_1 = \lambda_2 = \lambda_3$, denoted by $\lambda$.
   In this case,
   \begin{align*}
     \psi(\Lambda^s) & =
     \psi\left(      
     \begin{bmatrix}
       \lambda^s & s\lambda^{s-1} & s(s-1)\lambda^{s-2}/2 \\
       0 & \lambda^s & s\lambda^{s-1} \\
       0 & 0 & \lambda^s
     \end{bmatrix} \right)\\
     & =
     \begin{bmatrix}
       \lambda^s & s\lambda^{s-1}/2 & s\lambda^{s-1}/2 & s(s-1)\lambda^{s-2}/2 \\
       0 & \lambda^s/2 & \lambda^s/2 & s\lambda^{s-1} \\
       0 & \lambda^s/2 & \lambda^s/2 & s\lambda^{s-1} \\
       0 & 0 & 0 & \lambda^s
     \end{bmatrix}.
   \end{align*}    
  We only need to consider the assignments to $\Lambda^s$ that assign
  \begin{itemize}
   \item $(0,0)$ or $(2,2)$ $i$ many times,
   \item $(1,1)$ $j$ many times,
   \item $(0,1)$ $k$ many times,
   \item $(1,2)$ $\ell$ many times, and
   \item $(0,2)$ $m$ many times
   \end{itemize}
  since any other assignment contributes a factor of 0.
  Let $c_{i j k \ell m}$ be the sum over all such assignments of the products of evaluations of
  all signatures in $\Omega_s$ except for $\Lambda^s$ (including the contributions from $T$ and $T^{-1}$).
  Then
  \begin{align*}
   \Holant_{\Omega}
   = \sum_{i + j = n} \frac{c_{i j 0 0 0}}{2^j}
  \end{align*}
  and the value of the Holant on $\Omega_s$, for $s \ge 1$, is
  \begin{align*}
   \Holant_{\Omega_s}
   &= \sum_{i + j + k + \ell + m = n} \lambda^{(i + j) s} \left(s \lambda^{s - 1}\right)^{k + \ell} \left(s (s - 1) \lambda^{s - 2}\right)^m \left(\frac{c_{i j k \ell m}}{2^{j+k+m}}\right)\\
   &= \lambda^{n s} \sum_{i + j + k + \ell + m = n} s^{k + \ell + m} (s - 1)^m \left(\frac{c_{i j k \ell m}}{\lambda^{k+\ell+2m} 2^{j+k+m}}\right).
  \end{align*}
  We replace all the unknowns $c_{i j k \ell m} / (\lambda^{k+\ell+2m} 2^{j+k+m})$ having $i + j = p$ and $k + \ell = q$ with their sum to form new unknowns $c_{p q m}'$,
  for any $0 \le p, q, m$ and $p + q + m = n$.
  The Holant of $\Omega$ is now $c_{n 0 0}'$.
  This new linear system is
  \begin{align*}
   \Holant_{\Omega_s}
   = \lambda^{n s} \sum_{p + q + m = n} s^{q + m} (s - 1)^m c_{p q m}'
  \end{align*}
  but is still rank deficient.
  We now index the columns by $(q, m)$, where $q \ge 0$, $m \ge 0$, and $q + m \le n$.
  Correspondingly, we rename the variables $x_{q,m} = c'_{p q m}$.
  Note that $p = n - q - m$ is determined by $(q, m)$.
  Observe that the column indexed by $(q,m)$ is the  sum of the columns indexed by $(q-1, m)$ and $(q-2, m+1)$ provided $q-2 \ge 0$.
  Namely, $s^{q+m} (s-1)^m = s^{q-1+m} (s-1)^m + s^{q-2+m+1} (s-1)^{m+1}$.
  Of course this is only meaningful if $q \ge 2$, $m \ge 0$ and $q + m \le n$.
  We write the linear system as
  \[\sum_{q \ge 0,\ m \ge 0,\ q + m \le n} \alpha_{q,m} x_{q,m} = \frac{\Holant_{\Omega_s}}{\lambda^{n s}},\]
  where $\alpha_{q,m} = s^{q+m} (s-1)^m$ are the coefficients.
  Hence $\alpha_{q,m} x_{q,m} = \alpha_{q-1,m} x_{q,m} + \alpha_{q-2,m+1} x_{q,m}$, and we define new variables
  \begin{align*}
   x_{q-1, m}   &\leftarrow x_{q, m} + x_{q-1, m}\\
   x_{q-2, m+1} &\leftarrow x_{q, m} + x_{q-2, m+1}
  \end{align*}
  from $q = n-m$ down to $2$ for every $0\leq m\leq n-2$.

  Observe that in each update, the newly defined variables have a decreased index value for $q$.
  A more crucial observation is that the column indexed by $(0, 0)$ is never updated.
  This is because, in order to be an updated entry, there must be some $q \ge 2$ and $m \ge 0$ such that $(q-1, m) = (0,0)$ or $(q-2, m+1) = (0,0)$, which is clearly impossible.
  Hence $x_{0,0} = c'_{n 0 0}$ is still the Holant value on $\Omega$.
  The $2 n + 1$ unknowns that remain are
  \[x_{0,0},\ x_{1,0},\ x_{0,1},\ x_{1,1},\ x_{0,2},\ x_{1,2},\ \dotsc,\ x_{0,n-1},\ x_{1,n-1},\ x_{0,n}\]
  and their coefficients in row $s$ are
  \[1, s, s (s - 1), s^2 (s - 1), s^2 (s - 1)^2, \dotsc, s^{n-1} (s - 1)^{n-1}, s^n (s - 1)^{n-1}, s^n (s - 1)^n.\]
  It is clear that the $\kappa$-th entry in this row is a monic polynomial in $s$ of degree $\kappa$, where $0 \le \kappa \le 2 n$,
  and thus $s^{\kappa}$ is a linear combination of the first $\kappa$ entries.
  It follows that the coefficient matrix is a product of the standard Vandermonde matrix multiplied to its right by an upper triangular matrix with all 1's on the diagonal.
  Therefore, the linear system has full rank.
  We can solve for these final unknowns and obtain the value of $\Holant_\Omega = x_{0,0} = c_{n00}'$.
 \end{enumerate}
\end{proof}

For an asymmetric signature, we often want to reorder the input bits under a circular permutation.
For a single counterclockwise rotation by $90^\circ$,
the effect on the entries of the signature matrix of an arity~4 signature is given in Figure~\ref{fig:rotate_asymmetric_signature}.

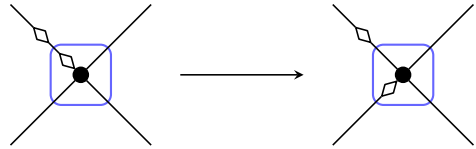
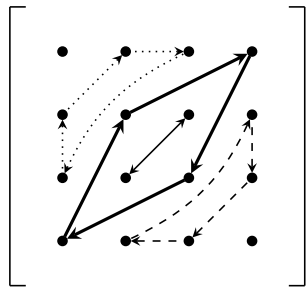
\begin{figure}[b!]
 \centering
 \def\capWidth{6cm}
 \captionsetup[subfigure]{width=\capWidth}
 \tikzstyle{entry} = [internal, inner sep=2pt]
 \subfloat[A counterclockwise rotation]{
  \begin{tikzpicture}[scale=\scale,transform shape,node distance=1.7 * \nodeDist,semithick]
   \node[internal]  (0)                    {};
   \node[external]  (1) [above  left of=0] {};
   \node[external]  (2) [above right of=0] {};
   \node[external]  (3) [below  left of=0] {};
   \node[external]  (4) [below right of=0] {};
   \node[external]  (5) [      right of=0] {};
   \node[external]  (6) [      right of=5] {};
   \node[internal]  (7) [      right of=6] {};
   \node[external]  (8) [above  left of=7] {};
   \node[external]  (9) [above right of=7] {};
   \node[external] (10) [below  left of=7] {};
   \node[external] (11) [below right of=7] {};
   \path (0) edge[postaction={decorate, decoration={
                                         markings,
                                         mark=at position 0.25 with {\arrow[>=diamond,white] {>}; },
                                         mark=at position 0.25 with {\arrow[>=open diamond]  {>}; },
                                         mark=at position 0.65 with {\arrow[>=diamond,white] {>}; },
                                         mark=at position 0.65 with {\arrow[>=open diamond]  {>}; } } }] (1)
             edge (2)
             edge (3)
             edge (4)
    (5.west) edge[->, >=stealth] (6.east)
         (7) edge[postaction={decorate, decoration={
                                         markings,
                                         mark=at position 0.65 with {\arrow[>=diamond,white] {>}; },
                                         mark=at position 0.65 with {\arrow[>=open diamond]  {>}; } } }] (8)
             edge (9)
             edge [postaction={decorate, decoration={
                                         markings,
                                         mark=at position 0.25 with {\arrow[>=diamond,white] {>}; },
                                         mark=at position 0.25 with {\arrow[>=open diamond]  {>}; } } }] (10)
             edge (11);
   \begin{pgfonlayer}{background}
    \node[draw=\borderColor,thick,rounded corners,fit = (0),inner sep=16pt] {};
    \node[draw=\borderColor,thick,rounded corners,fit = (7),inner sep=16pt] {};
   \end{pgfonlayer}
  \end{tikzpicture}}
 \qquad
 \subfloat[Movement of signature matrix entries]{
  \makebox[\capWidth][c]{
   \begin{tikzpicture}[scale=\scale,transform shape,>=stealth,node distance=\nodeDist,semithick]
    \node[entry] (11)               {};
    \node[entry] (12) [right of=11] {};
    \node[entry] (13) [right of=12] {};
    \node[entry] (14) [right of=13] {};
    \node[entry] (21) [below of=11] {};
    \node[entry] (22) [right of=21] {};
    \node[entry] (23) [right of=22] {};
    \node[entry] (24) [right of=23] {};
    \node[entry] (31) [below of=21] {};
    \node[entry] (32) [right of=31] {};
    \node[entry] (33) [right of=32] {};
    \node[entry] (34) [right of=33] {};
    \node[entry] (41) [below of=31] {};
    \node[entry] (42) [right of=41] {};
    \node[entry] (43) [right of=42] {};
    \node[entry] (44) [right of=43] {};
    \node[external] (nw) [above left  of=11] {};
    \node[external] (ne) [above right of=14] {};
    \node[external] (sw) [below left  of=41] {};
    \node[external] (se) [below right of=44] {};
    \path (13) edge[<-, dotted]                (12)
          (12) edge[<-, dotted]                (21)
          (21) edge[<-, dotted]                (31)
          (31) edge[<-, dotted,out=65,in=-155] (13)
          (42) edge[<-, dashed]                (43)
          (43) edge[<-, dashed]                (34)
          (34) edge[<-, dashed]                (24)
          (24) edge[<-, dashed,out=-115,in=25] (42)
          (14) edge[<-, very thick]            (22)
          (22) edge[<-, very thick]            (41)
          (41) edge[<-, very thick]            (33)
          (33) edge[<-, very thick]            (14)
          (23) edge[<->]                      (32);
    \path (nw.west) edge (sw.west)
          (ne.east) edge (se.east)
          (nw.west) edge (nw.east)
          (sw.west) edge (sw.east)
          (ne.west) edge (ne.east)
          (se.west) edge (se.east);
   \end{tikzpicture}}}
 \caption{The movement of the entries in the signature matrix of an arity~$4$ signature under a counterclockwise rotation of the input edges.
  Entires of Hamming weight~$1$ are in the dotted cycle,
  entires of Hamming weight~$2$ are in the two solid cycles (one has length~$4$ and the other one is a swap),
  and entries of Hamming weight~$3$ are in the dashed cycle.}
 \label{fig:rotate_asymmetric_signature}
\end{figure}

We ultimately derive most of our $\SHARPP$-hardness results through Lemma~\ref{lem:arity4:avg_sig_hard}.
This is done by a reduction from the problem of counting Eulerian orientations on 4-regular graphs, 
which is the Holant problem $\holant{[0,1,0]}{[0,0,1,0,0]}$.
Recall (from Section~\ref{sec:intro}) that under a holographic transformation by $\left[\begin{smallmatrix} 1 & 1 \\ i & -i \end{smallmatrix} \right]$,
this bipartite Holant problem becomes the Holant problem $\Holant([1,0,\tfrac{1}{3},0,1])$ up to a nonzero constant factor.

\begin{theorem}[Theorem~V.10 in~\cite{HL12}] \label{thm:4reg_EO_hard}
 \textsc{Counting-Eulerian-Orientations} is $\SHARPP$-hard for 4-regular graphs.
\end{theorem}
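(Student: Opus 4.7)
The plan is to reduce from counting Eulerian orientations on graphs with no degree restriction, which is already known to be $\SHARPP$-hard (e.g., via the classical results of Mihail--Winkler). Equivalently, by the holographic transformation by $Z$ noted just above the theorem, I will work with $\Holant([3,0,1,0,3])$ on $4$-regular graphs. The central question is how to simulate a vertex of arbitrary even degree $2k$ by a gadget composed only of degree-$4$ vertices whose aggregated signature is proportional to the Eulerian constraint $[0,\dots,0,1,0,\dots,0]$ of arity $2k$ (i.e., $1$ at Hamming weight $k$, zero elsewhere).

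First I would try the ``cycle gadget'': replace a degree-$2k$ vertex $v$ by a cycle on $k$ new vertices $v_1,\dots,v_k$, where each $v_i$ inherits $2$ of the edges formerly incident to $v$ and is connected to $v_{i-1}$ and $v_{i+1}$ by a single cycle edge. The local Eulerian constraint $[0,0,1,0,0]$ at each $v_i$ couples its two external edges with its two internal cycle edges. A transfer-matrix computation (iterating a fixed small matrix around the cycle and taking the trace) then yields the external signature as a symmetric function of the $2k$ boundary orientations; up to a global factor depending only on $k$, I expect this to be proportional to the Eulerian constraint at $v$. If so, dividing out the global factor completes the reduction. As a warm-up sanity check, I would verify the $k=3$ case by direct enumeration before attempting the general $k$.

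The main obstacle is that the cycle gadget might instead realize a nonzero linear combination of several symmetric constraints of different weights on the boundary, rather than exactly the target Eulerian constraint. If this happens, the fallback is polynomial interpolation: introduce a one-parameter family of gadgets (for example, by varying the number of parallel internal edges, or by inserting additional degree-$4$ vertices between consecutive cycle vertices to ``thicken'' the cycle), express each resulting boundary signature as a polynomial in the parameter, and invert a Vandermonde-style system to isolate the pure Eulerian component. This requires checking eigenvalue genericity of the transfer matrix---in particular, ruling out root-of-unity collisions and degenerate Jordan structure---which is structurally similar to the case analysis inside Lemma~\ref{lem:arity4:get_avg_sig}. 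I expect the direct combinatorial cycle argument to suffice once the parity and symmetry of the internal configurations are tracked carefully, but the interpolation route provides a robust safety net in case the naive gadget is contaminated by lower-weight terms.
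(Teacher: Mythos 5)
First, a point of orientation: the paper does not prove this statement at all --- it is imported as Theorem~V.10 of~\cite{HL12} and used as a black box --- so you were in effect reproving Huang and Lu's theorem, and your route (reduction from unrestricted \#EO via degree-reducing gadgets) is not the one used there. Unfortunately the proposal as written has a genuine gap: the central claim that the cycle gadget realizes the exact-half constraint is false. Two things are worth separating. (a) In the correct orientation semantics (internal connections carry $\neq_2$, i.e.\ each internal edge is oriented), a simple counting argument shows that \emph{any} gadget built from Eulerian degree-4 vertices with $2k$ dangling edges has boundary support confined to Hamming weight exactly $k$: total in-degree is $2V$, each internal edge supplies exactly one unit of in-degree, and $E_{\mathrm{int}}=2V-k$, so exactly $k$ danglers point inward. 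Thus the ``contamination by lower-weight terms'' you feared cannot occur. (b) The actual failure is non-uniformity \emph{within} weight $k$: writing $s_i\in\{0,1,2\}$ for the number of inward danglers at cycle vertex $v_i$ and $t_i$ for the orientation of cycle edge $e_i$, the local constraints read $t_i-t_{i-1}=1-s_i$, so the number of internal completions is $2$ when all $s_i=1$, is $1$ for other consistent patterns, and is $0$ for patterns such as $s=(0,0,2,2)$ on a $4$-cycle. Hence the boundary signature is not proportional to the Eulerian constraint (it is not even symmetric in the $2k$ boundary edges), and for $k\ge 4$ it annihilates legitimate local Eulerian configurations, so the reduction loses those orientations entirely and no global scalar factor can recover them. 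Your own $k=3$ sanity check would already have exposed the value-$2$ anomaly on the ``one-in-per-pair'' patterns.

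The interpolation fallback is too underspecified to repair this. The gadget signatures in your proposed families are asymmetric and their values depend on how the danglers are paired around the cycle, so the quantities you would need to separate are indexed by boundary-pattern classes at each replaced vertex; since Mihail--Winkler hardness is for unbounded degree, the number of classes (and hence unknowns) is not under control, and a one-parameter Vandermonde argument cannot recover contributions that lie in a gadget's zero set, because those vanish for every value of the parameter. To make a proof along these lines you would need either a gadget whose weight-$k$ values are provably uniform (none is exhibited, and the natural candidates fail), or a prior reduction of \#EO to bounded degree (itself requiring proof), or a different strategy altogether, e.g.\ the Holant-based argument of~\cite{HL12}. As it stands, the proposal does not establish the theorem.
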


\begin{lemma} \label{lem:arity4:avg_sig_hard}
 Let $g$ be the arity~4 signature with $M_g$ given in~(\ref{eqn:compressed_identity_matrix}) so that $\widetilde{M_g} = I_3$.
 Then $\Holant(g)$ is $\SHARPP$-hard.
\end{lemma}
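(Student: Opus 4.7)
The plan is to reduce from $\Holant([1,0,\tfrac{1}{3},0,1])$, which is $\SHARPP$-hard by Theorem~\ref{thm:4reg_EO_hard} combined with the holographic transformation by $Z = \tfrac{1}{\sqrt{2}}\left[\begin{smallmatrix} 1 & 1 \\ i & -i \end{smallmatrix}\right]$. Since $M_g = \psi(I_3)$ is the identity of the semi-group of redundant $4$-by-$4$ matrices, any $\{g\}$-gate obtained by linking two copies of $g$ reproduces $g$ itself, so the construction must use at least three copies of $g$.

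I plan to construct a primitive arity-$4$ $\{g\}$-gate $F$ from three copies of $g$ in a non-chain configuration, and verify by direct computation that $M_F$ is redundant with compressed matrix $\widetilde{M_F}$ lying in the three-dimensional commutative algebra $\mathcal{A}$ of matrices of the shape $\left[\begin{smallmatrix} a & 0 & b \\ 0 & c & 0 \\ b & 0 & a \end{smallmatrix}\right]$ but not equal to $I_3$. The algebra $\mathcal{A}$ is closed under multiplication and also contains $\widetilde{M_{[1,0,1/3,0,1]}}$; the two distinct eigenvalues $\lambda_1, \lambda_2$ of $\widetilde{M_F}$ are to be chosen so that their ratio is neither zero nor a root of unity. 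Both $\widetilde{M_F}$ and $\widetilde{M_{[1,0,1/3,0,1]}}$ share the same eigenprojections $P_1, P_2$ inside $\mathcal{A}$, so the latter has a spectral decomposition $\widetilde{M_{[1,0,1/3,0,1]}} = \mu_1 P_1 + \mu_2 P_2$ with explicit $\mu_1, \mu_2$.

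Given an instance of $\Holant([1,0,\tfrac{1}{3},0,1])$ on a 4-regular graph $G$ with $n$ vertices, let $\Omega_k$ denote the signature grid obtained by replacing each vertex with the chain $N_k$ of $k$ copies of $F$, whose compressed signature matrix is $\widetilde{M_F}^k = \lambda_1^k P_1 + \lambda_2^k P_2$. Stratifying the Holant by the eigenprojection assignment at each vertex (as in Case~1 of the proof of Lemma~\ref{lem:arity4:get_avg_sig}) yields $\Holant(g, \Omega_k) = \sum_{j=0}^{n} c_j \,\lambda_1^{k(n-j)} \lambda_2^{kj}$, with coefficients $c_j$ depending only on $G$. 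Because $\lambda_2/\lambda_1$ is not a root of unity, a Vandermonde-type argument over $k = 0, 1, \dots, n$ recovers all the $c_j$, and the target Holant is then assembled as $\Holant([1,0,\tfrac{1}{3},0,1], G) = \sum_{j=0}^{n} c_j \,\mu_1^{n-j} \mu_2^{j}$.

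The main obstacle is the explicit construction of a primitive $F$ with the required spectral properties. The key steps are to compute the $16$ entries of $M_F$ for a candidate three-$g$-vertex gadget, verify the redundancy and the checkerboard shape of $\widetilde{M_F}$, diagonalize the resulting $3$-by-$3$ matrix, check that its eigenvalues are nonzero with ratio not a root of unity, and verify that $\widetilde{M_{[1,0,1/3,0,1]}}$ lies in the polynomial algebra generated by $\widetilde{M_F}$ and $I_3$. Handling any Jordan-block degeneracies that might arise from special eigenvalue coincidences would follow the same case analysis already carried out in the proof of Lemma~\ref{lem:arity4:get_avg_sig}.
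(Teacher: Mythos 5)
Your high-level strategy---exact interpolation against a $\{g\}$-gate $F$ whose compressed matrix shares eigenprojections with $\widetilde{M_{[1,0,1/3,0,1]}}$---is viable and genuinely different from the paper's proof, which never interpolates: it builds the recursive family $N_k$ of Figure~\ref{fig:gadget:approximation}, shows its signatures converge exponentially fast to $[1,0,\tfrac13,0,1]$, and recovers the exact Holant by rounding, using the fact that the target value is a rational with denominator $3^n$. But as submitted your argument has a genuine gap: the existence of the gadget $F$ with the required spectral properties is precisely the step you defer (``the main obstacle''), and every later step depends on it. Worse, the claim you use to motivate a three-copy search is false: only the \emph{straight chain} linking of two copies of $g$ reproduces $g$ (that is the semigroup identity $\widetilde{M_g}=I_3$ acting by matrix product); a twisted two-copy linking does not. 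In fact the paper's own two-vertex gadget $N_1$ in Figure~\ref{fig:gadget:approximation} already is a suitable $F$: its signature matrix has corner entries $\tfrac12$ and middle block entries $\tfrac14$, so $\widetilde{M_{N_1}}=\left[\begin{smallmatrix} 1 & 0 & 1/2 \\ 0 & 1/2 & 0 \\ 1/2 & 0 & 1 \end{smallmatrix}\right]$, with nonzero eigenvalues $\tfrac32,\tfrac12,\tfrac12$ and eigenvalue ratio $3$, which is not a root of unity.

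Once such an $F$ is exhibited, the rest of your plan does go through, and there is a clean structural reason you could use instead of brute-force entry computation: $g=\tfrac12\bigl([x_1{=}x_4][x_2{=}x_3]+[x_1{=}x_3][x_2{=}x_4]\bigr)$, so expanding over the choice made at each vertex shows that every arity-$4$ $\{g\}$-gate is a nonnegative rational combination of the three pair-partition signatures $E_{12}E_{34}$, $E_{13}E_{24}$, $E_{14}E_{23}$; every \emph{redundant} one therefore has compressed matrix $aH+2bI_3$ with $H=\left[\begin{smallmatrix}1&0&1\\0&0&0\\1&0&1\end{smallmatrix}\right]$ and $a,b\ge 0$, eigenvalues $2a+2b$ and $2b$ (twice), and eigenprojections independent of $(a,b)$. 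The target satisfies $\widetilde{M_{[1,0,1/3,0,1]}}=\tfrac13 H+\tfrac23 I_3=\tfrac13 I_3+\tfrac23\widetilde{M_{N_1}}$, so it lies in the same two-dimensional commutative algebra, the stratified sums $\Holant_{\Omega_k}=\sum_{j=0}^n\bigl(\lambda_1^{\,n-j}\lambda_2^{\,j}\bigr)^k c_j$ have pairwise distinct bases because $\lambda_1/\lambda_2=1+a/b>1$, and the target is assembled as $\sum_j \mu_1^{\,n-j}\mu_2^{\,j}c_j$ with $\mu_1=\tfrac43$, $\mu_2=\tfrac23$, exactly as you sketch (with the $c_j$ defined via the common projections $\psi(P_1),\psi(P_2)$, as in Case~1 of Lemma~\ref{lem:arity4:get_avg_sig}). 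So your route can be completed and would replace the paper's approximation-plus-rounding argument by a reuse of its interpolation machinery; but until the gadget $F$ and these spectral facts are actually verified, the proof is incomplete.
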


\begin{proof}
 We reduce from the Eulerian orientation problem $\Holant(\mathscr{O})$, where $\mathscr{O} = [1,0,\tfrac{1}{3},0,1]$, which is $\SHARPP$-hard by Theorem~\ref{thm:4reg_EO_hard}.
 We achieve this via an arbitrarily close approximation using the recursive construction in Figure~\ref{fig:gadget:approximation} with $g$ assigned to every vertex.

 We claim that the signature matrix $ M_{N_k}$ of Gadget $N_k$ is
 \begin{align*}
  M_{N_k} =
  \begin{bmatrix}
     1 &       0 &       0 & a_k\\
     0 & a_{k+1} & a_{k+1} &   0\\
     0 & a_{k+1} & a_{k+1} &   0\\
   a_k &       0 &       0 &   1
  \end{bmatrix},
 \end{align*}
 where $a_k = \tfrac{1}{3} - \tfrac{1}{3} \left(-\tfrac{1}{2}\right)^k$.
 This is true for $N_0$.
 Inductively assume $M_{N_k}$ has this form.
 Then the rotated form of the signature matrix for $N_{k}$, as described in Figure~\ref{fig:rotate_asymmetric_signature}, is
 \begin{equation} \label{eqn:rotated-N_k}
  \begin{bmatrix}
   1       & 0       & 0       & a_{k+1}\\
   0       & a_{k}   & a_{k+1} & 0\\
   0       & a_{k+1} & a_{k}   & 0\\
   a_{k+1} & 0       & 0       & 1
  \end{bmatrix}.
 \end{equation}
 The action of $g$ on the far right side of $N_{k+1}$ is to replace each of the middle two entries in the middle two rows of the matrix in~(\ref{eqn:rotated-N_k}) with their average,
 $(a_{k} + a_{k+1})/2 = a_{k+2}$.
 This gives $M_{N_{k+1}}$.

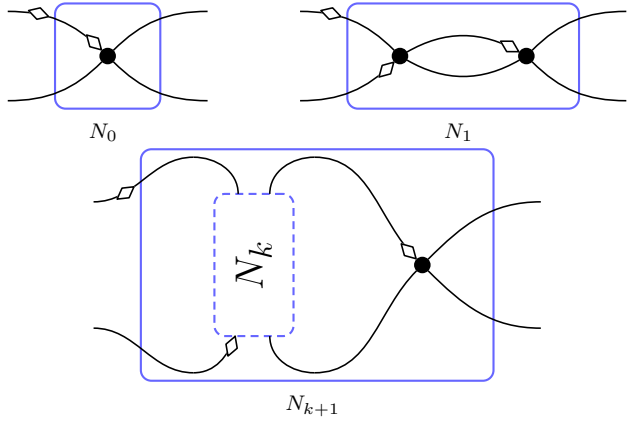
\begin{figure}[t]
 \centering
 \captionsetup[subfigure]{labelformat=empty}
 \subfloat[$N_0$]{
  \begin{tikzpicture}[scale=\scale,transform shape,node distance=\nodeDist,semithick]
   \node[external] (0)                    {};
   \node[external] (1) [right       of=0] {};
   \node[internal] (2) [below right of=1] {};
   \node[external] (3) [below left  of=2] {};
   \node[external] (4) [left        of=3] {};
   \node[external] (5) [above right of=2] {};
   \node[external] (6) [right       of=5] {};
   \node[external] (7) [below right of=2] {};
   \node[external] (8) [right       of=7] {};
    \path (0) edge[out=   0, in=135, postaction={decorate, decoration={
                                                            markings,
                                                            mark=at position 0.4   with {\arrow[>=diamond,white] {>}; },
                                                            mark=at position 0.4   with {\arrow[>=open diamond]  {>}; },
                                                            mark=at position 0.999 with {\arrow[>=diamond,white] {>}; },
                                                            mark=at position 1     with {\arrow[>=open diamond]  {>}; } } }] (2)
          (2) edge[out=-135, in=  0] (4)
              edge[out=  45, in=180] (6)
              edge[out= -45, in=180] (8);
   \begin{pgfonlayer}{background}
    \node[inner sep=0pt,transform shape=false,draw=\borderColor,thick,rounded corners,fit = (1) (3) (5) (7)] {};
   \end{pgfonlayer}
  \end{tikzpicture}}
 \qquad
 \subfloat[$N_1$]{
  \begin{tikzpicture}[scale=\scale,transform shape,node distance=\nodeDist,semithick]
   \node[external]  (0)                    {};
   \node[external]  (1) [right of=0]       {};
   \node[internal]  (2) [below right of=1] {};
   \node[external]  (3) [below left  of=2] {};
   \node[external]  (4) [left        of=3] {};
   \node[external]  (5) [right       of=2] {};
   \node[internal]  (6) [right       of=5] {};
   \node[external]  (7) [above right of=6] {};
   \node[external]  (8) [right       of=7] {};
   \node[external]  (9) [below right of=6] {};
   \node[external] (10) [right       of=9] {};
   \path (0) edge[out=  0, in= 135, postaction={decorate, decoration={
                                                           markings,
                                                           mark=at position 0.4   with {\arrow[>=diamond,white] {>}; },
                                                           mark=at position 0.4   with {\arrow[>=open diamond]  {>}; } } }] (2)
         (4) edge[out=  0, in=-135, postaction={decorate, decoration={
                                                           markings,
                                                           mark=at position 0.999 with {\arrow[>=diamond,white] {>}; },
                                                           mark=at position 1     with {\arrow[>=open diamond]  {>}; } } }] (2)
         (2) edge[bend left,        postaction={decorate, decoration={
                                                           markings,
                                                           mark=at position 0.999 with {\arrow[>=diamond,white] {>}; },
                                                           mark=at position 1     with {\arrow[>=open diamond]  {>}; } } }] (6)
             edge[bend right]        (6)
         (6) edge[out= 45, in= 180]  (8)
             edge[out=-45, in= 180] (10);
   \begin{pgfonlayer}{background}
    \node[inner sep=0pt,transform shape=false,draw=\borderColor,thick,rounded corners,fit = (1) (3) (7) (9)] {};
   \end{pgfonlayer}
  \end{tikzpicture}}
 \qquad
 \subfloat[$N_{k+1}$]{
  \begin{tikzpicture}[scale=\scale,transform shape,node distance=\nodeDist,semithick]
   \node[external] (0)              {};
   \node[external] (1) [above of=0] {};
   \node[external] (2) [below of=0] {};
   \node[external] (3) [right of=0] {};
   \node[external] (4) [above of=3] {};
   \node[external] (5) [below of=3] {};
   \path let
          \p1 = (0),
          \p2 = (3)
         in
          node[external] at (\x1 / 2 + \x2 / 2, \y1) {\Huge \begin{sideways}$N_k$\end{sideways}};
   \path let
          \p1 = (1),
          \p2 = (4)
         in
          node[external] (6) at (3 * \x1 / 4 + \x2 / 4, \y1) {};
   \path let
          \p1 = (1),
          \p2 = (4)
         in
          node[external] (7) at (\x1 / 4 + 3 * \x2 / 4, \y1) {};
   \path let
          \p1 = (2),
          \p2 = (5)
         in
          node[external] (8) at (3 * \x1 / 4 + \x2 / 4, \y1) {};
   \path let
          \p1 = (2),
          \p2 = (5)
         in
          node[external] (9) at (\x1 / 4 + 3 * \x2 / 4, \y1) {};
   \node[external] (10) [above left  of=6]  {};
   \node[external] (11) [below left  of=8]  {};
   \node[external] (12) [below left  of=10] {};
   \node[external] (13) [above left  of=11] {};
   \node[external] (14) [left        of=12] {};
   \node[external] (15) [left        of=13] {};
   \node[external] (16) [above right of=7]  {};
   \node[external] (17) [below right of=9]  {};
   \node[external] (18) [below right of=16] {};
   \node[external] (19) [above right of=17] {};
   \node[external] (n1) [right       of=18] {};
   \path let
          \p1 = (n1),
          \p2 = (18),
          \p3 = (19)
         in
          node[internal] (20) at (\x1, \y2 / 2 + \y3 / 2) {};
   \node[external] (n2) [right of=20] {};
   \node[external] (n3) [right of=n2] {};
   \path let
          \p1 = (n3),
          \p2 = (14)
         in
          node[external] (21) at (\x1, \y2) {};
   \path let
          \p1 = (n3),
          \p2 = (15)
         in
          node[external] (22) at (\x1, \y2) {};
   \path (6)         edge[out=  90, in=   0]     (10.center)
         (11.center) edge[out=   0, in=-110, postaction={decorate, decoration={
                                                                    markings,
                                                                    mark=at position 0.999 with {\arrow[>=diamond,white] {>}; },
                                                                    mark=at position 1     with {\arrow[>=open diamond]  {>}; } } }] (8)
         (14)        edge[out=   0, in= 180, postaction={decorate, decoration={
                                                                    markings,
                                                                    mark=at position 0.4   with {\arrow[>=diamond,white] {>}; },
                                                                    mark=at position 0.4   with {\arrow[>=open diamond]  {>}; } } }] (10.center)
         (11.center) edge[out= 180, in=   0]     (15)
         (7)         edge[out=  90, in= 180]     (16.center)
         (9)         edge[out= -90, in= 180]     (17.center)
         (16.center) edge[out=   0, in= 135, postaction={decorate, decoration={
                                                                    markings,
                                                                    mark=at position 0.999 with {\arrow[>=diamond,white] {>}; },
                                                                    mark=at position 0.999 with {\arrow[>=open diamond]  {>}; } } }] (20)
         (17.center) edge[out=   0, in=-135]     (20)
         (20)        edge[out=  45, in= 180]     (21)
                     edge[out= -45, in= 180]     (22);
   \begin{pgfonlayer}{background}
    \node[inner sep=0pt,transform shape=false,draw=\borderColor,thick,densely dashed,rounded corners,fit = (1) (2) (4) (5)] {};
    \node[inner sep=0pt,transform shape=false,draw=\borderColor,thick,rounded corners,fit = (10) (11) (12) (13) (16) (17) (n2)] {};
   \end{pgfonlayer}
  \end{tikzpicture}}
 \caption{Recursive construction to approximate $[1,0,\tfrac{1}{3},0,1]$. Vertices are assigned $g$.}
 \label{fig:gadget:approximation}
\end{figure}

 Let $G$ be a graph with $n$ vertices and $H_{\mathscr{O}}$ (resp.~$H_{N_k}$) be the Holant value on $G$ with all vertices assigned $\mathscr{O}$ (resp.~$N_k$).
 Since each signature entry in $\mathscr{O}$ can be expressed as a rational number with denominator 3,
 each term in the sum of $H_{\mathscr{O}}$ can be expressed as a rational number with denominator $3^n$,
 and $H_{\mathscr{O}}$ itself is a sum of $2^{2n}$ such terms, where $2 n$ is the number of edges in $G$.
 If the error $|H_{N_k} - H_{\mathscr{O}}|$ is at most $1 / 3^{n+1}$,
 then we can recover $H_{\mathscr{O}}$ from $H_{N_k}$ by selecting the nearest rational number to $H_{N_k}$ with denominator $3^n$.

 For each signature entry $x$ in $M_{\mathscr{O}}$, its corresponding entry $\tilde{x}$ in $M_{N_k}$ satisfies $|\tilde{x} - x| \le x/2^k$.
 Then for each term $t$ in the Holant sum $H_{\mathscr{O}}$,
 its corresponding term $\tilde{t}$ in the sum $H_{N_k}$ satisfies $t (1 - 1/2^k)^n \le \tilde{t} \le t (1 + 1/2^k)^n$,
 thus $-t (1 - (1 - 1/2^k)^n) \le \tilde{t} - t \le t ((1 + 1/2^k)^n - 1)$.
 Since $1 - (1 - 1/2^k)^n \le (1 + 1/2^k)^n - 1$, we get $|\tilde{t} - t| \le t ((1 + 1/2^k)^n - 1)$.
 Also each term $t \le 1$.
 Hence \[|H_{N_k} - H_{\mathscr{O}}| \le 2^{2n} ((1 + 1/2^k)^n - 1) < 1/3^{n+1},\] if we take $k = 4n$.
\end{proof}

We summarize our progress with the following corollary,
which combines Lemmas~\ref{lem:arity4:get_avg_sig} and~\ref{lem:arity4:avg_sig_hard}.

\begin{corollary} \label{cor:arity4:nonsingular_compressed_hard}
 Let $f$ be an arity~4 signature with complex weights.
 If $M_f$ is redundant and $\widetilde{M_f}$ is nonsingular,
 then $\Holant(f)$ is $\SHARPP$-hard.
\end{corollary}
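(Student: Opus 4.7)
The plan is to chain together Lemma~\ref{lem:arity4:get_avg_sig} and Lemma~\ref{lem:arity4:avg_sig_hard} via a simple transitive reduction. Set $\mathcal{F} = \{f\}$ in Lemma~\ref{lem:arity4:get_avg_sig}: since $M_f$ is redundant and $\widetilde{M_f}$ is nonsingular by hypothesis, we obtain
\[
 \Holant(\{f, g\}) \le_T \Holant(f),
\]
where $g$ is the specific arity~$4$ signature whose signature matrix $M_g$ is given by~(\ref{eqn:compressed_identity_matrix}), i.e.\ $\widetilde{M_g} = I_3$. In particular, $\Holant(g) \le_T \Holant(\{f, g\})$ trivially, so composing the reductions yields $\Holant(g) \le_T \Holant(f)$.

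Now invoke Lemma~\ref{lem:arity4:avg_sig_hard}, which asserts that $\Holant(g)$ is $\SHARPP$-hard. Hardness transports along the Turing reduction, so $\Holant(f)$ is $\SHARPP$-hard as well, completing the proof of the corollary.

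There is essentially no obstacle here; the corollary is a pure packaging of the two lemmas. All of the technical work has already been done: the interpolation machinery in Lemma~\ref{lem:arity4:get_avg_sig} (which handles the three Jordan-form cases via Vandermonde and Lemma~\ref{lem:two_vandermondes_nonsingular}) produces $g$ from $f$, and the approximation/recursive-gadget argument in Lemma~\ref{lem:arity4:avg_sig_hard} reduces counting Eulerian orientations on $4$-regular graphs (Theorem~\ref{thm:4reg_EO_hard}) to $\Holant(g)$. The only thing to verify in writing out the corollary is that the hypothesis of Lemma~\ref{lem:arity4:get_avg_sig} is exactly the hypothesis of the corollary, which it is.
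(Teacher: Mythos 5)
Your proposal is correct and is exactly how the paper obtains this corollary: the paper states it as a direct combination of Lemma~\ref{lem:arity4:get_avg_sig} (applied with $\mathcal{F}=\{f\}$) and Lemma~\ref{lem:arity4:avg_sig_hard}, with no additional argument needed. Your chaining of the reductions $\Holant(g)\le_T\Holant(\{f,g\})\le_T\Holant(f)$ is the intended proof.
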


In order to make Corollary~\ref{cor:arity4:nonsingular_compressed_hard} more applicable,
we show that for an arity~4 signature $f$,
the redundancy of $M_f$ and the nonsingularity of $\widetilde{M_f}$ are invariant under an invertible holographic transformation.

\begin{lemma} \label{lem:arity4:nonsingular_compressed_invariant}
 Let $f$ be an arity~4 signature with complex weights, $T \in \mathbb{C}^{2 \times 2}$ a matrix, and $\hat{f} = T^{\otimes 4} f$.
 If $M_f$ is redundant,
 then $M_{\hat{f}}$ is also redundant and $\det(\varphi(M_{\hat{f}})) = \det(\varphi(M_f)) \det(T)^6$.
\end{lemma}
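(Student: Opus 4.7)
The plan is to derive a precise transformation law for the full signature matrix under $T^{\otimes 4}$ and then pass to the compressed matrix by exploiting that $T \otimes T$ preserves the symmetric subspace of $(\mathbb{C}^2)^{\otimes 2}$.

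First I would verify that redundancy is preserved. The equality of the middle two rows of $M_f$ is equivalent to $f^{01CD} = f^{10CD}$ for all $C,D$, i.e., $f$ is invariant under swapping its first two input variables; equality of the middle two columns (after accounting for the reversed DC ordering) is invariance under swapping its last two variables. Since $T^{\otimes 4}$ applies the same $T$ to each tensor factor, it commutes with every coordinate permutation in $S_4$, so both swap-symmetries pass to $\hat f = T^{\otimes 4} f$, and hence $M_{\hat f}$ is again redundant.

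Next, expanding $\hat f^{A'B'C'D'} = \sum T^{A'}_A T^{B'}_B T^{C'}_C T^{D'}_D f^{ABCD}$ entrywise and being careful with the DC column-index convention, I obtain the clean transformation law
\[
 M_{\hat f} = (T \otimes T)\, M_f\, (T \otimes T)^{T}.
\]
The transpose on the right comes precisely from the DC-reversal on the column side.

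Now the key structural observation: $T \otimes T$ preserves the $3$-dimensional symmetric subspace $\mathrm{Sym}^2(\mathbb{C}^2) \subset (\mathbb{C}^2)^{\otimes 2}$, because tensor-squaring a single matrix is invariant under swapping the two factors. The matrix $B$ is a linear injection onto $\mathrm{Sym}^2$ and $A$ is its left inverse ($AB = I_3$) whose kernel is the antisymmetric subspace (which is also preserved by $T \otimes T$). Therefore there is a unique $3\times 3$ matrix $S$ with
\[
 (T \otimes T)\, B = B S, \qquad A\,(T \otimes T) = S A, \qquad S = A (T \otimes T) B.
\]
Since $(T \otimes T)^T = T^T \otimes T^T$ is also a tensor-square and preserves $\mathrm{Sym}^2$, the analogous statement gives $(T \otimes T)^T B = B R$ with $R = A(T \otimes T)^T B$. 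Substituting into $\widetilde{M_{\hat f}} = A M_{\hat f} B$ yields
\[
 \widetilde{M_{\hat f}} \;=\; A(T \otimes T) M_f (T \otimes T)^T B \;=\; S\, A M_f B\, R \;=\; S\, \widetilde{M_f}\, R.
\]

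Finally, I would compute $\det S = \det T^3$: this is the determinant of the induced action of $T$ on $\mathrm{Sym}^2(\mathbb{C}^2)$, which can be verified by an elementary $3\times 3$ calculation after writing out $S$ for a general $T = \bigl[\begin{smallmatrix} a & b \\ c & d \end{smallmatrix}\bigr]$, factoring out $(ad - bc)$. The analogous computation for $R$, which is the induced map of $T^T$ on $\mathrm{Sym}^2$, gives $\det R = \det T^3$ as well. Taking determinants of $\widetilde{M_{\hat f}} = S \widetilde{M_f} R$ then yields $\det(\varphi(M_{\hat f})) = \det(T)^6 \det(\varphi(M_f))$. The main obstacle is just getting the conventions right, especially justifying the transpose in the transformation law from the reversed DC column ordering; the rest is straightforward linear algebra leveraging the invariance of $\mathrm{Sym}^2$ under $T \otimes T$.
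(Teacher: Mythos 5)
Correct, and essentially the paper's proof: you derive the same transformation law $M_{\hat{f}} = T^{\otimes 2} M_f (\transpose{T})^{\otimes 2}$ and the same factorization $\varphi(M_{\hat{f}}) = \varphi(T^{\otimes 2})\,\varphi(M_f)\,\varphi\!\left((\transpose{T})^{\otimes 2}\right)$ --- your $S$ and $R$ are exactly $\varphi(T^{\otimes 2})$ and $\varphi((\transpose{T})^{\otimes 2})$ --- together with $\det \varphi(T^{\otimes 2}) = \det \varphi((\transpose{T})^{\otimes 2}) = \det(T)^3$, the only difference being that you justify the middle factorization by the invariance of the symmetric/antisymmetric subspaces under $T \otimes T$ (the intertwining identities $A(T\otimes T) = SA$, $(T\otimes T)^{\transpose}B = BR$), whereas the paper inserts $BA$ next to the redundant $M_f$. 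One small correction of attribution: the transpose on the right is not produced by the DC column reversal (it is already present with columns ordered $CD$); the reversal is immaterial because the swap matrix $\mathcal{E}$ satisfies $\mathcal{E}\, T^{\otimes 2} \mathcal{E} = T^{\otimes 2}$, which is exactly how the paper dispenses with it.
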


\begin{proof}
 Since $\hat{f} = T^{\otimes 4} f$,
 we can express $M_{\hat{f}}$ in terms of $M_f$ and $T$ as
 \begin{equation} \label{eqn:arity4:holo_trans_on_sig_matrix}
  M_{\hat{f}} = T^{\otimes 2} M_f \left(T^\intercal\right)^{\otimes 2}.
 \end{equation}
 This can be directly checked.
 Alternatively, this relation is known (and can also be directly checked) had we not introduced the flip of the middle two columns,
 i.e., if the columns were ordered $00, 01, 10, 11$ by the last two bits in $f$ and $\hat{f}$.
 Instead, the columns are ordered by $00, 10, 01, 11$ in $M_f$ and $M_{\hat{f}}$.
 Let $T = (t^i_j)$, where row index $i$ and column index $j$ range from $\{0, 1\}$.
 Then $T^{\otimes 2} = ( t^i_j t^{i'}_{j'})$, with row index $ii'$ and column index $j j'$.
 Let
 \[
  \mathcal{E} =
  \begin{bmatrix}
   1 & 0 & 0 & 0\\
   0 & 0 & 1 & 0\\
   0 & 1 & 0 & 0\\
   0 & 0 & 0 & 1
  \end{bmatrix},
 \]
 then $\mathcal{E} T^{\otimes 2} \mathcal{E} = T^{\otimes 2}$,
 i.e.,
 a simultaneous row flip $i i' \leftrightarrow i' i$ and column flip $j j' \leftrightarrow j' j$ keep $T^{\otimes 2}$ unchanged.
 Then the known relations $M_{\hat{f}} \mathcal{E} = T^{\otimes 2} M_f \mathcal{E} \left(T^\intercal\right)^{\otimes 2}$
 and $\mathcal{E} \left(T^\intercal\right)^{\otimes 2} \mathcal{E} = \left(T^\intercal\right)^{\otimes 2}$ imply~(\ref{eqn:arity4:holo_trans_on_sig_matrix}).

 Now $X \in \RM_4(\mathbb{C})$ iff $\mathcal{E} X = X = X \mathcal{E}$.
 Then it follows that $M_{\hat{f}} \in \RM_4(\mathbb{C})$ if $M_f \in \RM_4(\mathbb{C})$.
 For the two matrices $A$ and $B$ in the definition of $\varphi$,
 we note that $B A = M_g$, where $M_g$ given in~(\ref{eqn:compressed_identity_matrix}) is the identity element of the semi-group $\RM_4(\mathbb{C})$.
 Since $M_f \in \RM_4(\mathbb{C})$, we have $BA M_f = M_f = M_f BA$.
 Then we have
 \begin{align}
  \varphi(M_{\hat{f}})
  &= A M_{\hat{f}} B
   = A \left(T^{\otimes 2} M_f \left(T^\intercal\right)^{\otimes 2}\right) B \notag\\
  &= (A T^{\otimes 2} B) (A M_f B) (A \left(T^\intercal\right)^{\otimes 2} B) \label{eqn:rdt-trans}\\
  &= \varphi(T^{\otimes 2})\varphi(M_f)\varphi(\left(T^\intercal\right)^{\otimes 2}). \notag
 \end{align}
 Another direct calculation shows that
 \[\det(\varphi(T^{\otimes 2})) = \det(T)^3 = \det(\varphi(\left(T^\intercal\right)^{\otimes 2})).\]
 Thus, by applying determinant to both sides of~(\ref{eqn:rdt-trans}), we have
 \[\det(\varphi(M_{\hat{f}})) = \det(\varphi(M_f)) \det(T)^6\]
 as claimed.
\end{proof}

In particular,
for a nonsingular matrix $T \in \mathbb{C}^{2 \times 2}$, $M_f$ is redundant and $\widetilde{M_{f}}$ is nonsingular
iff $M_{\hat{f}}$ is redundant and $\widetilde{M_{\hat{f}}}$ is nonsingular.
From Corollary~\ref{cor:arity4:nonsingular_compressed_hard} and Lemma~\ref{lem:arity4:nonsingular_compressed_invariant},
we have the following corollary.
\begin{corollary} \label{cor:arity4:nonsingular_compressed_hard_trans}
 Let $f$ be an arity~$4$ signature with complex weights.
 If there exists a nonsingular matrix $T \in \mathbb{C}^{2 \times 2}$ such that $\hat{f} = T^{\otimes 4} f$,
 where $M_{\hat{f}}$ is redundant and $\widetilde{M_{\hat{f}}}$ is nonsingular,
 then $\Holant(f)$ is $\SHARPP$-hard.
\end{corollary}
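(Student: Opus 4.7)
The plan is to avoid doing any interpolation or reduction directly on $f$; instead I will transport the hypothesis from $\hat{f}$ back to $f$ using Lemma~\ref{lem:arity4:nonsingular_compressed_invariant}, and then invoke Corollary~\ref{cor:arity4:nonsingular_compressed_hard} on $f$ itself.

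More precisely, the first step is to observe that Lemma~\ref{lem:arity4:nonsingular_compressed_invariant} can be read in both directions. Write $f = (T^{-1})^{\otimes 4}\hat{f}$, and apply Lemma~\ref{lem:arity4:nonsingular_compressed_invariant} with $T^{-1}$ playing the role of $T$ and with $\hat{f}$ playing the role of the starting signature. Since $M_{\hat{f}}$ is assumed redundant, the lemma forces $M_f$ to be redundant as well and yields
\[
 \det\bigl(\varphi(M_f)\bigr) \;=\; \det\bigl(\varphi(M_{\hat{f}})\bigr)\,\det(T^{-1})^{6}.
\]
Because $T$ is nonsingular the factor $\det(T^{-1})^{6}$ is nonzero, and by hypothesis $\det(\varphi(M_{\hat{f}}))\neq 0$, so $\widetilde{M_f}=\varphi(M_f)$ is also nonsingular. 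Thus the pair of properties ``$M$ redundant and $\widetilde M$ nonsingular'' passes from $\hat{f}$ to $f$.

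At this point the second step is immediate: $f$ itself satisfies the hypothesis of Corollary~\ref{cor:arity4:nonsingular_compressed_hard}, so $\Holant(f)$ is $\SHARPP$-hard.

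There is essentially no obstacle left, since all the real work is already packaged in Lemma~\ref{lem:arity4:nonsingular_compressed_invariant} (which verifies that redundancy and the value of the compressed determinant transform multiplicatively by $\det(T)^6$ under the action $f\mapsto T^{\otimes 4}f$) and in Corollary~\ref{cor:arity4:nonsingular_compressed_hard} (which in turn combines the interpolation of Lemma~\ref{lem:arity4:get_avg_sig} with the Eulerian-orientation hardness in Lemma~\ref{lem:arity4:avg_sig_hard}). The only potential pitfall is the asymmetry in how Lemma~\ref{lem:arity4:nonsingular_compressed_invariant} is phrased: it takes $M_f$ redundant as an input hypothesis and concludes $M_{\hat{f}}$ is redundant. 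One must verify, as I did above, that applying the lemma with roles swapped (via $T^{-1}$) gives the converse implication, which is legitimate precisely because $T$ is assumed nonsingular.
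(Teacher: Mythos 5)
Your proposal is correct and matches the paper's own argument: the paper likewise notes that, since $T$ is nonsingular, Lemma~\ref{lem:arity4:nonsingular_compressed_invariant} applied with $T^{-1}$ shows redundancy of $M_{\hat f}$ and nonsingularity of $\widetilde{M_{\hat f}}$ transfer to $M_f$ and $\widetilde{M_f}$ (the determinants differing by the nonzero factor $\det(T^{-1})^6$), and then Corollary~\ref{cor:arity4:nonsingular_compressed_hard} applied to $f$ gives $\SHARPP$-hardness.
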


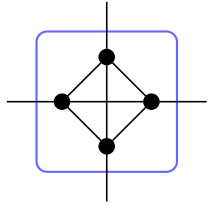
\begin{figure}[t]
 \centering
 \begin{tikzpicture}[scale=\scale,transform shape,node distance=\nodeDist,semithick]
  \node[external] (0)                    {};
  \node[internal] (1) [right       of=0] {};
  \node[internal] (2) [above right of=1] {};
  \node[external] (3) [above       of=2] {};
  \node[internal] (4) [below right of=1] {};
  \node[external] (5) [below       of=4] {};
  \node[internal] (6) [below right of=2] {};
  \node[external] (7) [right       of=6] {};
  \path (1) edge (2)
            edge (4)
            edge (6)
        (2) edge (4)
            edge (6)
        (4) edge (6);
  \path (0) edge node[near end]   (e1) {} (1)
        (2) edge node[near start] (e2) {} (3)
        (4) edge node[near start] (e3) {} (5)
        (6) edge node[near start] (e4) {} (7);
  \begin{pgfonlayer}{background}
   \node[inner sep=0pt,transform shape=false,draw=\borderColor,thick,rounded corners,fit = (1) (2) (4) (6) (e1) (e2) (e3) (e4)] {};
  \end{pgfonlayer}
 \end{tikzpicture}
 \caption{The tetrahedron gadget. Each vertex is assigned $\hat{f} = [t, 1, 0, 0, 0]$.}
 \label{fig:gadget:tetrahedron}
\end{figure}

The next lemma applies Corollary~\ref{cor:arity4:nonsingular_compressed_hard}.

\begin{lemma} \label{lem:arity4:double_root}
 Let $f_k = c k \alpha^{k-1} + d \alpha^k$, where $c \ne 0$ and $0 \le k \le 4$.
 Then the problem $\Holant([f_0, f_1, f_2, f_3, f_4])$ is $\SHARPP$-hard unless $\alpha = \pm i$,
 in which case the Holant vanishes.
\end{lemma}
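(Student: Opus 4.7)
The plan splits on whether $\alpha = \pm i$.

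For $\alpha = \pm i$, I would show $f$ is vanishing. With $\sigma \in \{+,-\}$ chosen so $\alpha = \sigma i$, the recurrence $f_{k+2} - 2\alpha f_{k+1} + \alpha^2 f_k = 0$ with characteristic polynomial $(x-\alpha)^2$ is exactly the defining recurrence of $\mathscr{R}^\sigma_2$, so $f \in \mathscr{R}^\sigma_2$. Because $c \neq 0$, $f$ is not a scalar multiple of $[1, \sigma i]^{\otimes 4}$, so $f \notin \mathscr{R}^\sigma_1$; hence $\rd^\sigma(f) = 1$, $\vd^\sigma(f) = 3 > 2 = \arity(f)/2$, and $f \in \mathscr{V}^\sigma$. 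Lemma~\ref{lem:sym:van} then gives a vanishing Holant.

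For $\alpha \neq \pm i$, the plan is to reduce $f$ to a weighted matching signature via a complex orthogonal transformation and then apply the tetrahedron gadget of Figure~\ref{fig:gadget:tetrahedron}. I would first verify the decomposition $f = d\,[1,\alpha]^{\otimes 4} + (c/6)\,\Sym_4^3([1,\alpha]; [0,1])$ by comparing entries, and then apply $T = \left[\begin{smallmatrix} 1 & \alpha \\ -\alpha & 1 \end{smallmatrix}\right]$. Since $T\transpose{T} = (1+\alpha^2)I$ and $1+\alpha^2 \neq 0$, $T$ is orthogonal in the paper's extended sense. A direct check gives $T\left[\begin{smallmatrix}1\\\alpha\end{smallmatrix}\right] = (1+\alpha^2)\left[\begin{smallmatrix}1\\0\end{smallmatrix}\right]$ and $T\left[\begin{smallmatrix}0\\1\end{smallmatrix}\right] = \left[\begin{smallmatrix}\alpha\\1\end{smallmatrix}\right]$, and from the multilinearity of $\Sym$ one gets $T^{\otimes 4} f = \mu\,[t, 1, 0, 0, 0]$ for some nonzero $\mu$ and some $t \in \mathbb{C}$ depending on $c, d, \alpha$. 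By Theorem~\ref{thm:orthogonal}, this yields $\Holant(f) \equiv_T \Holant([t, 1, 0, 0, 0])$.

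With $\hat{f} = [t, 1, 0, 0, 0]$ in hand, I would then build the tetrahedron gadget with $\hat{f}$ at every vertex. Since $\hat{f}_k = 0$ for $k \geq 2$, internal $1$-edge configurations correspond exactly to matchings in $K_4$, which enumerate to give the symmetric arity-$4$ gadget signature $g = [t^4+6t^2+3,\, t^3+3t,\, t^2+1,\, t,\, 1]$. The key calculation is then $\det \widetilde{M_g} = 4$ (constant in $t$, via a routine $3 \times 3$ expansion), so $\widetilde{M_g}$ is nonsingular and by Corollary~\ref{cor:arity4:nonsingular_compressed_hard}, $\Holant(g)$ is $\SHARPP$-hard. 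Since the gadget realizes $g$ from $\hat{f}$, this transfers to give $\SHARPP$-hardness of $\Holant(f)$.

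The main obstacle is finding a suitable orthogonal transformation that sends $[1,\alpha]$ to a multiple of $[1,0]$, which requires $1+\alpha^2 \neq 0$---exactly the exclusion $\alpha \neq \pm i$ in the statement. When this fails, $f$ is vanishing instead of hard, so the dichotomy boundary is tight. The subsequent tetrahedron computation is routine; the ``miracle'' that $\det \widetilde{M_g} = 4$ is independent of $t$ is what then allows a uniform hardness conclusion for every value of $t$.
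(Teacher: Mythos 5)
Your proposal is correct and follows essentially the same route as the paper: the $\alpha=\pm i$ case is handled by noting $\rd^\sigma(f)=1$, $\vd^\sigma(f)=3$ so $f$ is vanishing, and for $\alpha\neq\pm i$ an orthogonal transformation (valid exactly because $1+\alpha^2\neq 0$) sends $f$ to $[t,1,0,0,0]$, after which the tetrahedron gadget yields $[t^4+6t^2+3,\,t^3+3t,\,t^2+1,\,t,\,1]$ with compressed determinant $4$ and Corollary~\ref{cor:arity4:nonsingular_compressed_hard} finishes the hardness. The only cosmetic difference is that you verify the transformation via the decomposition $f = d[1,\alpha]^{\otimes 4} + \tfrac{c}{6}\Sym_4^3([1,\alpha];[0,1])$ and the matrix $\left[\begin{smallmatrix}1 & \alpha\\ -\alpha & 1\end{smallmatrix}\right]$, whereas the paper uses $\tfrac{1}{\sqrt{1+\alpha^2}}\left[\begin{smallmatrix}1 & \alpha\\ \alpha & -1\end{smallmatrix}\right]$ and works out the details in its appendix.
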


\begin{proof}
 If $\alpha = \pm i$, then $\rd^{\pm}(f) = 1$, $\vd^{\pm}(f) = 3$, and so $f = [f_0, f_1, f_2, f_3, f_4]$ is vanishing by Theorem~\ref{thm:van}.
 Otherwise, a holographic transformation with orthogonal basis $T = \tfrac{1}{\sqrt{1 + \alpha^2}} \left[\begin{smallmatrix} 1 & \alpha \\ \alpha & -1 \end{smallmatrix}\right]$
 transforms $f$ to $\hat{f} = [t,1,0,0,0]$ for some $t \in \mathbb{C}$ after normalizing the second entry.
 (See Appendix~\ref{sec:orthogonal} for details.)
 Using the tetrahedron gadget in Figure~\ref{fig:gadget:tetrahedron} with $\hat{f}$ assigned to each vertex,
 we have a gadget with signature
 \[h = [t^4 + 6 t^2 + 3, t^3 + 3 t, t^2 + 1, t, 1].\]
 Since the determinant of $\widetilde{M_h}$ is~4,
 the compressed signature matrix of this gadget is nonsingular,
 so we are done by Corollary~\ref{cor:arity4:nonsingular_compressed_hard}.
\end{proof}

Now we are ready to prove a dichotomy for a single arity~4 signature.
\begin{theorem} \label{thm:arity4:singleton}
 If $f$ is a non-degenerate, symmetric, complex-valued signature of arity~4 in Boolean variables,
 then $\Holant(f)$ is $\SHARPP$-hard unless $f$ is $\mathscr{A}$-transformable, $\mathscr{P}$-transformable, or vanishing,
 in which case the problem is computable in polynomial time.
\end{theorem}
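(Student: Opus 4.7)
The plan is to split on whether the compressed signature matrix $\widetilde{M_f}$ is singular. Since $f$ is symmetric of arity 4, $M_f$ is automatically redundant, so $\widetilde{M_f}$ is well defined. If $\widetilde{M_f}$ is nonsingular, then $\Holant(f)$ is $\SHARPP$-hard immediately by Corollary~\ref{cor:arity4:nonsingular_compressed_hard}. Thus the work lies entirely in the case where $\widetilde{M_f}$ is singular, where I must show that $f$ is either $\mathscr{A}$-transformable, $\mathscr{P}$-transformable, or vanishing, or else derive $\SHARPP$-hardness by a different route.

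Singularity of $\widetilde{M_f}$ means a nontrivial linear dependence among its rows, which by the Hankel-like structure (row $k$ is $(f_k,\, 2 f_{k+1},\, f_{k+2})$) translates into a second-order recurrence: there exist $(a, b, c) \ne (0,0,0)$ with $c f_{k+2} + 2 b f_{k+1} + a f_k = 0$ for $k = 0, 1, 2$. The case $c = 0$ forces $f$ to be degenerate or of the special form $[0, 0, 0, f_3, f_4]$, which a short direct analysis places in the $\mathscr{P}$-transformable class (via a bit flip) or else shows to be $\SHARPP$-hard. So I may assume $c \ne 0$ and let $\alpha, \beta$ be the roots of the characteristic polynomial $c x^2 + 2 b x + a$. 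When $\alpha = \beta$, Lemma~\ref{lem:arity4:double_root} applies directly and gives the desired conclusion: either $\alpha = \pm i$, in which case $f \in \mathscr{R}_2^\pm$ is vanishing, or else the problem is $\SHARPP$-hard.

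When the roots are distinct, the general solution of the recurrence is $f = A [1, \alpha]^{\otimes 4} + B [1, \beta]^{\otimes 4}$, and non-degeneracy forces both $A, B \ne 0$. Assuming first that $\alpha, \beta$ are both nonzero, let $T = \left[\begin{smallmatrix} 1 & 1 \\ \alpha & \beta \end{smallmatrix}\right]$; then $(T^{-1})^{\otimes 4} f$ is the generalized equality $[A, 0, 0, 0, B] \in \mathscr{P}$, so $f \in T \mathscr{P}$, and the question of tractability reduces to whether the transformed binary equality $(=_2) T^{\otimes 2} = [1+\alpha^2,\, 1+\alpha\beta,\, 1+\beta^2]$ lies in $\mathscr{P}$ or $\mathscr{A}$. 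A case analysis on this binary pins down the tractable configurations: (i) $1 + \alpha\beta = 0$ produces a generalized equality in $\mathscr{P}$, so $f$ is $\mathscr{P}$-transformable; (ii) $\{\alpha, \beta\} = \{i, -i\}$ yields the disequality $[0, 2, 0] \in \mathscr{P}$ (with $T$ essentially $Z$ from the introduction), again giving $\mathscr{P}$-transformability; (iii) combining $T$ with a diagonal scaling one can further rescale $[A,0,0,0,B]$ to normalize the ratio $B/A$ and thereby search for configurations in which both sides land in $\mathscr{A}$, which together with the explicit classification of Theorem~\ref{thm:CSPd} exhausts the $\mathscr{A}$-transformable possibilities; (iv) certain degenerations (such as $\alpha = i$ with the other root collapsing into alignment) push $f$ into $\mathscr{R}_2^\pm$ and hence into the vanishing class. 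The boundary cases $\alpha = 0$ or $\beta = 0$ need separate but parallel treatment.

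The main obstacle is to prove $\SHARPP$-hardness in every remaining distinct-root configuration that falls outside (i)--(iv). The strategy is to perform a self-loop on $f$ to obtain a binary signature $h$, which under the failure of (i)--(iv) is non-degenerate, and then to combine $h$ with further gadgets built from copies of $f$ to either (a) realize an arity-4 gadget whose compressed matrix becomes nonsingular in a transformed basis, at which point Corollary~\ref{cor:arity4:nonsingular_compressed_hard_trans} applies, or (b) realize a symmetric ternary gadget and invoke Theorem~\ref{thm:arity3:singleton}, ruling out each of its tractable branches (the $\mathscr{A}$ and $\mathscr{P}$ cases, plus the exceptional recurrence $f_{k+2} = \pm 2 i f_{k+1} + f_k$) using the assumed failure of (i)--(iv). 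The delicate part is that algebraic coincidences among $\alpha, \beta, A, B$ may place the reduced gadgets into a tractable class even when $f$ itself is not tractable, so the cases must be matched carefully; in the most recalcitrant residual configurations the argument ultimately reduces to the $\SHARPP$-hardness of counting Eulerian orientations on 4-regular graphs (Theorem~\ref{thm:4reg_EO_hard}) through the interpolation and approximation machinery of Lemmas~\ref{lem:arity4:get_avg_sig} and~\ref{lem:arity4:avg_sig_hard}.
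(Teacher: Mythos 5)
Your opening moves coincide with the paper's: the split on whether $\widetilde{M_f}$ is singular, the translation of singularity into a second-order relation $c f_{k+2} + 2b f_{k+1} + a f_k = 0$, and the disposal of the double-root case via Lemma~\ref{lem:arity4:double_root} are all exactly how the paper proceeds. The genuine gap is in the distinct-root case. After transforming $f = A[1,\alpha]^{\otimes 4} + B[1,\beta]^{\otimes 4}$ into $=_4$, you are left with the bipartite problem $\holant{\hat h}{{=}_4}$ where $\hat h = (=_2)T^{\otimes 2} = [1+\alpha^2,\,1+\alpha\beta,\,1+\beta^2]$. At this point you enumerate a few tractable configurations (i)--(iv) by inspection and leave the $\SHARPP$-hardness of \emph{every other} configuration as a plan (self-loop to a binary, further gadgets, the ternary dichotomy, ``cases must be matched carefully,'' eventually Eulerian orientations). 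None of this is carried out, and it is precisely the hard part; moreover your item (iii) leans on Theorem~\ref{thm:CSPd}, which classifies $\CSP^d$ for sets that come with all equalities of arity a multiple of $d$, and does not by itself decide when $\hat h$ paired with $=_4$ is tractable. The paper closes this entire branch in one stroke by invoking Theorem~\ref{thm:k-reg_homomorphism}$'$ (the dichotomy for $\holant{[g_0,g_1,g_2]}{{=}_k}$), which simultaneously supplies the explicit tractability conditions on $\hat h$ (these translate into $f$ being $\mathscr{A}$- or $\mathscr{P}$-transformable) and the $\SHARPP$-hardness of all remaining cases. Without citing that theorem, or reproving its hardness half, your argument does not establish hardness in the distinct-root case.

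There is also a concrete error in your treatment of $c = 0$. If $c = 0$ and $b \ne 0$, the relation only forces $f_{k+1} = \lambda f_k$ for $k = 0,1,2$ with $\lambda = -a/(2b)$, leaving $f_4$ unconstrained, so $f = f_0 [1,\lambda]^{\otimes 4} + \bigl(f_4 - \lambda^4 f_0\bigr)[0,1]^{\otimes 4}$. Such $f$ is in general neither degenerate nor of the form $[0,0,0,f_3,f_4]$; this is the ``one root at infinity'' family, which contains both tractable signatures (e.g.\ $[1,0,0,0,y]$) and $\SHARPP$-hard ones, and it is not recovered later since your distinct-root analysis assumes $c \ne 0$ and your deferred boundary cases $\alpha = 0$ or $\beta = 0$ do not include a root at infinity. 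The paper avoids this by excluding only $a = c = 0$ (which forces $f_1 = f_2 = f_3 = 0$, a generalized equality in $\mathscr{P}$) and writing the distinct-root solution in homogeneous form $f_k = \alpha_1^{4-k}\alpha_2^k + \beta_1^{4-k}\beta_2^k$ with $\alpha_1\beta_2 - \alpha_2\beta_1 \ne 0$, which covers the root at infinity and feeds into the same Theorem~\ref{thm:k-reg_homomorphism}$'$ reduction. Patching your write-up would require folding the $c=0$, $b\ne 0$ family back into the distinct-root analysis and, more importantly, replacing the hardness sketch there with an actual argument, e.g.\ the appeal to Theorem~\ref{thm:k-reg_homomorphism}$'$.
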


\begin{proof}
 Let $f = [f_0, f_1, f_2, f_3, f_4]$.
 If the compressed signature matrix $\widetilde{M_f}$ is nonsingular,
 then $\Holant(f)$ is $\SHARPP$-hard by Corollary~\ref{cor:arity4:nonsingular_compressed_hard},
 so assume that the rank of $\widetilde{M_f}$ is at most~2.
 Hence the rows of $\widetilde{M_f}$ are linearly dependent.
 There exist some $a,b,c \in \mathbb{C}$ that are not all~$0$ such that
 \[
      a \begin{pmatrix} f_0 \\ f_1 \\ f_2 \end{pmatrix}
  + 2 b \begin{pmatrix} f_1 \\ f_2 \\ f_3 \end{pmatrix}
  +   c \begin{pmatrix} f_2 \\ f_3 \\ f_4 \end{pmatrix}
  =
        \begin{pmatrix} 0   \\   0 \\   0 \end{pmatrix}.
 \]
 If $a = c = 0$, then $b \ne 0$, so $f_1 = f_2 = f_3 = 0$.
 In this case, $f \in \mathscr{P}$ is a generalized equality signature, so $f$ is $\mathscr{P}$-transformable.
 Now suppose $a$ and $c$ are not both 0.
 If $b^2 - 4 a c \ne 0$,
 then $f_k = \alpha_1^{4-k} \alpha_2^k + \beta_1^{4-k} \beta_2^k$,
 where $\alpha_1 \beta_2 - \alpha_2 \beta_1 \ne 0$.
 A holographic transformation by $\left[\begin{smallmatrix} \alpha_1 & \beta_1 \\ \alpha_2 & \beta_2 \end{smallmatrix}\right]$ transforms $f$ to $=_4$
 and we can use Theorem~\ref{thm:k-reg_homomorphism}$'$ to show that $f$ is either $\mathscr{A}$- or $\mathscr{P}$-transformable
 unless $\Holant(f)$ is $\SHARPP$-hard.
 Otherwise, $b^2 - 4 a c = 0$ and there are two cases.
 In the first,  for any $0 \le k \le 4$, $f_k = c k \alpha^{k-1} + d \alpha^k$, where $c \ne 0$.
 In the second, for any $0 \le k \le 4$, $f_k = c (4-k) \alpha^{3-k} + d \alpha^{4-k}$, where $c \ne 0$.
 These cases map between each other under a holographic transformation by $\left[\begin{smallmatrix} 0 & 1 \\ 1 & 0 \end{smallmatrix}\right]$,
 so assume that we are in the first case.
 Then we are done by Lemma~\ref{lem:arity4:double_root}.
\end{proof}

The next lemma is related to vanishing signatures.
It appears here because its proof uses similar techniques to those in this section.

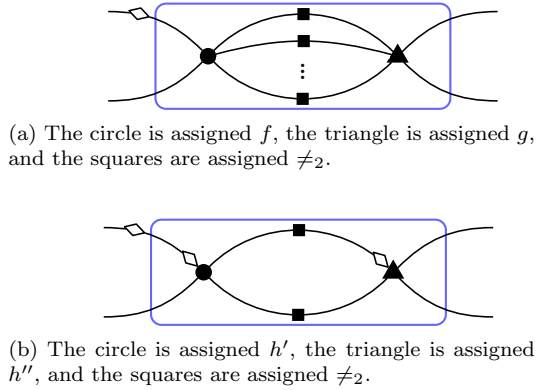
\begin{figure}[t]
 \centering
 \def\capWidth{7cm}
 \captionsetup[subfigure]{width=\capWidth}
 \subfloat[The circle is assigned $f$, the triangle is assigned $g$, and the squares are assigned $\neq_2$.]{
  \makebox[\capWidth][c]{
   \begin{tikzpicture}[scale=\scale,transform shape,node distance=\nodeDist,semithick]
    \node[internal]  (0)                     {};
    \node[external]  (1) [above left  of=0]  {};
    \node[external]  (2) [below left  of=0]  {};
    \node[external]  (3) [left        of=1]  {};
    \node[external]  (4) [left        of=2]  {};
    \node[external]  (5) [above left  of=3]  {};
    \node[external]  (8) [right       of=0]  {};
    \node[external]  (9) [right       of=8]  {};
    \node[triangle] (10) [right       of=9]  {};
    \node[external] (11) [above right of=10] {};
    \node[external] (12) [below right of=10] {};
    \node[external] (13) [right       of=11] {};
    \node[external] (14) [right       of=12] {};
    \path (3) edge[out=   0, in= 135, postaction={decorate, decoration={
                                                             markings,
                                                             mark=at position 0.4 with {\arrow[>=diamond, white] {>}; },
                                                             mark=at position 0.4 with {\arrow[>=open diamond]   {>}; } } }] (0)
          (0) edge[out=-135, in=   0]  (4)
         (10) edge[out=  45, in= 180] (13)
               edge[out= -45, in= 180] (14);
    \path (0) edge[out=  45, in= 135]        node[square] {} (10)
              edge[out=  15, in= 165]        node[square] {} (10)
              edge[out= -10, in=-170, white] node[black]  {\Huge $\vdots$} (10)
              edge[out= -45, in=-135]        node[square] {} (10);
    \begin{pgfonlayer}{background}
     \node[inner sep=0pt,transform shape=false,draw=\borderColor,thick,rounded corners,fit = (1) (2) (11) (12)] {};
    \end{pgfonlayer}
   \end{tikzpicture} \label{fig:gadget:special_vanishing_case:many_glue}}}
 \qquad
 \subfloat[The circle is assigned $h'$, the triangle is assigned $h''$, and the squares are assigned $\neq_2$.]{
  \makebox[\capWidth][c]{
   \begin{tikzpicture}[scale=\scale,transform shape,node distance=\nodeDist,semithick]
    \node[internal]  (0)                     {};
    \node[external]  (1) [above left  of=0]  {};
    \node[external]  (2) [below left  of=0]  {};
    \node[external]  (3) [left        of=1]  {};
    \node[external]  (4) [left        of=2]  {};
    \node[external]  (5) [above left  of=3]  {};
    \node[external]  (8) [right       of=0]  {};
    \node[external]  (9) [right       of=8]  {};
    \node[triangle] (10) [right       of=9]  {};
    \node[external] (11) [above right of=10] {};
    \node[external] (12) [below right of=10] {};
    \node[external] (13) [right       of=11] {};
    \node[external] (14) [right       of=12] {};
    \path (3) edge[out=   0, in= 135, postaction={decorate, decoration={
                                                             markings,
                                                             mark=at position 0.4   with {\arrow[>=diamond,white] {>}; },
                                                             mark=at position 0.4   with {\arrow[>=open diamond]  {>}; },
                                                             mark=at position 0.999 with {\arrow[>=diamond,white] {>}; },
                                                             mark=at position 1     with {\arrow[>=open diamond]  {>}; } } }] (0)
          (0) edge[out=-135, in=   0]  (4)
         (10) edge[out=  45, in= 180] (13)
              edge[out= -45, in= 180] (14);
    \path (0) edge[out=  45, in= 135, postaction={decorate, decoration={
                                                             markings,
                                                             mark=at position 0.999 with {\arrow[>=diamond,white] {>}; },
                                                             mark=at position 0.999 with {\arrow[>=open diamond]  {>}; } } }] node[square] {} (10)
              edge[out= -45, in=-135] node[square] {} (10);
    \begin{pgfonlayer}{background}
     \node[inner sep=0pt,transform shape=false,draw=\borderColor,thick,rounded corners,fit = (1) (2) (11) (12)] {};
    \end{pgfonlayer}
   \end{tikzpicture}
  }
  \label{fig:gadget:special_vanishing_case:2glue}
 }
 \caption{Gadget constructions used to obtain a hard and redundant arity~$4$ signature.}
 \label{fig:gadget:special_vanishing_case:glue}
\end{figure}

\begin{lemma} \label{lem:arity4:special_vanishing_case}
 If $f = [0,1,0,\dotsc,0]$ and $g = [0,\dotsc,0,1,0]$ are both of arity $d \ge 3$,
 then the problem $\holant{[0,1,0]}{\{f, g\}}$ is $\SHARPP$-hard.
\end{lemma}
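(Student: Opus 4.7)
The plan is to reduce from counting Eulerian orientations on 4-regular graphs (which is $\SHARPP$-hard by Theorem~\ref{thm:4reg_EO_hard}) via the arity-4 hardness criterion of Corollary~\ref{cor:arity4:nonsingular_compressed_hard_trans}. After the holographic transformation by $Z$ (as defined in Section~\ref{sec:vanishing-by-holographic-Z}), the problem becomes the standard Holant $\Holant(\{\hat f, \hat g\})$ where, exactly as in the \#\textsc{1In-Or-1Out-Orientation} discussion of Section~\ref{sec:intro}, $\hat f_k = i^k(n-2k)$ and $\hat g_k = (-i)^k(n-2k)$; in particular $\hat f \in \mathscr V^+$ and $\hat g \in \mathscr V^-$ with $\rd^+(\hat f) = \rd^-(\hat g) = 1$. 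Equivalently, in the bipartite picture, I aim to realize an arity-4 right-side signature whose signature matrix is redundant and whose compressed form is nonsingular after a suitable holographic transformation.

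The first step is Figure~\ref{fig:gadget:special_vanishing_case:many_glue}: connect one copy of $f$ and one copy of $g$ through $n-2$ copies of $\neq_2$, leaving two dangling edges on each side. Since $f$ forces Hamming weight $1$ and $g$ forces Hamming weight $n-1$, and each $\neq_2$ flips an internal bit, enumerating the surviving assignments shows that the resulting asymmetric arity-4 signature has redundant signature matrix
\[
\begin{bmatrix} 0 & 0 & 0 & n-2 \\ 0 & 1 & 1 & 0 \\ 0 & 1 & 1 & 0 \\ 0 & 0 & 0 & 0 \end{bmatrix};
\]
its compressed form has rank only $2$, so Corollary~\ref{cor:arity4:nonsingular_compressed_hard_trans} does not apply directly. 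Swapping the roles of $f$ and $g$ gives a ``reversed'' arity-4 signature with the $n-2$ in the opposite corner, and self-loops of $f$ and $g$ yield the degenerate arity-$(n-2)$ signatures $[1,0,\dots,0]$ and $[0,\dots,0,1]$ that act as bulk pinning signatures.

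Using these ingredients, the second step is Figure~\ref{fig:gadget:special_vanishing_case:2glue}: build two arity-4 signatures $h'$ and $h''$ and combine them through $2$ copies of $\neq_2$ to obtain an arity-4 signature $H$ with redundant $M_H$ whose compressed form is nonsingular (after a holographic transformation $T$ if needed; by Lemma~\ref{lem:arity4:nonsingular_compressed_invariant}, redundancy is preserved and $\det\widetilde{M_{T^{\otimes 4}H}} = \det(T)^{6}\det\widetilde{M_H}$). Then Corollary~\ref{cor:arity4:nonsingular_compressed_hard_trans} delivers $\SHARPP$-hardness of $\Holant(H)$, which transfers back to $\holant{[0,1,0]}{\{f,g\}}$ through the bipartite gadget realization of $H$.

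The main obstacle is step two: because $\hat f \in \mathscr V^+$ and $\hat g \in \mathscr V^-$ are individually vanishing with large vanishing degree, the natural combinations (attaching $[1,\pm i]$ unaries obtained by splitting the self-looped degenerate signatures, or connecting two copies of $\hat f$ or of $\hat g$ by $=_2$'s in the standard picture) all produce signatures that remain in $\mathscr V^+$ or $\mathscr V^-$ and hence have singular compressed matrices of rank at most $2$. Escaping this vanishing regime requires carefully pairing a positive-type piece with a negative-type piece, as Lemma~\ref{lem:van:mix} suggests is necessary; the two-stage construction of Figure~\ref{fig:gadget:special_vanishing_case:glue} is designed precisely to mix the two types so that the final asymmetric arity-4 signature lies outside $\mathscr V^+$, $\mathscr V^-$, and the tractable $\mathscr A$- and $\mathscr P$-transformable classes of Theorem~\ref{thm:arity4:singleton}.
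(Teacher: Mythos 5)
Your first step matches the paper: the gadget of Figure~\ref{fig:gadget:special_vanishing_case:many_glue} gives the arity-4 signature $h$ with $M_h$ redundant but $\widetilde{M_h}$ singular, and the ultimate goal of feeding Corollary~\ref{cor:arity4:nonsingular_compressed_hard_trans} is also the paper's. But your second step has a genuine gap. The paper does not swap the roles of $f$ and $g$; it \emph{rotates} the gadget $h$ (as in Figure~\ref{fig:rotate_asymmetric_signature}) to get $h'$ and $h''$, and when these are combined through two $\neq_2$'s the resulting signature $r$ has signature matrix with middle block $\left[\begin{smallmatrix} v & v^2+1 \\ 1 & v \end{smallmatrix}\right]$, which is \emph{not} redundant (the middle two rows differ for every $v=n-2\ge 1$). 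You assert that this combination already yields a redundant $M_H$ with nonsingular compressed form, ``after a holographic transformation $T$ if needed''; this is exactly the point that fails. Lemma~\ref{lem:arity4:nonsingular_compressed_invariant} only says that redundancy and nonsingularity of the compressed matrix are \emph{preserved} under a transformation -- it gives no mechanism for turning a non-redundant signature into a redundant one, and you offer no candidate $T$.

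What is missing is the paper's interpolation step: using the fact that the middle $2\times 2$ block of $M_r$ has determinant $1$, all entries nonzero, and eigenvalue ratio $\lambda_+/\lambda_-$ not a root of unity (because $v>0$), the paper runs the recursive construction of Figure~\ref{fig:gadget:special_vanishing_case:recursive_construction} and a Vandermonde argument to interpolate, for every nonzero $t$, the signature $r'$ whose middle block is $P\operatorname{diag}(t,t^{-1})P^{-1}$. Choosing $t$ so that this block becomes the \emph{symmetric} matrix $\left[\begin{smallmatrix} a & 1 \\ 1 & b \end{smallmatrix}\right]$ with $ab=2$, and then rotating once more, finally produces $r''$ with $M_{r''}$ redundant and $\widetilde{M_{r''}}$ nonsingular, after which the $Z^{-1}$ transformation and Corollary~\ref{cor:arity4:nonsingular_compressed_hard_trans} finish the proof. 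Your proposal correctly diagnoses that mixing the $\mathscr{V}^+$ and $\mathscr{V}^-$ pieces is the crux, but it stops at naming the obstacle; without the rotation-plus-interpolation mechanism (or some concrete substitute) the claimed redundant, compressed-nonsingular arity-4 signature is never actually realized, so the reduction is not established.
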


\begin{proof}
 Our goal is to obtain a signature that satisfies the hypothesis of Corollary~\ref{cor:arity4:nonsingular_compressed_hard_trans}.

 The gadget in Figure~\ref{fig:gadget:special_vanishing_case:many_glue}, with $f$ assigned to the circle vertex,
 $g$ assigned to the triangle vertex, and $\neq_2$ assigned to the square vertices, has signature $h$ with signature matrix
 \[M_h = \begin{bmatrix} 0 & 0 & 0 & v \\ 0 & 1 & 1 & 0 \\ 0 & 1 & 1 & 0 \\ 0 & 0 & 0 & 0 \end{bmatrix},\]
 where $v = d-2$ is positive since $d \ge 3$.
 Although this signature matrix is redundant, its compressed form is singular.
 Rotating this gadget $90^\circ$ clockwise and $90^\circ$ counterclockwise yield signatures $h'$ and $h''$ respectively, with signature matrices
 \[
  M_{h'} = \begin{bmatrix} 0 & 0 & 0 & 1 \\ 0 & v & 1 & 0 \\ 0 & 1 & 0 & 0 \\ 1 & 0 & 0 & 0 \end{bmatrix}
  \qquad \text{and} \qquad
  M_{h''} = \begin{bmatrix} 0 & 0 & 0 & 1 \\ 0 & 0 & 1 & 0 \\ 0 & 1 & v & 0 \\ 1 & 0 & 0 & 0 \end{bmatrix}.
 \]
 The gadget in Figure~\ref{fig:gadget:special_vanishing_case:2glue}, with $h'$ assigned to the circle vertex,
 $h''$ assigned to the triangle vertex, and $\neq_2$ assigned to the square vertices, has a signature $r$ with signature matrix
 \[
  M_r
  = M_{h'} \begin{bmatrix} 0 & 0 & 0 & 1 \\ 0 & 0 & 1 & 0 \\ 0 & 1 & 0 & 0 \\ 1 & 0 & 0 & 0 \end{bmatrix} M_{h''}
  = \begin{bmatrix} 0 & 0 & 0 & 1 \\ 0 & v & v^2 + 1 & 0 \\ 0 & 1 & v & 0 \\ 1 & 0 & 0 & 0 \end{bmatrix}.
 \]
 Note that the effect of the $\neq_2$ signatures is to reverse all four rows of $M_{h''}$ before multiplying it to the right of $M_{h'}$.
 Although this signature matrix is not redundant,
 every entry of Hamming weight~$2$ is nonzero since $v$ is positive.

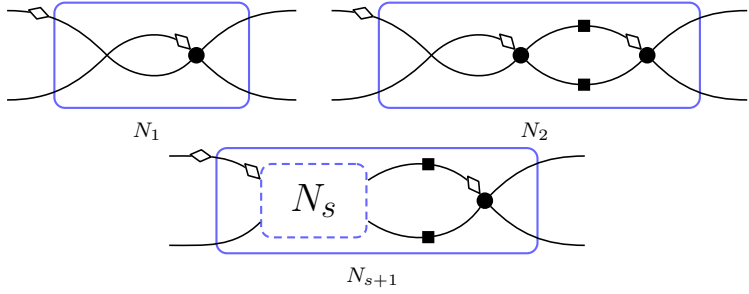
\begin{figure}[t]
 \centering
 \captionsetup[subfigure]{labelformat=empty}
 \subfloat[$N_1$]{
  \begin{tikzpicture}[scale=\scale,transform shape,node distance=\nodeDist,semithick]
   \node[internal]  (0)                    {};
   \node[external]  (1) [above left  of=0] {};
   \node[external]  (2) [below left  of=0] {};
   \node[external]  (3) [below left  of=1] {};
   \node[external]  (4) [below left  of=3] {};
   \node[external]  (5) [above left  of=3] {};
   \node[external]  (6) [left        of=4] {};
   \node[external]  (7) [left        of=5] {};
   \node[external]  (8) [below right of=0] {};
   \node[external]  (9) [above right of=0] {};
   \node[external] (10) [right       of=8] {};
   \node[external] (11) [right       of=9] {};
   \path (3.center) edge[out= 45, in= 135, postaction={decorate, decoration={
                                                                  markings,
                                                                  mark=at position 0.999 with {\arrow[>=diamond,white] {>}; },
                                                                  mark=at position 0.999 with {\arrow[>=open diamond]  {>}; } } }] (0)
                    edge[out=-45, in=-135]  (0)
                (0) edge[out= 45, in= 180] (11)
                    edge[out=-45, in= 180] (10)
                (7) edge[out=  0, in= 135, postaction={decorate, decoration={
                                                                  markings,
                                                                  mark=at position 0.38  with {\arrow[>=diamond,white] {>}; },
                                                                  mark=at position 0.38  with {\arrow[>=open diamond]  {>}; } } }] (3.center)
         (3.center) edge[out=-135, in=  0] (6);
   \begin{pgfonlayer}{background}
    \node[inner sep=0pt,transform shape=false,draw=\borderColor,thick,rounded corners,fit = (4) (5) (8) (9)] {};
   \end{pgfonlayer}
  \end{tikzpicture}}
 \hspace{-5pt}
 \subfloat[$N_2$]{
  \begin{tikzpicture}[scale=\scale,transform shape,node distance=\nodeDist,semithick]
   \node[internal]  (0)                     {};
   \node[external]  (1) [above left  of=0]  {};
   \node[external]  (2) [below left  of=0]  {};
   \node[external]  (3) [below left  of=1]  {};
   \node[external]  (4) [below left  of=3]  {};
   \node[external]  (5) [above left  of=3]  {};
   \node[external]  (6) [left        of=4]  {};
   \node[external]  (7) [left        of=5]  {};
   \node[external]  (8) [right       of=0]  {};
   \node[internal]  (9) [right       of=8]  {};
   \node[external] (10) [above right of=9]  {};
   \node[external] (11) [below right of=9]  {};
   \node[external] (12) [right       of=10] {};
   \node[external] (13) [right       of=11] {};
   \path (3.center) edge[out= 45, in= 135, postaction={decorate, decoration={
                                                           markings,
                                                           mark=at position 0.999 with {\arrow[>=diamond,white] {>};
                                                                                        \arrow[>=open diamond]  {>}; } } }] (0)
                    edge[out= -45, in=-135] (0)
                (7) edge[out=   0, in=135, postaction={decorate, decoration={
                                                                  markings,
                                                                  mark=at position 0.38  with {\arrow[>=diamond,white] {>};
                                                                                               \arrow[>=open diamond]  {>}; } } }] (3.center)
         (3.center) edge[out=-135, in=  0]  (6)
                (9) edge[out=  45, in=180] (12)
                    edge[out= -45, in=180] (13);
   \path (0) edge[out=-45, in=-135] node[square] {} (9)
             edge[out= 45, in= 135, postaction={decorate, decoration={
                                                    markings,
                                                    mark=at position 0.999 with {\arrow[>=diamond,white] {>};
                                                                                 \arrow[>=open diamond]  {>}; } } }] node[square] {} (9);
   \begin{pgfonlayer}{background}
    \node[inner sep=0pt,transform shape=false,draw=\borderColor,thick,rounded corners,fit = (4) (5) (10) (11)] {};
   \end{pgfonlayer}
  \end{tikzpicture}}
 \hspace{-5pt}
 \subfloat[$N_{s+1}$]{
  \begin{tikzpicture}[scale=\scale,transform shape,node distance=\nodeDist,semithick]
   \node[external]  (0)                     {};
   \node[external]  (1) [above left  of=0]  {};
   \node[external]  (2) [below left  of=0]  {};
   \node[external]  (3) [below left  of=1]  {};
   \node[external]  (4) [below left  of=3]  {};
   \node[external]  (5) [above left  of=3]  {};
   \node[external]  (6) [left        of=4]  {};
   \node[external]  (7) [left        of=5]  {};
   \node[external]  (8) [right       of=0]  {};
   \node[internal]  (9) [right       of=8]  {};
   \node[external] (10) [above right of=9]  {};
   \node[external] (11) [below right of=9]  {};
   \node[external] (12) [right       of=10] {};
   \node[external] (13) [right       of=11] {};
   \path let
          \p1 = (1),
          \p2 = (2)
         in
          node[external] at (\x1, \y1 / 2 + \y2 / 2) {\Huge $N_s$};
   \path let
          \p1 = (0)
         in
          node[external] (14) at (\x1 + 2, \y1 + 10) {};
   \path let
          \p1 = (0)
         in
          node[external] (15) at (\x1 + 2, \y1 - 10) {};
   \path let
          \p1 = (3)
         in
          node[external] (16) at (\x1 - 2, \y1 + 10) {};
   \path let
          \p1 = (3)
         in
          node[external] (17) at (\x1 - 2, \y1 - 10) {};
   \path  (7) edge[out=   0, in= 135, postaction={decorate, decoration={
                                                             markings,
                                                             mark=at position 0.44  with {\arrow[>=diamond,white] {>}; },
                                                             mark=at position 0.44  with {\arrow[>=open diamond]  {>}; },
                                                             mark=at position 0.999 with {\arrow[>=diamond,white] {>}; },
                                                             mark=at position 1     with {\arrow[>=open diamond]  {>}; } } }] (16)
         (17) edge[out=-135, in=   0]  (6)
          (9) edge[out=  45, in= 180] (12)
              edge[out= -45, in= 180] (13)
         (14) edge[out=  35, in= 125, postaction={decorate, decoration={
                                                             markings,
                                                             mark=at position 0.999 with {\arrow[>=diamond,white] {>}; },
                                                             mark=at position 0.999 with {\arrow[>=open diamond]  {>}; } } }] node[square] {} (9)
         (15) edge[out= -35, in=-125] node[square] {} (9);
   \begin{pgfonlayer}{background}
    \node[inner sep=0pt,transform shape=false,draw=\borderColor,thick,densely dashed,rounded corners,fit = (0) (1.south) (2.north) (3)] {};
    \node[inner sep=0pt,transform shape=false,draw=\borderColor,thick,rounded corners,fit = (4) (5) (10) (11)] {};
   \end{pgfonlayer}
  \end{tikzpicture}}
 \caption{Recursive construction to interpolate a signature $r'$ that is only a rotation away from having a redundant signature matrix and nonsingular compressed matrix.
  The circles are assigned $r$ and the squares are assigned $\neq_2$.}
 \label{fig:gadget:special_vanishing_case:recursive_construction}
\end{figure}

 Now we claim that we can use $r$ to interpolate the following signature $r'$,
 for any nonzero value $t \in \mathbb{C}$, via the construction in Figure~\ref{fig:gadget:special_vanishing_case:recursive_construction}.
 Define $p^\pm = (v \pm \sqrt{v^2 + 4}) / 2$, $P = \left[\begin{smallmatrix} 1 & 1  \\p^+ & p^-  \end{smallmatrix}\right]$,
 and  $T = P \left[\begin{smallmatrix} t & 0 \\ 0 & t^{-1} \end{smallmatrix}\right] P^{-1}$ where $t \in \mathbb{C}$ is any nonzero value.
 The signature matrix of the target signature $r'$ is
 \begin{equation} \label{eqn:Mrprime_sig_matrix}
  M_{r'} = \begin{bmatrix} 0 & 0 & 0 & 1 \\ 0 & \multicolumn{2}{c}{\multirow{2}{*}{$T$}} & 0 \\ 0 & & & 0 \\ 1 & 0 & 0 & 0 \end{bmatrix}.
 \end{equation}

 Consider an instance $\Omega$ of $\holant{{\neq}_2}{\mathcal{F} \union \{r'\}}$ with $r \in \mathcal{F}$.
 Suppose that $r'$ appears $n$ times in $\Omega$.
 We construct from $\Omega$ a sequence of instances $\Omega_s$ of $\holant{{\neq}_2}{\mathcal{F}}$ indexed by $s \ge 1$.
 We obtain $\Omega_s$ from $\Omega$ by replacing each occurrence of $r'$
 with the gadget $N_s$ in Figure~\ref{fig:gadget:special_vanishing_case:recursive_construction}
 with $r$ assigned to the circle vertices and $\neq_2$ assigned to the square vertices.
 In $\Omega_s$,
 the edge corresponding to the $i$th significant index bit of $N_s$ connects to
 the same location as the edge corresponding to the $i$th significant index bit of $r'$ in $\Omega$.

 The signature matrix of $N_s$ is the $s$th power of the matrix obtained from $M_r$ after reversing all rows,
 and then switching the first and last rows of the final product,
 namely
 \[
  \begin{bmatrix} 0 & 0 & 0 & 1 \\ 0 & 1 & 0 & 0 \\ 0 & 0 &       1 & 0 \\ 1 & 0 & 0 & 0 \end{bmatrix}
  \begin{bmatrix} 1 & 0 & 0 & 0 \\ 0 & 1 & v & 0 \\ 0 & v & v^2 + 1 & 0 \\ 0 & 0 & 0 & 1 \end{bmatrix}^s
  =
  \begin{bmatrix} 0 & 0 & 0 & 1 \\ 0 & 1 & v & 0 \\ 0 & v & v^2 + 1 & 0 \\ 1 & 0 & 0 & 0 \end{bmatrix}
  \begin{bmatrix} 1 & 0 & 0 & 0 \\ 0 & 1 & v & 0 \\ 0 & v & v^2 + 1 & 0 \\ 0 & 0 & 0 & 1 \end{bmatrix}^{s-1}.
 \]
 The twist of the two input edges on the left side for the first copy of $M_r$ switches the middle two rows,
 which is equivalent to a total reversal of all rows, followed by the switching of the first and last rows.
 The total reversals of rows for all subsequent $s - 1$ copies of $M_r$ are due to the presence of $\neq_2$ signatures.

 After such reversals of rows, it is clear that the matrix is a direct sum of block matrices indexed by $\{00, 11\} \times \{00, 11\}$ and $\{01, 10\} \times \{10, 01\}$.
 Furthermore, in the final product, the block indexed by $\{00, 11\} \times \{00, 11\}$ is $\left[\begin{smallmatrix} 0 & 1 \\ 1 & 0 \end{smallmatrix}\right]$.
 Thus in the gadget $N_s$, the only entries of $M_{N_s}$ that vary with $s$ are the four entries in the middle.
 These middle four entries of $M_{N_s}$ form the 2-by-2 matrix $\left[\begin{smallmatrix} 1 & v \\ v & v^2 + 1 \end{smallmatrix}\right]^s$.
 Since $\left[\begin{smallmatrix} 1 & v \\ v & v^2 + 1 \end{smallmatrix}\right] = P \left[\begin{smallmatrix} \lambda_+ & 0 \\ 0 & \lambda_- \end{smallmatrix}\right] P^{-1}$,
 where $\lambda_{\pm} = (v^2 + 2 \pm v \sqrt{v^2 + 4})/2$ are the eigenvalues,
 we have
 \[
  \begin{bmatrix}
   1 & v \\
   v & v^2 + 1
  \end{bmatrix}^s
  =
  P
  \begin{bmatrix}
   \lambda_{+}^s & 0 \\
   0             & \lambda_{-}^s
  \end{bmatrix}
  P^{-1}.
 \]
 The determinant is $\lambda_{+} \lambda_{-} = 1$, so the eigenvalues are nonzero.
 Since $v$ is positive, the ratio of the eigenvalues $\lambda_{+} / \lambda_{-}$ is not a root of unity, so neither $\lambda_+$ nor $\lambda_-$ is a root of unity.

 Now we determine the relationship between $\Holant_\Omega$ and $\Holant_{\Omega_s}$.
 We can view our construction of $\Omega_s$ as first replacing $M_{r'}$ with
 \[
  \begin{bmatrix} 1 & 0 & 0 & 0 \\ 0 & \multicolumn{2}{c}{\multirow{2}{*}{$P$}}      & 0 \\ 0 &   &        & 0 \\ 0 & 0 & 0 & 1 \end{bmatrix}
  \begin{bmatrix} 0 & 0 & 0 & 1 \\ 0 & t & 0                                         & 0 \\ 0 & 0 & t^{-1} & 0 \\ 1 & 0 & 0 & 0 \end{bmatrix}
  \begin{bmatrix} 1 & 0 & 0 & 0 \\ 0 & \multicolumn{2}{c}{\multirow{2}{*}{$P^{-1}$}} & 0 \\ 0 &   &        & 0 \\ 0 & 0 & 0 & 1 \end{bmatrix},
 \]
 which does not change the Holant value, and then replacing the new signature matrix in the middle with the signature matrix
 \[
  \begin{bmatrix} 0 & 0 & 0 & 1 \\ 0 & \lambda_+^s & 0 & 0 \\ 0 & 0 & \lambda_-^s & 0 \\ 1 & 0 & 0 & 0 \end{bmatrix}.
 \]

 We stratify the assignments in $\Omega_s$ based on the assignments to the $n$ occurrences of the signature matrix
 \begin{equation} \label{eqn:rPrime_jnf_signature_matrix}
  \begin{bmatrix} 0 & 0 & 0 & 1 \\ 0 & t & 0 & 0 \\ 0 & 0 & t^{-1} & 0 \\ 1 & 0 & 0 & 0 \end{bmatrix}.
 \end{equation}
 The inputs to this matrix are from $\{0, 1\}^2 \times \{0, 1\}^2$,
 which correspond to the four input bits.
 Recall the way rows and columns of a signature matrix are ordered from Definition~\ref{def:signature_matrix}.
 Thus, e.g., the entry $t$ corresponds to the cyclic input bit pattern $0110$ in counterclockwise order.
 We only need to consider the assignments that assign
 \begin{itemize}
  \item $i$ many times the bit pattern $0110$,
  \item $j$ many times the bit pattern $1001$, and
  \item $k$ many times the bit patterns $0011$ or $1100$,
 \end{itemize}
 since any other assignment contributes a factor of 0.
 Let $c_{ijk}$ be the sum over all such assignments of the products of evaluations of all signatures in $\Omega_s$ except for~(\ref{eqn:rPrime_jnf_signature_matrix}).
 Then
 \[\Holant_\Omega = \sum_{i + j + k = n} t^{i-j} c_{ijk}\]
 and the value of the Holant on $\Omega_s$, for $s \ge 1$, is
 \[
  \Holant_{\Omega_s}
  = \sum_{i + j + k = n} \lambda_+^{s i} \lambda_-^{s j} c_{ijk}
  = \sum_{i + j + k = n} \lambda_+^{s(i-j)} c_{ijk}.
 \]
 This Vandermonde system does not have full rank.
 However, we  can define for $-n \le \ell \le n$,
 \[c_\ell' = \sum_{\substack{i-j= \ell\\i+j+k=n}} c_{ijk}.\]
 Then the Holant of $\Omega$ is
 \[\Holant_\Omega = \sum_{-n \le \ell \le n} t^\ell c_\ell'\]
 and the Holant of $\Omega_s$ is
 \[\Holant_{\Omega_s} = \sum_{-n \le \ell \le n} \lambda_+^{s \ell} c_\ell'.\]
 Now this Vandermonde has full rank because $\lambda_+$ is neither~$0$ nor a root of unity.
 Therefore, we can solve for the unknowns $c_\ell'$ and obtain the value of $\Holant_\Omega$.
 This completes our claim that we can interpolate the signature $r'$ in~(\ref{eqn:Mrprime_sig_matrix}), for any nonzero $t \in \mathbb{C}$.

 Let $t = (\sqrt{v^2 + 8} + \sqrt{v^2 + 4})/2$ so $t^{-1} = (\sqrt{v^2 + 8} - \sqrt{v^2 + 4})/2$.
 Let $a = (\sqrt{v^2 + 8} -v)/2$ and $b = (\sqrt{v^2 + 8} + v)/2$, so $a b = 2 \ne 0$.
 One can verify that
 \[
  P \begin{bmatrix} t & 0 \\ 0 & t^{-1} \end{bmatrix} P^{-1}
  = \begin{bmatrix} a & 1 \\ 1 & b \end{bmatrix}.
 \]
 Thus, the signature matrix for $r'$ is
 \[M_{r'} = \begin{bmatrix} 0 & 0 & 0 & 1 \\ 0 & a & 1 & 0 \\ 0 & 1 & b & 0 \\ 1 & 0 & 0 & 0 \end{bmatrix}.\]
 After a counterclockwise rotation of $90^\circ$ on the edges of $r'$,
 we have a signature $r''$ with a redundant signature matrix
 \[M_{r''} = \begin{bmatrix} 0 & 0 & 0 & a \\ 0 & 1 & 1 & 0 \\ 0 & 1 & 1 & 0 \\ b & 0 & 0 & 0 \end{bmatrix}.\]
 Its compressed signature matrix
 \[\widetilde{M_{r''}} = \begin{bmatrix} 0 & 0 & a \\ 0 & 2 & 0 \\ b & 0 & 0 \end{bmatrix}\]
 is nonsingular.
 After a holographic transformation by $Z^{-1}$,
 where $Z = \tfrac{1}{\sqrt{2}} \left[\begin{smallmatrix} 1 & 1 \\ i & -i \end{smallmatrix}\right]$,
 the binary disequality $(\neq_2) = [0,1,0]$ is transformed to the binary equality $(=_2) = [1,0,1]$.
 Thus the problem $\holant{[0,1,0]}{r''}$ is transformed to $\holant{{=}_2}{Z^{\otimes 4} r''}$,
 which is the same as $\Holant(Z^{\otimes 4} r'')$.
 We conclude that this Holant problem is \#P-hard by Corollary~\ref{cor:arity4:nonsingular_compressed_hard_trans}.
\end{proof}

\section{Vanishing Signatures Revisited}

With Corollary~\ref{cor:arity4:nonsingular_compressed_hard},
Corollary~\ref{cor:arity4:nonsingular_compressed_hard_trans},
and Lemma~\ref{lem:arity4:special_vanishing_case} in hand,
we revisit the vanishing signatures to determine what signatures combine with them to give $\SHARPP$-hardness.
We begin with unary signatures and their tensor powers.

\begin{lemma} \label{lem:van:deg}
 Let $f \in \mathscr{V}^\sigma$ be a symmetric signature of arity $n$ with $\rd^\sigma(f)=d\ge 2$ where $\sigma \in \{+,-\}$.
 Suppose $v = u^{\otimes m}$ is a symmetric degenerate signature for some unary signature $u$ and some integer $m \ge 1$.
 If $u$ is not a multiple of $[1, \sigma i]$,
 then $\Holant(f,v)$ is \numP-hard.
\end{lemma}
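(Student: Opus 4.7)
The plan is to construct, via gadgets over $\{f, v\}$, an arity-$4$ signature $f^*$ satisfying the hypothesis of Corollary~\ref{cor:arity4:nonsingular_compressed_hard_trans}; this will give $\SHARPP$-hardness of $\Holant(f^*)$ and hence of $\Holant(\{f, v\})$. By the symmetry between $\mathscr{V}^+$ and $\mathscr{V}^-$ under complex conjugation I take $\sigma = +$. Set $Z = \tfrac{1}{\sqrt 2}\left[\begin{smallmatrix} 1 & 1 \\ i & -i \end{smallmatrix}\right]$ and work in the $Z$-basis: by Section~\ref{sec:vanishing-by-holographic-Z}, $\hat{f} = (Z^{-1})^{\otimes n} f$ has support on Hamming weights $0, \ldots, t = \rd^+(f)$ with $\hat{f}_t \neq 0$; the binary equality becomes $[0,1,0]$; and the hypothesis that $u$ is not a multiple of $[1, i]$ translates to $\hat u_1 \neq 0$ for $\hat u = Z^{-1} u$. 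I will aim for $\hat{f^*} = [a, b, c, 0, 0]$ with $c \neq 0$: a direct calculation gives $\widetilde{M_{\hat{f^*}}} = \left[\begin{smallmatrix} a & 2b & c \\ b & 2c & 0 \\ c & 0 & 0 \end{smallmatrix}\right]$ with determinant $-2c^3 \neq 0$, so Corollary~\ref{cor:arity4:nonsingular_compressed_hard_trans} applies with $T = Z^{-1}$.

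The construction proceeds in two stages. In Stage~1, I apply $t - 2$ self-loops to $f$ to obtain a symmetric signature $g$ of arity $n' := n - 2t + 4$ with $\rd^+(g) = 2$; Lemma~\ref{lem:van:self} justifies each step because $\vd^+$ stays strictly positive (using $n > 2t$). In Stage~2, I attach symmetric tensor powers of $u$ extracted from copies of $v$ to the dangling edges of $g$, reducing the total arity by $n' - 4 = n - 2t$. By Lemma~\ref{lem:van:con}, each such attachment preserves $\rd^+ = 2$, since $\vd^+(u^{\otimes j}) = 0$ for every $j \geq 1$ (because $u$ is not a multiple of $[1, i]$) and the arity condition $\arity(g) - \arity(u^{\otimes j}) \geq 2$ can be arranged at each step. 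The resulting symmetric $f^*$ then has arity $4$ and $\rd^+(f^*) = 2$, giving the desired $Z$-transform.

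The main obstacle is Stage~2: realizing tensor powers of $u$ as $\{f, v\}$-gates in sufficient variety to hit arity exactly $4$. If $u$ is not also a multiple of $[1, -i]$, then $u_0^2 + u_1^2 \neq 0$, so a self-loop on $v$ realizes $u^{\otimes(k-2)}$ up to a nonzero scalar; iterating gives $u^{\otimes 1}$ when $k$ is odd and $u^{\otimes 2}$ when $k$ is even, from which all higher powers of matching parity follow by tensoring. If instead $u$ is a multiple of $[1, -i]$, so that $v \in \mathscr{V}^-$ and $\hat u = [0, 1]$ in the $Z$-basis, then attaching $v$ via $k$ $\neq_2$-edges simply pins $k$ inputs of $\hat g$ to $0$ and reduces the arity by $k$. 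Any residual mismatch in parity or divisibility between $n - 2t$ and the available attachment arities will be absorbed by involving additional copies of $f$ (with their own self-loops and $v$-attachments) assembled symmetrically into a single gadget; a direct check, using Lemma~\ref{lem:arity4:nonsingular_compressed_invariant} to transport the redundancy and nonsingularity through the final $Z^{-1}$ transformation, confirms that the compressed signature matrix of the final gadget is nonsingular, completing the reduction.
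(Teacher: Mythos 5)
Your skeleton coincides with the paper's: self-loop $f$ down to recurrence degree $2$ (Lemma~\ref{lem:van:self}), attach powers of $u$ to reach arity exactly $4$ while preserving $\rd^+=2$ (Lemma~\ref{lem:van:con}), and conclude from the nonsingular compressed matrix (your computation $\det\widetilde{M}=-2c^3\neq 0$ for $[a,b,c,0,0]$ is correct, and invoking Corollary~\ref{cor:arity4:nonsingular_compressed_hard_trans} with $T=Z^{-1}$ is fine). The gap is Stage~2, which is where the actual work lies. From $\{f,v\}$ your constructions only yield powers of $u$ in restricted arity classes: self-loops on $v$ (available only when $u$ is not proportional to $[1,-i]$) drop the arity by $2$ each time, so you obtain only $u^{\otimes j}$ with $j\equiv \arity(v) \pmod 2$; and in your second case, attaching $v$ reduces the arity of $g$ only in multiples of $\arity(v)$, and is impossible outright when $\arity(v)>\arity(g)-2$. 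But you must reduce the arity by exactly $n-2t$, and nothing guarantees compatibility. Concretely, take $\arity(f)=7$, $\rd^+(f)=2$, $v=u^{\otimes 100}$: you need an odd reduction of $3$, yet you can only realize even powers of $u$; and if $u\propto[1,-i]$, a self-loop on $v$ is identically zero and $v$ cannot be attached to an arity-$7$ signature at all. Your proposed rescue---``additional copies of $f$ assembled symmetrically into a single gadget'' whose compressed matrix ``a direct check confirms'' is nonsingular---is an assertion, not an argument: such multi-copy gadgets are not even obviously symmetric, let alone of full compressed rank.

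The missing idea, and the paper's resolution, is to use $f$ itself as a second source of tensor powers and run a subtractive Euclidean algorithm between the two. Taking $t$ self-loops on $f$ yields $[1,i]^{\otimes(n-2t)}$, and since $u$ is not a multiple of $[1,i]$ the pairing $\langle u,[1,i]\rangle$ is a nonzero constant; contracting copies of $[1,i]^{\otimes(n-2t)}$ against copies of $v=u^{\otimes m}$ therefore realizes $u^{\otimes\gcd(n-2t,\,m)}$, hence $u^{\otimes(n-2t)}$ by tensoring, and attaching this single signature to the rd-$2$ signature $f'$ lands on arity $4$ in one step, with Lemma~\ref{lem:van:con} applicable because $\vd^+\bigl(u^{\otimes(n-2t)}\bigr)=0$. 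Note that the hypothesis on $u$ is used twice: once for $\vd^+=0$ (which you do use) and once for the nonvanishing contraction $\langle u,[1,i]\rangle\neq 0$ that drives the gcd construction (which you never use). Without the latter, your reduction does not go through in general.
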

\begin{proof}
  We consider $\sigma = +$ since the other case is similar.
  Since $f \in \mathscr{V}^+$,
  we have $n > 2d \ge 4$.
  Under a holographic transformation by $Z$,
  we have
  \[
   \Holant(f,v) \equiv \holant{{\neq}_2}{\hat{f},[a,b]^{\otimes m}},
  \]
  where $\hat{f} = \left(Z^{-1}\right)^{\otimes n} f$ and $[a,b]^{\otimes m} = \left(Z^{-1}\right)^{\otimes m} v$ with $b \neq 0$
  since $u$ is not a multiple of $[1,i]$.
  Moreover,
  $\hat{f} = [\hat{f}_0, \hat{f}_1, \dotsc, \hat{f}_d, 0, \dotsc, 0]$ with 
  $\hat{f}_d\neq 0$ by Lemma~\ref{lem:vanishing_form_in_Z_basis}.

  We get $\widehat{f'} = [\hat{f}_{d-2}, \hat{f}_{d-1}, \hat{f}_d, 0, \dotsc, 0]$ of arity $n-2d+4$
  by $d-2$ self-loops via $\neq_2$ on $\hat{f}$.
  This is a signature on the right side in $\holant{\cdot}{\cdot}$ notation.
  With two more self-loops,
  we get $[1,0]^{\otimes n-2d}$,
  also on the right.

  We claim that we can use $[1,0]^{\otimes n-2d}$ and $[a,b]^{\otimes m}$
  to create $[a,b]^{\otimes n-2d}$.
  Let $t=\gcd(m,n-2d)$.
  If $n-2d>m$,
  then we connect $[a,b]^{\otimes m}$ to $[1,0]^{\otimes n-2d}$ via $\neq_2$
  to get $[1,0]^{\otimes n-2d-m}$ up to a nonzero factor $b \neq 0$.
  We repeat this process until we get a tensor power $[1,0]^{\otimes \ell}$ for some $\ell \le m$.
  We can do a similar construction if $m>n-2d$.
  Repeat this process, which is a subtractive Euclidean algorithm.
  Halt upon getting both $[1,0]^{\otimes t}$ and $[a,b]^{\otimes t}$.
  Then we combine $\tfrac{n-2d}{t}$ copies of $[a,b]^{\otimes t}$ to get $[a,b]^{\otimes n-2d}$.

  Now connecting $[a,b]^{\otimes n-2d}$ back to $\widehat{f'}$ via $\neq_2$,
  gives $\widehat{f''}=[\widehat{f''}_0, \widehat{f''}_1, \widehat{f''}_2, 0, 0]$ of arity~$4$.
  Moreover,
  $\widehat{f''}_2 = b^{n-2d} \widehat{f}_d \neq 0$.
  Notice that $\Holant({\neq}_2 \mid [\widehat{f''}_{0}, \widehat{f''}_{1}, \widehat{f''}_{2}, 0, 0]) \equiv \Holant({\neq}_2\mid[0,0,1,0,0])$,
  the Eulerian Orientation problem over planar $4$-regular graphs,
  (see Section~\ref{subsec:vanishing-by-holographic-Z})
  which is $\SHARPP$-hard by Theorem~\ref{thm:4reg_EO_hard}.
  Thus, $\Holant(f,v)$ is \numP-hard.
\end{proof}

Next we consider binary signatures.

\begin{lemma} \label{lem:van:bin}
 Let $f \in \mathscr{V}^\sigma$ be a symmetric non-degenerate signature where $\sigma \in \{+,-\}$.
 Suppose $g$ is a non-degenerate binary signature.
 If $g \not\in \mathscr{R}^\sigma_2$,
 then $\Holant(f,g)$ is $\SHARPP$-hard.
\end{lemma}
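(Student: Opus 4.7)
The plan is to combine Lemma~\ref{lem:van:self}, Lemma~\ref{lem:van:deg}, and the arity-$4$ results of Section~\ref{sec:arity4}. By symmetry we may assume $\sigma = +$; the case $\sigma = -$ is analogous via the orthogonal conjugation exchanging $[1,i]$ and $[1,-i]$. Write $n := \arity(f) \ge 3$ and $t := \rd^+(f)$; since $f$ is non-degenerate and lies in $\mathscr{V}^+$, we have $1 \le t$ and $2 t < n$. Let $G = \left[\begin{smallmatrix} g_0 & g_1 \\ g_1 & g_2 \end{smallmatrix}\right]$ be the matrix form of $g$. A direct computation shows that $G[1,i]$ is a scalar multiple of $[1,i]$ if and only if $g_2 = 2 i g_1 + g_0$, i.e.~if and only if $g \in \mathscr{R}^+_2$; hence by hypothesis the unary $u := G[1,i]$ is not proportional to $[1,i]$.

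The next step is to realize, as an $\{f,g\}$-gadget, a nonzero scalar multiple of $u^{\otimes(n-2t)}$. Apply Lemma~\ref{lem:van:self} to $f$ repeatedly: each self loop decreases both $\vd^+$ and $\rd^+$ by one, and at every stage the current signature still has positive vanishing degree because $n > 2 t$. Hence after exactly $t$ self loops we obtain a signature of arity $n-2t$ with recurrence degree $0$, which is (up to a nonzero scalar) the degenerate signature $[1,i]^{\otimes(n-2t)}$. Attaching one copy of $g$ to each of its $n-2t$ dangling edges (contracting the shared edge against the matrix form of $g$, and using $G = \transpose{G}$) yields $(G[1,i])^{\otimes(n-2t)} = u^{\otimes(n-2t)}$. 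If $t \ge 2$, then the hypothesis of Lemma~\ref{lem:van:deg} is met with the vanishing signature $f$ and the tensor-power-of-a-unary $v := u^{\otimes(n-2t)}$, giving $\SHARPP$-hardness of $\Holant(\{f,v\})$ and hence of $\Holant(\{f,g\})$.

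The main obstacle is the residual case $t = 1$: here Lemma~\ref{lem:van:deg} is inapplicable (its hypothesis $\rd^+(f) \ge 2$ fails), and self loops on $f$ alone can only decrease $\rd^+$. Writing $f = c \cdot \Sym_n^{n-1}([1,i]; w)$ for a nonzero constant $c$ and a unary $w$ not proportional to $[1,i]$, my plan is to construct a symmetric arity-$4$ $\{f,g\}$-gadget $h$ whose signature matrix $M_h$ is redundant and whose compressed form $\widetilde{M_h}$ is nonsingular (possibly after an auxiliary holographic transformation $T$), and then invoke Corollary~\ref{cor:arity4:nonsingular_compressed_hard_trans}. A natural construction is to take two copies of $f$ and join $n-2$ of their edge pairs through $g$-links, leaving the remaining four edges as the external interface. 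Equivalently, after the $Z$-transformation of Section~\ref{sec:vanishing-by-holographic-Z} one is working in $\holant{[0,1,0]}{\{\hat{f}, \hat{g}\}}$ with $\hat{f} = [\hat{f}_0, \hat{f}_1, 0, \dotsc, 0]$ (where $\hat{f}_1 \ne 0$) and with binary $\hat{g}$ satisfying $\hat{g}_2 \ne 0$, a setting close in spirit to that of Lemma~\ref{lem:arity4:special_vanishing_case}. The technical core is then a determinant computation, verifying that the standing hypotheses $g \not\in \mathscr{R}^+_2$ and $w \not\propto [1,i]$ together force $\det \widetilde{M_h} \ne 0$, so that Corollary~\ref{cor:arity4:nonsingular_compressed_hard_trans} yields $\SHARPP$-hardness.
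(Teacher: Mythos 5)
Your reduction of the case $\rd^\sigma(f) \ge 2$ is correct: the observation that $G[1,\sigma i]$ is proportional to $[1,\sigma i]$ exactly when $g \in \mathscr{R}^\sigma_2$, the realization of $[1,\sigma i]^{\otimes(n-2t)}$ by self loops (Lemma~\ref{lem:van:self}) and of $u^{\otimes(n-2t)}$ by attaching $g$ to the dangling edges, and the appeal to Lemma~\ref{lem:van:deg} all go through; this is even a mild shortcut compared to the paper, which instead funnels everything by induction into the low-arity cases. The problem is that the case $\rd^\sigma(f) = 1$ is not a ``residual'' case but the heart of the lemma --- it contains every non-degenerate vanishing $f$ of arity $3$ or $4$ --- and there your argument is only a plan, and the one concrete gadget you propose does not work. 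Take $\sigma = +$ and pass to the $Z$-basis, where $\hat{f} = [\hat{f}_0, \hat{f}_1, 0, \dotsc, 0]$ has no support on Hamming weight $\ge 2$. In your gadget (two copies of $f$ joined by $n-2$ links through $g$), any assignment giving both external edges of one copy the value $1$ forces that copy's $\hat{f}$ to be evaluated at weight at least $2$, hence contributes $0$. So with the natural grouping (copy-one edges as row index, copy-two edges as column index) the row and column indexed by $11$ vanish identically, $M_h$ is redundant but $\det\widetilde{M_h} = 0$, and by Lemma~\ref{lem:arity4:nonsingular_compressed_invariant} no auxiliary holographic transformation can repair this; with the other grouping the matrix is not even redundant. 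So Corollary~\ref{cor:arity4:nonsingular_compressed_hard_trans} cannot be applied to this gadget, and ``a determinant computation'' is not the missing step --- the gadget itself must be different.

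For comparison, the paper's proof of exactly this case requires substantially more machinery: for arity $n \ge 5$ with $\rd^+(f)=1$ it connects $g$ to $f$ (using Lemma~\ref{lem:van:con}, since $\vd^+(g)=0$) to reduce arity while keeping $\rd^+ = 1$, until it reaches arity $4$ (arity $3$ is first bootstrapped to arity $4$ by joining two copies of $f$ along one edge). At arity $4$, with $\hat{f}=[t,1,0,0,0]$ and $\hat{g}=[a,b,1]$, it does \emph{not} combine copies of $\hat{f}$ directly; it first interpolates the whole family $[v,1,0]$ (Lemma~\ref{lem:simple_interpolation:van:bin}), uses a chain gadget to realize $[c,0,1]$ with $c = a - b^2 \neq 0$ (this is where non-degeneracy of $g$ enters), and only then builds a tetrahedron-type gadget out of four copies of $\hat{f}$ and six copies of $[c,0,1]$ whose compressed determinant is $4c^3 \ne 0$. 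Something playing the role of these steps (or of the interpolation argument in Lemma~\ref{lem:arity4:special_vanishing_case}, which is likewise far more than a determinant check) is what your proposal is missing, so as it stands the $\rd^\sigma(f)=1$ case --- and with it the lemma --- is not proved.
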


\begin{proof}
  We consider $\sigma = +$ since the other case is similar.
  A unary signature is degenerate.
  If $f$ is binary, then $\vd^+(f) > 1$.
  Hence $\vd^+(f) \ge 2$, and so $f$ is degenerate.
  Since $f$ is non-degenerate, $\arity(f) \ge 3$.
  Under a $Z$ transformation,
  \begin{align*}
    \Holant(f,g) & \equiv \holant{{\neq}_2}{\hat{f},\hat{g}},
  \end{align*}
  where $\hat{f}=\left(Z^{-1}\right)^{\otimes n}f$ and $\hat{g}=\left(Z^{-1}\right)^{\otimes 2}g$.
  Since $g\not\in\mathscr{R}_2^+$,
  we may assume that $\hat{g}=[a,b,1]$ by Lemma~\ref{lem:vanishing_form_in_Z_basis} with a nonzero $\hat{g}_2$ entry.
  Moreover since $g$ is non-degenerate,
  so is $\hat{g}$, and $b^2\neq a$.

 We prove the lemma by induction on the arity of $f$.
 There are two base cases,
 $\arity(f) = 3$ and $\arity(f) = 4$.
 However,
 the arity~$3$ case is easily reduced to the arity~$4$ case.
 We show this first,
 and then show that the lemma holds in the arity~$4$ case.

 Assume $\arity(f) = 3$.  
 Since $f \in \mathscr{V}^+$, we have $\rd^+(f) < 3/2$, thus $f \in \mathscr{R}^+_2$.
 However $f$ is non-degenerate, $\rd^+(f)>0$, and so $\rd^+(f)=1$ and $\vd^+(f)=2$.
 By Lemma~\ref{lem:vanishing_form_in_Z_basis}, $\hat{f}=[t,1,0,0]$ for some $t$.


\begin{figure}[t]
 \centering
 \begin{tikzpicture}[scale=\scale,transform shape,node distance=\nodeDist,semithick]
  \node[internal]  (0)                    {};
  \node[external]  (1) [above left  of=0] {};
  \node[external]  (2) [below left  of=0] {};
  \node[external]  (3) [left        of=1] {};
  \node[external]  (4) [left        of=2] {};
  \node[square]    (5) [right       of=0] {};
  \node[internal]  (6) [right       of=5] {};
  \node[external]  (7) [above right of=6] {};
  \node[external]  (8) [below right of=6] {};
  \node[external]  (9) [right       of=7] {};
  \node[external] (10) [right       of=8] {};
  \path (0) edge[out= 135, in=   0]  (3)
            edge[out=-135, in=   0]  (4)
            edge                     (6)
        (6) edge[out=  45, in= 180]  (9)
            edge[out= -45, in= 180] (10);
  \begin{pgfonlayer}{background}
   \node[inner sep=0pt,transform shape=false,draw=\borderColor,thick,rounded corners,fit = (1) (2) (7) (8)] {};
  \end{pgfonlayer}
 \end{tikzpicture}
 \caption{The circles are assigned $[t,1,0,0]$ and the square is assigned $\neq_2$.}
 \label{fig:gadget:arity3_to_arity4}
\end{figure}
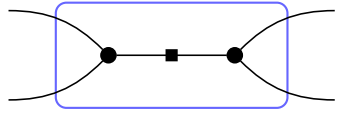

 We connect two copies of $f$ together by one edge to get an arity~$4$ signature $f'$.
 By construction,  it may not appear that $f'$ is a symmetric signature.
 However,  we show that $f'$ is in fact \emph{symmetric}, non-degenerate, and vanishing.
 It is clearly a vanishing signature, since $f$ is vanishing.
 Equivalently this is to connect two $\hat{f}=[t,1,0,0]$ together via a $\neq_2$.
 It is the gadget in Figure~\ref{fig:gadget:arity3_to_arity4}.
 One can verify that the resulting signature is $\hat{f'} = [2t,1,0,0,0]$.
 The crucial observation is that it takes the same value 0 on inputs $1010$ and $1100$,
 where the left two bits are input to one copy of $f$ and the right two bits are for another.
 The corresponding signature $f'$ is non-degenerate with $\rd^+(f') = 1$ and vanishing.
 Thus we reduce to the arity $4$ case.

 Next we consider the base case of $\arity(f) = 4$.
 Since $f \in \mathscr{V}^+$,
 we have $\vd^+(f) > 2$ and $\rd^+(f) < 2$.
 Since $f$ is non-degenerate,
 we have $\rd^+(f) \neq -1, 0$.
 Hence $\rd^+(f) = 1$ and $\vd^+(f) = 3$.
 By Lemma~\ref{lem:vanishing_form_in_Z_basis}, $\hat{f}=[t,1,0,0,0]$ for some $t$.
 We will work in the $Z$ basis henceforth.
 

 Our next goal is to show that we can realize a signature of the form $[c,0,1]$ with $c \neq 0$.
 If $b=0$,
 then $\hat{g}$ is what we want since in this case $a = a - b^2 \ne 0$.
 
 Now we assume $b \neq 0$.
 By connecting $\hat{g}$ to $\hat{f}$ via $\neq_2$,
 we get $[t + 2 b, 1, 0]$.
 If $t \neq -2b$,
 then by Lemma~\ref{lem:simple_interpolation:van:bin},
 we can interpolate any binary signature of the form $[v,1,0]$.
 Otherwise $t = - 2 b$.
 Then we connect two copies of $\hat{g}$ via $\neq_2$,
 and get $\widehat{g}' = [2 a b, a + b^2, 2 b]$.
 By connecting this $\widehat{g}'$ to $\hat{f}$ via $\neq_2$,
 we get $[2 (a - b^2), 2 b, 0]$,
 using $t = -2 b$.
 Since $a \neq b^2$ and $b \neq 0$,
 we can once again interpolate any $[v,1,0]$ by Lemma~\ref{lem:simple_interpolation:van:bin}.

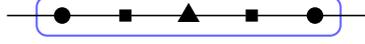
\begin{figure}[t]
 \centering
 \begin{tikzpicture}[scale=\scale,transform shape,node distance=\nodeDist,semithick]
  \node[external] (0)                    {};
  \node[internal] (1) [right       of=0] {};
  \node[square]   (2) [right       of=1] {};
  \node[triangle] (3) [right       of=2] {};
  \node[square]   (4) [right       of=3] {};
  \node[internal] (5) [right       of=4] {};
  \node[external] (6) [right       of=5] {};
  \node[external] (7) [above right of=3] {};
  \node[external] (8) [below right of=3] {};
  \path (0) edge            node[near end]   (e1) {} (1)
        (1) edge                                     (2)
        (2) edge                                     (3)
        (3) edge                                     (4)
        (4) edge                                     (5)
        (5) edge            node[near start] (e2) {} (6);
  \path (3) edge[opacity=0] node[near start] (e3) {} (7)
            edge[opacity=0] node[near start] (e4) {} (8);
  \begin{pgfonlayer}{background}
   \node[inner sep=0pt,transform shape=false,draw=\borderColor,thick,rounded corners,fit = (e1) (e2) (e3) (e4)] {};
  \end{pgfonlayer}
 \end{tikzpicture}
 \caption{A sequence of binary gadgets that forms another binary gadget.
 The circles are assigned $[v, 1, 0]$,
 the squares are assigned $\neq_2$,
 and the triangle is assigned $[a,b,1]$.}
 \label{fig:gadget:long_edge}
\end{figure}

 Hence, we have the signature $[v,1,0]$,
 where $v \in \mathbb{C}$ is for us to choose.
 We construct the gadget in Figure~\ref{fig:gadget:long_edge} with the circles assigned $[v,1,0]$,
 the squares assigned $\neq_2$,
 and the triangle assigned $[a,b,1]$.
 The resulting gadget has signature $[a + 2 b v + v^2, b + v, 1]$,
 which can be verified by the matrix product
 \[
  \begin{bmatrix} v & 1 \\ 1 & 0 \end{bmatrix}
  \begin{bmatrix} 0 & 1 \\ 1 & 0 \end{bmatrix}
  \begin{bmatrix} a & b \\ b & 1 \end{bmatrix}
  \begin{bmatrix} 0 & 1 \\ 1 & 0 \end{bmatrix}
  \begin{bmatrix} v & 1 \\ 1 & 0 \end{bmatrix}
  =
  \begin{bmatrix} a + 2 b v + v^2 & b + v \\ b + v & 1 \end{bmatrix}.
 \]
By setting $v = -b$,
we get $[c,0,1]$,
where $c = a - b^2 \neq 0$.

\begin{figure}[t]
 \centering
 \def\capWidth{5cm}
 \captionsetup[subfigure]{width=\capWidth}
 \subfloat[The tetrahedron gadget with edge signatures given in~\protect\subref{subfig:gadget:triangle_edge}.]
 {
  \makebox[\capWidth][c]
  {
   \begin{tikzpicture}[scale=\scale,transform shape,node distance=\nodeDist,semithick]
    \node[external]  (0)                     {};
    \node[internal]  (1) [right       of=0]  {};
    \node[triangle]  (2) [above right of=1]  {};
    \node[triangle]  (3) [right       of=1]  {};
    \node[triangle]  (4) [below right of=1]  {};
    \node[internal]  (5) [above right of=2]  {};
    \node[external]  (6) [above       of=5]  {};
    \node[triangle]  (7) [below       of=5]  {};
    \node[internal]  (8) [below right of=4]  {};
    \node[external]  (9) [below       of=8]  {};
    \node[triangle] (10) [below right of=5]  {};
    \node[triangle] (11) [above right of=8]  {};
    \node[internal] (12) [above right of=11] {};
    \node[external] (13) [right       of=12] {};
    \path  (1) edge  (2)
               edge  (3)
               edge  (4)
           (2) edge  (5)
           (3) edge (12)
           (4) edge  (8)
           (5) edge  (7)
               edge (10)
           (7) edge  (8)
           (8) edge (11)
          (10) edge (12)
          (11) edge (12);
    \path  (0) edge node[near end]   (e1) {}  (1)
           (5) edge node[near start] (e2) {}  (6)
           (8) edge node[near start] (e3) {}  (9)
          (12) edge node[near start] (e4) {} (13);
    \begin{pgfonlayer}{background}
     \node[inner sep=0pt,transform shape=false,draw=\borderColor,thick,rounded corners,fit = (1) (5) (8) (12) (e1) (e2) (e3) (e4)] {};
    \end{pgfonlayer}
   \end{tikzpicture}
   } \label{subfig:gadget:tetra_main}
 }
 \qquad
 \subfloat[The gadget representing an edge labeled by a triangle in~\protect\subref{subfig:gadget:tetra_main}.]{
  \makebox[\capWidth][c]{
   \begin{tikzpicture}[scale=\scale,transform shape,node distance=\nodeDist,semithick]
    \node[external] (0)                    {};
    \node[square]   (1) [right of=0]       {};
    \node[internal] (2) [right of=1]       {};
    \node[square]   (3) [right of=2]       {};
    \node[external] (4) [right of=3]       {};
    \node[external] (5) [below right of=0] {};
    \node[external] (6) [below right of=5] {};
    \node[external] (7) [below       of=6] {};
    \path (0) edge (1)
          (1) edge (2)
          (2) edge (3)
          (3) edge (4);
   \end{tikzpicture}} \label{subfig:gadget:triangle_edge}}
   \caption{The tetrahedron gadget with each triangle replaced by the edge in~\protect\subref{subfig:gadget:triangle_edge},
   where the circle is assigned $[c,0,1]$ and the squares are assigned $\neq_2$.
   The four circles in~\protect\subref{subfig:gadget:tetra_main} are assigned $[t,1,0,0,0]$.}
 \label{fig:gadget:complicate_tetrahedron}
\end{figure}
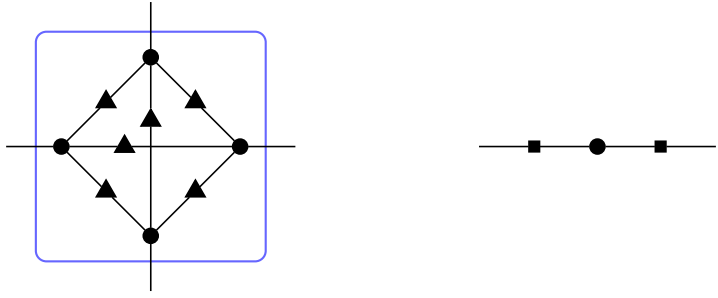

 With this signature $[c,0,1]$,
 we construct the gadget in Figure~\ref{fig:gadget:complicate_tetrahedron},
 where $[c,0,1]$ is assigned to the circle vertex of arity two in Figure~\ref{subfig:gadget:triangle_edge}
 and $\hat{f}$ is assigned to the four circle vertices of arity four in Figure~\ref{subfig:gadget:tetra_main}.
 We get a signature
 \[
  \hat{h} = [3 c^2 + 6 c t^2 + t^4, 3 c t + t^3, c + t^2, t, 1].
 \]
 We note that this computation is reminiscent of matchgate signatures.
 The internal edge function $[1,0,c]$
 (which is a flip from $[c,0,1]$ since both sides are connected to $\neq_2$)
 is a generalized equality signature,
 and the signature $\hat{f}$ on the four circle vertices is a weighted version of the matching function \textsc{At-Most-One}.
 Also note that this computation generalizes a very similar one in Lemma \ref{lem:arity4:double_root}, in which $c=1$.

 The compressed signature matrix of $\hat{h}$ is
 \[
  \widetilde{M_{\hat{h}}}
  =
  \begin{bmatrix}
   3 c^2 + 6 c t^2 + t^4 & 2 (3 c t + t^3) & c + t^2\\
   3 c t + t^3           & 2 (c + t^2)     & t\\
   c + t^2               & 2 t             & 1
  \end{bmatrix}
 \]
 and its determinant is $4 c^3 \neq 0$.
 Thus $\widetilde{M_{\hat{h}}}$ is nonsingular.
 After a holographic transformation by $Z^{-1}$,
 where $Z = \tfrac{1}{\sqrt{2}} \left[\begin{smallmatrix} 1 & 1 \\ i & -i \end{smallmatrix}\right]$,
 the binary disequality $(\neq_2) = [0,1,0]$ is transformed to the binary equality $(=_2) = [1,0,1]$.
 Thus $\Holant([0,1,0] \mid \hat{h})$ is transformed to $\Holant({=}_2 \mid Z^{\otimes 4} \hat{h})$,
 which is the same as $\Holant(Z^{\otimes 4} \hat{h})$.
 Then we are done by Corollary~\ref{cor:arity4:nonsingular_compressed_hard_trans}.

 Now we do the induction step.
 Assume $f$ is of arity $n \geq 5$.
 Since $f$ is non-degenerate,
 $\rd^+(f) \ge 1$.
 If $\rd^+(f) = 1$,
 then we connect the binary $g$ to $f$ to get $f' = \langle f,g \rangle$.
 We have noted that $\rd^+(g) = 2$,
 so $\vd^+(g) = 0$.
 By Lemma~\ref{lem:van:con},
 we have $\rd^+(f')=1$ and $\arity(f') = n-2 \ge 3$.
 Thus $f'$ is vanishing.
 Also $f'$ is non-degenerate,
 for otherwise let $f' = [a,b]^{\otimes (n-2)}$.
 If $[a,b]$ is a multiple of $[1, i]$,
 then $\rd^+(f') \le 0$,
 which is false.
 If $[a,b]$ is not a multiple of $[1, i]$,
 then it can be directly checked that $f' \not \in \mathscr{R}^+_{n-2}$,
 and $\rd^+(f') = n-2 > 1$,
 which is also false.
 Hence $f'$ is a non-degenerate vanishing signature of arity $n-2$,
 so we are done by induction hypothesis.

 Now suppose $\rd^+(f) = t \ge 2$.
 Since $f$ is non-degenerate,
 it is certainly nonzero.
 Since it is vanishing,
 certainly $\vd^+(f) > 0$.
 Hence we can apply Lemma~\ref{lem:van:self}.
 Let $f'$ be obtained from $f$ by a self loop.
 Then $\rd^+(f') = t-1 \ge 1$ and $\arity(f') = n-2$.
 Clearly $f'$ is still vanishing.
 We claim that $f'$ is non-degenerate.
 This follows using the same argument as above.
 If $f'$ were degenerate,
 then either $\rd^+(f') \le 0$ or $\rd^+(f') = \arity(f')$,
 which would contradict $f'$ being a vanishing signature.
 Therefore, we can apply the induction hypothesis.
\end{proof}

Finally, we consider a pair of vanishing signatures of opposite type,
both of arity at least~$3$.
We show that opposite types of vanishing signatures cannot mix.
In other words, vanishing signatures of opposite types, when put together,
lead to $\SHARPP$-hardness.

\begin{lemma} \label{lem:van:plus_and_minus}
 If $f \in \mathscr{V}^+$ and $g \in \mathscr{V}^-$ be two symmetric non-degenerate signatures of arities $\ge 3$,
 then $\Holant(f,g)$ is \numP-hard.
\end{lemma}
\begin{proof}
 Suppose $\rd^+(f) = d$, $\rd^-(g) = d'$, $\arity(f) = n$ and $\arity(g) = n'$, then $2 d < n$ and $2 d' < n'$.
 Under a holographic transformation by $Z = \trans{1}{1}{i}{-i}$,
 we have that
 \begin{align*}
   \holant{{=}_2}{f,g} \equiv_T \holant{{\neq}_2}{\hat{f},\hat{g}},
 \end{align*}
 where $\hat{f}:=(Z^{-1})^{\otimes n} f = [\hat{f}_0, \dots,\hat{f}_d, 0, \dotsc, 0]$
 and $\hat{g}:=(Z^{-1})^{\otimes n'} g = [0, \dotsc, 0, \hat{g}_{d'},\dots,\hat{g}_0]$
 due to Lemma~\ref{lem:vanishing_form_in_Z_basis}.
 Moreover $\hat{f}_d \not = 0$ and $\hat{g}_{d'} \neq 0$.

 If $d\ge 2$, we can do $d'$ many self-loops of $\neq_2$ on $\hat{g}$,
 getting $\hat{g}':=[0,\dots,0,\hat{g}_{d'}]$ of arity $n'-2d'\ge 1$.
 Thus $g':=Z^{\otimes (n'-2d')}\hat{g}'=[1,-i]^{\otimes (n'-2d')}$ up to a nonzero constant.
 We apply Lemma \ref{lem:van:deg} to derive that $\Holant(f,g)$ is \numP-hard.
 If $d'\ge 2$, we can similarly get $[1,i]^{\otimes (n-2d)}$ and apply Lemma~\ref{lem:van:deg}.
 Thus we can assume that $d = d' = 1$.

 So up to nonzero constants, we have $\hat{f} = [a, 1, 0, \dotsc, 0]$ and $\hat{g} = [0, \dotsc, 0, 1, b]$ for some $a, b \in \mathbb{C}$.
 If $a=b=0$, then we apply Lemma \ref{lem:arity4:special_vanishing_case} to conclude \numP-hardness.

 We may thus assume that $a \neq 0$.
 The case of $b\neq 0$ is similar.
 We show that it is always possible to get two such signatures of the same arity $\min\{n, n'\}$.
 Suppose $n > n'$.
 We will construct $[0, 1]^{\otimes (n-n')}$.
 Connecting it back to $\hat{f}$ via $\neq_2$,
 we get a symmetric signature of arity $n'$ consisting of the first $n' + 1$ entries of $\hat{f}$.
 A similar proof works when $n' > n$.
 
 We form a loop from $\hat{f}$ via $\neq_2$.
 It is easy to see that this signature is the degenerate signature $2 [1,0]^{\otimes (n-2)}$.
 Similarly, we can form a loop from $\hat{g}$ and can get $2 [0, 1]^{\otimes (n'-2)}$.
 Thus we have both $[1,0]^{\otimes (n-2)}$ and $[0, 1]^{\otimes (n'-2)}$.
 We can connect all $n'-2$ edges of the second to the first, connected by $\neq_2$.
 This gives us $[1,0]^{\otimes (n-n')}$.
 Similarly to this connection, connect $(n-n')$ many $[0,1]^{\otimes (n'-2)}$ to $n'-3$ many $[1,0]^{\otimes (n-n')}$.
 As $(n-n')(n'-2)-(n-n')(n'-3)=n-n'$,
 the resulting signature is $[0, 1]^{\otimes (n-n')}$.

 Thus we may assume $n = n'$.
 Connecting $[0, 1]^{\otimes (n-2)}$ to $\hat{f} = [a, 1, 0, \dotsc, 0]$ via $\neq_2$ we get $\hat{h} = [a,1,0]$.
 Recall that $a\neq 0$.
 Translating this back by $Z$,
 we have a binary signature $h\notin\mathscr{R}^-_2$. 
 Since $g \in \mathscr{V}^-$, by Lemma~\ref{lem:van:bin}, $\Holant(g,h)$ is \numP-hard.
 Hence $\Holant(f,g)$ is also \numP-hard.
\end{proof}

\section{\texorpdfstring{$\mathscr{A}$}{A}- and \texorpdfstring{$\mathscr{P}$}{P}-transformable Signatures}

In this section,
we investigate the properties of $\mathscr{A}$- and $\mathscr{P}$-transformable signatures.
Throughout,
we define $\alpha = \frac{1+i}{\sqrt{2}} = \sqrt{i} = e^{\frac{\pi i}{4}}$
and use $\mathbf{O}_2(\mathbb{C})$ to denote the group of $2$-by-$2$ orthogonal matrices over $\mathbb{C}$.
Recall that $\mathscr{F}_{123} = \mathscr{F}_1 \union \mathscr{F}_2 \union \mathscr{F}_3$,
where $\mathscr{F}_1$, $\mathscr{F}_2$, and $\mathscr{F}_3$ are defined in Section~\ref{subsec:CSP-Tractable}.
While the main results in this section assume that the signatures involved are symmetric,
we note that some of the lemmas also hold without this assumption.

\subsection{Characterization of \texorpdfstring{$\mathscr{A}$}{A}- and \texorpdfstring{$\mathscr{P}$}{P}-transformable Signatures}

Recall that by definition,
if a set of signatures $\mathcal{F}$ is $\mathscr{A}$-transformable (resp.~$\mathscr{P}$-transformable),
then the binary equality $=_2$ must be simultaneously transformed into $\mathscr{A}$ (resp.~$\mathscr{P}$) along with $\mathcal{F}$.
We first characterize the possible matrices of such a transformation by just considering the transformation of the binary equality.
While there are many binary signatures in $\mathscr{A} \union \mathscr{P}$,
it turns out that it is sufficient to consider only three signatures.

\begin{proposition} \label{prop:matrix_cha}
 Let $T \in \mathbb{C}^{2 \times 2}$ be a matrix.
 Then the following hold:
 \begin{enumerate}
  \item $[1,0,1] T^{\otimes 2} = [1,0,1]$ iff $T \in \mathbf{O}_2(\mathbb{C})$; \label{prop:matrix_cha:case_101}
  \item $[1,0,1] T^{\otimes 2} = [1,0,i]$ iff there exists an $H \in \mathbf{O}_2(\mathbb{C})$
  such that $T = H \left[\begin{smallmatrix} 1 & 0 \\ 0 & \alpha \end{smallmatrix}\right]$; \label{prop:matrix_cha:case_10a}
  \item $[1,0,1] T^{\otimes 2} = [0,1,0]$ iff there exists an $H \in \mathbf{O}_2(\mathbb{C})$
  such that $T = \frac{1}{\sqrt{2}} H \left[\begin{smallmatrix} 1 & 1 \\ i & -i \end{smallmatrix}\right]$. \label{prop:matrix_cha:case_010}
 \end{enumerate}
\end{proposition}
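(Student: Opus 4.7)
The plan is to reinterpret the equation $[1,0,1]\,T^{\otimes 2} = w$ as a matrix equation about $T^{\mathsf{T}} T$, and then in each of the three cases solve that matrix equation by factoring through a convenient ``base'' matrix $S$ satisfying $S^{\mathsf{T}} S = w$ (viewed as a $2\times 2$ matrix).

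First, I would write $T = \left[\begin{smallmatrix} a & b \\ c & d \end{smallmatrix}\right]$ and compute directly. Since $[1,0,1]$ as a row vector $(1,0,0,1)$ is the flattened identity, a one-line tensor calculation gives
\[
[1,0,1]\,T^{\otimes 2} = \bigl( a^2+c^2,\ ab+cd,\ ab+cd,\ b^2+d^2\bigr),
\]
which is exactly the flattening of the symmetric $2\times 2$ matrix $T^{\mathsf{T}} T$. So the three claimed equations are respectively equivalent to
\[
T^{\mathsf{T}} T = I_2,\qquad T^{\mathsf{T}} T = \begin{bmatrix} 1 & 0 \\ 0 & i \end{bmatrix},\qquad T^{\mathsf{T}} T = \begin{bmatrix} 0 & 1 \\ 1 & 0 \end{bmatrix}.
\]

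Next, I would use the following simple observation: if $S$ is any fixed matrix with $S^{\mathsf{T}} S = N$, then $T^{\mathsf{T}} T = N$ iff $TS^{-1} \in \mathbf{O}_2(\mathbb{C})$ (provided $S$ is invertible, which will hold in all three cases). Indeed, $(TS^{-1})^{\mathsf{T}}(TS^{-1}) = (S^{\mathsf{T}})^{-1} T^{\mathsf{T}} T\, S^{-1} = (S^{\mathsf{T}})^{-1} S^{\mathsf{T}} S\, S^{-1} = I_2$, and conversely if $TS^{-1} = H$ is orthogonal then $T^{\mathsf{T}} T = S^{\mathsf{T}} H^{\mathsf{T}} H S = S^{\mathsf{T}} S = N$. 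So in each case the problem reduces to exhibiting one suitable $S$.

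For case~\ref{prop:matrix_cha:case_101} take $S = I_2$, which is trivial and recovers the definition of an orthogonal matrix. For case~\ref{prop:matrix_cha:case_10a} take $S = \left[\begin{smallmatrix} 1 & 0 \\ 0 & \alpha \end{smallmatrix}\right]$; since $\alpha^2 = i$, we have $S^{\mathsf{T}} S = \mathrm{diag}(1,i)$, and the characterization $T = HS$ with $H \in \mathbf{O}_2(\mathbb{C})$ follows. For case~\ref{prop:matrix_cha:case_010} take $S = \frac{1}{\sqrt{2}} \left[\begin{smallmatrix} 1 & 1 \\ i & -i \end{smallmatrix}\right]$; a direct multiplication
\[
S^{\mathsf{T}} S = \tfrac{1}{2}\begin{bmatrix} 1 & i \\ 1 & -i \end{bmatrix}\begin{bmatrix} 1 & 1 \\ i & -i \end{bmatrix} = \tfrac{1}{2}\begin{bmatrix} 0 & 2 \\ 2 & 0 \end{bmatrix} = \begin{bmatrix} 0 & 1 \\ 1 & 0 \end{bmatrix}
\]
verifies the hypothesis, and again $T = HS$ with $H \in \mathbf{O}_2(\mathbb{C})$ is the complete solution set.

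There is no serious obstacle here; the only subtlety is recognizing that the row vector action $w \mapsto w\,T^{\otimes 2}$ on the particular symmetric signature $[1,0,1]$ coincides with the congruence action $N \mapsto T^{\mathsf{T}} N T$ at $N = I_2$, reducing everything to the well-known fact that solutions to $T^{\mathsf{T}} T = N$ form a single orthogonal coset once one factorization $N = S^{\mathsf{T}} S$ is in hand.
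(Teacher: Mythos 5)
Your proposal is correct and follows essentially the same route as the paper: identify case~\ref{prop:matrix_cha:case_101} with the condition $\transpose{T}T = I_2$, then settle cases~\ref{prop:matrix_cha:case_10a} and~\ref{prop:matrix_cha:case_010} by factoring out the fixed matrices $\left[\begin{smallmatrix} 1 & 0 \\ 0 & \alpha \end{smallmatrix}\right]$ and $\tfrac{1}{\sqrt{2}}\left[\begin{smallmatrix} 1 & 1 \\ i & -i \end{smallmatrix}\right]$ so that the remaining factor is orthogonal by case~\ref{prop:matrix_cha:case_101}. Making the congruence action $N \mapsto \transpose{T} N T$ explicit is a nice touch, but it is the same argument.
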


\begin{proof}
 Case~\ref{prop:matrix_cha:case_101} is clear since
 \[[1,0,1] T^{\otimes 2} = [1,0,1] \iff T^\intercal I_2 T = I_2 \iff T^\intercal T = I_2,\]
 the definition of a (2-by-2) orthogonal matrix.
 Now we use this case to prove the others.

 For $M_2 = \left[\begin{smallmatrix} 1 & 0 \\ 0 & \alpha \end{smallmatrix}\right]$
 and $M_3 = Z = \frac{1}{\sqrt{2}} \left[\begin{smallmatrix} 1 & 1 \\ i & -i \end{smallmatrix}\right]$,
 let $T_j = H M_j$ (for $j=2,3$), where $H \in \mathbf{O}_2(\mathbb{C})$.
 Then
 \[[1,0,1] T_j^{\otimes 2} = [1,0,1](H M_j)^{\otimes 2} = [1,0,1]M_j^{\otimes 2} = f_j,\]
 where $f_j$ is the binary signature in case $j$.

 On the other hand, suppose that $[1,0,1] (T_j)^{\otimes 2} = f_j$.
 Then we have
 \[[1,0,1] (T_j M_j^{-1})^{\otimes 2} = f_j(M_j^{-1})^{\otimes 2} = [1,0,1],\]
 so $H = T_j M_j^{-1} \in \mathbf{O}_2(\mathbb{C})$ by case~\ref{prop:matrix_cha:case_101}.
 Thus $T_j = H M_j$ as desired.
\end{proof}

We also need the following lemma;
the proof is direct.

\begin{lemma} \label{lem:rec}
 If a symmetric signature $f = [f_0, f_1, \dotsc, f_n]$ can be expressed in the form $f = a [1, \lambda]^{\otimes n} + b [1, \mu]^{\otimes n}$,
 for some $a, b, \lambda, \mu \in \mathbb{C}$,
 then the $f_k$'s satisfy the recurrence relation $f_{k+2} = (\lambda + \mu) f_{k+1} - \lambda \mu f_k$ for $0 \le k \le n - 2$.
\end{lemma}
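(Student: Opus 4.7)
The plan is to observe that the hypothesis pins down the entries of $f$ explicitly, and the conclusion then reduces to a one-line identity. Since $f = a[1,\lambda]^{\otimes n} + b[1,\mu]^{\otimes n}$ is a sum of two symmetric tensor powers, the entry of Hamming weight $k$ is $f_k = a\lambda^k + b\mu^k$ for $0 \le k \le n$. So the task is simply to show that sequences of the form $k \mapsto a\lambda^k + b\mu^k$ satisfy the linear recurrence whose characteristic polynomial is $(x-\lambda)(x-\mu) = x^2 - (\lambda+\mu)x + \lambda\mu$.

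First I would read off $f_k = a\lambda^k + b\mu^k$ from the tensor product expression; this is immediate by looking coordinate-wise, since $([1,\lambda]^{\otimes n})_k = \lambda^k$ (because every length-$n$ binary string of Hamming weight $k$ contributes exactly $\lambda^k$, and in the symmetric signature notation all such strings are identified). Then I would substitute into $(\lambda+\mu)f_{k+1} - \lambda\mu f_k$ and expand:
\[
(\lambda+\mu)(a\lambda^{k+1} + b\mu^{k+1}) - \lambda\mu(a\lambda^k + b\mu^k) = a\lambda^{k+2} + a\mu\lambda^{k+1} + b\lambda\mu^{k+1} + b\mu^{k+2} - a\mu\lambda^{k+1} - b\lambda\mu^{k+1},
\]
and the cross terms cancel, leaving $a\lambda^{k+2} + b\mu^{k+2} = f_{k+2}$, as required, for every $0 \le k \le n-2$.

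There is no real obstacle here; the statement is essentially the standard fact that $\lambda$ and $\mu$ are the roots of the characteristic polynomial $x^2 - (\lambda+\mu)x + \lambda\mu$, so any $\mathbb{C}$-linear combination of the two geometric sequences $\lambda^k$ and $\mu^k$ satisfies the associated second-order recurrence. The only bookkeeping is to make sure $f_k = a\lambda^k + b\mu^k$ is read off correctly from the symmetric signature notation, which as noted follows immediately from the definition of the tensor power of a unary signature.
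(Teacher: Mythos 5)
Your proof is correct, and it is exactly the direct computation the paper has in mind (the paper states the lemma with "the proof is direct" and omits the details): read off $f_k = a\lambda^k + b\mu^k$ from the tensor expression and verify that such sequences satisfy the recurrence with characteristic polynomial $(x-\lambda)(x-\mu)$. Nothing further is needed.
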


To simplify the proof of the characterization of the $\mathscr{A}$-transformable signatures,
we introduce the left and right stabilizer groups of $\mathscr{A}$:
\begin{align*}
 \LStabA &= \{T \in \mathbf{GL}_2(\mathbb{C}) \st T \mathscr{A} \subseteq \mathscr{A}\};\\
 \RStabA &= \{T \in \mathbf{GL}_2(\mathbb{C}) \st \mathscr{A} T \subseteq \mathscr{A}\}.
\end{align*}
In fact, these two groups are equal and coincide with the group of nonsingular signature matrices of binary affine signatures.
More precisely, for a binary signature $f = (f^{00}, f^{01}, f^{10}, f^{11})$,
we define its signature matrix $M_f$ to be
\[
 M_f
 =
 \begin{bmatrix}
  f^{00} & f^{01}\\
  f^{10} & f^{11}
 \end{bmatrix}.
\]
Let
\[
 \binaryA = \{M_f \st f \in \mathscr{A},~\arity(f) = 2, \text{ and } \det(M_f) \neq 0\}
\]
be the set of nonsingular signature matrices of the binary affine signatures.
It is straightforward to verify that $\binaryA$ is closed under multiplication and inverses.
Therefore $\binaryA$ forms a group.

Let $D = \left[\begin{smallmatrix} 1 & 0 \\ 0 & i \end{smallmatrix}\right]$
and $H_2 = \frac{1}{\sqrt{2}} \left[\begin{smallmatrix} 1 & 1 \\ 1 & -1 \end{smallmatrix}\right]$.
Also let $X = \left[\begin{smallmatrix} 0 & 1 \\ 1 & 0 \end{smallmatrix}\right]$
and $Z = \tfrac{1}{\sqrt{2}} \left[\begin{smallmatrix} 1 & 1 \\ i & -i \end{smallmatrix}\right]$.
Note that $Z = D H_2$ and that $D^2  Z = \frac{1}{\sqrt{2}} \left[\begin{smallmatrix} 1 & 1 \\ -i & i \end{smallmatrix}\right] = Z X$,
hence $X = Z^{-1} D^2  Z$.
Furthermore, $D,H_2,X,Z \in \LStabA \intersect \RStabA \intersect \binaryA$, as well as all nonzero scalar multiples of these matrices.

Not only are the groups $\LStabA$, $\RStabA$, and $\binaryA$ equal,
they are generated by $D$ and $H_2$ with a nonzero scalar multiple.

\begin{lemma} \label{lem:StabA:coincide}
 $\LStabA = \RStabA = \binaryA=\mathbb{C}^* \cdot \langle D, H_2 \rangle$.
\end{lemma}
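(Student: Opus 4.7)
The plan is to let all three groups act on a common finite set and read off their structure from that action. Let $U$ denote the set of nonzero unary $\mathscr{A}$-signatures modulo nonzero scalar multiples. Inspecting the definition of $\mathscr{A}$ restricted to arity one, $U$ consists of exactly six projective points represented by $[1,0]$, $[0,1]$, $[1,1]$, $[1,-1]$, $[1,i]$, $[1,-i]$, the Pauli eigenvectors forming three antipodal pairs of an octahedron inscribed in the Riemann sphere.

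The containment $\mathbb{C}^* \cdot \langle D, H_2 \rangle \subseteq \LStabA \cap \RStabA \cap \binaryA$ is immediate from the fact, already noted above, that $D$ and $H_2$ lie in each of the three groups and that each set is closed under products and inverses. Conversely, I plan to show that each of the three groups acts on $U$ by projective permutations. For $T \in \LStabA$, the map $u \mapsto Tu$ sends unary $\mathscr{A}$-signatures to unary $\mathscr{A}$-signatures and is bijective. For $T \in \RStabA$ the same argument applies with $T^{\top}$ in place of $T$. For $T = M_f \in \binaryA$, the map $u \mapsto M_f u$ is the Holant contraction of $f$ with $u$ along one input, which yields another unary $\mathscr{A}$-signature because $\mathscr{A}$ is closed under binary-unary contraction; this is verified by a finite case analysis on $u$: for $u \in \{[1,0],[0,1]\}$ contraction simply restricts a variable of $f$, and for $u \in \{[1,1],[1,-1],[1,i],[1,-i]\}$ the result is a combination of the two columns of $M_f$ with coefficients in $\{\pm 1, \pm i\}$ that one checks is proportional to an element of $U$.

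A direct computation on the six projective points shows $D$ induces the $4$-cycle $[1,1] \to [1,i] \to [1,-1] \to [1,-i] \to [1,1]$ while fixing $[1,0]$ and $[0,1]$, and $H_2$ induces the involution swapping $[1,0] \leftrightarrow [1,1]$, $[0,1] \leftrightarrow [1,-1]$, and $[1,i] \leftrightarrow [1,-i]$. These two permutations generate a group of order $24$, namely the symmetric group $S_4$ acting on the three antipodal pairs. On the other hand, any $T \in \mathbf{GL}_2(\mathbb{C})$ acts on the projective line as a M\"obius transformation and is therefore determined by its images on three points, so the permutation group of $U$ arising from $\mathbf{GL}_2(\mathbb{C})$ is a subgroup of a finite group (the stabilizer of $U$ in the M\"obius group). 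Since this finite stabilizer is itself of order $24$ (it is the rotational symmetry group of the octahedron on the Riemann sphere), $\langle D, H_2 \rangle$ modulo scalars already realizes it in full. Finally, the kernel of the projective action of $\mathbf{GL}_2(\mathbb{C})$ on $U$ is $\mathbb{C}^*$: any matrix fixing three distinct projective points is a nonzero scalar multiple of the identity.

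Combining these observations: for any $T$ in $\LStabA$, $\RStabA$, or $\binaryA$, the induced permutation of $U$ is also realized by some $T' \in \langle D, H_2 \rangle$, so $T(T')^{-1} \in \mathbb{C}^*$ and $T \in \mathbb{C}^* \langle D, H_2 \rangle$. Together with the first paragraph this yields all three asserted equalities. The main technical obstacle is the binary-unary contraction-closure property of $\mathscr{A}$ (that $M_f u \in \mathscr{A}$ for any binary affine $f$ and unary affine $u$), which is the only ingredient that reaches into the combinatorial definition of $\mathscr{A}$ rather than pure matrix computation; the $S_4$-identification is standard (this is the one-qubit Clifford group modulo its center) but can alternatively be read off from a brief enumeration of the $24$ products of $D$ and $H_2$.
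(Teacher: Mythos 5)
Your argument is correct, but it reaches the conclusion by a genuinely different route than the paper. The paper proves the hard containment by placing $\RStabA$ inside the right stabilizer of $\mathscr{F}_{123}$ and then pinning down an arbitrary stabilizing $T$ through its action on just two symmetric signatures: normalizing by an element of $\langle D, H_2\rangle$ forces $(=_3)T^{\otimes 3}$ into $\mathscr{F}_1$, hence $T$ essentially diagonal, and the condition on $(=_4)T^{\otimes 4}$ then forces the diagonal entry to be a power of $i$. You instead let all three sets act projectively on the six unary affine signatures $[1,0],[0,1],[1,\pm 1],[1,\pm i]$ and identify everything with the order-$24$ octahedral (projective Clifford) group. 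This buys a uniform treatment of $\LStabA$, $\RStabA$, and $\binaryA$ and a conceptual explanation of why the answer is exactly $\mathbb{C}^*\cdot\langle D,H_2\rangle$, whereas the paper's computation stays entirely inside its signature calculus and needs no external facts. The price is that your proof rests on two assertions the paper never needs, and these are where your write-up must be completed: (i) that contracting a unary affine signature into a nonsingular binary affine signature matrix again yields a nonzero unary affine signature --- your finite check does go through, since up to scalars such matrices are $D^r$, $D^rX$, or $\left[\begin{smallmatrix} 1 & i^{\beta} \\ i^{\alpha} & -i^{\alpha+\beta} \end{smallmatrix}\right]$, each of which visibly permutes the six points; and (ii) that the stabilizer of this six-point configuration in $\mathbf{PGL}_2(\mathbb{C})$ has order exactly $24$, which you assert via the octahedron but do not prove --- it follows either from the classification of finite subgroups of $\mathbf{PGL}_2(\mathbb{C})$ (no finite subgroup properly contains $S_4$) or from a short orbit--stabilizer computation showing that a stabilizing M\"obius map fixing $0$ must also fix $\infty$ and hence be a power of $D$. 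Two small finishing touches: for $\RStabA$ your argument yields $\transpose{T}\in\mathbb{C}^*\cdot\langle D,H_2\rangle$, so note that this set is closed under transposition because $D$ and $H_2$ are symmetric; and the injectivity-plus-finiteness remark is what upgrades ``maps $U$ into $U$'' to ``permutes $U$''. With those points filled in, your proof is complete and correct.
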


\begin{proof}
  Let 
  \begin{align*}
    \mathbf{S} := \{ S \in \mathbf{GL}_2(\mathbb{C}) \mid \mathscr{F}_{123} S \subseteq \mathscr{F}_{123}\}
  \end{align*}
  be the right stabilizer group of $\mathscr{F}_{123}$.
  Since $\mathscr{F}_{123}\subset\mathscr{A}$, and symmetric signatures are still symmetric under any transformation,
  we have that $\RStabA \subseteq \mathbf{S}$.
  Moreover, as $\mathscr{A}$ is closed under gadget construction, $\binaryA \subseteq \RStabA$.
  Hence, $\binaryA \subseteq \RStabA \subseteq \mathbf{S}$. 
  Together with the fact that $D, H_2 \in \binaryA$,
  we have $\mathbb{C}^* \cdot \langle D, H_2 \rangle \subseteq \binaryA \subseteq \RStabA \subseteq \mathbf{S}$.
  To finish the proof, we show that $\mathbf{S} \subseteq \mathbb{C}^* \cdot \langle D, H_2 \rangle$.
  For $\LStabA$, the proof is similar.

 Consider some $T \in \mathbf{S}$.
 For $f = (=_3)$, we have $f T^{\otimes 3} \in \mathscr{F}_{123}$.
 Then by the form of $\mathscr{F}_{123}$,
 for some $M \in \langle D, H_2 \rangle$,
 chosen to be either $I$,
 or $H_2^\intercal = H_2$,
 or $Z^\intercal = H_2 D$,
 we have $f (TM^{-1})^{\otimes 3} \in \mathscr{F}_1$,
 which is a generalized equality signature.
 Then either $T M^{-1}$ or $T M^{-1}X$ is a diagonal matrix $T' = \lambda \left[\begin{smallmatrix} 1 & 0 \\ 0 & d \end{smallmatrix}\right]$.
 Furthermore, by applying $T'$ to $=_4$, we conclude that $(=_4) T'^{\otimes 4} \in \mathscr{F}_{1}$,
 since it is in $\mathscr{F}_{123}$ but not in $\mathscr{F}_{2} \union \mathscr{F}_{3}$ because $T'$ is diagonal.
 It follows that $d$ is a power of $i$, and hence $\left[\begin{smallmatrix} 1 & 0 \\ 0 & d \end{smallmatrix}\right]$ is a power of $D$.
 Thus $T \in \mathbb{C}^* \cdot \langle D, H_2 \rangle$.
\end{proof}

Since $\LStabA = \RStabA$, we simply write $\StabA$ for this group.
Of course each $T$ under which $\mathcal{F}$ is $\mathscr{A}$-transformable is just a particular solution that can be extended by any element in $\StabA$.

\begin{lemma} \label{lem:StabA}
 Let $\mathcal{F}$ be a set of signatures.
 Then $\mathcal{F}$ is $\mathscr{A}$-transformable under $T$ iff $\mathcal{F}$ is $\mathscr{A}$-transformable under any $T' \in T \StabA$.
\end{lemma}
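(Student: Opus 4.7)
The plan is to verify the two defining conditions of $\mathscr{A}$-transformability are each invariant under replacing $T$ by $T' = TS$ for $S \in \StabA$, exploiting the group structure of $\StabA$ together with Lemma~\ref{lem:StabA:coincide} which identifies $\StabA$ with both $\LStabA$ and $\RStabA$. Since $T \in T\StabA$ (as $I \in \StabA$), the direction ``$\Leftarrow$'' is immediate when the statement is read with the universal quantifier ``any'', so the real content is the forward direction.

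First, I would record the elementary fact that $\StabA$ is a group: every $S \in \StabA$ is invertible, so $S^{-1} \in \StabA$. Combined with the defining inclusion $S\mathscr{A} \subseteq \mathscr{A}$, applying this to both $S$ and $S^{-1}$ yields the equality $S\mathscr{A} = \mathscr{A}$; similarly $\mathscr{A} S = \mathscr{A}$. Analogously, for each arity $n$, the set of $n$-ary signatures in $\mathscr{A}$ is preserved by both left multiplication by $S^{\otimes n}$ and right multiplication by $S^{\otimes n}$ (using the functoriality of tensor powers, $(SS^{-1})^{\otimes n} = I^{\otimes n}$).

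Next, assume $\mathcal{F}$ is $\mathscr{A}$-transformable under $T$, and fix an arbitrary $S \in \StabA$, so that $T' = TS$. For the first condition, for any $f \in \mathcal{F}$ of arity $n$, $f \in T\mathscr{A}$ means $(T^{-1})^{\otimes n} f \in \mathscr{A}$. Then $(T'^{-1})^{\otimes n} f = (S^{-1})^{\otimes n} (T^{-1})^{\otimes n} f \in S^{-1}\mathscr{A} = \mathscr{A}$, using $S^{-1} \in \StabA = \LStabA$. Hence $\mathcal{F} \subseteq T'\mathscr{A}$. For the second condition, compute
\[
[1,0,1] (T')^{\otimes 2} = [1,0,1] T^{\otimes 2} S^{\otimes 2}.
\]
By assumption $[1,0,1] T^{\otimes 2} \in \mathscr{A}$, and since $S \in \StabA = \RStabA$, right-multiplying by $S^{\otimes 2}$ keeps the result in $\mathscr{A}$. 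Thus $[1,0,1] (T')^{\otimes 2} \in \mathscr{A}$, completing the forward direction.

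For the reverse direction, if $\mathcal{F}$ is $\mathscr{A}$-transformable under every $T' \in T\StabA$, then in particular under $T' = T$ (taking $S = I \in \StabA$). The main and only subtle point in the whole argument is the appeal to Lemma~\ref{lem:StabA:coincide} to guarantee that $\StabA$ is simultaneously a left and right stabilizer; without this coincidence, the invariance of the two separate conditions would require two different hypotheses. Given that lemma, the proof reduces to the two-line bookkeeping above, so no genuine obstacle remains.
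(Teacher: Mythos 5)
Your proof is correct and follows essentially the same route as the paper: the reverse direction is immediate from $I_2 \in \StabA$, and the forward direction checks the two defining conditions by writing $T' = TS$ and using $S^{-1}\mathscr{A} = \mathscr{A}$ (left stabilizer) and $\mathscr{A}S^{\otimes 2}$-invariance (right stabilizer), exactly as in the paper's two-line computation, with the group structure and the coincidence $\LStabA = \RStabA$ from Lemma~\ref{lem:StabA:coincide} doing the work in both versions. Your write-up just makes the group-theoretic bookkeeping (invertibility of $S$ and the upgrade from $\subseteq$ to $=$) more explicit than the paper does.
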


\begin{proof}
 Sufficiency is trivial since $I_2 \in \StabA$.
 If $\mathcal{F}$ is $\mathscr{A}$-transformable under $T$,
 then by definition, we have $(=_2) T^{\otimes 2} \in \mathscr{A}$ and $\mathcal{F}'=T^{-1} \mathcal{F} \subseteq \mathscr{A}$.
 Let $T' = T M \in T\StabA$ for any $M \in \StabA$.
 It then follows that $(=_2) T'^{\otimes 2} = (=_2) T^{\otimes 2} M^{\otimes 2} \in \mathscr{A} M = \mathscr{A}$
 and $T'^{-1} \mathcal{F} = M^{-1} \mathcal{F}' \subseteq M^{-1} \mathscr{A} = \mathscr{A}$.
 Therefore $\mathcal{F}$ is $\mathscr{A}$-transformable under any $T'\in T\StabA$.
\end{proof}

After restricting by Proposition~\ref{prop:matrix_cha} and normalizing by Lemma~\ref{lem:StabA},
one only needs to check a small subset of $\mathbf{GL}_2(\mathbb{C})$ to determine if $\mathcal{F}$ is $\mathscr{A}$-transformable.

\begin{lemma} \label{lem:affine:trans}
 Let $\mathcal{F}$ be a set of signatures.
 Then $\mathcal{F}$ is $\mathscr{A}$-transformable iff
 there exists an $H \in \mathbf{O}_2(\mathbb{C})$ such that $\mathcal{F} \subseteq H \mathscr{A}$
 or $\mathcal{F} \subseteq H \left[\begin{smallmatrix} 1 & 0 \\ 0 & \alpha \end{smallmatrix}\right] \mathscr{A}$.
\end{lemma}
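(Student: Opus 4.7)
The plan is to reduce the characterization to a finite orbit computation for the right action of $\StabA$ on symmetric binary signatures, and then to invoke Proposition~\ref{prop:matrix_cha}. The ``if'' direction is immediate: taking $T = H$ (respectively $T = H\left[\begin{smallmatrix}1 & 0 \\ 0 & \alpha\end{smallmatrix}\right]$), we have $T^{-1}\mathcal{F} \subseteq \mathscr{A}$ directly from the hypothesis, and Proposition~\ref{prop:matrix_cha}(\ref{prop:matrix_cha:case_101}) (respectively Proposition~\ref{prop:matrix_cha}(\ref{prop:matrix_cha:case_10a})) gives $(=_2)T^{\otimes 2} \in \mathscr{A}$, so $\mathcal{F}$ is $\mathscr{A}$-transformable under this $T$.

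For the ``only if'' direction, suppose $\mathcal{F}$ is $\mathscr{A}$-transformable under some $T$ and set $g := (=_2)T^{\otimes 2}$. Then $g \in \mathscr{A}$ is a non-degenerate symmetric binary signature, non-degeneracy coming from $\det(T^{\top}T) = \det(T)^{2} \ne 0$. By Lemma~\ref{lem:StabA}, I may freely replace $T$ by $TM$ for any $M \in \StabA$ without disturbing $\mathscr{A}$-transformability, and this replacement right-multiplies $g$ by $M^{\otimes 2}$ while keeping $gM^{\otimes 2}$ inside $\mathscr{A}$. The crux is therefore to show that, via this right action together with an overall nonzero scalar, every non-degenerate symmetric binary signature in $\mathscr{A}$ can be brought into one of the two normal forms $[1,0,1]$ or $[1,0,i]$. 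Once $gM^{\otimes 2} = c \cdot [1,0,1]$ is achieved for some $c \ne 0$, applying Proposition~\ref{prop:matrix_cha}(\ref{prop:matrix_cha:case_101}) to $\sqrt{c}^{\,-1}TM$ yields $TM = \sqrt{c}\,H$ with $H \in \mathbf{O}_{2}(\mathbb{C})$, and then $\mathcal{F} \subseteq TM \cdot M^{-1}\mathscr{A} \subseteq \sqrt{c}\,H\,\mathscr{A} = H\mathscr{A}$, using $M^{-1}\mathscr{A} \subseteq \mathscr{A}$ and that $\mathscr{A}$ is closed under nonzero scalars; the $[1,0,i]$ case is analogous via Proposition~\ref{prop:matrix_cha}(\ref{prop:matrix_cha:case_10a}).

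For the orbit claim, I would proceed by enumeration. The explicit description of $\mathscr{F}_{123}$ in Section~\ref{subsec:CSP-Tractable} at arity $2$ shows that the non-degenerate symmetric binary signatures in $\mathscr{A}$ are, up to scalar, exactly the nine signatures $\{[1,0,\pm 1],\ [1,0,\pm i],\ [0,1,0],\ [1,\pm 1, -1],\ [1,\pm i, 1]\}$. Since $\StabA = \mathbb{C}^{*}\cdot\langle D, H_{2}\rangle$ by Lemma~\ref{lem:StabA:coincide}, it suffices to apply the two generators $D$ and $H_{2}$ to each candidate, and a short mechanical tabulation shows that $\{[1,0,1],\ [1,0,-1],\ [0,1,0]\}$ is closed under both generators while the remaining six signatures all lie in the orbit of $[1,0,i]$. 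The main obstacle is confirming that these two orbits are genuinely distinct. For this I would record the invariant that the set $S = \mathbb{C}^{*}\cdot\{I, D^{2}, X\}$ is closed under the maps $N \mapsto w^{\top}Nw$ for $w \in \{D, H_{2}\}$, a direct check using the identities $H_{2}D^{2}H_{2} = X$, $DXD = iX$, and $H_{2}XH_{2} = D^{2}$. Starting from $I \in S$ and iterating letter by letter through any word $M \in \langle D, H_{2}\rangle$ yields $M^{\top}M \in S$; the scalar from $\StabA$ keeps us inside $S$. Since $\mathrm{diag}(1, i)$ is not a scalar multiple of any of $I$, $D^{2}$, or $X$, the orbits of $[1, 0, 1]$ and $[1, 0, i]$ cannot coincide, completing the proof.
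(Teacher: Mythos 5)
Your proof is correct and follows essentially the same route as the paper: the ``if'' direction via Proposition~\ref{prop:matrix_cha}, and the ``only if'' direction by using Lemma~\ref{lem:StabA} to normalize $(=_2)T^{\otimes 2}$ within $\mathscr{F}_{123}$ down to $[1,0,1]$ or $[1,0,i]$ before invoking Proposition~\ref{prop:matrix_cha} again, with your explicit orbit tabulation under $D$ and $H_2$ playing the role of the paper's relations $\mathscr{F}_1 = H_2\mathscr{F}_2$, $\mathscr{F}_1 = (DH_2)^{-1}\mathscr{F}_3$ and the $D^{\otimes 2}$ normalization inside $\mathscr{F}_1$. The orbit-distinctness argument via the invariant $M^{\top}M \in \mathbb{C}^{*}\{I, D^{2}, X\}$ is correct but not needed for the lemma, since the statement only requires reaching one of the two normal forms.
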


\begin{proof}
 Sufficiency is easily verified by checking that $=_2$ is transformed into $\mathscr{A}$ in both cases.
 In particular, $H$ leaves $=_2$ unchanged.
 
 If $\mathcal{F}$ is $\mathscr{A}$-transformable, then by definition,
 there exists a matrix $T$ such that $(=_2) T^{\otimes 2} \in \mathscr{A}$ and $T^{-1} \mathcal{F} \subseteq \mathscr{A}$.
 Since $=_2$ is non-degenerate and symmetric, $(=_2) T^{\otimes 2} \in \mathscr{A}$ is equivalent to 
 $(=_2) T^{\otimes 2} \in \mathscr{F}_{123}$.

 Any signature in $\mathscr{F}_{123}$ is expressible as $c (v_1^{\otimes n} + i^t v_2^{\otimes n})$,
 where $t \in \{0,1,2,3\}$ and $(v_1, v_2)$ is a pair of vectors in the set
 \begin{equation*}
 \left\{
  \left(\begin{bmatrix} 1 \\ 0 \end{bmatrix}, \begin{bmatrix} 0 \\  1 \end{bmatrix}\right),
  \left(\begin{bmatrix} 1 \\ 1 \end{bmatrix}, \begin{bmatrix} 1 \\ -1 \end{bmatrix}\right),
  \left(\begin{bmatrix} 1 \\ i \end{bmatrix}, \begin{bmatrix} 1 \\ -i \end{bmatrix}\right)\right\}.
 \end{equation*}
 We use $\StabA$ to further normalize these three sets by Lemma~\ref{lem:StabA}.
 In particular, $\mathscr{F}_1 = H_2\mathscr{F}_2$ and $\mathscr{F}_1 = (D H_2)^{-1}\mathscr{F}_3$.
 Furthermore, the binary signatures in $\mathscr{F}_1$ are just the four signatures $[1,0,1]$, $[1,0,i]$, $[1,0,-1]$, and $[1,0,-i]$ up to a scalar.
 We also normalize these four as $[1,0,1] = [1,0,-1] D^{\otimes 2}$ and $[1,0,i] = [1,0,-i] D^{\otimes 2}$.
 Hence $\mathcal{F}$ being $\mathscr{A}$-transformable implies that
 there exists a matrix $T$ such that $(=_2) T^{\otimes 2} \in \{[1,0,1],[1,0,i]\}$ and $T^{-1}\mathcal{F} \subseteq \mathscr{A}$.
 Now we apply Proposition~\ref{prop:matrix_cha}.
 \begin{enumerate}
  \item If $(=_2) T^{\otimes 2} = [1,0,1]$,
  then by case~\ref{prop:matrix_cha:case_101} of Proposition~\ref{prop:matrix_cha},
  we have $T\in\mathbf{O}_2(\mathbb{C})$.
  Therefore $\mathcal{F} \subseteq H \mathscr{A}$ where $H = T \in \mathbf{O}_2(\mathbb{C})$.
  \item If $(=_2) T^{\otimes 2} = [1,0,i]$,
  then by case~\ref{prop:matrix_cha:case_10a} of Proposition~\ref{prop:matrix_cha},
  there exists an $H \in \mathbf{O}_2(\mathbb{C})$ such that $T = H \left[\begin{smallmatrix} 1 & 0 \\ 0 & \alpha \end{smallmatrix}\right]$.
  Therefore $\mathcal{F} \subseteq T \mathscr{A} = H \left[\begin{smallmatrix} 1 & 0 \\ 0 & \alpha \end{smallmatrix}\right]\mathscr{A}$
  where $H\in\mathbf{O}_2(\mathbb{C})$.
 \end{enumerate}
 This completes the proof.
\end{proof}

Using these two lemmas,
we can characterize all $\mathscr{A}$-transformable signatures.
We first define the three sets $\mathscr{A}_1$, $\mathscr{A}_2$, and $\mathscr{A}_3$.

\begin{definition} \label{def:A1}
 A symmetric signature $f$ of arity $n$ is in $\mathscr{A}_1$ if
 there exists an $H \in \mathbf{O}_2(\mathbb{C})$ and a nonzero constant $c \in \mathbb{C}$ such that
 $f = c H^{\otimes n} \left(\left[\begin{smallmatrix} 1 \\  1 \end{smallmatrix}\right]^{\otimes n}
                    + \beta \left[\begin{smallmatrix} 1 \\ -1 \end{smallmatrix}\right]^{\otimes n}\right)$,
  where $\beta = \alpha^{tn+2r}$ for some $r \in \{0,1,2,3\}$ and $t \in \{0,1\}$.
\end{definition}

When such an $H$ exists,
we say that $f \in \mathscr{A}_1$ with transformation $H$.
If $f \in \mathscr{A}_1$ with $I_2$,
then we say $f$ is in the canonical form of $\mathscr{A}_1$.
If $f$ is in the canonical form of $\mathscr{A}_1$,
then by Lemma~\ref{lem:rec},
for any $0 \le k \le n-2$,
we have $f_{k+2} = f_k$ and one of the following holds:
\begin{itemize}
 \item $f_0 = 0$, or
 \item $f_1 = 0$, or
 \item $f_1 = \pm i f_0 \neq 0$, or
 \item $n$ is odd and $f_1 = \pm (1 \pm \sqrt{2}) i f_0 \ne 0$ (all four sign choices are permissible).
\end{itemize}
Notice that when $n$ is odd and $t = 1$ in Definition~\ref{def:A1},
it has some complication as described by the factor $\alpha^{t n + 2 r}$.

\begin{definition} \label{def:single:A2}
 A symmetric signature $f$ of arity $n$ is in $\mathscr{A}_2$ if
 there exists an $H \in \mathbf{O}_2(\mathbb{C})$ and a nonzero constant $c \in \mathbb{C}$ such that
 $f = c H^{\otimes n} \left(\left[\begin{smallmatrix} 1 \\  i \end{smallmatrix}\right]^{\otimes n}
                          + \left[\begin{smallmatrix} 1 \\ -i \end{smallmatrix}\right]^{\otimes n}\right)$.
\end{definition}

Similarly,
when such an $H$ exists,
we say that $f \in \mathscr{A}_2$ with transformation $H$.
If $f \in \mathscr{A}_2$ with $I_2$,
then we say $f$ is in the canonical form of $\mathscr{A}_2$.
If $f$ is in the canonical form of $\mathscr{A}_2$,
then by Lemma~\ref{lem:rec},
for any $0 \le k \le n-2$,
we have $f_{k+2} = -f_k$.
Since $f$ is non-degenerate,
$f_1 \not = \pm i f_0$ is implied.

It is worth noting that
$\{\left[\begin{smallmatrix} 1 \\  i \end{smallmatrix}\right],
   \left[\begin{smallmatrix} 1 \\ -i \end{smallmatrix}\right]\}$
is setwise invariant up to scale under any transformation in $\mathbf{O}_2(\mathbb{C})$ up to nonzero constants.
That is,
these vectors are the eigenvectors of orthogonal matrices.
Thus for any $H \in \mathbf{O}_2(\mathbb{C})$,
we can write $\trans{1}{1}{i}{-i}^{-1} H \trans{1}{1}{i}{-i} = D$,
where $D$ is either a diagonal or anti-diagonal matrix.
It is also helpful to view this equation as $H \trans{1}{1}{i}{-i} = \trans{1}{1}{i}{-i} D$.

Using this fact,
the following lemma gives a characterization of $\mathscr{A}_2$.
It says that any signature in $\mathscr{A}_2$ is essentially in canonical form.

\begin{lemma} \label{lem:single:P2}
 Let $f$ be a symmetric signature of arity $n$.
 Then $f \in \mathscr{A}_2$ iff
 $f = c \left(\left[\begin{smallmatrix} 1 \\  i \end{smallmatrix}\right]^{\otimes n}
      + \beta \left[\begin{smallmatrix} 1 \\ -i \end{smallmatrix}\right]^{\otimes n}\right)$
 for some nonzero constants $c, \beta \in \mathbb{C}$.
\end{lemma}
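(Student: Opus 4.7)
The plan is to exploit the crucial fact, noted just before the lemma, that $\{[1,i]^T, [1,-i]^T\}$ is setwise invariant (up to nonzero scalar multiples) under the action of every $H \in \mathbf{O}_2(\mathbb{C})$. I would first verify this by parametrizing: every $H \in \mathbf{O}_2(\mathbb{C})$ is of the form $\left[\begin{smallmatrix} a & b \\ -b & a \end{smallmatrix}\right]$ (determinant $1$) or $\left[\begin{smallmatrix} a & b \\ b & -a \end{smallmatrix}\right]$ (determinant $-1$), where $a^2+b^2=1$. A direct calculation shows that rotations preserve each of $[1,\pm i]^T$ as eigenvectors with reciprocal eigenvalues $\lambda = a+bi$ and $\lambda^{-1} = a-bi$, while reflections swap them with scaling factors $a+bi$ and $a-bi$. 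In all cases both scalars are nonzero since their product is $a^2+b^2 = 1$.

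For the ``only if'' direction, suppose $f \in \mathscr{A}_2$, so $f = c\,H^{\otimes n}\bigl([1,i]^{\otimes n} + [1,-i]^{\otimes n}\bigr)$ for some $H \in \mathbf{O}_2(\mathbb{C})$ and $c \neq 0$. By the observation above, $H^{\otimes n}\bigl([1,i]^{\otimes n} + [1,-i]^{\otimes n}\bigr) = \mu_1^n [1,i]^{\otimes n} + \mu_2^n [1,-i]^{\otimes n}$ for some nonzero $\mu_1, \mu_2$ (possibly after interchanging the roles, which does not affect the conclusion). Factoring out $c\mu_1^n$, one gets $f = c'\bigl([1,i]^{\otimes n} + \beta [1,-i]^{\otimes n}\bigr)$ with $c' = c\mu_1^n \neq 0$ and $\beta = (\mu_2/\mu_1)^n \neq 0$.

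For the ``if'' direction, given any $\beta \neq 0$, I pick $\lambda$ to be any $2n$-th root of $\beta^{-1}$ and set $a = (\lambda + \lambda^{-1})/2$, $b = (\lambda - \lambda^{-1})/(2i)$, so that $a^2+b^2 = 1$ and $\lambda = a+bi$. Then $H = \left[\begin{smallmatrix} a & b \\ -b & a \end{smallmatrix}\right] \in \mathbf{O}_2(\mathbb{C})$ satisfies $H[1,i]^T = \lambda[1,i]^T$ and $H[1,-i]^T = \lambda^{-1}[1,-i]^T$, which yields
\[
 H^{\otimes n}\bigl([1,i]^{\otimes n} + [1,-i]^{\otimes n}\bigr)
 = \lambda^n \bigl([1,i]^{\otimes n} + \lambda^{-2n}[1,-i]^{\otimes n}\bigr)
 = \lambda^n \bigl([1,i]^{\otimes n} + \beta\, [1,-i]^{\otimes n}\bigr).
\]
Absorbing the factor $\lambda^n$ into the constant $c$ from Definition~\ref{def:single:A2} realizes $f$ as an element of $\mathscr{A}_2$.

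I do not expect a significant obstacle: both directions follow from the elementary action of $\mathbf{O}_2(\mathbb{C})$ on the pair $\{[1,i]^T, [1,-i]^T\}$, and the freedom to extract an arbitrary $2n$-th root over $\mathbb{C}$ makes the construction in the ``if'' direction completely flexible. The non-degeneracy hypothesis is used only implicitly, to ensure the two tensor summands are linearly independent so that the coefficients $c'$ and $\beta$ are uniquely determined.
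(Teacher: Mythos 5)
Your proposal is correct and follows essentially the same route as the paper: the ``only if'' direction rests on the setwise invariance of $\{[1,i]^{\mathsf{T}},[1,-i]^{\mathsf{T}}\}$ under $\mathbf{O}_2(\mathbb{C})$ (which you verify in more detail than the paper does), and the ``if'' direction builds an explicit orthogonal matrix with entries determined by a $2n$-th root of $\beta^{\pm 1}$. The only cosmetic difference is that you use a rotation $\left[\begin{smallmatrix} a & b \\ -b & a \end{smallmatrix}\right]$ where the paper uses the reflection $\left[\begin{smallmatrix} a & b \\ b & -a \end{smallmatrix}\right]$; both computations are equally valid.
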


\begin{proof}
 Assume that $f = c \left(\tbcolvec{1}{i}^{\otimes n}
                  + \beta \tbcolvec{1}{-i}^{\otimes n}\right)$ for some $c, \beta \ne 0$.
 Consider the orthogonal transformation $H = \tbmatrix{a}{b}{b}{-a}$,
 where $a = \frac{1}{2} \left(\beta^{\frac{1}{2n}} + \beta^{-\frac{1}{2n}}\right)$ and $b = \frac{1}{2i} \left(\beta^{\frac{1}{2n}} - \beta^{-\frac{1}{2n}}\right)$.
 We pick $a$ and $b$ in this way so that $a + b i = \beta^{\frac{1}{2n}}$, $a - b i = \beta^{-\frac{1}{2n}}$, and $(a + b i) (a - b i) = a^2 + b^2 = 1$.
 Also $\left(\frac{a + b i}{a - b i}\right)^n = \beta$.
 Then
 \begin{align*}
  H^{\otimes n} f
  &= c \left(\begin{bmatrix} a + b i \\ -a i + b \end{bmatrix}^{\otimes n}
     + \beta \begin{bmatrix} a - b i \\  a i + b \end{bmatrix}^{\otimes n}\right)\\
  &= c \left((a + b i)^n       \begin{bmatrix} 1 \\ -i \end{bmatrix}^{\otimes n}
           + (a - b i)^n \beta \begin{bmatrix} 1 \\  i \end{bmatrix}^{\otimes n}\right)\\
  &= c \sqrt{\beta} \left(\begin{bmatrix} 1 \\ -i \end{bmatrix}^{\otimes n}
                        + \begin{bmatrix} 1 \\  i \end{bmatrix}^{\otimes n}\right),
 \end{align*}
 so $f$ can be written as
 \[
  f = c \sqrt{\beta} (H^{-1})^{\otimes n} \left(\begin{bmatrix} 1 \\  i \end{bmatrix}^{\otimes n}
                                              + \begin{bmatrix} 1 \\ -i \end{bmatrix}^{\otimes n}\right).
 \]
 Therefore $f \in \mathscr{A}_2$.
 
 On the other hand,
 the desired form $f = c (\tbcolvec{1}{i}^{\otimes n} + \beta \tbcolvec{1}{-i}^{\otimes n})$
 follows from the fact that $\{\tbcolvec{1}{i}, \tbcolvec{1}{-i}\}$ is fixed setwise under any orthogonal transformation up to nonzero constants.
\end{proof}

\begin{definition}
 A symmetric signature $f$ of arity $n$ is in $\mathscr{A}_3$ if
 there exists an $H \in \mathbf{O}_2(\mathbb{C})$ and a nonzero constant $c \in \mathbb{C}$ such that
 $f = c H^{\otimes n} \left(\left[\begin{smallmatrix} 1 \\  \alpha \end{smallmatrix}\right]^{\otimes n}
                      + i^r \left[\begin{smallmatrix} 1 \\ -\alpha \end{smallmatrix}\right]^{\otimes n}\right)$
 for some $r \in \{0,1,2,3\}$.
\end{definition}

Again,
when such an $H$ exists,
we say that $f\in\mathscr{A}_3$ with transformation $H$.
If $f \in \mathscr{A}_3$ with $I_2$,
then we say $f$ is in the canonical form of $\mathscr{A}_3$.
If $f$ is in the canonical form of $\mathscr{A}_3$,
then by Lemma~\ref{lem:rec},
for any $0 \le k \le n-2$,
we have $f_{k+2} = i f_k$ and one of the following holds:
\begin{itemize}
 \item $f_0 = 0$, or
 \item $f_1 = 0$, or
 \item $f_1 = \pm \alpha i f_0 \neq 0$.
\end{itemize}

\vspace*{\baselineskip}

Now we characterize the $\mathscr{A}$-transformable signatures.

\begin{lemma} \label{lem:cha:affine}
 Let $f$ be a non-degenerate symmetric signature.
 Then $f$ is $\mathscr{A}$-trans-formable iff $f \in \mathscr{A}_1 \union \mathscr{A}_2 \union \mathscr{A}_3$.
\end{lemma}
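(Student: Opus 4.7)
The strategy is to invoke Lemma~\ref{lem:affine:trans}, which reduces $\mathscr{A}$-transformability to a dichotomy on the transformation matrix $T$: either $T = H \in \mathbf{O}_2(\mathbb{C})$, or $T = H D_\alpha$ with $H \in \mathbf{O}_2(\mathbb{C})$, where $D_\alpha = \left[\begin{smallmatrix} 1 & 0 \\ 0 & \alpha \end{smallmatrix}\right]$. Since $f$ is non-degenerate and $T$ is invertible, $T^{-1} f$ is a non-degenerate symmetric signature in $\mathscr{A}$, and so is a nonzero scalar multiple of a signature in $\mathscr{F}_1 \cup \mathscr{F}_2 \cup \mathscr{F}_3$. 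This yields six subcases, each of which I would resolve by exhibiting an explicit orthogonal transformation to match the canonical form of one of $\mathscr{A}_1$, $\mathscr{A}_2$, $\mathscr{A}_3$.

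For the forward direction, when $T = H$: if the preimage is in $\mathscr{F}_2$ then $f \in \mathscr{A}_1$ immediately (with $t = 0$); if the preimage is in $\mathscr{F}_1$, the orthogonal matrix $H_2 = \tfrac{1}{\sqrt{2}}\left[\begin{smallmatrix} 1 & 1 \\ 1 & -1\end{smallmatrix}\right]$ maps $[1,0]$ and $[0,1]$ to (scalar multiples of) $[1,1]$ and $[1,-1]$, so absorbing $H_2^{-1}$ into $H$ again gives $f \in \mathscr{A}_1$ with $t = 0$; if the preimage is in $\mathscr{F}_3$, Lemma~\ref{lem:single:P2} immediately places $f$ in $\mathscr{A}_2$. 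When $T = HD_\alpha$: if the preimage is in $\mathscr{F}_1$, then $D_\alpha^{\otimes n}([1,0]^{\otimes n} + i^r [0,1]^{\otimes n}) = [1,0]^{\otimes n} + i^r\alpha^n[0,1]^{\otimes n}$, and applying $H_2$ puts this into $\mathscr{A}_1$ form with $t = 1$ and $\beta = \alpha^{n+2r}$; if the preimage is in $\mathscr{F}_2$, the computation $D_\alpha^{\otimes n}([1,1]^{\otimes n} + i^r[1,-1]^{\otimes n}) = [1,\alpha]^{\otimes n} + i^r[1,-\alpha]^{\otimes n}$ places $f \in \mathscr{A}_3$ directly; the remaining subcase is the delicate one.

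The main obstacle is the subcase $T = HD_\alpha$ with preimage in $\mathscr{F}_3$: here one gets a signature involving $[1, \pm\alpha i]$, but $\mathscr{A}_3$ is defined using $[1, \pm\alpha]$. The key identity $\alpha i = -\alpha^{-1}$ (equivalently $\alpha^2 = i$) means that the $90^\circ$ rotation $R = \left[\begin{smallmatrix} 0 & -1 \\ 1 & 0 \end{smallmatrix}\right] \in \mathbf{O}_2(\mathbb{C})$ satisfies $R \left[\begin{smallmatrix} 1 \\ \alpha \end{smallmatrix}\right] = -\alpha \left[\begin{smallmatrix} 1 \\ \alpha i \end{smallmatrix}\right]$ and $R \left[\begin{smallmatrix} 1 \\ -\alpha \end{smallmatrix}\right] = \alpha \left[\begin{smallmatrix} 1 \\ -\alpha i \end{smallmatrix}\right]$. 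Thus $R^{\otimes n}([1,\alpha]^{\otimes n} + i^s[1,-\alpha]^{\otimes n}) = (-\alpha)^n([1,\alpha i]^{\otimes n} + (-1)^n i^s[1,-\alpha i]^{\otimes n})$, so absorbing $R^{-1}$ into $H$ and adjusting $r$ by $2$ (when $n$ is odd) puts $f$ into $\mathscr{A}_3$ form.

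For the reverse direction, I would run the six subcases in reverse: for each of $\mathscr{A}_1$ (with $t = 0$ or $t = 1$), $\mathscr{A}_2$, and $\mathscr{A}_3$, I explicitly write the canonical form as $T^{\otimes n} g$ with $T \in \{H, HH_2^{-1}, HD_\alpha, HR D_\alpha\} \subseteq \mathbf{O}_2(\mathbb{C}) \cup \mathbf{O}_2(\mathbb{C}) D_\alpha$ and $g$ a scalar multiple of a signature in $\mathscr{F}_{123}$, then invoke Lemma~\ref{lem:affine:trans} in the opposite direction. Lemma~\ref{lem:single:P2} handles the $\mathscr{A}_2$ case by first reducing to $c \left(\tbcolvec{1}{i}^{\otimes n} + \tbcolvec{1}{-i}^{\otimes n}\right)$, which lies in $\mathscr{F}_3$.
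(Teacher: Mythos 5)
Your proposal is correct and takes essentially the same route as the paper's proof: reduce via Lemma~\ref{lem:affine:trans} to the two shapes of $T$, split into six cases according to $\mathscr{F}_1$, $\mathscr{F}_2$, $\mathscr{F}_3$, handle the $\mathscr{F}_1$-cases with $H_2$, the orthogonal $\mathscr{F}_3$-case with Lemma~\ref{lem:single:P2}, and the $\left[\begin{smallmatrix} 1 & 0 \\ 0 & \alpha \end{smallmatrix}\right]\mathscr{F}_3$-case with the rotation $\left[\begin{smallmatrix} 0 & -1 \\ 1 & 0 \end{smallmatrix}\right]$ (your identity $\alpha i = -\alpha^{-1}$ is exactly the paper's computation with $\alpha^3$, just applied in the reverse direction), with the converse following because all the transformations used are invertible.
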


\begin{proof}
  Assume that $f$ is $\mathscr{A}$-transformable of arity $n$.
  By applying Lemma \ref{lem:affine:trans} to $\{f\}$,
  there exists an $H \in \mathbf{O}_2(\mathbb{C})$ such that $f \in H \mathscr{A}$ or $f\in H\trans{1}{0}{0}{\alpha} \mathscr{A}$.
  This is equivalent to $(H^{-1})^{\otimes n} f \in \mathscr{A}$ or $(H^{-1})^{\otimes n}f \in \trans{1}{0}{0}{\alpha} \mathscr{A}$.
  Since $f$ is non-degenerate and symmetric, we can replace $\mathscr{A}$ in the previous expressions with $\mathscr{F}_{123}$.
  Now we consider all possible cases.
  Let $\hat{f} = (H^{-1})^{\otimes n} f$.
  \begin{enumerate}
    \item If $\hat{f} \in \mathscr{F}_1$, then $T^{\otimes n}\hat{f}$ is in the canonical form of $\mathscr{A}_1$,
      where $T=\frac{1}{\sqrt{2}} \trans{1}{1}{1}{-1}\in \mathbf{O}_2(\mathbb{C})$.
    \item If $\hat{f} \in \mathscr{F}_2$, then $\hat{f}$ is already in the canonical form of $\mathscr{A}_1$.
      Let $T=I_2$ in this case.
    \item If $\hat{f} \in \mathscr{F}_3$, then $\hat{f}$ already has the equivalent form of $\mathscr{A}_2$ given by Lemma~\ref{lem:single:P2}.
      Let $T=I_2$ in this case.      
    \item If $\hat{f} \in \trans{1}{0}{0}{\alpha} \mathscr{F}_1$, then $T^{\otimes n}\hat{f}$ is in the canonical form of $\mathscr{A}_1$,
      where $T=\frac{1}{\sqrt{2}} \trans{1}{1}{1}{-1}\in \mathbf{O}_2(\mathbb{C})$.
    \item If $\hat{f} \in \trans{1}{0}{0}{\alpha} \mathscr{F}_2$, then $\hat{f}$ is already in the canonical form of $\mathscr{A}_3$.
      Let $T=I_2$ in this case.
    \item If $\hat{f} \in \trans{1}{0}{0}{\alpha} \mathscr{F}_3$, 
      then $\hat{f}$ has the form $\tbcolvec{1}{\alpha^3}^{\otimes n} + i^r \tbcolvec{1}{-\alpha^3}^{\otimes n}$,
      and $T^{\otimes n}\hat{f}$ is in the canonical form of $\mathscr{A}_3$,
      where $T=\trans{0}{-1}{1}{0}\in \mathbf{O}_2(\mathbb{C})$.
      To see this,
      \begin{align*}
        \hspace{-0.6cm}
        \begin{bmatrix} 0 & -1 \\ 1 & 0 \end{bmatrix}^{\otimes n} 
        \left(\begin{bmatrix} 1 \\  \alpha^3 \end{bmatrix}^{\otimes n} + i^r \begin{bmatrix} 1 \\ -\alpha^3 \end{bmatrix}^{\otimes n}\right)
        &=    \begin{bmatrix} -\alpha^3 \\ 1 \end{bmatrix}^{\otimes n} + i^r \begin{bmatrix}  \alpha^3 \\ 1 \end{bmatrix}^{\otimes n}\\
        &= \left(-\alpha^{3}\right)^n \left(\begin{bmatrix} 1 \\ -\frac{1}{\alpha^3} \end{bmatrix}^{\otimes n}
                               + (-1)^n i^r \begin{bmatrix} 1 \\  \frac{1}{\alpha^3} \end{bmatrix}^{\otimes n}\right)\\
        &= \left(-\alpha^{3}\right)^n \left(\begin{bmatrix} 1 \\  \alpha \end{bmatrix}^{\otimes n}
                                 + i^{2n+r} \begin{bmatrix} 1 \\ -\alpha \end{bmatrix}^{\otimes n}\right).
      \end{align*}
  \end{enumerate}
  Let $\hat{f}' = T^{\otimes n}\hat{f}$, where $T\in \mathbf{O}_2(\mathbb{C})$ is given in each case.
  Then $\hat{f}'$ is $f$ after an orthogonal transformation $TH^{-1}$.
  As shown above, $\hat{f}'$ is in the canonical form of $\mathscr{A}_1$ or $\mathscr{A}_3$, 
  or is in the equivalent form of $\mathscr{A}_2$ by Lemma \ref{lem:single:P2}.
  Hence $f\in\mathscr{A}_1\cup\mathscr{A}_2\cup\mathscr{A}_3$.

  Conversely, if there exists a matrix $H \in \mathbf{O}_2(\mathbb{C})$ such that
  $H^{\otimes n} f$ is in one of the canonical forms of $\mathscr{A}_1$, $\mathscr{A}_2$, or $\mathscr{A}_3$,
  then one can directly check that $f$ is $\mathscr{A}$-transformable.
  In fact, transformations we applied above are all invertible.
\end{proof}

We also have a similar characterization for $\mathscr{P}$-transformable signatures.
We define the stabilizer group of $\mathscr{P}$ similar to $\StabA$.
It is easy to see the left and right stabilizers coincide, which we denote by $\StabP$.
Furthermore, $\StabP$ is generated by nonzero scalar multiples of matrices of the form $\left[\begin{smallmatrix} 1 & 0 \\ 0 & \nu \end{smallmatrix}\right]$
for any nonzero $\nu \in \mathbb{C}$ and $X = \left[\begin{smallmatrix} 0 & 1 \\ 1 & 0 \end{smallmatrix}\right]$.

\begin{lemma} \label{lem:product:trans}
 Let $\mathcal{F}$ be a set of signatures.
 Then $\mathcal{F}$ is $\mathscr{P}$-transformable iff there exists an $H \in \mathbf{O}_2(\mathbb{C})$ such that $\mathcal{F} \subseteq H \mathscr{P}$
 or $\mathcal{F} \subseteq H \left[\begin{smallmatrix} 1 & 1 \\ i & -i \end{smallmatrix}\right] \mathscr{P}$.
\end{lemma}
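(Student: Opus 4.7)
The plan is to closely mirror the proof of Lemma~\ref{lem:affine:trans}, substituting $\mathscr{P}$ for $\mathscr{A}$ and using normalization by $\StabP$ in place of $\StabA$. A brief preliminary establishes the $\mathscr{P}$-analogue of Lemma~\ref{lem:StabA}: if $\mathcal{F}$ is $\mathscr{P}$-transformable under $T$, then it is also $\mathscr{P}$-transformable under any $T' \in T\StabP$. The proof is identical to that of Lemma~\ref{lem:StabA}, using $\mathscr{P} M = \mathscr{P}$ for every $M \in \StabP$. Sufficiency is then immediate from Proposition~\ref{prop:matrix_cha}: if $\mathcal{F} \subseteq H \mathscr{P}$ for some $H \in \mathbf{O}_2(\mathbb{C})$, taking $T = H$ gives $[1,0,1] T^{\otimes 2} = [1,0,1] \in \mathscr{P}$ by case~\ref{prop:matrix_cha:case_101}; if $\mathcal{F} \subseteq H \left[\begin{smallmatrix} 1 & 1 \\ i & -i \end{smallmatrix}\right]\mathscr{P}$, taking $T = \tfrac{1}{\sqrt{2}} H \left[\begin{smallmatrix} 1 & 1 \\ i & -i \end{smallmatrix}\right]$ gives $[1,0,1] T^{\otimes 2} = [0,1,0] \in \mathscr{P}$ by case~\ref{prop:matrix_cha:case_010}; in both cases $T^{-1} \mathcal{F} \subseteq \mathscr{P}$ follows from the hypothesis, since $\mathscr{P}$ is closed under scalar multiplication.

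For necessity, let $T$ witness the $\mathscr{P}$-transformability of $\mathcal{F}$ and set $f_T := [1,0,1] T^{\otimes 2} \in \mathscr{P}$. As a symmetric binary signature, $f_T$ has signature matrix $\transpose{T}\, T$, whose determinant is $(\det T)^2 \neq 0$; hence $f_T$ is non-degenerate. From the description of $\mathscr{P}$ given in Section~\ref{subsec:CSP-Tractable}, every non-degenerate symmetric binary signature in $\mathscr{P}$ is either a nonzero scalar multiple of the disequality $[0,1,0]$ or a generalized equality $[a,0,b]$ with $a b \neq 0$.

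In the generalized-equality case, pick the diagonal matrix $M = \left[\begin{smallmatrix} 1/\sqrt{a} & 0 \\ 0 & 1/\sqrt{b} \end{smallmatrix}\right] \in \StabP$, so that $T' := T M$ satisfies $\transpose{(T')} T' = I_2$. Proposition~\ref{prop:matrix_cha}(\ref{prop:matrix_cha:case_101}) gives $T' = H \in \mathbf{O}_2(\mathbb{C})$, and therefore $\mathcal{F} \subseteq T \mathscr{P} = T' M^{-1} \mathscr{P} = H \mathscr{P}$, using $M^{-1} \in \StabP$. In the disequality case, choose $M \in \StabP$ (a diagonal matrix together with a scalar) normalizing $f_T$ to exactly $[0,1,0]$; Proposition~\ref{prop:matrix_cha}(\ref{prop:matrix_cha:case_010}) then writes $T' = T M$ in the form $\tfrac{1}{\sqrt{2}} H \left[\begin{smallmatrix} 1 & 1 \\ i & -i \end{smallmatrix}\right]$ with $H \in \mathbf{O}_2(\mathbb{C})$, yielding $\mathcal{F} \subseteq T \mathscr{P} = T' \mathscr{P} = H \left[\begin{smallmatrix} 1 & 1 \\ i & -i \end{smallmatrix}\right] \mathscr{P}$ once the scalar $\tfrac{1}{\sqrt{2}}$ is absorbed. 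The only real obstacle is the bookkeeping of scalars and stabilizer corrections; the structure of $\StabP$ (generated by nonzero scalars, diagonal matrices, and the swap $R$) is rich enough to carry out both normalizations, and the rest is symbolic manipulation.
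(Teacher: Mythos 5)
Your proposal is correct and follows essentially the same route as the paper: reduce $[1,0,1]T^{\otimes 2}$ to the two normalized forms $[1,0,1]$ and $[0,1,0]$ using the structure of non-degenerate binary signatures in $\mathscr{P}$ together with $\StabP$-normalization, then invoke cases~\ref{prop:matrix_cha:case_101} and~\ref{prop:matrix_cha:case_010} of Proposition~\ref{prop:matrix_cha}. The only differences are presentational (you state the $\mathscr{P}$-analogue of Lemma~\ref{lem:StabA} explicitly and verify non-degeneracy of the transformed equality, details the paper leaves implicit), so no changes are needed.
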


\begin{proof}
 Sufficiency is easily verified by checking that $=_2$ is transformed into $\mathscr{P}$ in both cases.
 In particular,
 $H$ leaves $=_2$ unchanged.
 
 If $\mathcal{F}$ is $\mathscr{P}$-transformable,
 then by definition,
 there exists a matrix $T$ such that $(=_2) T^{\otimes 2} \in \mathscr{P}$ and $T^{-1} \mathcal{F} \subseteq \mathscr{P}$.
 The non-degenerate binary signatures in $\mathscr{P}$ are either $[0,1,0]$ or of the form $[1,0,\nu]$,
 up to a scalar.
 However,
 notice that $[1,0,1] = [1,0,\nu]
 \left[\begin{smallmatrix} 1 & 0 \\ 0 & \nu^{-\frac{1}{2}} \end{smallmatrix}\right]^{\otimes 2}$
 and $\left[\begin{smallmatrix} 1 & 0 \\ 0 & \nu^{-\frac{1}{2}} \end{smallmatrix}\right] \in \StabP$.
 Thus,
 we only need to consider $[1,0,1]$ and $[0,1,0]$.
 Now we apply Proposition~\ref{prop:matrix_cha}.
 \begin{enumerate}
  \item If $(=_2) T^{\otimes 2} = [1,0,1]$,
  then by case~\ref{prop:matrix_cha:case_101} of Proposition~\ref{prop:matrix_cha}, we have $T \in \mathbf{O}_2(\mathbb{C})$.
  Therefore $\mathcal{F} \subseteq H \mathscr{P}$ where $H = T \in \mathbf{O}(\mathbb{C})$.
  \item If $(=_2) T^{\otimes 2} = [0,1,0]$, 
  then by case~\ref{prop:matrix_cha:case_010} of Proposition~\ref{prop:matrix_cha}, 
  there exists an $H \in \mathbf{O}_2(\mathbb{C})$ such that $T = \frac{1}{\sqrt{2}} H \left[\begin{smallmatrix} 1 & 1 \\ i & -i \end{smallmatrix}\right]$.
  Therefore $\mathcal{F} \subseteq H \left[\begin{smallmatrix} 1 & 1 \\ i & -i \end{smallmatrix}\right]\mathscr{P}$
  where $H \in \mathbf{O}_2(\mathbb{C})$.
 \end{enumerate}
\end{proof}

We also have similar definitions of the sets $\mathscr{P}_1$ and $\mathscr{P}_2$.

\begin{definition}
 A symmetric signature $f$ of arity $n$ is in $\mathscr{P}_1$ if
 there exists $H \in \mathbf{O}_2(\mathbb{C})$ and a nonzero constant $c \in \mathbb{C}$ such that
 $f = c H^{\otimes n} \left(\left[\begin{smallmatrix} 1 \\  1 \end{smallmatrix}\right]^{\otimes n}
                    + \beta \left[\begin{smallmatrix} 1 \\ -1 \end{smallmatrix}\right]^{\otimes n}\right)$,
 where $\beta \neq 0$.
\end{definition}

When such an $H$ exists, we say that $f \in \mathscr{P}_1$ with transformation $H$.
If $f \in \mathscr{P}_1$ with $I_2$, then we say $f$ is in the canonical form of $\mathscr{P}_1$.
If $f$ is in the canonical form of $\mathscr{P}_1$,
then by Lemma~\ref{lem:rec}, for any $0 \le k \le n-2$, we have $f_{k+2} = f_k$.
Since $f$ is non-degenerate, $f_1 \not = \pm f_0$ is implied.

It is easy to check that $\mathscr{A}_1 \subset \mathscr{P}_1$.
The corresponding definition for $\mathscr{P}_2$ coincides with Definition~\ref{def:single:A2} for $\mathcal{A}_2$.
In other words, we define $\mathscr{P}_2 = \mathcal{A}_2$.

Now we characterize the $\mathscr{P}$-transformable signatures as we did for the $\mathscr{A}$-trans-formable signatures in Lemma~\ref{lem:cha:affine}.

\begin{lemma} \label{lem:cha:product}
 Let $f$ be a non-degenerate symmetric signature.
 Then $f$ is $\mathscr{P}$-trans-formable iff $f \in \mathscr{P}_1 \union \mathscr{P}_2$.
\end{lemma}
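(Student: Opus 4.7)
\emph{Plan.} The proof follows the same template as Lemma~\ref{lem:cha:affine}, with Lemma~\ref{lem:product:trans} in place of Lemma~\ref{lem:affine:trans}. First I would apply Lemma~\ref{lem:product:trans} to $\{f\}$: $f$ is $\mathscr{P}$-transformable iff there is some $H \in \mathbf{O}_2(\mathbb{C})$ for which either $(H^{-1})^{\otimes n} f \in \mathscr{P}$ or $(Z^{-1} H^{-1})^{\otimes n} f \in \mathscr{P}$, where here $Z = \left[\begin{smallmatrix} 1 & 1 \\ i & -i \end{smallmatrix}\right]$ (scalars being irrelevant for $\mathscr{P}$-membership). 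Because $f$ is non-degenerate and symmetric, so is the transformed signature, and (as noted below the definition of $\mathscr{P}$) the non-degenerate symmetric members of $\mathscr{P}$ are exactly the binary disequality $[0,1,0]$ and the generalized equalities $[a,0,\dots,0,b] = a[1,0]^{\otimes n} + b[0,1]^{\otimes n}$ with $a, b \neq 0$. This produces four sub-cases.

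In the two sub-cases on the $H$ side, if $(H^{-1})^{\otimes n}f$ is a generalized equality, then the involutive orthogonal matrix $H_2 = \tfrac{1}{\sqrt{2}}\left[\begin{smallmatrix}1 & 1\\ 1 & -1\end{smallmatrix}\right]$ sends $[1,1]^\top \mapsto \sqrt{2}[1,0]^\top$ and $[1,-1]^\top \mapsto \sqrt{2}[0,1]^\top$, so rewriting gives $f = 2^{-n/2}(H H_2)^{\otimes n}\bigl(a[1,1]^{\otimes n} + b[1,-1]^{\otimes n}\bigr) \in \mathscr{P}_1$ with transformation $H H_2$ and $\beta = b/a \neq 0$. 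If $(H^{-1})^{\otimes 2}f = [0,1,0]$, the identity $[0,1,0] = \tfrac{1}{2i}\bigl([1,i]^{\otimes 2} - [1,-i]^{\otimes 2}\bigr)$ combined with Lemma~\ref{lem:single:P2} places $f$ in $\mathscr{P}_2$. On the $Z$ side, if $(Z^{-1}H^{-1})^{\otimes n}f$ is a generalized equality, then because $Z e_0 = [1,i]^\top$ and $Z e_1 = [1,-i]^\top$, we have $f = a H^{\otimes n}[1,i]^{\otimes n} + b H^{\otimes n}[1,-i]^{\otimes n}$, which lies in $\mathscr{P}_2$ by Lemma~\ref{lem:single:P2}. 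Finally, if $(Z^{-1}H^{-1})^{\otimes 2}f = [0,1,0]$, a direct computation gives $Z^{\otimes 2}[0,1,0] = 2[1,0,1]$, hence $f = 2 H^{\otimes 2}[1,0,1] = 2[1,0,1]$ since orthogonal transformations fix $[1,0,1]$; this reduces to the generalized equality case and yields $f \in \mathscr{P}_1$.

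For the converse, I would verify $\mathscr{P}$-transformability by exhibiting an explicit transformation for each class. For $f \in \mathscr{P}_1$ with orthogonal $H$, take $T = H H_2$: since $T$ is orthogonal, $(=_2)T^{\otimes 2} = [1,0,1] \in \mathscr{P}$, while $T^{-1}f$ reduces (up to a scalar) to the generalized equality $[1,0,\dots,0,\beta] \in \mathscr{P}$. For $f \in \mathscr{P}_2$ with orthogonal $H$, take $T = H Z$: a quick calculation yields $Z^\top Z = \left[\begin{smallmatrix}0 & 2\\ 2 & 0\end{smallmatrix}\right]$, so $(=_2)T^{\otimes 2}$ corresponds to $2[0,1,0] \in \mathscr{P}$, while $T^{-1}f \propto [1,0]^{\otimes n} + [0,1]^{\otimes n} \in \mathscr{P}$.

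I do not expect any real obstacle; the argument is essentially a careful bookkeeping of canonical forms and matrix identities, and all nontrivial ingredients (the identity $Z^{\otimes 2}[0,1,0] = 2[1,0,1]$ and the extension to arbitrary $\beta \neq 0$ via Lemma~\ref{lem:single:P2}) are already in hand. The only subtlety—shared with Lemma~\ref{lem:cha:affine}—is that Lemma~\ref{lem:product:trans} only pins the transformation down modulo $\mathbf{O}_2(\mathbb{C})$, so one must absorb an additional orthogonal factor (such as $H_2$) in order to land precisely inside the canonical forms defining $\mathscr{P}_1$ and $\mathscr{P}_2$.
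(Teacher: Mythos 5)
Your proposal is correct and follows essentially the same route as the paper: apply Lemma~\ref{lem:product:trans} to $\{f\}$, note that the non-degenerate symmetric signatures in $\mathscr{P}$ are $[0,1,0]$ and the generalized equalities, and resolve the resulting four cases by absorbing an extra orthogonal factor ($H_2$) to reach the canonical form of $\mathscr{P}_1$ and by invoking Lemma~\ref{lem:single:P2} for $\mathscr{P}_2$. Your converse is, if anything, slightly more explicit than the paper's (which simply observes that all the transformations used are invertible), but the content is the same.
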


\begin{proof}
  Assume that $f$ is $\mathscr{P}$-transformable of arity $n$.
  By applying Lemma~\ref{lem:product:trans} to $\{f\}$,
  there exists an $H \in \mathbf{O}_2(\mathbb{C})$ such that $f \in H \mathscr{P}$
  or $f \in H \trans{1}{1}{i}{-i} \mathscr{P}$.
  This is equivalent to $(H^{-1})^{\otimes n}f \in \mathscr{P}$
  or $(H^{-1})^{\otimes n}f \in \trans{1}{1}{i}{-i} \mathscr{P}$.
  Let $\hat{f}=(H^{-1})^{\otimes n}f$.
  It is sufficient to show that $\hat{f}\in\mathscr{P}_1$ or $\mathscr{P}_2$.
 
  The symmetric signatures in $\mathscr{P}$ take the form $[0,1,0]$, or $[a,0,\dots,0,b] = a [1,0]^{\otimes n} + b [0,1]^{\otimes n}$,
  where $a b \ne 0$ since $f$ is non-degenerate.
  Now we consider all possible cases.
  \begin{enumerate}
    \item If $\hat{f} = [0,1,0]$, then $\hat{f} = \tfrac{1}{2 i} \left(\tbcolvec{1}{i}^{\otimes 2} - \tbcolvec{1}{-i}^{\otimes 2}\right)$,
      which is the equivalent form of $\mathscr{P}_2 = \mathscr{A}_2$ given by Lemma~\ref{lem:single:P2}.
    \item If $\hat{f} = a \tbcolvec{1}{0}^{\otimes n} + b \tbcolvec{0}{1}^{\otimes n}$,
      then a further transformation by $\frac{1}{\sqrt{2}} \trans{1}{1}{1}{-1} \in \mathbf{O}_2(\mathbb{C})$
      puts $\hat{f}$ into the canonical form of $\mathscr{P}_1$.
    \item If $\hat{f} = \trans{1}{1}{i}{-i}^{\otimes 2} [0,1,0]^{\texttt T}
      = 2 [1,0,1] = \tbcolvec{1}{1}^{\otimes 2} + \tbcolvec{1}{-1}^{\otimes 2}$,
      then $\hat{f}$ is already in the canonical form of $\mathscr{P}_1$.
    \item If $\hat{f} = \trans{1}{1}{i}{-i}^{\otimes n} \left(a \tbcolvec{1}{0}^{\otimes n} + b \tbcolvec{0}{1}^{\otimes n}\right)$,
      then $\hat{f}$ is already of the equivalent form of $\mathscr{P}_2 = \mathscr{A}_2$ given by Lemma~\ref{lem:single:P2}.
  \end{enumerate}
 
  Conversely, if there exists a matrix $H \in \mathbf{O}_2(\mathbb{C})$ such that
  $H^{\otimes n} f$ is in one of the canonical forms of $\mathscr{P}_1$ or $\mathscr{P}_2$,
  then one can directly check that $f$ is $\mathscr{P}$-transformable.
  In fact, the transformations that we applied above are all invertible.
\end{proof}

Combining Lemma~\ref{lem:cha:affine} and Lemma~\ref{lem:cha:product},
we have a necessary and sufficient condition for a single non-degenerate signature to be $\mathscr{A}$- or $\mathscr{P}$-transformable.

\begin{corollary} \label{cor:single:AP-trans_by_sets}
 Let $f$ be a non-degenerate signature.
 Then $f$ is $\mathscr{A}$- or $\mathscr{P}$-transformable iff $f \in \mathscr{P}_1 \union \mathscr{P}_2 \union \mathscr{A}_3$.
\end{corollary}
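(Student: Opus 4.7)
The plan is to deduce the corollary directly by combining the two preceding characterization lemmas with two definitional containments between the $\mathscr{A}_i$ and $\mathscr{P}_i$ families. By Lemma~\ref{lem:cha:affine}, a non-degenerate symmetric $f$ is $\mathscr{A}$-transformable iff $f \in \mathscr{A}_1 \cup \mathscr{A}_2 \cup \mathscr{A}_3$, and by Lemma~\ref{lem:cha:product}, $f$ is $\mathscr{P}$-transformable iff $f \in \mathscr{P}_1 \cup \mathscr{P}_2$. Taking the disjunction, $f$ is $\mathscr{A}$- or $\mathscr{P}$-transformable iff $f$ lies in the five-fold union $\mathscr{A}_1 \cup \mathscr{A}_2 \cup \mathscr{A}_3 \cup \mathscr{P}_1 \cup \mathscr{P}_2$.

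The next step is to collapse this union to three sets. By construction in the text immediately following the definition of $\mathscr{P}_1$, one has $\mathscr{P}_2 = \mathscr{A}_2$, so $\mathscr{A}_2$ is absorbed into $\mathscr{P}_2$. Also in the same paragraph it is noted that $\mathscr{A}_1 \subset \mathscr{P}_1$; indeed, in the canonical-form description $c H^{\otimes n}([1,1]^{\otimes n} + \beta [1,-1]^{\otimes n})$ shared by both sets, membership in $\mathscr{A}_1$ forces $\beta = \alpha^{tn+2r}$ for some $r \in \{0,1,2,3\}$ and $t \in \{0,1\}$, whereas $\mathscr{P}_1$ merely requires $\beta \neq 0$. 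Hence $\mathscr{A}_1$ is absorbed into $\mathscr{P}_1$.

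Applying these two containments,
\[
 \mathscr{A}_1 \cup \mathscr{A}_2 \cup \mathscr{A}_3 \cup \mathscr{P}_1 \cup \mathscr{P}_2
 \;=\;
 \mathscr{P}_1 \cup \mathscr{P}_2 \cup \mathscr{A}_3,
\]
which gives the claimed equivalence. There is no real obstacle here: all of the substance is in Lemmas~\ref{lem:cha:affine} and~\ref{lem:cha:product}, and the corollary is essentially a bookkeeping step that identifies which of the $\mathscr{A}_i$ and $\mathscr{P}_j$ are redundant once both characterizations are available. The only minor care required is to confirm the two inclusions $\mathscr{A}_2 = \mathscr{P}_2$ and $\mathscr{A}_1 \subseteq \mathscr{P}_1$ directly from the definitions, which is immediate by inspection of the allowed values of $\beta$.
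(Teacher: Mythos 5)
Your proof is correct and is essentially the paper's own argument: the corollary is obtained by combining Lemma~\ref{lem:cha:affine} and Lemma~\ref{lem:cha:product} and then using the definitional facts $\mathscr{A}_1 \subset \mathscr{P}_1$ and $\mathscr{A}_2 = \mathscr{P}_2$ to collapse the union to $\mathscr{P}_1 \cup \mathscr{P}_2 \cup \mathscr{A}_3$. Nothing further is needed.
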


Notice that our definitions of $\mathscr{P}_1$, $\mathscr{P}_2$, and $\mathscr{A}_3$ each involve an orthogonal transformation.
For any single signature $f \in \mathscr{P}_1 \union \mathscr{P}_2 \union \mathscr{A}_3$, $\Holant(f)$ is tractable.
However, this does not imply that $\Holant(\mathscr{P}_1)$, $\Holant(\mathscr{P}_2)$, or $\Holant(\mathscr{A}_3)$ is tractable.
One can check, using Theorem~\ref{thm:main},
that $\Holant(\mathscr{P}_2)$ is tractable while $\Holant(\mathscr{P}_1)$ and $\Holant(\mathscr{A}_3)$ are $\SHARPP$-hard.

\subsection{Dichotomies when \texorpdfstring{$\mathscr{A}$}{A}- or \texorpdfstring{$\mathscr{P}$}{P}-transformable Signatures Appear}

Our characterizations of $\mathscr{A}$-transformable signatures in Lemma~\ref{lem:cha:affine}
and $\mathscr{P}$-transformable signatures in Lemma~\ref{lem:cha:product} are up to transformations in $\mathbf{O}_2(\mathbb{C})$.
Since an orthogonal transformation never changes the complexity of the problem, in the proofs of following lemmas,
we assume any signature in $\mathscr{A}_i$ for $i=1,2,3$, or $\mathscr{P}_j$ for $j=1,2$, 
is already in the canonical form.

\begin{lemma} \label{lem:dic:p1}
 Let $\mathcal{F}$ be a set of symmetric signatures.
 Suppose $\mathcal{F}$ contains a non-degenerate signature $f \in \mathscr{P}_1$ of arity $n \ge 3$.
 Then $\Holant(\mathcal{F})$ is $\SHARPP$-hard unless $\mathcal{F}$ is $\mathscr{P}$-transformable or $\mathscr{A}$-transformable.
\end{lemma}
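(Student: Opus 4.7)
The plan is to normalize $f$ to a generalized-equality form via an orthogonal holographic transformation, then apply a diagonal transformation to reduce the problem to a $\CSP^n$-style instance, and finally invoke Theorem~\ref{thm:CSPd} with $d=n$.

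First, I exploit the canonical form of $\mathscr{P}_1$. By definition, $f = c\, H^{\otimes n}\bigl(\left[\begin{smallmatrix}1\\1\end{smallmatrix}\right]^{\otimes n} + \beta \left[\begin{smallmatrix}1\\-1\end{smallmatrix}\right]^{\otimes n}\bigr)$ for some orthogonal $H$ and nonzero $c,\beta$. Applying $H^{-1}$ and then the Hadamard orthogonal matrix $H_2 = \tfrac{1}{\sqrt{2}}\left[\begin{smallmatrix}1&1\\1&-1\end{smallmatrix}\right]$, and rescaling, puts $f$ into the form $\hat f = [1,0,\ldots,0,\hat\beta]$ with $\hat\beta\neq 0$ (nonzero since $f$ is non-degenerate). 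Because orthogonal transformations preserve Holant complexity (Theorem~\ref{thm:orthogonal}) and also preserve $\mathscr{A}$- and $\mathscr{P}$-transformability, I may work with the normalized set $\hat{\mathcal{F}}$ containing $\hat f$.

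Second, I apply the diagonal holographic transformation $T = \left[\begin{smallmatrix}1&0\\0&\hat\beta^{1/n}\end{smallmatrix}\right]$ in the bipartite form. This yields $\Holant(\hat{\mathcal{F}}) \equiv \holant{=_2}{\hat{\mathcal{F}}} \equiv \holant{[1,0,\hat\beta^{2/n}]}{T^{-1}\hat{\mathcal{F}}}$, where $T^{-1}\hat f = {=_n}$ exactly. Using this $=_n$ on the right together with the left-hand signature $[1,0,\hat\beta^{2/n}]$ as glue (via chaining copies of $=_n$ and taking self-loops), I can realize a family of generalized equalities of the form $[1,0,\ldots,0,\hat\beta^{2k/n}]$ at a progression of arities, effectively producing a $\CSP^n$-like structure on $T^{-1}\hat{\mathcal{F}}$ once the diagonal weights are absorbed into the signatures by a further normalization.

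Third, I invoke Theorem~\ref{thm:CSPd} with $d=n$ on this $\CSP^n$-like problem. The theorem yields a dichotomy: either the problem is $\SHARPP$-hard, in which case the chain of reductions transfers the hardness back to the original $\Holant(\mathcal{F})$; or there exists a diagonal $S \in \mathcal{T}_{4n}$ such that $S\, T^{-1}\hat{\mathcal{F}} \subseteq \mathscr{A}$ or $\subseteq \mathscr{P}$. In the latter tractable subcases, composing $S$ with $T$ and the preceding orthogonal transformations produces a matrix fitting Lemma~\ref{lem:affine:trans} or Lemma~\ref{lem:product:trans}, certifying $\mathcal{F}$ as $\mathscr{A}$- or $\mathscr{P}$-transformable, respectively, which matches the exception in the lemma's statement.

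The main obstacle will be the left-hand signature $[1,0,\hat\beta^{2/n}]$, a weighted binary equality rather than $=_2$, which prevents an immediate reduction to $\CSP^n$. To handle this, I would absorb the edge weighting $\hat\beta^{1/n}$ into the right-hand signatures via a further diagonal rescaling at each vertex depending on Hamming weight; this is a relabeling that preserves Holant values and converts the problem into a genuine $\CSP^n$ instance. A secondary subtlety is matching the diagonal-root-of-unity tractable criterion of Theorem~\ref{thm:CSPd} back to the orthogonal-plus-$\left[\begin{smallmatrix}1&0\\0&\alpha\end{smallmatrix}\right]$ (or $\left[\begin{smallmatrix}1&1\\i&-i\end{smallmatrix}\right]$) forms of Lemmas~\ref{lem:affine:trans} and~\ref{lem:product:trans}; this requires verifying that the relevant composite transformations lie within the stabilizer groups $\StabA$ and $\StabP$, a routine but case-heavy calculation.
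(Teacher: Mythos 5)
Your overall skeleton (normalize $f$ to a generalized equality by an orthogonal transformation, reduce to a $\CSP^d$-type problem, invoke Theorem~\ref{thm:CSPd}, then convert its tractable condition back into $\mathscr{A}$-/$\mathscr{P}$-transformability) is the paper's strategy, but the middle step has a genuine gap. After your diagonal $T = \left[\begin{smallmatrix} 1 & 0 \\ 0 & \hat\beta^{1/n} \end{smallmatrix}\right]$ the left-hand binary is $[1,0,\hat\beta^{2/n}]$, and the gadgets you describe only yield \emph{weighted} generalized equalities $[1,0,\dotsc,0,\hat\beta^{2k/n}]$ whose weights vary with arity. Your fix of ``absorbing the edge weighting into the right-hand signatures by a further diagonal rescaling'' is circular: such a rescaling is exactly a holographic transformation by $\left[\begin{smallmatrix} 1 & 0 \\ 0 & \hat\beta^{1/n} \end{smallmatrix}\right]$, which undoes $T$ and returns you to the original problem; and no single diagonal can simultaneously turn the whole family of differently weighted equalities into genuine ones unless $\hat\beta^{2/n}$ is a root of unity. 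So you never actually have the unweighted equalities $\mathcal{EQ}_n$ needed to claim $\CSP^n(\cdot) \le_T \Holant(\mathcal{F})$. The paper avoids this entirely: it keeps only the orthogonal $H_2$, so $=_2$ is untouched, and it \emph{interpolates} a genuine $=_4$ from the generalized equality $[a_0,0,\dotsc,0,a_1]$ (Lemma~\ref{lem:simple_interpolation:dic:p1}, which splits into a root-of-unity construction and a Vandermonde interpolation). That gives all even-arity equalities and hence $\CSP^2(H_2\mathcal{F}) \le_T \Holant(\mathcal{F})$, i.e.\ one applies Theorem~\ref{thm:CSPd} with $d=2$, not $d=n$. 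Some interpolation step of this kind is unavoidable in your route too, and it is missing.

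There is a second, related problem in your back-conversion. In the tractable branch where $S\,T^{-1}\hat{\mathcal{F}} \subseteq \mathscr{A}$ with $S \in \mathcal{T}_{4n}$, the composite transformation involves $\mathrm{diag}(1,\hat\beta^{1/n})$, which is not in $\StabA$ for general $\hat\beta$; the transformed binary equality is $[1,0,\hat\beta^{2/n}\omega^{-2k}]$, which need not lie in $\mathscr{A}$, so $\mathscr{A}$-transformability of $\mathcal{F}$ does not follow from Lemma~\ref{lem:affine:trans} as you assert (this is not a ``routine'' verification; for instance $\hat f = [1,0,\dotsc,0,2^n]$ passes the affine test after your scaling, yet only $\mathscr{P}$-transformability can be certified). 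The paper's choice to apply only the orthogonal $H_2$ before invoking Theorem~\ref{thm:CSPd} makes this step clean: the diagonal returned by the $\CSP^2$ dichotomy lies in $\mathcal{T}_8$, and $(=_2)((TH_2)^{-1})^{\otimes 2} = [1,0,i^{-k}] \in \mathscr{A}$ automatically, while in the $\mathscr{P}$ branch the diagonal lies in $\StabP$. If you want to salvage your $d=n$ variant you would need both the interpolation of genuine equalities and an extra argument handling the $\mathscr{A}$ branch; compare also Lemma~\ref{lem:dic:p2}, where the paper does scale $f$ to $=_n$ but only because the left side becomes the \emph{unweighted} $[0,1,0]$, which commutes harmlessly with equalities.
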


\begin{proof}
 By assumption, for any $0 \le k \le n-2$, $f_{k+2} = f_k$ and $f_1 \neq \pm f_0$ since $f$ is not degenerate.
 We can express $f$ as
 \[f = a_0 \begin{bmatrix} 1 \\ 1 \end{bmatrix}^{\otimes n} + a_1 \begin{bmatrix} 1 \\ -1 \end{bmatrix}^{\otimes n},\]
 where $a_0 = (f_0 + f_1) / 2$ and $a_1 = (f_0 - f_1) / 2$.
%
 For this $f$,
 we can further perform an orthogonal transformation by
 $H_2 = \frac{1}{\sqrt{2}}\left[\begin{smallmatrix} 1 & 1 \\ 1 & -1 \end{smallmatrix}\right]$
 so that $f$ is transformed into the generalized equality signature 
 $2^{n/2} [a_0, 0, \dotsc, 0, a_1]$ of arity $n$,
 where $a_0 a_1 \neq 0$.
 By Lemma~\ref{lem:simple_interpolation:dic:p1},
 we can obtain $=_4$, the arity~$4$ equality signature.
 With this signature,
 we can realize any equality signature of even arity.
 Thus,
 $\CSP^2(H_2 \mathcal{F}) \le_T \Holant(\mathcal{F})$.
 
 Now we apply Theorem~\ref{thm:CSPd},
 the $\CSP^d$ dichotomy,
 to the set $H_2 \mathcal{F}$.
 If this problem is $\SHARPP$-hard, then $\Holant(\mathcal{F})$ is $\SHARPP$-hard as well.
 Otherwise,
 this problem is $\CSP^2$ tractable.
 Therefore,
 there exists some $T$ of the form 
 $\left[\begin{smallmatrix} 1 & 0 \\ 0 & \alpha^k \end{smallmatrix}\right]$,
 where the integer $k \in \{0, 1, \dotsc, 7\}$,
 such that $T H_2 \mathcal{F}$ is a subset of $\mathscr{A}$ or $\mathscr{P}$.

 If $T H_2 \mathcal{F} \subseteq \mathscr{P}$, then we have $H_2 \mathcal{F} \subseteq T^{-1} \mathscr{P}$.
 Notice that $T\in\StabP$,
 so $T^{-1}\mathscr{P}=\mathscr{P}$.
 Thus, $\mathcal{F}$ is $\mathscr{P}$-transformable under this $H_2$ transformation.
 Otherwise, $T H_2 \mathcal{F} \subseteq \mathscr{A}$.
 It is easy to verify that $(=_2) ((T H_2)^{-1})^{\otimes 2}$ is $ [1,0,i^{-k}] \in \mathscr{A}$.
 Thus, $\mathcal{F}$ is $\mathscr{A}$-transformable under this $T H_2$ transformation.
\end{proof}

\begin{lemma} \label{lem:dic:p2}
 Let $\mathcal{F}$ be a set of symmetric signatures.
 Suppose $\mathcal{F}$ contains a non-degenerate signature $f \in \mathscr{P}_2$ of arity $n \ge 3$.
 Then $\Holant(\mathcal{F})$ is $\SHARPP$-hard unless $\mathcal{F}$ is $\mathscr{P}$-transformable or $\mathscr{A}$-transformable.
\end{lemma}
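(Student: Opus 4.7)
The plan is to reduce this case to Lemma~\ref{lem:dic:p1} via the holographic transformation $D^{-1} = H_2 Z^{-1}$, where $Z = \tfrac{1}{\sqrt{2}}\bigl[\begin{smallmatrix}1 & 1 \\ i & -i\end{smallmatrix}\bigr]$ and $D = \mathrm{diag}(1,i)$. The key identity is that $D^{-1}$ sends canonical $\mathscr{P}_2$ signatures to canonical $\mathscr{P}_1$ signatures: $D^{-1}\tbcolvec{1}{i} = \tbcolvec{1}{1}$ and $D^{-1}\tbcolvec{1}{-i} = \tbcolvec{1}{-1}$.

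First, by Lemma~\ref{lem:single:P2} and after an orthogonal transformation that does not change complexity, I may assume $f = c\bigl(\tbcolvec{1}{i}^{\otimes n} + \beta\tbcolvec{1}{-i}^{\otimes n}\bigr)$ with $c, \beta \neq 0$. Going to the bipartite form and applying the holographic transformation $D^{-1}$, the binary equality $(=_2)$ becomes $[1,0,-1]$ (as $(=_2)D^{\otimes 2}$ has matrix $D^2 = \mathrm{diag}(1,-1)$) while $D^{-1}f = c\bigl([1,1]^{\otimes n} + \beta[1,-1]^{\otimes n}\bigr)$ is canonical $\mathscr{P}_1$. Therefore
$$\Holant(\mathcal{F}) \equiv_T \holant{[1,0,-1]}{D^{-1}\mathcal{F}}.$$

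Since $D^{-1}f \in \mathscr{P}_1$ has arity $n \geq 3$, I would then adapt the interpolation arguments from Lemma~\ref{lem:dic:p1} to realize the arity-$4$ equality $=_4$ on the right using $D^{-1}f$ and $[1,0,-1]$ on the left. The only substantive change from Lemma~\ref{lem:dic:p1} is that the left-side signature is $[1,0,-1]$ rather than $(=_2)$, so each self-loop or two-copy gadget picks up a sign contribution on the $11$-entry, but the underlying combinatorial structure and Vandermonde systems are identical up to this sign tracking. With $=_4$ realized on the right, the problem reduces to a $\CSP^2(D^{-1}\mathcal{F})$-like instance, where $[1,0,-1] \in \mathscr{A}$ is absorbed as in the $\CSP^d$ framework.

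Applying the $\CSP^d$ dichotomy with $d=2$ (Theorem~\ref{thm:CSPd}), either the instance is $\SHARPP$-hard — in which case so is $\Holant(\mathcal{F})$ — or there exists some $T = \mathrm{diag}(1, \alpha^k) \in \mathcal{T}_8$ such that $T \cdot D^{-1}\mathcal{F}$ is contained in $\mathscr{P}$ (respectively $\mathscr{A}$). In these tractable cases, composing transformations yields that $\mathcal{F}$ is $\mathscr{P}$- (resp.\ $\mathscr{A}$-) transformable under $DT^{-1}$: the transformed equality $(=_2)(DT^{-1})^{\otimes 2} = [1,0, i^{2-k}]$ belongs to $\mathscr{F}_1 \subset \mathscr{A}$ automatically, and it can be placed in $\mathscr{P}$ after a further adjustment of $T$ by an element of the diagonal subgroup of $\StabP$ (which preserves $T \cdot D^{-1}\mathcal{F} \subseteq \mathscr{P}$). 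The main obstacle will be executing the interpolation step with the nonstandard left-side $[1,0,-1]$, carefully tracking signs so that the recurrences used to realize $=_4$ remain solvable; this is where the proof most delicately differs from that of Lemma~\ref{lem:dic:p1}.
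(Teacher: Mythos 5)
Your opening move is fine — transforming by $D$ does send the canonical $\mathscr{P}_2$ form to the canonical $\mathscr{P}_1$ form and turns $=_2$ into $[1,0,-1]$ — but the plan breaks at exactly the step you flagged as "sign tracking," and the failure there is structural, not a bookkeeping issue. The machinery of Lemma~\ref{lem:dic:p1} rests on having the genuine $=_2$ as the connector: Lemma~\ref{lem:simple_interpolation:dic:p1} reduces arity by self-loops and chains copies of $[1,0,0,0,y]$ through ordinary edges. With your connector $[1,0,-1]$, a self-loop on the canonical $\mathscr{P}_1$ signature computes $f_k - f_{k+2}$, which is identically zero (its entries satisfy $f_{k+2}=f_k$); equivalently, after the further $H_2$ rotation that makes $f$ a generalized equality $[1,0,\dotsc,0,\beta]$, the connector becomes $\neq_2$ and a self-loop computes $2f_{k+1}=0$. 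Worse, no gadget can realize $=_4$ at all when $n\neq 4$: any gadget built from copies of the generalized-equality form of $f$ linked through the (transformed) binary has support on proper $2$-colorings of its blob graph, and for the result to be an equality all dangling wires must come from one color class while the other class has all $n$ of each blob's wires consumed internally; counting wires shows the arity is $(|A|-|B|)\,n$, a multiple of $n$. So for $n=3$ (or any $n\nmid 4$) the arity-$4$ equality is unreachable from $f$ and the binary alone, and your reduction to $\CSP^2(D^{-1}\mathcal{F})$ with $d=2$ never gets off the ground. This divisibility obstruction is precisely why the paper takes a different route.

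The paper's proof instead uses a non-orthogonal transformation $Z'$ that absorbs the coefficients $a_0,a_1$ so that $f$ becomes exactly $=_n$ while $=_2$ becomes $\neq_2$ (up to a scalar); it then builds equalities of arity $kn$ only (a hub-and-peripheral construction of $=_n$'s linked through $\neq_2$), giving $\CSP^n(Z'\mathcal{F}) \le_T \Holant(\mathcal{F})$, and invokes Theorem~\ref{thm:CSPd} with $d=n$, so the stabilizing matrix is $\operatorname{diag}(1,\omega^k)$ for a primitive $4n$-th root of unity $\omega$ rather than an element of $\mathcal{T}_8$. The concluding step of your argument (the diagonal $T$ lies in $\StabP$, and the transformed $=_2$ lands in $\mathscr{A}$, resp.\ $\mathscr{P}$) is essentially right and survives with $d=n$ in place of $d=2$, but to repair the proof you must replace the "realize $=_4$" step by the multiple-of-$n$ equality construction and apply the $\CSP^d$ dichotomy with $d=n$. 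Keeping $\beta$ unabsorbed also means your gadgets would only produce weighted equalities $[\beta^{|B|},0,\dotsc,0,\beta^{|A|}]$, an extra normalization you would have to handle and which the paper's choice of $Z'$ avoids.
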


\begin{proof}
 By assumption, for any $0 \le k \le n-2$, $f_{k+2} = -f_k$ and $f_1 \neq \pm i f_0$ since $f$ is not degenerate.
 We can express $f$ as
 \[f = a_0 \begin{bmatrix} 1 \\ i \end{bmatrix}^{\otimes n} + a_1 \begin{bmatrix} 1 \\ -i \end{bmatrix}^{\otimes n},\]
 where $a_0 = (f_0 + i f_1) / 2$ and $a_1 = (f_0 - i f_1) / 2$, and $a_0 a_1 \neq 0$.
 Then under the holographic transformation
 $Z' = \left[\begin{smallmatrix} a_0^{1/n} & a_1^{1/n} \\ a_0^{1/n} i  & -a_1^{1/n} i \end{smallmatrix}\right]^{-1}$,
 we have
 \[
  Z'^{\otimes n} f
  = (=_n)
  = \begin{bmatrix} 1 \\ 0 \end{bmatrix}^{\otimes n} + \begin{bmatrix} 0 \\ 1 \end{bmatrix}^{\otimes n}
 \]
 and
 \begin{align*}
  \holant{{=}_2}{\mathcal{F} \union \{f\}}
  &\equiv_T \holant{[1,0,1] (Z'^{-1})^{\otimes 2}}{Z' \mathcal{F} \union \{Z'^{\otimes n} f\}}\\
  &\equiv_T \holant{(1-i) a_0^{1/n} a_1^{1/n} [0,1,0]}{Z' \mathcal{F} \union \{{=}_n\}}.
 \end{align*}
 Thus, we have a bipartite graph with $=_n$ on the right and $(\neq_2) = [0,1,0]$ on the left up to a nonzero scalar,
 so all equality signatures of arity a multiple of $n$ are realizable on the right side.
 To see this,
 first notice that we can move equality signatures from the right side to the left side using the binary disequality
 because the binary disequality just reverses signatures (i.e.~exchanges the~$0$ and~$1$ input bits),
 which leaves the equality signatures unchanged.
 Now we do an induction.
 Suppose we can realize the equality $=_{(k-1)n}$ on the right side for some integer $k>1$.
 We create a new signature on the right using one $=_{(k-1)n}$ and two $=_n$ on the right and one $=_n$ on the left.
 Since $n \geq 3$,
 we can connect one wire of the left $=_n$ to each of the three equality signatures on the right.
 The remaining wires of the left $=_n$ can be connected arbitrarily to the signatures on the right.
 The resulting signature is an equality of arity $(k-1) n + 2 n - n = k n$.
 Since we have $=_{kn}$ on both sides for any integer $k \ge 1$,
 $\CSP^n(Z' \mathcal{F}) \le_T \Holant(\mathcal{F})$.
 
 Now we apply Theorem~\ref{thm:CSPd},
 the $\CSP^d$ dichotomy,
 to the set $Z' \mathcal{F}$.
 If this problem is $\SHARPP$-hard, then $\Holant(\mathcal{F})$ is $\SHARPP$-hard as well.
 Otherwise, this problem is $\CSP^n$ tractable.
 Let $\omega$ be a primitive $4 n$-th root of unity.
 Then under the holographic transformation $T = \left[\begin{smallmatrix} 1 & 0 \\ 0 & \omega^k \end{smallmatrix}\right]$ for some integer $k$,
 $T Z' \mathcal{F}$ is a subset of $\mathscr{A}$ or $\mathscr{P}$.
 If $T Z' \mathcal{F} \subseteq \mathscr{P}$,
 then we have $Z' \mathcal{F} \subseteq T^{-1} \mathscr{P}$.
 Notice that $T \in \StabP$,
 so $T^{-1} \mathscr{P} = \mathscr{P}$.
 Thus,
 $\mathcal{F}$ is $\mathscr{P}$-transformable under this $Z'$ transformation.

 Otherwise, $T Z' \mathcal{F} \subseteq \mathscr{A}$.
 It is easy to verify that $(=_2) ((T Z')^{-1})^{\otimes 2}$ is $[0,1,0] \in \mathscr{A}$ up to a scalar.
 Thus, $\mathcal{F}$ is $\mathscr{A}$-transformable under this $T Z'$ transformation.
\end{proof}

\begin{lemma} \label{lem:dic:a3}
 Let $\mathcal{F}$ be a set of symmetric signatures.
 Suppose $\mathcal{F}$ contains a non-degenerate signature $f \in \mathscr{A}_3$ of arity $n \ge 3$.
 Then $\Holant(\mathcal{F})$ is $\SHARPP$-hard unless $\mathcal{F}$ is $\mathscr{A}$-transformable.
\end{lemma}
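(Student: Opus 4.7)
The plan is to follow the strategy of the proofs of Lemmas~\ref{lem:dic:p1} and~\ref{lem:dic:p2}: first transform $f$ into a generalized equality via a holographic transformation, then realize enough equality signatures to reduce to a $\CSP^d$ problem, apply Theorem~\ref{thm:CSPd}, and analyze the resulting tractable cases. Since orthogonal transformations do not affect complexity, I may assume $f$ is already in the canonical form of $\mathscr{A}_3$, so that $f = [1,\alpha]^{\otimes n} + i^r [1,-\alpha]^{\otimes n}$ for some $r \in \{0,1,2,3\}$.

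Let $D_\alpha = \left[\begin{smallmatrix} 1 & 0 \\ 0 & \alpha \end{smallmatrix}\right]$. Applying the holographic transformation $T = D_\alpha H_2$, a direct computation gives $T^{-1 \otimes n} f = 2^{n/2}[1,0,\dotsc,0,i^r]$, a (possibly phase-twisted) generalized equality of arity $n$, while $(=_2) T^{\otimes 2} = \tfrac{1+i}{2}[1,-i,1]$, a nonzero scalar multiple of the $\mathscr{F}_2$ signature $[1,-i,1] \in \mathscr{A}$. Thus $\Holant(\mathcal{F}) \equiv_T \holant{[1,-i,1]}{T^{-1}\mathcal{F}}$. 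Using the generalized equality on the right together with $[1,-i,1]$ on the left, I would realize $\mathcal{EQ}_d$ on the right side for some $d$ that is a multiple of $4n$ (so that the phase $i^r$ can be normalized): a self-loop of the generalized equality through a single $[1,-i,1]$ vertex reduces the arity by two while preserving the phase $i^r$, and combining multiple copies together produces equality signatures of the required arities.

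This reduces the problem to $\CSP^d(T^{-1}\mathcal{F} \cup \{[1,-i,1]\})$. Applying Theorem~\ref{thm:CSPd}, either the problem is $\SHARPP$-hard (and we are done), or there exists a diagonal $T' = \left[\begin{smallmatrix}1 & 0 \\ 0 & \omega\end{smallmatrix}\right] \in \mathcal{T}_{4d}$ such that $T'(T^{-1}\mathcal{F} \cup \{[1,-i,1]\}) \subseteq \mathscr{A}$ or $\subseteq \mathscr{P}$. The $\mathscr{P}$ case is impossible: $T'^{\otimes 2}[1,-i,1] = [1, -i\omega, \omega^2]$ is not a symmetric $\mathscr{P}$-signature for any $\omega$, since all three entries are nonzero and none of the patterns $[a,0,b]$ or $[0,1,0]$ (up to scaling) is matched. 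Conceptually, if $\mathcal{F}$ were $\mathscr{P}$-transformable, then $f$ alone would be, so $f \in \mathscr{P}_1 \cup \mathscr{P}_2$ by Lemma~\ref{lem:cha:product}; but since orthogonal transformations preserve the bilinear form $\langle u, v\rangle = u_1 v_1 + u_2 v_2$, the inner-product structure $\langle[1,\alpha],[1,\alpha]\rangle = 1+i$ and $\langle[1,\alpha],[1,-\alpha]\rangle = 1-i$ matches neither $(2,0)$ for $\{[1,\pm 1]\}$ (giving $\mathscr{P}_1$) nor $(0,2)$ for $\{[1,\pm i]\}$ (giving $\mathscr{P}_2$), yielding $\mathscr{A}_3 \cap (\mathscr{P}_1 \cup \mathscr{P}_2) = \emptyset$.

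In the $\mathscr{A}$ case, the requirement $[1,-i\omega,\omega^2] \in \mathscr{A}$ forces $\omega \in \{1, i, -1, -i\}$, a fourth root of unity. A short computation then shows $(=_2)(TT'^{-1})^{\otimes 2} = \tfrac{1+i}{2}[1, -i\omega^{-1}, \omega^{-2}] \in \mathscr{A}$, and since $T'T^{-1}\mathcal{F} \subseteq \mathscr{A}$ gives $\mathcal{F} \subseteq (TT'^{-1})\mathscr{A}$, we conclude that $\mathcal{F}$ is $\mathscr{A}$-transformable under $TT'^{-1}$. The main obstacle in executing this plan is the interpolation step: since the left-side signature of the bipartite form is the binary affine signature $[1,-i,1]$ rather than the simpler $(=_2)$ or $(\neq_2)$ of the previous two lemmas, the gadget constructions realizing $\mathcal{EQ}_d$ are more intricate, and one must take care to produce truly symmetric equality signatures as well as to normalize away the phase $i^r$.
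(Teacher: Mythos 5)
Your overall architecture (transform $f$ to a generalized equality, build equalities, reduce to $\CSP^d$, apply Theorem~\ref{thm:CSPd}, rule out $\mathscr{P}$, and assemble the final transformation) matches the paper, and your endgame case analysis is essentially sound. But the step you yourself flag as ``the main obstacle'' is a genuine gap, and the substitute you sketch for it does not work. After your transformation the bipartite problem is $\holant{[1,-i,1]}{T^{-1}\mathcal{F}}$, so \emph{every} connection between right-side signatures passes through a left vertex carrying $[1,-i,1]$. A self loop of the generalized equality $[1,0,\dotsc,0,i^r]$ through $[1,-i,1]$ does preserve the form (the weight-$w$ entry becomes $f_w - 2i f_{w+1} + f_{w+2}$, so only the end entries survive), but ``combining multiple copies'' does not: if you join two generalized equalities through a $[1,-i,1]$ vertex, the mixed patterns (one block all-$0$, the other all-$1$) pick up the factor $h_{01} = -i \neq 0$ and the result has four nonzero entries rather than two; no number of parallel $[1,-i,1]$ connections kills these cross terms, since $(-i)^k \neq 0$. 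So the claimed realization of $\mathcal{EQ}_d$ is unestablished, and with it the reduction to $\CSP^d$. A secondary issue of the same nature: even granting the equalities on the right, the constraints of the resulting CSP are not $T^{-1}\mathcal{F} \cup \{[1,-i,1]\}$ as you state, because each leg of a constraint still passes through a $[1,-i,1]$ vertex, so the CSP you actually obtain is over the set twisted by $\left[\begin{smallmatrix} 1 & -i \\ -i & 1\end{smallmatrix}\right]$ on every input (the paper's Lemma~\ref{lem:dic:p2} handles the analogous wiring explicitly, which is why it can move equalities across the $\neq_2$ side; $[1,-i,1]$ does not preserve equalities this way).

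The paper resolves exactly this obstacle by a device absent from your proposal: working in the original (untransformed) setting where connections are genuine $=_2$ edges, it uses self loops on $f$ plus an eight-case gadget analysis to first realize the binary signature $[1,0,i]$, hence $[1,0,-i]$, and only then performs the holographic transformation $Z = \left[\begin{smallmatrix} \alpha & 1 \\ -\alpha & 1 \end{smallmatrix}\right]$, which sends $[1,0,-i]$ to $[1,0,1]$ on the left. With $=_2$ on the left, copies of the transformed $f = [1,0,\dotsc,0,i^k]$ can be wired together as ordinary edges, the phase is normalized by taking four copies, all even-arity equalities follow, and the reduction is to $\CSP^2(Z\mathcal{F} \cup \{[1,-i,1]\})$, where $[1,-i,1]$ legitimately appears as the image of $=_2$ on the right side. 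Without this preliminary construction of an equality-type binary connector (or some equivalent device), your plan stalls precisely at the point you identified.
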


\begin{proof}
 By assumption, for any $0 \le k \le n-2$, we have $f_{k+2} = i f_k$.
 We can express $f$ as
 \[
  f = \lambda \left(\begin{bmatrix} 1 \\  \alpha \end{bmatrix}^{\otimes n}
              + i^r \begin{bmatrix} 1 \\ -\alpha \end{bmatrix}^{\otimes n}\right),
 \]
 for some integer $r$.

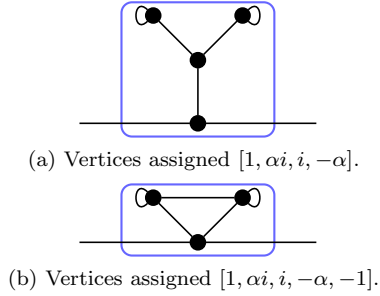
\begin{figure}[t]
 \centering
 \def\10iWidth{6cm}
 \captionsetup[subfigure]{width=\10iWidth}
 \subfloat[Vertices assigned ${[} 1, \alpha i, i, -\alpha {]}$.]{
  \makebox[\10iWidth][c]{
   \begin{tikzpicture}[scale=\scale,transform shape,node distance=\nodeDist,semithick]
    \node[external] (0)                    {};
    \node[external] (1) [right       of=0] {};
    \node[internal] (2) [right       of=1] {};
    \node[internal] (3) [above       of=2] {};
    \node[internal] (4) [above left  of=3] {};
    \node[internal] (5) [above right of=3] {};
    \node[external] (6) [right       of=2] {};
    \node[external] (7) [right       of=6] {};
    \path (0) edge (2)
          (2) edge (3)
              edge (7)
          (3) edge (4)
              edge (5);
    \path (4) edge[out=135, in=-135, looseness=5] coordinate (c1) (4)
          (5) edge[out= 45, in= -45, looseness=5] coordinate (c2) (5);
    \begin{pgfonlayer}{background}
     \node[inner sep=2pt,transform shape=false,draw=\borderColor,thick,rounded corners,fit = (1) (4) (5) (6) (c1) (c2)] {};
    \end{pgfonlayer}
  \end{tikzpicture}} \label{fig:gadget:construct_10i:arity3}}
 \qquad
 \subfloat[Vertices assigned ${[} 1, \alpha i, i, -\alpha, -1 {]}$.]{
  \makebox[\10iWidth][c]{
   \begin{tikzpicture}[scale=\scale,transform shape,node distance=\nodeDist,semithick]
    \node[external] (0)                    {};
    \node[external] (1) [right       of=0] {};
    \node[internal] (2) [right       of=1] {};
    \node[internal] (3) [above left  of=2] {};
    \node[internal] (4) [above right of=2] {};
    \node[external] (5) [right       of=2] {};
    \node[external] (6) [right       of=5] {};
    \path (0) edge (2)
          (2) edge (3)
              edge (4)
              edge (6)
          (3) edge (4);
    \path (3) edge[out=135, in=-135, looseness=5] coordinate (c1) (3)
          (4) edge[out= 45, in= -45, looseness=5] coordinate (c2) (4);
    \begin{pgfonlayer}{background}
     \node[inner sep=2pt,transform shape=false,draw=\borderColor,thick,rounded corners,fit = (1) (3) (4) (5) (c1) (c2)] {};
    \end{pgfonlayer}
  \end{tikzpicture}} \label{fig:gadget:construct_10i:arity4}}
 \caption{Constructions to realize $[1,0,i]$.}
 \label{fig:gadget:construct_10i}
\end{figure}

 A self loop on $f$ yields $f'$,
 where $f'_k = f_k + f_{k+2} = (1+i) f_k$.
 Thus up to the constant $(1+i)$,
 $f'$ is just the first $n-2$ entries of $f$.
 By doing more self loops,
 we eventually obtain a quaternary signature when $n$ is even or a ternary one when $n$ is odd.
 There are eight cases depending on the first two entries of $f$ and the parity of $n$.
 However, for any case, we can realize the signature $[1,0,i]$.
 We list them here.
 (In the calculations below, we omit certain nonzero constant factors without explanation.)
\begin{itemize}
 \item $[0, 1, 0, i]$: Another self loop gives $[0,1]$.
  Connect it back to the ternary to get $[1,0,i]$.
 \item $[1, 0, i, 0]$: Another self loop gives $[1,0]$.
  Connect it back to the ternary to get $[1,0,i]$.
 \item $[1, \alpha i, i, -\alpha]$: Another self loop gives $[1, \alpha i]$.
  Connect two copies of it to the ternary to get $[1, -\alpha]$.
  Then connect this back to the ternary to finally get $[1,0,i]$.
  See Figure~\ref{fig:gadget:construct_10i:arity3}.
 \item $[1, -\alpha i, i, \alpha]$: Same construction as the previous case.
 \item $[0, 1, 0, i, 0]$: Another self loop gives $[0,1,0]$.
  Connect it back to the quaternary to get $[1,0,i]$.
 \item $[1, 0, i, 0, -1]$: Another self loop gives $[1,0,i]$ directly.
 \item $[1, \alpha i, i, -\alpha, -1]$: Another self loop gives $[1, \alpha i, i]$.
  Connect two copies of it together to get $[1, -\alpha, -i]$.
  Connect this back to the quaternary to get $[1,0,i]$.
  See Figure~\ref{fig:gadget:construct_10i:arity4}.
 \item $[1, -\alpha i, i, \alpha, -1]$: Same construction as the previous case.
\end{itemize}

 With $[1,0,i]$ in hand, we can connect three copies to get $[1,0,-i]$ and four copies to get $[1,0,1]$.
 Now we construct a bipartite graph, with $\mathcal{F} \union \{{=}_2\}$ on the right side and $[1,0,-i]$ on the left,
 and do a holographic transformation by $Z = \left[\begin{smallmatrix} \alpha & 1 \\ -\alpha & 1 \end{smallmatrix}\right]$ to get
 \begin{align*}
  &\holant{[1,0,-i]}{\mathcal{F} \union \{f, {=}_2\}}\\
  \equiv_T & \holant{[1,0,-i] (Z^{-1})^{\otimes 2}}{Z \mathcal{F} \union \{Z^{\otimes n} f, Z^{\otimes 2} (=_2)\}}\\
  \equiv_T & \holant{\tfrac{1}{2i}[1,0,1]}{Z \mathcal{F} \union \{[1,0,\dotsc,0,i^k], [1,-i,1]\}}\\
  \equiv_T & \Holant\left(Z \mathcal{F} \union \{[1,0,\dotsc,0,i^k], [1,-i,1]\}\right).
 \end{align*}
 Notice that $f$ becomes $[1,0,\dotsc,0,i^k]$ where $k = r + 2 n$ (after normalizing the first entry) and $=_2$ becomes $[1,-i,1]$.
 On the other side, $[1,0,-i]$ becomes $[1,0,1]$.
 Therefore, we can construct all equality signatures of even arity using the powers of the transformed $f$
 (by using~4 copies of the transformed $f$, and connecting pairs of input wires by $=_2$).
 Thus, $\CSP^2(Z \mathcal{F} \union \{[1,-i,1]\}) \le_T \Holant(\mathcal{F})$.
 
 Now we apply Theorem~\ref{thm:CSPd}, the $\CSP^d$ dichotomy, to the set $Z \mathcal{F} \union \{[1,-i,1]\}$.
 If this problem is $\SHARPP$-hard, then $\Holant(\mathcal{F})$ is $\SHARPP$-hard as well.
 Otherwise, this problem is $\CSP^2$ tractable.
 Therefore, there exists some $T$ of the form $\left[\begin{smallmatrix} 1 & 0 \\ 0 & \alpha^d \end{smallmatrix}\right]$, where the integer $d \in \{0, 1, \dotsc, 7\}$,
 such that $T Z \mathcal{F} \union \{T^{\otimes 2} [1,-i,1]\}$ is a subset of $\mathscr{A}$ or $\mathscr{P}$.

 However, $T^{\otimes 2} [1,-i,1]$ can never be in $\mathscr{P}$.
 Thus $T Z \mathcal{F} \union \{T^{\otimes 2} [1,-i,1]\} \subseteq \mathscr{A}$.
 Further notice that if $d \in \{1,3,5,7\}$ in the expression of $T$, then $T^{\otimes 2} [1,-i,1]$ is not in $\mathscr{A}$.
 Hence, $T$ must be of the form $\left[\begin{smallmatrix} 1 & 0 \\ 0 & i^d \end{smallmatrix}\right]$, where the integer $d \in \{0, 1, 2, 3\}$.
 For such $T$, $T^{\otimes 2}[1,-i,1] \in \mathscr{A}$, and $T^{-1} \mathscr{A} = \mathscr{A}$ as $T\in\StabA$.
 Thus, $T Z \mathcal{F} \union \{T^{\otimes 2} [1,-i,1]\} \subseteq \mathscr{A}$ simply becomes $Z \mathcal{F} \subseteq \mathscr{A}$.
 Moreover, $(=_2) (Z^{-1})^{\otimes 2}$ is $[1,i,1] \in \mathscr{A}$.
 Therefore, $\mathcal{F}$ is $\mathscr{A}$-transformable under this $Z$ transformation.
\end{proof}

\section{The Main Dichotomy}

In this section, we prove our main dichotomy theorem.
We begin with a dichotomy for a single signature, which we prove by induction on its arity.

\begin{theorem} \label{thm:dic:single}
 If $f$ is a non-degenerate symmetric signature of arity at least~$3$ with complex weights in Boolean variables,
 then $\Holant(f)$ is $\SHARPP$-hard unless $f \in \mathscr{P}_1 \cup \mathscr{P}_2 \cup \mathscr{A}_3$ or $f$ is vanishing,
 in which case the problem is computable in polynomial time.
\end{theorem}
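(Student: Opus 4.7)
The plan is to proceed by strong induction on the arity $n$ of $f$, with two base cases $n = 3$ and $n = 4$ and an inductive step that uses a single self-loop to reduce the arity by two.

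For the base cases I would invoke the existing dichotomies Theorem~\ref{thm:arity3:singleton} and Theorem~\ref{thm:arity4:singleton}, and verify that their tractable classes coincide with $\mathscr{P}_1 \cup \mathscr{P}_2 \cup \mathscr{A}_3$ or vanishing. The $\mathscr{A}$- and $\mathscr{P}$-transformable cases in both theorems translate via Corollary~\ref{cor:single:AP-trans_by_sets}. The exceptional case $f_{k+2} = \pm 2i f_{k+1} + f_k$ in Theorem~\ref{thm:arity3:singleton} has characteristic polynomial $(x \mp i)^2$, which places $f$ in $\mathscr{R}_2^{\pm}$; since $\arity(f) = 3$, this means $2 \rd^{\pm}(f) \leq 2 < 3$, so $f \in \mathscr{V}^{\pm}$ by Theorem~\ref{thm:van}.

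For the inductive step, assume $n \geq 5$. Take a self-loop of $f$ to obtain $f'$ of arity $n-2 \geq 3$, with $\Holant(f') \leq_T \Holant(f)$. Apply the inductive hypothesis to $f'$: either $\Holant(f')$ is $\SHARPP$-hard (in which case so is $\Holant(f)$, and we are done), or $f'$ is degenerate, identically zero, vanishing, or lies in $\mathscr{P}_1 \cup \mathscr{P}_2 \cup \mathscr{A}_3$. In every one of these remaining cases $f'$ satisfies an order-$2$ recurrence $f'_{k+2} = a f'_{k+1} + b f'_k$ (the degenerate case gives an order-$1$ recurrence which I would embed into an order-$2$ one). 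Since $f'_k = f_k + f_{k+2}$, this lifts to an order-$4$ recurrence on $f$ whose characteristic polynomial is $(x^2 + 1)$ times the characteristic polynomial of $f'$. The factor $(x+i)(x-i)$ is exactly the ``vanishing part'': after applying the $Z$-transformation of Section~\ref{sec:vanishing-by-holographic-Z}, $\hat f$ decomposes as a signature that honestly lifts $f'$ (respectively $\hat{f'}$) in the first few coordinates, plus a pure vanishing tail on either the $\mathscr{V}^+$ or $\mathscr{V}^-$ side.

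The bulk of the work, and the main obstacle, is the case analysis that follows. In each sub-case I would argue that the vanishing tail is either absent (in which case $f$ sits in the same class as $f'$: a $\mathscr{P}_1$ lift stays $\mathscr{P}_1$, a $\mathscr{P}_2$ lift stays $\mathscr{P}_2 = \mathscr{A}_2$, an $\mathscr{A}_3$ lift stays $\mathscr{A}_3$, and a vanishing lift stays vanishing via Lemma~\ref{lem:van:self}), or else the mixture of a nonzero ``non-vanishing'' component with a nonzero ``vanishing'' component forces $\SHARPP$-hardness. The latter is the delicate part and is where the arity-$4$ dichotomy, Corollary~\ref{cor:arity4:nonsingular_compressed_hard_trans}, Lemma~\ref{lem:arity4:special_vanishing_case}, and Lemmas~\ref{lem:van:deg}, \ref{lem:van:bin}, \ref{lem:dic:pmvan} all come into play: one extracts from $f$ a unary or binary gadget that, when combined with the vanishing tail realised from $f$ itself (via further self-loops, as in the proof of Lemma~\ref{lem:van:deg}), lands in the hypothesis of one of these hardness lemmas. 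The arity-$4$ base case is indispensable because once the arity of $f$ is reduced by self-loops down to the vanishing substrate, the remaining arity-$4$ signature to be analysed for hardness is precisely what Corollary~\ref{cor:arity4:nonsingular_compressed_hard_trans} was designed to handle; in particular, showing that the compressed signature matrix has rank $3$ in the mixed case is the crucial calculation that closes the induction.
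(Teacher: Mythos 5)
Your overall skeleton (induction on arity via a self loop, base cases $n=3,4$ from Theorems~\ref{thm:arity3:singleton} and~\ref{thm:arity4:singleton}, and a split according to the nature of $f'$) matches the paper, and your treatment of the cases where $f'$ is vanishing or equal to $[1,\pm i]^{\otimes(n-2)}$ is in the right spirit: there the paper really does pass to the $Z$-side, where $\hat f$ has the form ``low-weight block plus a lone tail entry,'' and hardness comes from Lemmas~\ref{lem:van:bin}, \ref{lem:dic:pmvan} and the Eulerian-orientation signature $[0,0,1,0,0]$. But there is a genuine gap in the case where $f'$ is non-degenerate and lies in $\mathscr{P}_1 \cup \mathscr{P}_2 \cup \mathscr{A}_3$. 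Your plan there is to isolate the ``vanishing tail'' of $f$ (its $[1,i]^{\otimes n}$ and $[1,-i]^{\otimes n}$ components) by further self loops and then feed it into Lemmas~\ref{lem:van:deg}, \ref{lem:van:bin} or \ref{lem:dic:pmvan}. This mechanism cannot work: a self loop sends $f_k \mapsto f_k+f_{k+2}$, which \emph{annihilates} any component of the form $c\,(\pm i)^k$ (and more generally lowers the degree of the polynomial factors), so self loops of $f$ reproduce $f'$ and then zero --- they never produce $[1,\pm i]^{\otimes m}$. Moreover, in this mixed case $f$ itself is not vanishing ($\vd^\sigma(f)=0$ when both an $i^k$- and a $(-i)^k$-component, or a non-isotropic component, are present), so none of the vanishing-mixing lemmas applies to $\{f\}$ or to anything you have shown how to realize. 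The paper closes exactly this case by a completely different tool that your proposal never invokes: Lemmas~\ref{lem:dic:p1}, \ref{lem:dic:p2}, \ref{lem:dic:a3}, applied to the \emph{set} $\{f,f'\}$, which use $f'$ to realize equalities and then appeal to the $\CSP^d$ dichotomy (Theorem~\ref{thm:CSPd}); the conclusion ``\#P-hard unless $\{f,f'\}$ is $\mathscr{A}$- or $\mathscr{P}$-transformable'' is what lets the induction go through. Without these (or a worked-out substitute), the inductive step is unproven for precisely the sub-case that the vanishing machinery cannot reach.

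Two smaller inaccuracies: not every $f'$ in your ``remaining cases'' satisfies a monic recurrence $f'_{k+2}=af'_{k+1}+bf'_k$ --- a generalized equality $[a,0,\dotsc,0,b]\in\mathscr{P}_1$ does not, so the lifting to a characteristic polynomial $(x^2+1)t(x)$ must be stated with a possibly vanishing leading coefficient. And the claim that ``the compressed signature matrix has rank $3$'' is the calculation that closes all mixed cases is not accurate: in the paper's sub-case $f'=[a,b]^{\otimes(n-2)}$ with $a^2+b^2\neq 0$, the compressed determinant $-2(\hat f_1^2+\hat f_2^2)$ can vanish, and hardness then requires Theorem~\ref{thm:k-reg_homomorphism}$'$ rather than Corollary~\ref{cor:arity4:nonsingular_compressed_hard}; similarly, in the vanishing-$f'$ case the paper's endgame is the gcd/subtractive construction leading to the Eulerian-orientation problem, not a rank computation.
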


Recall that $\mathscr{A}_1 \subset \mathscr{P}_1$ and $\mathscr{A}_2 = \mathscr{P}_2$,
and $f \in \mathscr{P}_1 \cup \mathscr{P}_2 \cup \mathscr{A}_3$ iff $f$ is $\mathscr{A}$-transformable or $\mathscr{P}$-transformable by Corollary~\ref{cor:single:AP-trans_by_sets}.

\begin{proof}
 Let the arity of $f$ be $n$.
 The base cases of $n=3$ and $n=4$ are proved in Theorem~\ref{thm:arity3:singleton} and Theorem~\ref{thm:arity4:singleton} respectively.
 Now assume $n \ge 5$.

 With the signature $f$, we form a self loop to get a signature $f'$ of arity at least 3.
 We consider the cases separately whether $f'$  is degenerate or not.
 \begin{itemize}
  \item Suppose $f' = [a,b]^{\otimes(n-2)}$ is degenerate.
   There are three cases to consider.
   \begin{enumerate}
    \item If $a = b = 0,$ then $f'$ is the all zero signature.
     For $f$, this means $f_{k+2} = -f_k$ for $0 \le k \le n-2$, 
     so $f \in \mathscr{P}_2$ by Lemma~\ref{lem:single:P2}, and therefore $\Holant(f)$ is tractable.
    \item If $a^2 + b^2 \ne 0$, then $f'$ is nonzero and $[a,b]$ is not a constant multiple of either $[1,i]$ or $[1,-i]$.
     We may normalize so that $a^2 + b^2 = 1$.
     Then the orthogonal transformation $\left[\begin{smallmatrix} a & b \\ -b & a \end{smallmatrix}\right]$ transforms the column vector $[a,b]$ to $[1,0]$.
     Let $\hat{f}$ be the transformed signature from $f$, and $\widehat{f'} = [1,0]^{\otimes(n-2)}$ the transformed signature from $f'$.

     Since an orthogonal transformation keeps $=_2$ invariant, this transformation commutes with the operation of taking a self loop, i.e., $\widehat{f'} = (\hat{f})'$.
     Here $(\hat{f})'$ is the function obtained from $\hat{f}$ by taking a self loop.
     So $\hat{f}_0 + \hat{f}_2 = 1$ and for every integer $1 \le k \le n-2$, we have $\hat{f}_k = -\hat{f}_{k+2}$.
     With one or more self loops, we eventually obtain either $[1,0]$ when $n$ is odd or $[1,0,0]$ when $n$ is even.
     In either case, we connect an appropriate number of copies of this
     signature to $\hat{f}$ to get a arity 4 signature $\hat{g} = [\hat{f}_0, \hat{f}_1, \hat{f}_2, -\hat{f}_1, -\hat{f}_2]$.
     We show that $\Holant(\hat{g})$ is $\SHARPP$-hard.
     To see this, we first compute $\det(\widetilde{M_g}) = -2 (\hat{f}_0 + \hat{f}_2) (\hat{f}_1^2 + \hat{f}_2^2)=-2(\hat{f}_1^2 + \hat{f}_2^2)$, since $\hat{f}_0 + \hat{f}_2 = 1$.
     Therefore if $\hat{f}_1^2 + \hat{f}_2^2 \ne 0$, $\Holant(\hat{g})$ is $\SHARPP$-hard by Corollary~\ref{cor:arity4:nonsingular_compressed_hard}.
     Otherwise $\hat{f}_1^2 + \hat{f}_2^2 = 0$, and we consider $\hat{f}_2 = i \hat{f}_1$ since the other case is similar.
     Since $f$ is non-degenerate, $\hat{f}$ is non-degenerate, which implies $\hat{f}_2 \neq 0$.
     We can express $\hat{g}$ as $[1,0]^{\otimes 4} -\hat{f}_2 [1,i]^{\otimes 4}$.
     Under the holographic transformation by $T = \left[\begin{smallmatrix} 1 & (-\hat{f}_2)^{1/4} \\ 0 & i (-\hat{f}_2)^{1/4} \end{smallmatrix}\right]$,
     we have
     \begin{align*}
      \holant{{=}_2}{\hat{g}}
      &\equiv_T \holant{[1,0,1] T^{\otimes 2}}{(T^{-1})^{\otimes 4} \hat{g}}\\
      &\equiv_T \holant{\hat{h}}{{=}_4},
     \end{align*}
     where
     \[\hat{h} = [1,0,1] T^{\otimes 2} = [1,(-\hat{f}_2)^{1/4},0]\]
     and $\hat{g}$ is transformed by $T^{-1}$ into the arity 4 equality $=_4$, since
     \[
      T^{\otimes 4} \left( \begin{bmatrix} 1 \\ 0 \end{bmatrix}^{\otimes 4} + \begin{bmatrix} 0 \\ 1 \end{bmatrix}^{\otimes 4} \right)
      = \begin{bmatrix} 1 \\ 0 \end{bmatrix}^{\otimes 4} - \hat{f}_2 \begin{bmatrix} 1 \\ i \end{bmatrix}^{\otimes 4}
      = \hat{g}.
     \]
     By Theorem~\ref{thm:k-reg_homomorphism}$'$, $\holant{\hat{h}}{{=}_4}$ is $\SHARPP$-hard as $\hat{f}_2 \ne 0$.

    \item If $a^2 + b^2 = 0$ but $(a,b) \ne (0,0)$, then $[a,b]$ is a nonzero multiple of $[1, \pm i]$.
     Ignoring the constant multiple, we have $f' = [1,i]^{\otimes (n-2)}$ or $[1,-i]^{\otimes (n-2)}$.
     We consider the first case since the other case is similar.

     In the first case,
     the characteristic polynomial of the recurrence relation of $f'$ is $x-i$,
     so that of $f$ is $(x-i) (x^2 + 1) = (x-i)^2 (x+i)$.
     Hence there exist $a_0, a_1$, and $c$ such that
     \[
      f_k = (a_0 + a_1 k) i^k + c (-i)^k
     \]
     for every integer $0 \le k \le n$.
     Let $f^+$ and $f^-$ be two signatures of arity $n$ 
     such that $f^+_k=(a_0 + a_1 k) i^k$ and $f^-_k=c (-i)^k$ for every $0\leq k\leq n$.
     Hence $f_k=f^+_k+f^-_k$ and we write $f=f^++f^-$.
     If $a_1 = 0$, then $f'$ is the all zero signature, a contradiction.
     If $c=0$, then $f$ is vanishing, one of the tractable cases.
     Now we assume $a_1 c \neq 0$ and show that $\Holant(f)$ is $\SHARPP$-hard.
     Hence $\rd^+ (f^+) = 1$ and $\rd^- (f^-) = 0$.
     Under the holographic transformation $Z = \frac{1}{\sqrt{2}} \left[\begin{smallmatrix} 1 & 1 \\ i & -i \end{smallmatrix}\right]$,
     we have
     \begin{align*}
      \holant{{=}_2}{f}
      &\equiv_T \holant{[1,0,1] Z^{\otimes 2}}{(Z^{-1})^{\otimes n} f}\\
      &\equiv_T \holant{[0,1,0]}{\hat{f}},
     \end{align*}
     where $\hat{f}$ takes the form $[\hat{f_0}, \hat{f_1}, 0, \dotsc, 0, c']$ with $c' = 2^{n/2} c \ne 0$ and $\hat{f_1} \neq 0$,
     since $\hat{f}$ is the $Z^{-1}$-transformation of the sum of $f^+$ and $f^-$,
     with $\rd^+ (f^+) = 1$ and $\rd^- (f^-) = 0$ respectively.
     On the other side, $(=_2) = [1,0,1]$ is transformed into $(\neq_2) = [0,1,0]$.
     Now consider the gadget in Figure~\ref{sugfig:gadget:van-binary} with $\hat{f}$ assigned to both vertices.
     This gadget has the binary signature $[0, c \hat{f_0}, 2 c \hat{f_1}]$, which is equivalent to $[0, \hat{f_0}, 2 \hat{f_1}]$ since $c \neq 0$.
     Translating back by $Z$ to the original setting, this signature is $g = [\hat{f_0} + \hat{f_1}, -i \hat{f_1}, \hat{f_0} - \hat{f_1}]$.
     This can be verified as
     \[
      \begin{bmatrix} 1 & 1 \\ i & -i \end{bmatrix}
      \begin{bmatrix} 0 & \hat{f_0} \\ \hat{f_0} & 2 \hat{f_1} \end{bmatrix}
      \begin{bmatrix} 1 & 1 \\ i & -i \end{bmatrix}^\intercal
      =
      2
      \begin{bmatrix} \hat{f_0} + \hat{f_1} & -i \hat{f_1} \\ -i \hat{f_1} & \hat{f_0} - \hat{f_1} \end{bmatrix}.
     \]
     Since $\hat{f_1} \neq 0$, it can be directly checked that $g \not\in \mathscr{R}^+_2$.

     If $\hat{f_0} \neq 0$, then $g$ is non-degenerate.
     In this case we construct some function in $\mathscr{V}^+$.
     We connect $f'$ back to $f$, getting a binary signature $p=Z^{\otimes 2}[0,0,c']$.
     Then we connect $p$ to $f$, 
     the resulting signature is $p'=Z^{\otimes n-2}[\hat{f_0}, \hat{f_1},0,\dots,0]$ of arity $n-2\ge 3$ 
     up to the constant factor of $c'\neq 0$.
     Notice that $p'$ is non-degenerate and $p'\in\mathscr{V}^+$.
     By Lemma~\ref{lem:van:bin}, 
     $\Holant(\{p', g\})$ is $\SHARPP$-hard, 
     hence $\Holant(f)$ is also $\SHARPP$-hard.

     Otherwise suppose $\hat{f_0} = 0$.
     Then we have $g = [1,-i]^{\otimes 2}$ after ignoring the nonzero factor $\hat{f_1}$.
     Connecting this degenerate signature to $f$, we get a signature $h = \langle f, g \rangle$.
     We note that $g$ annihilates the signature $f^-=c [1,-i]^{\otimes n}$, and thus $h = \langle f^+, g \rangle$.
     Then $\rd^+(f^+) =1$, $\vd^{+} (g) = 0$,
     and we can apply Lemma~\ref{lem:van:con}.
     It follows that $\rd^+(h) = 1$ and $\arity(h) \ge 3$.
     This implies that $h$ is non-degenerate and $h \in \mathscr{V}^{+}$.

\begin{figure}[t]
 \centering
 \def\capWidth{5.5cm}
 \captionsetup[subfigure]{width=\capWidth}
 \subfloat[The circles are assigned $\hat{f}$ and the squares are assigned $\neq_2$.]{
  \makebox[\capWidth][c]{
   \begin{tikzpicture}[scale=\scale,transform shape,node distance=\nodeDist,semithick]
    \node[external] (0)              {};
    \node[internal] (1) [right of=0] {};
    \node[external] (2) [right of=1] {};
    \node[external] (3) [right of=2] {};
    \node[internal] (4) [right of=3] {};
    \node[external] (5) [right of=4] {};
    \path (0) edge                          node[near end]   (e1) {}               (1)
          (1) edge[out= 45, in= 135]        node[square]     (e2) {}               (4)
              edge[out= 15, in= 165]        node[square]          {}               (4)
              edge[out=-10, in=-170, white] node[black]           {\Huge $\vdots$} (4)
              edge[out=-45, in=-135]        node[square]     (e3) {}               (4)
          (4) edge                          node[near start] (e4) {}               (5);
    \begin{pgfonlayer}{background}
     \node[inner sep=0pt,transform shape=false,draw=\borderColor,thick,rounded corners,fit = (e1) (e2) (e3) (e4)] {};
    \end{pgfonlayer}
  \end{tikzpicture}} \label{sugfig:gadget:van-binary}}
 \qquad
 \subfloat[The circles are assigned $f$.]{
  \makebox[\capWidth][c]{
   \begin{tikzpicture}[scale=\scale,transform shape,node distance=\nodeDist,semithick]
    \node[internal]  (0)                    {};
    \node[external]  (1) [above left  of=0] {};
    \node[external]  (2) [below left  of=0] {};
    \node[external]  (3) [left        of=1] {};
    \node[external]  (4) [left        of=2] {};
    \node[external]  (5) [right       of=0] {};
    \node[external]  (6) [right       of=5] {};
    \node[internal]  (7) [right       of=6] {};
    \node[external]  (8) [above right of=7] {};
    \node[external]  (9) [below right of=7] {};
    \node[external] (10) [right       of=8] {};
    \node[external] (11) [right       of=9] {};
    \path (0) edge[out= 135, in=   0]                                     (3)
              edge[out=-135, in=   0]                                     (4)
              edge[out=  45, in= 135]                                     (7)
              edge[out=  15, in= 165]                                     (7)
              edge[out= -10, in=-170, white] node[black] {\Huge $\vdots$} (7)
              edge[out= -45, in=-135]                                     (7)
          (7) edge[out=  45, in= 180]                                    (10)
              edge[out= -45, in= 180]                                    (11);
    \begin{pgfonlayer}{background}
     \node[inner sep=0pt,transform shape=false,draw=\borderColor,thick,rounded corners,fit = (1) (2) (8) (9)] {};
    \end{pgfonlayer}
   \end{tikzpicture}} \label{subfig:gadget:van-arity4}}
 \caption{Two gadgets used when $f' = [1, \pm i]^{\otimes (n-2)}$.}
\end{figure}
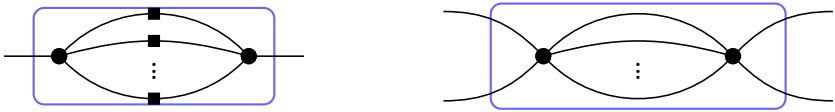

     Moreover, assigning $f$ to both vertices in the gadget of Figure~\ref{subfig:gadget:van-arity4},
     we get a non-degenerate signature $h' \in \mathscr{V}^{-}$ of arity~4.
     To see this, consider this gadget after a holographic transformation by $Z$.
     In this bipartite setting,
     it is the same as assigning $\hat{f} = [0, \hat{f_1}, 0, \dotsc, 0, c]$
     (or equivalently $[0, 1, 0, \dotsc, 0, c'']$, where $c'' = c/\hat{f_1} \neq 0$)
     to both the circle and triangle vertices in the gadget of Figure~\ref{fig:gadget:special_vanishing_case:many_glue}.
     The square vertices there are still assigned $(\neq_2) = [0,1,0]$.
     While it is not apparent from the gadget's geometry, this signature is in fact symmetric.
     In particular, its values on inputs $1010$ and $1100$ are both~$0$.
     The resulting signature is $\widehat{h'} = (Z^{-1})^{\otimes 4} h' = [0,0,0,c'',0]$.
     Hence $\rd^{-}(h') = 1$, and therefore $h'$ is non-degenerate and  $h' \in \mathscr{V}^{-}$.

     By Lemma~\ref{lem:van:plus_and_minus},
     $\Holant(\{h, h'\})$ is $\SHARPP$-hard,
     hence $\Holant(f)$ is also $\SHARPP$-hard.
   \end{enumerate}
  \item Suppose $f'$ is non-degenerate.
   If $f'$ is not in one of the tractable cases, then $\Holant(f')$ is $\SHARPP$-hard and so is $\Holant(f)$.
   We now assume $\Holant(f')$ is not $\SHARPP$-hard.
   Then, by inductive hypothesis, $f' \in \mathscr{P}_1 \cup \mathscr{P}_2 \cup \mathscr{A}_3$ or $f'$ is vanishing.
   If $f' \in \mathscr{P}_1 \cup \mathscr{P}_2 \cup \mathscr{A}_3$,
   then applying Lemma~\ref{lem:dic:p1}, Lemma~\ref{lem:dic:p2}, or Lemma~\ref{lem:dic:a3} to $f'$ and the set $\{f, f'\}$,
   we either have that $\Holant(\{f, f'\})$ is $\SHARPP$-hard, so $\Holant(f)$ is $\SHARPP$-hard as well,
   or that $f$ is $\mathscr{A}$- or $\mathscr{P}$-transformable,
   so by Corollary~\ref{cor:single:AP-trans_by_sets}, $f \in \mathscr{P}_1 \cup \mathscr{P}_2 \cup \mathscr{A}_3$.

   Otherwise, $f'$ is vanishing, so $f' \in \mathscr{V}^\sigma$ for $\sigma \in \{+,-\}$ by Theorem~\ref{thm:van}.
   For simplicity, assume that $f' \in \mathscr{V}^+$.
   The other case is similar.
   Let $\rd^+(f') = d - 1$, where $2 d < n$ and $d \ge 2$ since $f'$ is non-degenerate.
   Then the entries of $f'$ can be expressed as
   \[f_k' = i^k q(k),\]
   where $q(x)$ is a polynomial of degree exactly $d-1$.
   However, notice that if $f'$ satisfies some recurrence relation with characteristic polynomial $t(x)$,
   then $f$ satisfies a recurrence relation with characteristic polynomial $(x^2 + 1) t(x)$.
   In this case, $t(x) = (x-i)^{d}$.
   Then the corresponding characteristic polynomial of $f$ is $(x-i)^{d+1} (x+i)$, and thus the entries of $f$ are
   \[f_k = i^k p(k) + c (-i)^k\]
   for some constant $c$ and a polynomial $p(x)$ of degree at most $d$.
   However, the degree of $p(x)$ is exactly $d$, otherwise the polynomial $q(x)$ for $f'$ would have degree less than $d-1$.
   If $c=0$, then $f\in\mathscr{V}^{+}$ is vanishing, a tractable case.
   Now assume $c \neq 0$, and we want to show the problem is $\SHARPP$-hard.

   Thus, under the transformation $Z = \frac{1}{\sqrt{2}} \left[\begin{smallmatrix} 1 & 1 \\ i & -i \end{smallmatrix}\right]$, we have
   \begin{align*}
    \holant{{=}_2}{f}
    &\equiv_T \holant{[1,0,1] Z^{\otimes 2}}{(Z^{-1})^{\otimes n} f}\\
    &\equiv_T \holant{[0,1,0]}{\hat{f}},
   \end{align*}
   where $\hat{f} = [\hat{f}_0, \hat{f}_1, \dotsc, \hat{f}_d, 0, \dotsc, 0, c]$, with $\hat{f}_d \neq 0$.
   Taking a self loop in the original setting is equivalent to connecting $[0,1,0]$ to a signature after this transformation.
   Thus, doing this once on $\hat{f}$, we can get $\widehat{f'} = [\hat{f}_1, \dotsc, \hat{f}_{d}, 0, \dotsc, 0]$ corresponding to $f'$,
   and doing this $d-2$ times on $\hat{f}$,
   we get a signature $\hat{h} = [\hat{f}_{d-2}, \hat{f}_{d-1}, \hat{f}_{d}, 0, \dotsc, 0, 0 / c]$ of arity $n - 2 (d-2) = n - 2 d + 4$.
   The last entry is $c$ when $d=2$ and is 0 when $d>2$.

   As $n > 2 d$, we may do two more self loops and get $[\hat{f}_{d}, 0, \dotsc, 0]$ of arity $k = n - 2 d$.
   Now connect this signature back to $\hat{f}$ via $[0,1,0]$.
   It is the same as getting the last $n - k + 1 = 2 d + 1$ signature entries of $\hat{f}$.
   We may repeat this operation zero or more times until the arity $k'$ of the resulting signature is less than or equal to $k$.
   We claim that this signature has the form $\hat{g} = [0, \dotsc, 0, c]$.
   In other words, the $k'+1$ entries of $\hat{g}$ consist of the last $c$ and $k'$ many 0's in the signature $\hat{f}$, all appearing after $\hat{f}_{d}$.
   This is because there are $n - d - 1$ many~$0$ entries in the signature $\hat{f}$ after $\hat{f}_{d}$,
   and $n - d - 1 \ge k \ge k'$.

   Translating back by the $Z$ transformation,
   having both $[\hat{f}_{d}, 0, \dotsc, 0]$ of arity $k$ and $\hat{g} = [0, \dotsc, 0, c]$ of arity $k'$ is equivalent to,
   in the original setting,
   having both $[1,i]^{\otimes k}$ and $[1,-i]^{\otimes k'}$.
   If $k > k'$,
   then we can connect $[1,-i]^{\otimes k'}$ to $[1,i]^{\otimes  k}$ and get $[1,i]^{\otimes (k-k')}$.
   Replacing $k$ by $k-k'$,
   we can repeat this process until the new $k \le k'$.
   If the new $k < k'$,
   then we can continue as in the subtractive Euclid algorithm.
   We continue this procedure and eventually we get $[1,i]^{\otimes t}$ and  $[1,-i]^{\otimes  t}$,
   where $t = \gcd(k,k')$, where $k = n - 2 d$ and $k' \le k$,
   as defined in the previous paragraph.
   Now putting $k / t$ many copies of $[1,-i]^{\otimes  t}$ together,
   we get $[1,-i]^{\otimes k}$.

   In the transformed setting,
   $[1,-i]^{\otimes k}$ is $[0, \dotsc, 0, 1]$ of arity $k$.
   Then we connect this back to $\hat{h}$ via $[0,1,0]$.
   Doing this is the same as forcing $k$ connected edges of $\hat{h}$ to be assigned~$0$,
   because $[0,1,0]$ flips the assigned value~1 in $[0, \dotsc, 0, 1]$ to~$0$.
   Thus we get a signature of arity $n - 2 d + 4 - k = 4$,
   which is $[\hat{f}_{d-2}, \hat{f}_{d-1}, \hat{f}_{d}, 0, 0]$.
   Note that the last entry is~$0$ (and not $c$),
   because $k \ge 1$.

   However,
   $\Holant([0,1,0] \: | \: [\hat{f}_{d-2}, \hat{f}_{d-1}, \hat{f}_{d}, 0, 0])$
   is equivalent to $\Holant([0,1,$ $0]\: | \: [0,0,1,0,0])$ when $\hat{f}_{d} \neq 0$,
   which is transformed back by $Z$ to $\Holant([3,0,$ $1,0,3])$.
   This is the Eulerian Orientation problem on $4$-regular graphs
   and is $\SHARPP$-hard by Theorem~\ref{thm:4reg_EO_hard}.
 \end{itemize}
\end{proof}

Now we are ready to prove of our main theorem.

\begin{proof}[Proof of hardness for Theorem~\ref{thm:main}]
 Assume that $\Holant(\mathcal{F})$ is not $\SHARPP$-hard.
 If all of the non-degenerate signatures in $\mathcal{F}$ are of arity at most 2,
 then the problem is tractable case~\ref{case:main_tractable:trivial}.
 Otherwise we have some non-degenerate signatures of arity at least $3$.
 For any such $f$, by Theorem~\ref{thm:dic:single}, $f \in \mathscr{P}_1 \cup \mathscr{P}_2 \cup \mathscr{A}_3$ or $f$ is vanishing.
 If any of them is in $\mathscr{P}_1 \cup \mathscr{P}_2 \cup \mathscr{A}_3$,
 then by Lemma~\ref{lem:dic:p1}, Lemma~\ref{lem:dic:p2}, or Lemma~\ref{lem:dic:a3},
 we have that $\mathcal{F}$ is $\mathscr{A}$- or $\mathscr{P}$-transformable,
 which are tractable cases~\ref{case:main_tractable:CSP:A} and~\ref{case:main_tractable:CSP:P}.

 Now we assume that all non-degenerate signatures of arity at least~$3$ in $\mathcal{F}$ are vanishing,
 and there is a nonempty set of such signatures in $\mathcal{F}$.
 By Lemma~\ref{lem:van:plus_and_minus},
 they must all be in $\mathscr{V}^\sigma$ with the same $\sigma\in\{+,-\}$.
 By Lemma~\ref{lem:van:bin},
 we know that any non-degenerate binary signature in $\mathcal{F}$ has to be in $\mathscr{R}^\sigma_2$.
 Furthermore,
 if $\mathcal{F}$ contains an $f \in \mathscr{V}^\sigma$ such that $\rd^\sigma(f) \ge 2$,
 then by Lemma~\ref{lem:van:deg},
 the only unary signatures allowed in $\mathcal{F}$ are some multiple of $[1,\sigma i]$,
 and all degenerate signatures in $\mathcal{F}$ are a tensor product of some multiple of $[1, \sigma i]$.
 Thus, all non-degenerate signatures of arity at least $3$ as well as all degenerate signatures belong to $\mathscr{V}^\sigma$,
 and all non-degenerate binary signatures belong to $\mathscr{R}^\sigma_2$.
 This is tractable case~\ref{case:main_tractable:vanishing_and_binary}.

 Finally, we have the following:
 (i) all non-degenerate signatures of arity at least $3$ in $\mathcal{F}$ belong to $\mathscr{V}^\sigma$;
 (ii) all signatures $f \in \mathcal{F} \cap \mathscr{V}^\sigma$ have $\rd^\sigma(f) \le 1$, which implies that $f \in \mathscr{R}^\sigma_2$; and
 (iii) all non-degenerate binary signatures in $\mathcal{F}$ belong to $\mathscr{R}^\sigma_2$.
 Hence all non-degenerate signatures in $\mathcal{F}$ belong to $\mathscr{R}^\sigma_2$.
 All unary signatures also belong to $\mathscr{R}^\sigma_2$ by definition.
 This is indeed tractable case~\ref{case:main_tractable:vanishing_and_unary}.
 The proof is complete.
\end{proof}

Furthermore, given a finite signature set $\mathcal{F}$,
the criterion of Theorem~\ref{thm:main} is decidable in polynomial time.
This is reported in \cite{CGW14a}.

\bibliographystyle{siamplain}
\bibliography{bib}

\appendix

\section{Simple Interpolations}
In addition to the two arity~4 interpolations in Section~\ref{sec:arity4},
we also use interpolation in the proofs of two other lemmas.
Compared to our arity~4 interpolations, these binary interpolations are much simpler.

\begin{lemma} \label{lem:simple_interpolation:van:bin}
 Let $x \in \mathbb{C}$.
 If $x \ne 0$, then for any set $\mathcal{F}$ containing $[x,1,0]$, we have
 \[\holant{{\neq}_2}{\mathcal{F} \union \{[v,1,0]\}} \le_T \holant{{\neq}_2}{\mathcal{F}}\]
 for any $v \in \mathbb{C}$.
\end{lemma}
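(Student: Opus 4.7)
\textbf{Proof plan for Lemma~\ref{lem:simple_interpolation:van:bin}.}
The plan is a standard polynomial interpolation argument: build a one-parameter family of gadgets out of $[x,1,0]$ that realizes the signatures $\{[kx,1,0] : k \ge 1\}$, and then use Vandermonde to recover the Holant value involving $[v,1,0]$ from the values involving the gadgets.

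First I would define, for each positive integer $k$, the path gadget $P_k$ in the bipartite setting $\holant{{\neq}_2}{\cdot}$ obtained by chaining $k$ copies of $[x,1,0]$ alternating with $k-1$ internal copies of $\neq_2$, with the two dangling edges attached to the two extreme copies of $[x,1,0]$. Writing $M = \bigl[\begin{smallmatrix} x & 1 \\ 1 & 0 \end{smallmatrix}\bigr]$ for the signature matrix of $[x,1,0]$ and $N = \bigl[\begin{smallmatrix} 0 & 1 \\ 1 & 0 \end{smallmatrix}\bigr]$ for $\neq_2$, the signature matrix of $P_k$ is $M(NM)^{k-1}$. A direct calculation gives $NM = \bigl[\begin{smallmatrix} 1 & 0 \\ x & 1 \end{smallmatrix}\bigr]$, a unipotent Jordan block, so $(NM)^{k-1} = \bigl[\begin{smallmatrix} 1 & 0 \\ (k-1)x & 1 \end{smallmatrix}\bigr]$ and hence the signature of $P_k$ is exactly $[kx,1,0]$. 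Thus every $[kx,1,0]$ with $k \ge 1$ is realizable over $\mathcal{F}$ in $\holant{{\neq}_2}{\mathcal{F}}$.

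Next, given an instance $\Omega$ of $\holant{{\neq}_2}{\mathcal{F} \cup \{[v,1,0]\}}$ with $n$ occurrences of $[v,1,0]$, I would stratify $\Holant_\Omega$ by the number of these occurrences whose two incident edges are both assigned~$0$. Since the last entry of $[v,1,0]$ is zero, only configurations where each $[v,1,0]$ vertex receives either $(0,0)$ (contributing $v$) or a mixed assignment (contributing $1$) survive. Writing $c_i$ for the total contribution from all such configurations in which exactly $i$ occurrences of $[v,1,0]$ receive $(0,0)$, we obtain
\[
  \Holant_\Omega = \sum_{i=0}^{n} v^i\, c_i.
\]
Construct $\Omega_k$ from $\Omega$ by replacing every $[v,1,0]$ with the gadget $P_k$; the stratification is identical with weight $kx$ in place of $v$, yielding
\[
  \Holant_{\Omega_k} = \sum_{i=0}^{n} (kx)^i\, c_i \qquad (k = 1, 2, \dots, n+1).
\]

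Finally, the coefficient matrix $((kx)^i)_{k,i}$ factors as $\operatorname{diag}(1,x,x^2,\dots,x^n)$ times the Vandermonde matrix on the distinct values $1,2,\dots,n+1$; since $x \ne 0$, it is invertible. Thus the linear system determines all the $c_i$, and in particular $\Holant_\Omega = \sum_i v^i c_i$ can be computed in polynomial time from the values $\Holant_{\Omega_k}$, each of which is an oracle call to $\holant{{\neq}_2}{\mathcal{F}}$. There is no real obstacle here beyond the routine verification of the gadget's signature matrix; the only place $x \ne 0$ is used is to guarantee that the interpolation points $\{kx : 1 \le k \le n+1\}$ are distinct (equivalently, that the above Vandermonde is nonsingular), and this is exactly where the hypothesis enters.
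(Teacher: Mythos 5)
Your proposal is correct and is essentially the paper's own proof: the same chain gadget of $s$ copies of $[x,1,0]$ linked by $\neq_2$ (your $M(NM)^{k-1}$ computation is the paper's $(MN)^{s-1}M$ up to orientation, giving $[kx,1,0]$), the same stratification of assignments by how many occurrences take the all-zero input, and the same full-rank Vandermonde system on the distinct points $kx$ with $x\neq 0$. The only cosmetic quibble is that the coefficient matrix is the Vandermonde matrix on $1,\dotsc,n+1$ with columns scaled by $x^i$ (i.e.\ Vandermonde times the diagonal, not the diagonal times Vandermonde), which of course does not affect invertibility.
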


\begin{proof}
 Consider an instance $\Omega$ of $\holant{{\neq}_2}{\mathcal{F} \union \{[v,1,0]\}}$.
 Suppose that $[v,1,0]$ appears $n$ times in $\Omega$.
 We stratify the assignments in $\Omega$ based on the assignments to $[v,1,0]$.
 We only need to consider assignments of Hamming weight~$0$ and~$1$ since an assignment of Hamming weight~$2$ contributes a factor of~$0$.
 Let $i$ be the number of Hamming weight~$0$ assignments to $[v,1,0]$ in $\Omega$.
 Then there are $n-i$ assignments of Hamming weight~$1$ and the Holant on $\Omega$ is
 \[\Holant_\Omega = \sum_{i=0}^n v^i c_i,\]
 where $c_i$ is the sum over all such assignments of the product of evaluations of all other signatures on $\Omega$.

 We construct from $\Omega$ a sequence of instances $\Omega_s$ of $\Holant(\mathcal{F})$ indexed by $s \ge 1$.
 We obtain $\Omega_s$ from $\Omega$ by replacing each occurrence of $[v,1,0]$ with a gadget $g_s$ created from $s$ copies of $[x,1,0]$,
 connected sequentially but with $(\neq_2) = [0,1,0]$ between each sequential pair.
 The signature of $g_s$ is $[s x, 1, 0]$, which can be verified by the matrix product
 \[
  \left(\begin{bmatrix} x & 1 \\ 1 & 0 \end{bmatrix} \begin{bmatrix} 0 & 1 \\ 1 & 0 \end{bmatrix}\right)^{s-1} \begin{bmatrix} x & 1 \\ 1 & 0 \end{bmatrix}
  = \begin{bmatrix} 1 & x \\ 0 & 1 \end{bmatrix}^{s-1} \begin{bmatrix} x & 1 \\ 1 & 0 \end{bmatrix}
  = \begin{bmatrix} 1 & (s-1) x \\ 0 & 1 \end{bmatrix} \begin{bmatrix} x & 1 \\ 1 & 0 \end{bmatrix}
  = \begin{bmatrix} s x & 1 \\ 1 & 0 \end{bmatrix}.
 \]
 The Holant on $\Omega_s$ is
 \[\Holant_{\Omega_s} = \sum_{i=0}^n (s x)^i c_i.\]
 For $s \ge 1$, this gives a coefficient matrix that is Vandermonde.
 Since $x$ is nonzero, $s x$ is distinct for each $s$.
 Therefore, the Vandermonde system has full rank.
 We can solve for the unknowns $c_i$ and obtain the value of $\Holant_\Omega$.
\end{proof}

\begin{lemma} \label{lem:simple_interpolation:dic:p1}
 Let $a, b \in \mathbb{C}$.
 If $a b \ne 0$, then for any set $\mathcal{F}$ of complex-weighted signatures containing $[a, 0, \dotsc, 0, b]$ of arity $r \ge 3$,
 \[\Holant(\mathcal{F} \union \{{=}_4\}) \le_T \Holant(\mathcal{F}).\]
\end{lemma}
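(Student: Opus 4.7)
The plan is to interpolate $=_4$ from $f = [a, 0, \dotsc, 0, b]$ by first reducing its arity to $4$ via gadget constructions and then applying either a direct realization or polynomial interpolation depending on the ratio $b/a$.

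First, I would reduce to an arity-$4$ signature of the same form. Taking a self-loop on $f$ (connecting two of its dangling edges through the implicit $=_2$) produces a symmetric signature of arity $r-2$ whose value at Hamming weight $w$ is $f_w + f_{w+2}$. Since $f_w = 0$ except at $w = 0$ (value $a$) and $w = r$ (value $b$), each self-loop yields $[a,0,\dotsc,0,b]$ of arity $r-2$ whenever $r \geq 4$. Iterating this operation produces $[a,0,0,0,b]$ of arity $4$ when $r$ is even, and $[a,0,0,b]$ of arity $3$ when $r$ is odd (including the base case $r=3$). In the odd case, I would then connect two copies of the arity-$3$ signature via a single edge: the support condition forces each copy to be in the all-$0$ or all-$1$ state and the shared edge forces agreement, yielding the arity-$4$ signature $[a^2,0,0,0,b^2]$. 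In either parity, I obtain a building block $h = [A,0,0,0,B]$ with $AB \neq 0$.

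Next, define the chain gadget $N_s$ by linking $s$ copies of $h$ in sequence, where consecutive copies share two edges (each via $=_2$), so that the two endpoint copies contribute $2$ dangling edges each for a total of $4$. Since $h$ has support only on the all-$0$ and all-$1$ inputs, any nonzero assignment forces every copy in the chain to be in the same $\{0,1\}^4$-state. Thus $N_s$ has signature $[A^s,0,0,0,B^s]$. Let $\rho = B/A$, which is nonzero. If $\rho$ is a root of unity, say $\rho^m = 1$, then $N_m = A^m(=_4)$ already realizes $=_4$ up to a nonzero scalar, and we are done.

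Otherwise, $\rho$ is not a root of unity, and I interpolate. Given an instance $\Omega$ of $\Holant(\mathcal{F} \cup \{=_4\})$ with $n$ occurrences of $=_4$, I stratify assignments by the number $k$ of $=_4$ vertices taking the all-$1$ value (the remaining $n-k$ taking all-$0$), and let $c_k$ be the sum over such partial assignments of the product of evaluations of all other signatures. Then
\[
\Holant_\Omega = \sum_{k=0}^n c_k,
\qquad
\Holant_{\Omega_s} = \sum_{k=0}^n A^{s(n-k)} B^{sk} c_k = A^{sn} \sum_{k=0}^n \rho^{sk} c_k,
\]
where $\Omega_s$ is obtained from $\Omega$ by replacing each $=_4$ with $N_s$. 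For $s = 1, 2, \dotsc, n+1$, the coefficient matrix in the unknowns $c_k$ is (up to the nonzero factor $A^{sn}$) Vandermonde in the $n+1$ distinct values $\rho^s$, hence nonsingular. Solving this linear system recovers all $c_k$ and in particular $\Holant_\Omega = \sum_k c_k$.

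The only mild obstacle is handling the odd-arity case uniformly, which is resolved by the single-edge gluing of two arity-$3$ copies to manufacture an arity-$4$ generalized equality. Beyond that, every step is routine once one observes that the support condition of generalized equality propagates across the chain gadget, and the hypothesis $ab \neq 0$ guarantees that the interpolation ratio $\rho$ is well-defined and nonzero so that the two cases (root of unity versus not) together exhaust all possibilities.
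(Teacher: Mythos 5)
Your proposal is correct and follows essentially the same route as the paper: reduce to an arity-4 generalized equality via self-loops (gluing two arity-3 copies in the odd case), then either realize $=_4$ directly when the ratio of the two nonzero entries is a root of unity, or chain copies to get $[A^s,0,0,0,B^s]$ and solve a full-rank Vandermonde system. The only cosmetic difference is that the paper first normalizes the signature to $[1,0,\dotsc,0,x]$ while you carry both entries and factor out $A^{sn}$, which changes nothing substantive.
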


\begin{proof}
 Since $a \ne 0$, we can normalize the first entry to get $[1, 0, \dotsc, 0, x]$, where $x \ne 0$.
 First, we show how to obtain an arity~4 generalized equality signature.
 If $r = 3$, then we connect two copies together by a single edge to get an arity~4 signature.
 For larger arities, we form self-loops until realizing a signature of arity~3 or~4.
 By this process, we have a signature $g = [1,0,0,0,y]$, where $y \neq 0$.
 If $y$ is a $p$th root of unity, then we can directly realize $=_4$ by connecting $p$ copies of $g$ together,
 two edges at a time as in Figure~\ref{fig:gadget:arity4:interpolate_I3}.
 Otherwise, $y$ is not a root of unity and we can interpolate $=_4$ as follows.

 Consider an instance $\Omega$ of $\Holant(\mathcal{F} \union \{{=}_4\})$.
 Suppose that $=_4$ appears $n$ times in $\Omega$.
 We stratify the assignments in $\Omega$ based on the assignments to $=_4$.
 We only need to consider the all-zero and all-one assignments since any other assignment contributes a factor of 0.
 Let $i$ be the number of all-one assignments to $=_4$ in $\Omega$.
 Then there are $n-i$ all-zero assignments and the Holant on $\Omega$ is
 \[\Holant_\Omega = \sum_{i=0}^n c_i,\]
 where $c_i$ is the sum over all such assignments of the product
 of evaluations of all other signatures on $\Omega$.

 We construct from $\Omega$ a sequence of instances $\Omega_s$ of $\Holant(\mathcal{F})$ indexed by $s \ge 1$.
 We obtain $\Omega_s$ from $\Omega$ by replacing each occurrence of $=_4$ with a gadget $g_s$ created from $s$ copies of $[1,0,0,0,y]$,
 connecting two edges together at a time as in Figure~\ref{fig:gadget:arity4:interpolate_I3}.
 The Holant on $\Omega_s$ is
 \[\Holant_{\Omega_s} = \sum_{i=0}^n (y^s)^i c_i.\]
 For $s \ge 1$, this gives a coefficient matrix that is Vandermonde.
 Since $y$ is neither~$0$ nor a root of unity,
 $y^s$ is distinct for each $s$.
 Therefore, the Vandermonde system has full rank.
 We can solve for the unknowns $c_i$
 and obtain the value of $\Holant_\Omega$.
\end{proof}

Since the gadget constructions are planar, this lemma holds when restricted to planar graphs.

\section{An Orthogonal Transformation} \label{sec:orthogonal}
Here we give the details of the \emph{orthogonal} transformation used in the proof of Lemma~\ref{lem:arity4:double_root}.
We state the general case for symmetric signatures of arity $n \ge 1$.
Appendix~D of~\cite{CHL12} has the case $n = 3$.

We are given a symmetric signature $f = [f_0, \dotsc, f_n]$ such that $f_k = c k \alpha^{k-1} + d \alpha^k$,
where $c \not = 0$, and $\alpha \neq \pm i$.
Let $S = \left[\begin{smallmatrix} 1 & \frac{d-1}{n} \\ \alpha & c + \frac{d-1}{n} \alpha \end{smallmatrix}\right]$.
Note that $\det S = c \not = 0$.
Then $f$ can be expressed as
\[f = S^{\otimes n} [1, 1, 0, \dotsc, 0],\]
where $[1, 1, 0, \dotsc, 0]$ should be understood as a dimension $2^n$ column vector,
which has a 1 in entries with index weight at most one and 0 elsewhere.
This identity can be verified by observing that
\[[1, 1, 0, \dotsc, 0] = [1,0]^{\otimes n} + \frac{1}{(n-1)!} \Sym_n^{n-1}([1,0]; [0,1])\]
and we apply $S^{\otimes n}$ using properties of tensor product, $S^{\otimes n} [1,0]^{\otimes n} = \left(S [1,0]\right)^{\otimes n}$, etc.
We consider the value at index $0^{n-k} 1^k$, which is the same as the value at any entry of weight $k$.
By considering where the tensor product factor $[0,1]$ is located among the $n$ possible locations, we get
\[\alpha^k + k \left(c + \frac{d-1}{n} \alpha\right) \alpha^{k-1} + (n-k) \frac{d-1}{n} \alpha^{k} =  c k \alpha^{k-1} + d \alpha^k.\]

Let $T = \tfrac{1}{\sqrt{1 + \alpha^2}} \left[\begin{smallmatrix} 1 & \alpha \\ \alpha & -1 \end{smallmatrix}\right]$,
then $T = T^\intercal = T^{-1} \in  \mathbf{O}_2(\mathbb{C})$ is orthogonal,
and $R = T S = \left[\begin{smallmatrix} u & w \\ 0 & v \end{smallmatrix}\right]$ is upper triangular,
where $v, w \in \mathbb{C}$ and $u = \sqrt{1 + \alpha^2} \ne 0$.
However, $\det R = \det T \det S = (-1) c \not = 0$, so we also have $v \not = 0$.
It follows that
\begin{align*}
 T^{\otimes n} f
 &= (T S)^{\otimes n} [1, 1, 0, \dotsc, 0]\\
 &= R^{\otimes n} [1, 1, 0, \dotsc, 0]\\
 &= R^{\otimes n} \left([1,0]^{\otimes n} + \frac{1}{(n-1)!} \Sym_n^{n-1}([1,0]; [0,1])\right)\\
 &= [u,0]^{\otimes n} + \frac{1}{(n-1)!} \Sym_n^{n-1}([u,0]; [w,v])\\
 &= [u^n + n u^{n-1} w, u^{n-1} v, 0, \dotsc, 0].
\end{align*}
Since $u^{n-1} v \not = 0$,
we can normalize to~$1$ the entry of Hamming weight~$1$ by a scalar multiplication.
Thus, we have $[z, 1, 0, \dotsc, 0]$ for some $z \in \mathbb{C}$.

\end{document}